\begin{document}

\author{Bjoern Andres$^{1,2}$, Silvia Di Gregorio$^2$, Jannik Irmai$^2$ and Jan-Hendrik Lange$^3$}
\date{$^2$TU Dresden $\qquad$ $^3$Max Planck Institute for Informatics}
\title{\textbf{A Polyhedral Study of Lifted Multicuts}}
\maketitle
\footnotetext[1]{Correspondence: \texttt{bjoern.andres@tu-dresden.de}}
\footnotetext[3]{The contributions to this article by Jan-Hendrik Lange are a result of his work at the Max Planck Institute for Informatics, Saarland Informatics Campus, Germany}


\begin{abstract}
Fundamental to many applications in data analysis are the decompositions of a graph, i.e.~partitions of the node set into component-inducing subsets. 
One way of encoding decompositions is by multicuts, the subsets of those edges that straddle distinct components.
Recently, a lifting of multicuts from a graph $G = (V, E)$ to an augmented graph $\widehat G = (V, E \cup F)$ has been proposed in the field of image analysis, with the goal of obtaining a more expressive characterization of graph decompositions in which it is made explicit also for pairs $F \subseteq \tbinom{V}{2} \setminus E$ of non-neighboring nodes whether these are in the same or distinct components.
In this work, we study in detail the polytope in $\mathbb{R}^{E \cup F}$ whose vertices are precisely the characteristic vectors of multicuts of $\widehat G$ lifted from $G$, connecting it, in particular, to the rich body of prior work on the clique partitioning and multilinear polytope.
\end{abstract}

\setcounter{tocdepth}{3}
{
	\hypersetup{hidelinks}
	\tableofcontents
}

\section{Introduction}\label{sec:introduction}

Fundamentally, we are interested in the set of all decompositions of a (finite, simple, undirected) graph.
A decomposition of a graph $G = (V, E)$ is a partition $\Pi$ of the node set $V$ such that, for every $U \in \Pi$, the subgraph of $G$ induced by $U$ is connected, and hence a (not necessarily maximal) component of $G$ (\Cref{def:decomposition}).
An example is depicted in \Cref{fig:multicut-decomposition-example}.
Decompositions of a graph occur in practice, as a mathematical abstraction of different ways of clustering data, and in theory, as a generalization of the partitions of a set, to which they specialize for complete graphs.

We follow \citet{Chopra1993,Chopra1995} in studying the set of all decompositions of a graph through its characterization as a set of multicuts.
The multicut induced by a decomposition is the set of those edges that straddle distinct components (\Cref{def:multicut}).
An example is depicted in \Cref{fig:multicut-decomposition-example}.

Our work is motivated by a limitation of multicuts as a characterization of decompositions:
For a complete graph $K_V = (V, \tbinom{V}{2})$, the characteristic function $x \colon \tbinom{V}{2} \to \{0,1\}$ of a multicut $x^{-1}(1)$ of $K_V$ makes explicit for every pair $\{u,v\} \in \tbinom{V}{2}$ whether the nodes $u$ and $v$ are in the same component, indicated by $x_{\{u,v\}} = 0$, or in distinct components, indicated by $x_{\{u,v\}} = 1$.
For a general graph $G = (V, E)$, however, the characteristic function $x \colon E \to \{0,1\}$ of a multicut $x^{-1}(1)$ of $G$ makes explicit \emph{only for neighboring nodes} $\{u,v\} \in E$ whether $u$ and $v$ are in the same or distinct components.
Hence, the binary linear optimization problem whose feasible solutions are the characteristic functions of the multicuts of a graph is less expressive for general graphs than it is for complete graphs.

In order to make explicit also for non-neighboring nodes, specifically, for all $\{u,v\} \in E \cup F$ with $F \subseteq \tbinom{V}{2} \setminus E$, whether $u$ and $v$ are in distinct components, we consider a \emph{lifting} of the multicuts of $G$ to multicuts of the augmented graph $\widehat{G} = (V, E \cup F)$.
The multicuts of $\widehat{G}$ lifted from $G$ are still in one-to-one relation with the decompositions of $G$. 
Yet, they are a more expressive characterization of these decompositions than the multicuts of $G$.
This expressiveness has applications in the field of image analysis \citep{Beier2017, Tang2017}, as we discuss in \Cref{sec:related-work}.

In this article, we study the polytope in the affine space $\mathbb{R}^{E \cup F}$ whose vertices are the characteristic functions of the multicuts of $\widehat{G}$ lifted from $G$.
We refer to this object as the \emph{lifted multicut polytope} with respect to $G$ and $\widehat G$. 
In this study, we focus separately on the cases of $G$ being a general graph, path, tree and cycle.

\subsection{Contributions}

We make the following contributions:
\begin{itemize}
\item We generalize results of \citet{Chopra1993} for multicut polytopes to lifted multicut polytopes. 
In particular, we establish full-dimensionality and the exact condition under which cycle inequalities define facets. 
Moreover, we establish conditions under which further inequalities of the canonical relaxation of the lifted multicut problem are facet-defining.

\item We establish a new class of facet-defining inequalities that arise from cycles in the graphs. 
These constitute a new class of facet-defining inequalities also for the comprehensively studied \emph{clique partitioning polytope} \cite{Groetschel1990}. 

\item We offer a complete description of the polytopes of the multicuts of a complete graph lifted from a path. 
This geometric description complements the combinatorial results about the \emph{sequential set partition} problem from \citet{Kernighan1971}.

\item We establish a new class of facet-defining inequalities for the polytopes of the multicuts of a general graph lifted from a tree and the exact condition under which these are facet-defining also when lifting from a general graph.

\item We study the relation between the lifted multicut polytope for trees and the multilinear polytope.
To this end, we generalize the inequalities we introduce for the lifted multicut polytope for trees and establish a connection to known inequalities for the multilinear polytope. 
Moreover, we show that the lifted multicut problem in case of lifting from a path corresponds to multilinear optimization over $\beta$-acyclic hypergraphs.

\item We establish further classes of facets that contribute to the understanding of polytopes of multicuts lifted from cycles.
\end{itemize}

\Cref{thm:dimension} on the full-dimensionality of the lifted multicut polytope as well as the study of facets from canonical inequalities in \Cref{sec:facets-from-canonical-ineq} have been published before in a conference article \citep{Hornakova2017}.
Here, we correct and simplify proofs given there, without altering the results. 
In this, we build on the dissertation of \citet{Lange2020phd}.
The proofs of \Cref{thm:dimension,thm:cycle-facets} we offer here employ a different (simpler) construction than the proofs by \citet{Hornakova2017} and \citet{Lange2020phd}.
We reproduce from \citet{Hornakova2017} Figures 2 and 3, from the article, and Figures 3 and 4, from the supplement. 
The results on polytopes of multicuts of a complete graph lifted from a tree or path in \Cref{sec:tree-partition-problem,sec:lmp-trees,sec:lmp-trees-facets,sec:lmp-paths}, except \Cref{thm:intersection-arbitrary-g}, have been published before in a conference article \citep{Lange2020} from which we adapt\footnote{Adapted/Translated by permission from Springer Nature: \citet{Lange2020} according to License No.~5243580658511.}, in particular, Figure 1.

\begin{figure}
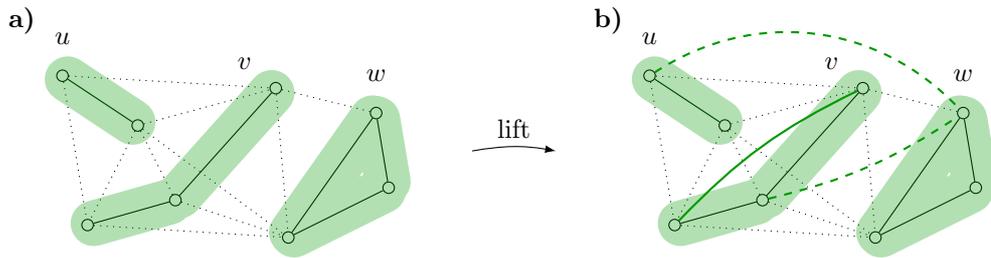

\centering
\begin{minipage}{0.4\textwidth}
\textbf{a)}
\raisebox{2ex}{\imagetop{%
\input{figures/introductory-example/a.tex}
}}
\end{minipage}
\begin{minipage}{0.1\textwidth}
	\tikz{\draw[-latex] (0,0) to[bend left=10] node[above]{lift} (1,0);}
\end{minipage}
\begin{minipage}{0.45\textwidth}
\textbf{b)}
\raisebox{2ex}{\imagetop{%
\input{figures/introductory-example/b.tex}
}}
\end{minipage}
\caption{Depicted above in \textbf{a)} (in green) is a decomposition of a graph $G$, i.e.~a partition of the node set into connected subsets.
Any decomposition of a graph is characterized by the set of those edges (depicted above as dotted lines) that straddle distinct components.
These subsets of edges are called the \emph{multicuts} of the graph.
A multicut $M$ makes explicit for neighboring nodes $u$ and $v$ whether these are in the same component, indicated by $\{u,v\} \notin M$, or in distinct components, indicated by $\{u,v\} \in M$.
In order to make this information explicit also for non-neighboring nodes, e.g.~for $\{u,w\}$, we identify the set of all decompositions of a graph $G$ with a subset of the multicuts of an augmented graph $\widehat{G}$ which we call the multicuts of $\widehat G$ \emph{lifted} from $G$. 
An example can be seen in \textbf{b)} with augmented edges depicted as thick green curves.
In this article, we study the polytopes whose vertices are the characteristic functions of lifted multicuts.
We refer to these as \emph{lifted multicut polytopes}.}
\label{fig:multicut-decomposition-example}
\end{figure}

\subsection{Contents}

This article is organized as follows.
First, in \Cref{sec:related-work}, we discuss related work.
In \Cref{sec:preliminaries}, we introduce basic terminology.
In \Cref{sec:lifting-of-multicuts}, we define the lifted multicut problem and polytope.

In \Cref{sec:lifted-multicut-polytope}, we study the lifted multicut polytope in its most general form. 
We characterize its vertices in terms of linear inequalities and integrality constraints, which yields an integer linear programming (ILP) formulation of the lifted multicut problem.
We show that the polytope is full dimensional and establish relations between different lifted multicut polytopes.
Furthermore, we investigate conditions under which the inequalities of the ILP formulation are facet-defining. 

In \Cref{sec:lmc-trees-and-paths}, we turn to the special case of lifting from a \emph{tree}. 
In this case, the lifted multicut problem can be stated as a binary multilinear optimization problem.
We offer a relaxation of the lifted multicut polytope for trees which is tighter than the standard relaxation.
In addition, we introduce a class of facet-defining inequalities for the lifted multicut polytope for trees and establish the exact condition under which these inequalities define facets also when lifting from an arbitrary graph.
Thanks to this class of inequalities, we obtain a complete description of the polytopes of the multicuts of a complete graph lifted from a \emph{path}.
We further generalize these inequalities for the case of lifting from a tree to a general, possibly incomplete, graph and provide necessary conditions for these inequalities to be facet-defining.
We establish a relation between these inequalities and known valid inequalities for the multilinear polytope. 
Beyond this, we investigate further connections between the lifted multicut polytope and the multilinear polytope.

In \Cref{sec:lmc-for-cycles}, we study the lifted multicut polytope in the complementary case of lifting from a \emph{cycle}.
First, we investigate valid inequalities that are inherited from the multicut polytope of the complete graph. 
It turns out that, with a few canonical exceptions, the known classes of inequalities are not facet-defining for the lifted multicut polytope for cycles.
We establish several new classes of facet-defining inequalities for the lifted multicut polytope for cycles.
From these results, we derive facet-defining inequalities for the lifted multicut polytope for arbitrary graphs that arise from cycles in that graph. 
These inequalities define a new class of facets of the multicut polytope for complete graphs and, hence, also for the isomorphic clique partitioning polytope.

\section{Related work}\label{sec:related-work}

We build on a long line of work in and related to the field of discrete optimization.
Below, we visit this work in chronological order, starting with early combinatorial algorithms and finishing with recent applications in the area of image analysis.

In the late 1960s, the task of partitioning the node set of a graph arises from practical problems in computer science: the laying out of circuits on computer boards \cite{charney1968efficient, russo1971heuristic} and the segmentation of computer programs \cite{kernighan1969some}. 
\citet{Kernighan1970} devise a heuristic algorithm and, for the special case of nodes adhering to a linear order, \citet{Kernighan1971} presents an algorithm that computes an optimal solution in time proportional to the number of edges. 
The corresponding integer linear programming formulation for this special case admits a totally unimodular constraint matrix \cite{Joseph1997}. 

Motivated by the task of clustering data, \citet{Groetschel1989} consider the problem of partitioning the node set of complete graphs. 
As every subset of the nodes of a complete graph induces a clique, they refer to the problem as \emph{clique partitioning}.
In the clique partitioning problem, a feasible solution is the characteristic vector of a subset of edges such that the graph consisting of all nodes and these edges is transitive.
For complete graphs, this characteristic vector is one minus the characteristic vector of a multicut, the subset of edges between cliques.
\citet{Groetschel1990} pioneer the study of the \emph{clique partitioning polytope}, the convex hull of all feasible solutions of the clique partitioning problem.
Due to the simple relation between characteristic vectors of clique partitionings and characteristic vectors of multicuts, properties of the clique partitioning polytope transfer easily to the multicut polytope for complete graphs, and vice versa.

The polyhedral study of the partition problem for general graphs is initiated by \citet{Chopra1993,Chopra1995}.
They introduce the structures that we refer to as a multicut and multicut polytope \citep[Lemma~2.2]{Chopra1993}.
Notably, they consider also additional constraints that restrict the number of components.
A complete description of small multicut polytopes can be found in \citet{Deza1990}, while \citet{Chopra1994} gives a complete description for series-parallel and 4-wheel free graphs.
\citet{Deza1992} present the class of \emph{clique-web inequalities} that generalizes several classes of facets established before by \citet{Groetschel1990,Chopra1993}. 
See also \citet{Sorensen2002}. 
They provide a comprehensive characterization of those clique-web inequalities that define facets, even for cases where additional constraints are imposed on the number of components.
In subsequent studies, \citet{Groetschel1990composition,Bandelt1999lifting,Oosten2001clique} present techniques for composing new classes of facet-defining inequalities from known inequalities. 
In particular, \citet{Oosten2001clique} succeed in classifying all facet-defining inequalities of the clique partitioning polytope with right hand side $1$ or $2$.
Many of the aforementioned results build on prior polyhedral studies of the cut polytope \cite{Barahona1986}, the cut cone \cite{deza1992cutconeI,deza1992cutconeII} and the bipartite subgraph polytope \cite{barahona1985bipartite}.
For an overview of known facets of the multicut polytope for complete graphs, see \Cref{tab:known-facets}.
The separation problem for various classes of facet-defining inequalities is discussed by \citet{Groetschel1989,Deza1992,muller1996partial,caprara1996,Oosten2001clique}.

\begin{table}[b!]
\caption{Classes of valid inequalities for the multicut polytope of a complete graph.}
\centering
\small
\vspace{0.5ex}
\begin{tabular}{>{\raggedright\hangindent=2ex}p{0.25\textwidth} L{0.34\textwidth} L{0.32\textwidth}}
\toprule
Inequality & Original reference & Remarks \\
\midrule
$0 \leq x$ & \citet{Groetschel1990} & Not facet-defining 
\\
$x \leq 1$ & \citet{Groetschel1990} & Facet-defining 
\\
Triangle inequality & \citet{Groetschel1990} & Facet-defining 
\\
$[S,T]$-inequality & \citet{Groetschel1990} & Facet-defining iff $|S| \neq |T|$ 
\\
2-chorded cycle inequality & \citet{Groetschel1990} & See~\Cref{thm:2-chorded-valid-and-facet} 
\\
2-chorded path inequality & \citet{Groetschel1990} & Facet-defining iff the paths has even length 
\\
2-chorded even wheel inequality & \citet{Groetschel1990} & Facet-defining 
\\
General 2-partition inequality & \citet{Groetschel1990composition} & Facet-defining; generalizes $[S,T]$-inequalities
\\
Cycle inequality & \citet{Chopra1993} & Facet-defining precisely for chordless cycles, i.e., for triangles; generalizes triangle inequalities
\\
(Bicycle) wheel inequality & \citet{Chopra1993} & Facet-defining iff wheel has odd length 
\\
Clique-web inequalities & \citet{Deza1992} & See~\Cref{thm:clique-web-facet}; generalizes triangle, $[S,T]$- and (bicycle) wheel inequalities 
\\
(Lifted) weighted (s,T)-inequality & \citet{Oosten2001clique} & Generalizes $[S,T]$-inequalities with $|S|=1$ 
\\
(Lifted) stable set inequality & \citet{Oosten2001clique} & Facet-defining under certain necessary conditions
\\
Generalized 2-chorded cycle inequality & \citet{Oosten2001clique} & Generalizes 2-chorded cycle inequalities by additionally including some $3$-chords and $4$-chords; facet-defining under certain necessary conditions
\\
Generalized 2-chorded path inequality & \citet{Oosten2001clique} & Facet-defining for paths of even length; generalizes 2-chorded path inequalities  
\\
Generalized 2-chorded even wheel inequality & \citet{Oosten2001clique} & Facet-defining; generalizes 2-chorded even wheel inequalities  
\\
\bottomrule
\end{tabular}
\label{tab:known-facets}
\end{table}

Optimization problems closely related to the multicut problem are \emph{correlation clustering} \cite{Bansal2004} and \emph{coalition structure generation} in \emph{weighted graph games} \cite{bachrach2013optimal}.
For correlation clustering, one further distinguishes between three variations: \emph{minimizing disagreement}, \emph{maximizing agreement} and \emph{maximizing correlation}.
All variations of correlation clustering, as well as coalition structure generation in weighted graph games and multicut share the same set of feasible solutions and differ only by constant additive terms in the objective function.
Therefore, these problems are equivalent at optimality, and \textsc{np}-hardness of one implies \textsc{np}-hardness of the others.
\citet{Bansal2004} show that correlation clustering for complete graphs is \textsc{np}-hard even for unit weights. 
Independently, \citet{voice2012coalition} and \citet{bachrach2013optimal} show that the problem remains \textsc{np}-hard for planar graphs.
Complementary to this hardness result, \citet{Klein2015} present a polynomial time approximation scheme for planar graphs via a reduction to the problem of finding a minimal two-edge-connected augmentation.

Although these variations of the problems all have the same solutions, they differ significantly regarding the hardness of approximation.
A survey covering all variations and all restrictions to specific classes of graphs already studied is beyond the scope of this article.
We summarize some important results:
\citet{Bansal2004} offer a constant factor approximation algorithm for minimizing disagreement in unweighted complete graphs.
Their results are strengthened by \citet{Charikar2005} who show \textsc{apx}-hardness and significantly improve the approximation factor to $4$.
Independently, \citet{Charikar2005} and \citet{Demaine2006} develop $\mathcal{O}(\text{log}\; n)$ approximation algorithms for minimizing disagreement in general weighted graphs.
For the problem of maximizing agreement, \citet{Bansal2004} offer a polynomial time approximation scheme in case of unweighted complete graphs while the problem is \textsc{apx}-hard for general weighted graphs \cite{Charikar2005}.
Coalition structure generation in weighted graph games is the hardest of the problem variations in the sense that it cannot be approximated to within $\mathcal{O}(n^{1-\epsilon})$ for all $\epsilon > 0$ unless \textsc{p} = \textsc{np} \cite{bachrach2013optimal,zuckerman2006linear}.
For further results on the hardness of approximation and the study of efficient (approximation) algorithms, the interested reader is referred to \citet{emanuel2003correlation,Charicar2004maximizing,swamy2004correlation,Chawla2006,Chawla2015,Veldt2017,Veldt2021}.

The \emph{lifted multicut problem}, the generalization of the multicut problem in which costs assigned to pairs of \emph{non-neighboring} nodes are taken into account as well, is introduced by \citet[Definition 1]{Keuper2015b} in the context of applications in the fields of image analysis and computer graphics.
Specifically, they consider the problem of decomposing the pixel grid graph of an image into objects based on estimates of object boundaries, and the task of decomposing simplicial surfaces of three-dimensional objects into smooth components based on estimates of their curvature.
Subsequently, \citet{Beier2017} apply the lifted multicut problem for the task of decomposing large volume images of neural tissue into individual cells.
At the same time, \citet{Tang2017} employ lifted multicuts for the task of tracking multiple pedestrians in a monocular video.
Their task is to decide for candidate detections of pedestrians, modeled as nodes in a graph, whether these refer to the same pedestrian or distinct pedestrians.
They use lifting to relate candidate detections across longer distances in time.
Specifically, they relate candidate detections that appear similar in the video by additional edges with positive cost, thus rewarding feasible solutions in which these candidate detections refer to the same pedestrian, but without introducing additional feasible solutions that would link these candidate detections directly via the additional edges.
They empirically quantify an advantage, in the context of this application, of this lifted multicut problem over the multicut problem with respect to just the augmented graph and explain it from the point of view of the application by the simple fact that similar looking pedestrians need not be identical.

The lifted multicut problem in case of lifting from a tree can be stated equivalently as minimization of a \emph{multilinear} objective function in binary variables, i.e.~a \emph{pseudo-Boolean function}.
In this way, the study of the lifted multicut polytope is connected to the field of multilinear optimization.
The combinatorial polytope associated with the linearization of quadratic pseudo-Boolean functions is studied, among others, by \citet{Hammer1984,Barahona1986,Padberg1989,DeSim}. 
Quadratization techniques are commonly used in order to solve pseudo-Boolean optimization problems.
The benefit of reducing the problem to a quadratic one, via the addition of variables and constraints, is the possibility to take advantage of the rich literature available for the quadratic case.
We refer the reader to e.g. \citet{Ros75,Boros2002,BucRin07,Ish11,Boros2012,EllLamLaz19}.
Recent research also considers the linearization of more general multilinear forms \cite{dPKha17,dPKha21,dPDiG21,HojPfeWal19}. 
This line of work is motivated by the idea of avoiding the additional constraints and variables introduced in order to express the problem as a quadratic one, and to exploit the structure of the original problem.
The connections between this line of work and our results is discussed in more detail in \Cref{sec:multilinear}.

While the multicut and lifted multicut problem are non-trivial only for a combination of positive and negative costs attributed to the edges, related problems defined for non-negative edge weights only and additional constraints include 
\emph{$k$-terminal cut} \cite{dahlhaus1992complexity} where $k$ terminals are to be separated by a cut, 
\emph{$k$-multicommodity cut} \cite{leighton1999multicommodity} where $k$ source-sink pairs are to be separated by a cut, and 
\emph{$k$-cut} \cite{goldschmidt1994polynomial} where the graph is to be cut into $k$ components without consideration of specific nodes.
The first two problems are \textsc{np}-hard for $k \geq 3$ while the latter can be solved in polynomial time for fixed $k$. 
Polynomial equivalence of $k$-multicommodity cut and correlation clustering is established by \citet{Demaine2006}.


\section{Preliminaries}\label{sec:preliminaries}

Let $G = (V, E)$ be a connected graph with node set $V$ and edge set $E \subseteq \binom{V}{2}$.
For pairs of nodes $u,v \in V$ with $u \neq v$, we write $uv = \{u,v\} = vu$, for short.
For a subset $A \subseteq E$ of edges, we let $\1_A \in \{0,1\}^E$ denote the \emph{characteristic vector} of the set $A$, i.e. $(\1_A)_e = 1 \Leftrightarrow e \in A$ for all $e \in E$.
For clarity, we distinguish between decompositions of the graph, i.e.~specific partitions of the node set, and (multi)cuts of the graph, i.e.~sets of edges that straddle distinct components.
Examples are depicted in \Cref{fig:multicut-decomposition-example}a.

\begin{definition}
\label{def:decomposition}
A partition $\Pi$ of the node set $V$ of a graph $G = (V, E)$ is called a \emph{decomposition} of $G$ if and only if every $U \in \Pi$ induces a (connected) component of $G$.
For any $k \in \N$ and any decomposition $\Pi$ of $G$, $\Pi$ is called a \emph{$k$-decomposition} if and only if $|\Pi|=k$. 
For any two distinct nodes $u,v \in V$ and any 2-decomposition $\Pi = \{U,V\setminus U\}$, $\Pi$ is called a \emph{$uv$-decomposition} if and only if $u \in U$ and $v \in V \setminus U$, or $u \in V \setminus U$ and $v \in U$.
\end{definition}

\begin{definition}
\label{def:multicut}
A set $M \subseteq E$ is called a \emph{multicut} of a graph $G = (V, E)$ if and only if there exists a decomposition $\Pi$ of $G$ such that $M$ consists of precisely those edges that straddle distinct components of $\Pi$.
\end{definition}

As the decomposition in \Cref{def:multicut} is unique, there is a bijection $\phi_G: D_G \to M_G$ between the set $D_G \subseteq 2^{2^V}$ of all decompositions of $G$ and the set $M_G \subseteq 2^E$ of all multicuts of $G$, specifically
\begin{align}\label{eq:phi}
    \phi_G(\Pi) = \left\{uv \in E \mid \forall U \in \Pi: u \notin U \text{ or } v \notin U \right\} \enspace .
\end{align}
We will frequently switch between decompositions and multicuts and refer to one as being \emph{induced} by the other. For an illustration, see \Cref{fig:multicut-decomposition-example}a.
Specifically, a multicut $M$ of $G$ is called a \emph{$k$-cut} of $G$ if and only if $M$ is induced by a $k$-decomposition.
Similarly, a multicut $M$ of $G$ is called a \emph{$uv$-cut} of $G$ if and only if $M$ is induced by a $uv$-decomposition.
For any $uv$-decomposition $\Pi = \{U,V\setminus U\}$ we let $\delta(U)$ denote the induced $uv$-cut. 
Note that $uv$-cuts are necessarily minimal, for when removing any edge from $\delta(U)$, this set is no longer a $uv$-cut.

We conclude this section with a characterization of multicuts by \citet{Chopra1993}.

\begin{proposition}[Lemma 2.2 of \citet{Chopra1993}]\label{prop:multicuts-statisfy-cycle-ineq}
A subset $M \subseteq E$ of the edge set $E$ of a graph $G = (V, E)$ is a multicut of $G$ if and only if no cycle of $G$ contains precisely one edge of $M$.
\end{proposition}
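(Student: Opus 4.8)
The plan is to prove both directions of the equivalence, using the bijection between decompositions and multicuts recorded in \eqref{eq:phi}.

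For the \emph{forward} direction, suppose $M \subseteq E$ is a multicut of $G$, induced by a decomposition $\Pi$. Let $C = (v_0, v_1, \dots, v_{k-1}, v_0)$ be any cycle of $G$, and I would show that $C$ cannot contain exactly one edge of $M$. The key observation is that an edge $v_i v_{i+1}$ lies in $M$ precisely when $v_i$ and $v_{i+1}$ belong to distinct blocks of $\Pi$. Walking around the cycle, each edge either keeps us in the same block or crosses into a different one; since the walk returns to its starting node $v_0$, which lies in a single block, the number of block-changes (edges in $M$) along the cycle must be compatible with returning to the start. Concretely, I would assign to each node $v_i$ its block $U(v_i) \in \Pi$ and note that the number of indices $i$ with $U(v_i) \neq U(v_{i+1})$ cannot equal $1$: a single block-change cannot be undone without at least one further change, so closing the cycle forces the count to be $0$ or at least $2$. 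This rules out exactly one cut edge.

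For the \emph{converse}, suppose $M \subseteq E$ has the property that no cycle of $G$ contains precisely one edge of $M$, and I would construct a decomposition inducing $M$. The natural construction is to define a relation on $V$ by declaring $u \sim v$ whenever $u$ and $v$ are joined by a path in the subgraph $G' = (V, E \setminus M)$ obtained by deleting the edges of $M$; that is, take the connected components of $G'$ as the candidate blocks $\Pi$. This $\Pi$ is automatically a partition of $V$ into connected (in $G'$, hence in $G$) subsets, so it is a decomposition. What remains is to verify that the multicut it induces, namely $\phi_G(\Pi)$, equals $M$. By construction every edge in $E \setminus M$ joins two nodes in the same component, so $\phi_G(\Pi) \subseteq M$. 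The nontrivial inclusion is $M \subseteq \phi_G(\Pi)$: I must show that no edge $e = uv \in M$ has both endpoints in the same component of $G'$. If it did, there would be a $u$--$v$ path $P$ in $G'$ avoiding all edges of $M$, and then $P$ together with the edge $e$ would form a cycle of $G$ containing exactly one edge of $M$ (namely $e$), contradicting the hypothesis.

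I expect the main obstacle to be handling the forward direction cleanly, specifically making rigorous the intuitive ``parity/return'' argument that a closed walk cannot change blocks exactly once; the cleanest formalization is probably to count, around the cycle, the edges whose endpoints lie in distinct blocks and argue by a telescoping/closure argument that this count is never $1$. A minor subtlety worth flagging is that \Cref{prop:multicuts-statisfy-cycle-ineq} speaks of cycles of $G$, so one should be careful to use simple cycles (or note that the statement is equivalent for simple cycles, since any closed walk containing exactly one $M$-edge would contain a simple cycle with that property). In the converse, the only care needed is to confirm that the deletion construction yields components that are genuinely connected subgraphs of $G$, which is immediate since $G' $ is a subgraph of $G$ on the same node set.
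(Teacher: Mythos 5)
Your proof is correct, and there is nothing in the paper to compare it against: the paper does not prove this statement but imports it as Lemma~2.2 of \citet{Chopra1993}. Your argument is exactly the standard one from that source --- in the forward direction, a cycle cannot straddle distinct blocks exactly once, and in the converse, the components of $G' = (V, E \setminus M)$ form a decomposition whose induced multicut is $M$, because a putative $M$-edge with both endpoints in one component would close an uncut path into a cycle containing precisely one edge of $M$. One phrasing caveat: drop the word ``parity'' in the forward direction, since the number of cut edges on a cycle need not be even once more than two blocks are involved (a triangle whose three nodes lie in three distinct blocks has three cut edges); your telescoping formalization --- if exactly one index $i$ satisfies $U(v_i) \neq U(v_{i+1})$, the remaining equalities chain around the cycle to give $U(v_i) = U(v_0) = U(v_{i+1})$, a contradiction --- is the right one and already yields the correct conclusion that the count is $0$ or at least $2$.
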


\section{Lifting of multicuts}\label{sec:lifting-of-multicuts}

For any multicut $M$ of $G = (V,E)$, the characteristic vector $\1_M$ makes explicit for every pair $uv \in E$, whether $u$ and $v$ are in distinct components.
To make explicit also for non-neighboring nodes, specifically, for all $uv \in F$ where $F \subseteq \binom{V}{2} \setminus E$, whether $u$ and $v$ are in distinct components, we define a lifting of the multicuts of $G$ to multicuts of the augmented graph $\widehat G = (V, E \cup F)$.

\begin{definition}
Let $G=(V,E)$ be a connected graph and let $\widehat G = (V, E \cup F)$ with $F \subseteq \binom{V}{2}\setminus E$ be an \emph{augmentation} of $G$, i.e. the graph obtained from $G$ by adding the set of edges $F$.
We call the composed map $\phi_{\widehat G} \circ \phi_G^{-1}$ the \emph{lifting} of multicuts from $G$ to $\widehat G$.
For any multicut $M$ of $G$, we call the set $(\phi_{\widehat G} \circ \phi_G^{-1}) (M)$ a multicut of $\widehat G$ \emph{lifted from} $G$.
\end{definition}

\begin{definition}
    Let $G = (V,E)$ be a connected graph and let $\widehat G = (V, E \cup F)$ be an augmentation. 
    We call the convex hull of characteristic vectors of multicuts of $\widehat G$ lifted from $G$ the \emph{lifted multicut polytope} with respect to $G$ and $\widehat G$, denoted by
    \begin{align}\label{eq:lifted-multicut-polytope}
        \lmc(G,\widehat G) = 
        \conv \big \{ \1_M \mid M \text{ multicut of } \widehat G \text{ lifted from } G \big \} \enspace . 
    \end{align}
    For brevity, we define $\lmc(G):=\lmc(G,K_V)$ where $K_V$ is the complete graph with nodes $V$.
\end{definition}

\begin{definition}
    Let $G = (V,E)$ be a connected graph, let $\widehat{G} = (V,E \cup F)$ be an augmentation and let $\theta \in \R^{E \cup F}$ be a vector associated with the edges of the augmented graph.
    The instance of the \emph{lifted multicut problem} with respect to $G$, $\widehat G$ and $\theta$ consists in finding a minimum cost multicut of $\widehat G$ lifted from $G$ with respect to $\theta$. 
    It has the form
    \begin{align}\label{eq:lifted-multicut-problem}
        \min_{x \in \lmc(G,\widehat{G})} \; \sum_{e \in E \cup F} \theta_e \, x_e \enspace. \tag{LMP}
    \end{align}
\end{definition}

If $F = \emptyset$, then \eqref{eq:lifted-multicut-problem} specializes to the \emph{multicut problem}, i.e., the linear optimization problem over the \emph{multicut polytope} $\mc(G) := \lmc(G, G)$.
If $F \neq \emptyset$, then \eqref{eq:lifted-multicut-problem} differs from the multicut problem with respect to $\widehat G$ and $\theta$, since then it holds that $\lmc(G,\widehat G) \subset \mc(\widehat G)$, cf.~\Cref{prop:inclusion-property}. For an example, see \Cref{fig:multicut-polytope}.
In the lifted multicut problem, the assignment $x_{uv} = 0$ indicates that the nodes $u$ and $v$ are connected in $G$ by a path of edges labeled 0.
This property can be used to penalize (by $\theta_{uv} > 0$) or reward (by $\theta_{uv} < 0$) those decompositions of $G$ for which $u$ and $v$ are in distinct components.

We close this section by establishing \textsc{apx}-hardness for the minimum multicut problem by a simple reduction from the closely related \emph{maximum agreement correlation clustering} problem.

\begin{proposition}\label{prop:mc-apx-hard}
    The multicut problem is \emph{\textsc{apx}}-hard even for costs $c \in \{-1,1\}^E$. 
\end{proposition}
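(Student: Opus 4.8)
The plan is to give an approximation-preserving reduction \emph{from} maximum-agreement correlation clustering on general graphs, which is \textsc{apx}-hard \citep{Charikar2005}, \emph{to} the multicut problem. Given such an instance on $G = (V,E)$ whose edges are partitioned into a set $E^+$ of \emph{attractive} and a set $E^-$ of \emph{repulsive} pairs, I would define $c \in \{-1,1\}^E$ by $c_e = 1$ for $e \in E^+$ and $c_e = -1$ for $e \in E^-$, and hand $(G,c)$ to the multicut problem unchanged, so that the reduction is computable in linear time.

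First I would record that the two problems have, in effect, the same feasible region. The agreement objective depends only on the induced edge labelling $x \in \{0,1\}^E$, and the set of labellings realizable by a clustering is exactly the set of multicuts of $G$: the cut set of any partition contains no cycle of $G$ in precisely one of its edges and is therefore a multicut by \Cref{prop:multicuts-statisfy-cycle-ineq}, while conversely every multicut is induced by its decomposition. Counting agreements then gives
\[
  \mathrm{agr}(x) = \sum_{e \in E^+}(1 - x_e) + \sum_{e \in E^-} x_e = \abs{E^+} - \sum_{e \in E} c_e x_e,
\]
so that maximizing agreements over clusterings is, solution by solution, the same as minimizing $\sum_{e \in E} c_e x_e$ over multicuts, and in particular $\mathrm{OPT}_{\mathrm{agr}} = \abs{E^+} - \mathrm{OPT}_{\mc}$.

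The main obstacle is that this affine identity does not by itself transfer the inapproximability gap. Because the all-together clustering already scores $\abs{E^+}$ agreements, we have $\mathrm{OPT}_{\mc} \le 0$, so a multiplicative guarantee for $\mathrm{OPT}_{\mc}$ is ill-posed, and the additive constant $\abs{E^+}$ separating the two objectives is precisely the effect that lets maximum agreement admit a PTAS on \emph{complete} graphs while the minimization is hard. To circumvent this I would pass to the equivalent maximization of the \emph{gain} $w(x) := -\sum_{e \in E} c_e x_e = \sum_{e \in E^-} x_e - \sum_{e \in E^+} x_e$, whose optimum is $\mathrm{OPT}_{\mathrm{agr}} - \abs{E^+} \ge 0$ and equals $-\mathrm{OPT}_{\mc}$. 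It then suffices to verify that the standard \textsc{apx}-hardness instances for maximum agreement on general graphs (obtained by a gap reduction from, e.g., \textsc{max cut}) have optimal gain $\mathrm{OPT}_{\mathrm{agr}} - \abs{E^+} = \Theta(\abs{E})$ with a constant multiplicative gap \emph{in this gain}; granting this, any hypothetical $(1+\varepsilon)$-approximation of the multicut problem would approximate $w$ within the same factor and decide the gap instances in polynomial time, contradicting $\textsc{p} \neq \textsc{np}$. The single delicate point I would spend the most care on is checking that the chosen inapproximability construction places the gap in the gain $w$ rather than merely in the total agreement count $\mathrm{agr}$.
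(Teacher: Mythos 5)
Your reduction is the paper's: identify clusterings with multicuts, set $c_e = +1$ on $E^+$ and $c_e = -1$ on $E^-$, and use the affine identity $\mathrm{agr}(x) = \abs{E^+} - \sum_{e \in E} c_e x_e$. Your passage to the gain $w(x) = -\sum_{e \in E} c_e x_e$ is also fine and is in fact equivalent to the convention the paper adopts for a minimization problem with non-positive optimum, namely that an $\alpha$-approximation with $0 < \alpha \leq 1$ satisfies $A_{\mathrm{MC}} \leq \alpha\,\mathrm{OPT}_{\mathrm{MC}}$. The problem is what comes next: the ``main obstacle'' you construct for yourself is not an obstacle, and the detour you take around it leaves a genuine hole. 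You claim the affine identity ``does not by itself transfer the inapproximability gap,'' and conclude that one must re-open the \textsc{max cut}-style gap construction and verify that the hard instances have optimal gain $\Theta(\abs{E})$ with a constant relative gap in the gain. You never carry out that verification (``granting this''), so as written the proof is incomplete at exactly the step you yourself flag as delicate.

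More importantly, the verification is unnecessary, because the additive constant $\abs{E^+}$ obstructs only the direction you do not need. Suppose an algorithm achieves $A_w \geq \alpha\,\mathrm{OPT}_w$ for the gain, i.e.\ $A_{\mathrm{MC}} \leq \alpha\,\mathrm{OPT}_{\mathrm{MC}}$ for the multicut objective. Adding $\abs{E^+}$ to both sides yields $A_{\mathrm{agr}} \geq \alpha\,\mathrm{OPT}_{\mathrm{agr}} + (1-\alpha)\abs{E^+} \geq \alpha\,\mathrm{OPT}_{\mathrm{agr}}$, since $\abs{E^+} \geq 0$ and $\alpha \leq 1$. Hence any $\alpha$-approximation of the multicut problem is an $\alpha$-approximation of maximum-agreement correlation clustering with the \emph{same} factor, and the unit-weight inapproximability bound of $\tfrac{115}{116}+\epsilon$ of \citet{Charikar2005} transfers verbatim to costs $c \in \{-1,1\}^E$ without touching the internals of any hardness construction; this two-line inequality chain is precisely the paper's proof. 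Your intuition that the additive shift is ``precisely the effect that lets maximum agreement admit a PTAS on complete graphs'' concerns the converse implication --- an agreement approximation need not approximate the gain, because $(1-\alpha)\abs{E^+}$ then sits on the wrong side --- but that converse is never used; you conflated the two directions. Had you pushed your own identity through the approximation guarantee instead of retreating to a gap-instance analysis of the source problem, your argument would have closed.
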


\proofref{prop:mc-apx-hard}

As the minimum lifted multicut problem generalizes the multicut problem, we obtain

\begin{corollary}
    The lifted multicut problem \eqref{eq:lifted-multicut-problem} is \emph{\textsc{apx}}-hard.
\end{corollary}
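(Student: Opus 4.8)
The plan is to exploit the fact, already noted immediately after \eqref{eq:lifted-multicut-problem}, that the multicut problem is exactly the special case $F = \emptyset$ of the lifted multicut problem, and to argue that \textsc{apx}-hardness is inherited through this specialization.

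First I would make the embedding explicit. Given any instance of the multicut problem, that is, a connected graph $G = (V,E)$ together with costs $\theta \in \R^E$, I map it to the instance of \eqref{eq:lifted-multicut-problem} with respect to $G$, the augmentation $\widehat G = G$ (i.e.\ $F = \emptyset$), and the same cost vector $\theta$. Since $\lmc(G, G) = \mc(G)$, the two problems have identical feasible regions and identical objective functions; in particular their optimal values coincide, and every feasible solution carries the same cost in both. This map is clearly computable in linear time.

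Next I would argue that this is not merely a polynomial-time many-one reduction but an approximation-preserving one: because the objective value is preserved verbatim under the embedding, any feasible solution of the lifted instance that is within a factor $\rho$ of the optimum yields, unchanged, a solution of the multicut instance within the same factor $\rho$ of its (equal) optimum. Hence any $\rho$-approximation algorithm for the lifted multicut problem is also a $\rho$-approximation algorithm for the multicut problem, and in particular a PTAS for the former would give a PTAS for the latter. By \Cref{prop:mc-apx-hard} no such PTAS exists unless \textsc{p} = \textsc{np}, which proves the claim.

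I do not expect a genuine obstacle here, since the reduction is an identity on objectives; the only point deserving care is to confirm that the specialization preserves approximation ratios and not just feasibility, which is immediate from the fact that costs transfer value-for-value.
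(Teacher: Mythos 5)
Your proposal is correct and takes essentially the same route as the paper, which derives the corollary in a single line from the observation that the multicut problem is the special case $F = \emptyset$ of \eqref{eq:lifted-multicut-problem} together with \Cref{prop:mc-apx-hard}; your explicit verification that the embedding preserves objective values verbatim (and hence approximation ratios) is exactly the implicit content of that one-line argument.
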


\section{Lifted multicut polytope}\label{sec:lifted-multicut-polytope}

\begin{figure}
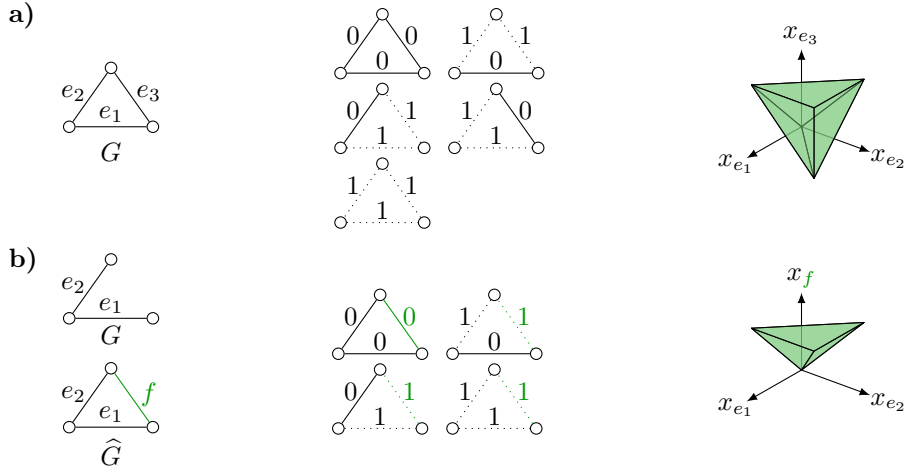

	\centering
    \begin{minipage}[t]{0.8\linewidth}
    \imagetop{\textbf{a)}}\vspace{-2ex}\hspace{3ex}
    \input{figures/triangle-example/triangle-graph.tex}
    \hfill
    \begin{tabular}{@{}l@{\ }l@{}}
    \input{figures/triangle-example/decompositions-a.tex}
    & \input{figures/triangle-example/decompositions-b.tex} \\
    \input{figures/triangle-example/decompositions-c.tex}
    & \input{figures/triangle-example/decompositions-d.tex} \\
    \input{figures/triangle-example/decompositions-e.tex}
    \end{tabular}
    \hfill
    \input{figures/triangle-example/multicut-polytope.tex}
    \end{minipage}
    \\
    \begin{minipage}[t]{0.8\linewidth}
    \imagetop{\textbf{b)}}\vspace{-2ex}\hspace{3ex}
    \begin{tabular}{@{}l@{}}
    \input{figures/triangle-example/lifted-a.tex}
    \\[3ex]
    \input{figures/triangle-example/lifted-b.tex}
    \end{tabular}
    \hfill
    \begin{tabular}{@{}l@{\ }l@{}}
    \input{figures/triangle-example/lifted-decompositions-a.tex}
    & \input{figures/triangle-example/lifted-decompositions-b.tex} \\
    \input{figures/triangle-example/lifted-decompositions-c.tex}
    & \input{figures/triangle-example/lifted-decompositions-d.tex}
    \end{tabular}
    \hfill
    \input{figures/triangle-example/lifted-multicut-polytope.tex}
    \end{minipage}
    \caption{\textbf{a)} For any connected graph $G$ (left), 
    the characteristic vectors of multicuts of $G$ (middle)
    span, as their convex hull in $\R^E$, the multicut polytope of $G$ (right),
    a 01-polytope that is $\abs{E}$-dimensional
    \citep{Chopra1993}.
    \textbf{b)} For any connected graph $G = (V, E)$ (top left)
    and any graph $\widehat G = (V, E \cup F)$ (bottom left),
    the characteristic vectors of multicuts of $\widehat G$ that are lifted from $G$ (middle)
    span, as their convex hull in $\R^{E \cup F}$, the lifted multicut polytope with respect to $G$ and $\widehat G$ (right),
    a 01-polytope that is $\abs{E \cup F}$-dimensional
    (\Cref{thm:dimension}).}
    \label{fig:multicut-polytope}
\end{figure}

In this section, we study the geometry of the lifted multicut polytope $\lmc(G,\widehat{G})$ defined by \eqref{eq:lifted-multicut-polytope}. 
To this end, we first state a description of $\lmc(G,\widehat{G})$ in terms of linear inequalities and integrality constraints.

\begin{proposition}\label{prop:lifted-multicut-polytope-inequalities}
    The polytope $\lmc(G,\widehat{G})$ is the convex hull of all vectors $x \in \{0,1\}^{E \cup F}$ that satisfy the following inequalities:
    \begin{align}
        x_f &\leq \sum_{e \in E_C \setminus \{f\}} x_e 
            && \forall \; \text{cycles } C=(V_C,E_C) \text{ in } G \quad \forall f \in E_C
            \label{eq:lifted-multicut-cycle} \\
        x_{uv} &\leq \sum_{e \in E_P} x_e 
            && \forall \; uv \in F \text{ and all } uv\text{-paths } P \text{ in } G
            \label{eq:lifted-multicut-path} \\
        1 - x_{uv} &\leq \sum_{e \in \delta(U)} 1 - x_e 
            && \forall \; uv \in F \text{ and all } uv\text{-cuts } \delta(U) \text{ in } G
           	\enspace .
            \label{eq:lifted-multicut-cut}
    \end{align}
\end{proposition}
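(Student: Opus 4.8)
The plan is to reduce the statement to a purely combinatorial characterization of its $0$-$1$ points. Since $\lmc(G,\widehat G)$ is, by definition, the convex hull of the characteristic vectors $\1_M$ of the multicuts $M$ of $\widehat G$ lifted from $G$, and since all vectors involved lie in $\{0,1\}^{E\cup F}$, it suffices to show that a vector $x \in \{0,1\}^{E \cup F}$ is the characteristic vector of some multicut of $\widehat G$ lifted from $G$ if and only if it satisfies \eqref{eq:lifted-multicut-cycle}--\eqref{eq:lifted-multicut-cut}; taking convex hulls of the two (then equal) sets yields the claim. Throughout I use that a lifted multicut is exactly a set of the form $\phi_{\widehat G}(\Pi)$ for a decomposition $\Pi$ of $G$, so that its characteristic vector satisfies $x_{uv} = 1$ precisely when $u$ and $v$ lie in distinct components of $\Pi$, for every $uv \in E \cup F$.

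First I would prove that every lifted multicut satisfies the three inequalities. Fix the inducing decomposition $\Pi$. For \eqref{eq:lifted-multicut-cycle}, the restriction $x|_E$ is the multicut of $G$ induced by $\Pi$, so by \Cref{prop:multicuts-statisfy-cycle-ineq} no cycle of $G$ contains exactly one edge $e$ with $x_e = 1$; for a $0$-$1$ vector this is equivalent to \eqref{eq:lifted-multicut-cycle}. For \eqref{eq:lifted-multicut-path}, the inequality is trivial when $x_{uv}=0$, and when $x_{uv}=1$ the nodes $u,v$ lie in distinct components, so any $uv$-path in $G$ crosses from one component to another and hence contains an edge $e$ with $x_e=1$. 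For \eqref{eq:lifted-multicut-cut}, the inequality is trivial when $x_{uv}=1$, and when $x_{uv}=0$ the nodes $u,v$ lie in a common component $U_i$ of $\Pi$; since $U_i$ is connected in $G$, a $uv$-path inside $U_i$ must cross any $uv$-cut $\delta(U)$, contributing an edge $e \in \delta(U)$ with $x_e=0$, i.e.\ with $1-x_e=1$.

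Conversely, suppose $x \in \{0,1\}^{E\cup F}$ satisfies all three inequalities. Inequality \eqref{eq:lifted-multicut-cycle} says that no cycle of $G$ contains exactly one edge with $x_e=1$, so by \Cref{prop:multicuts-statisfy-cycle-ineq} the set $\{e \in E : x_e = 1\}$ is a multicut of $G$, inducing a decomposition $\Pi$ with components $U_1,\dots,U_k$; it remains to show that $x_{uv}=1$ if and only if $u,v$ lie in distinct components of $\Pi$, for every $uv \in F$. If $u,v$ lie in a common component $U_i$, then, $U_i$ being connected, there is a $uv$-path $P$ in $G$ with all edges inside $U_i$, hence with $x_e = 0$ for all $e \in E_P$, and \eqref{eq:lifted-multicut-path} forces $x_{uv}=0$. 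If instead $u,v$ lie in distinct components, I would exhibit a $uv$-cut $\delta(U)$ all of whose edges satisfy $x_e=1$, whence \eqref{eq:lifted-multicut-cut} forces $x_{uv}=1$.

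The construction of this cut is the main obstacle, since a $uv$-cut must arise from a $uv$-decomposition $\{U, V\setminus U\}$ with both $U$ and $V\setminus U$ connected, so one cannot simply take $U$ to be the component of $u$. To get around this, I would pass to the quotient graph $H$ obtained by contracting each component $U_i$ to a single node; since $G$ is connected, so is $H$. Choosing a spanning tree of $H$ and deleting an edge on the path between the nodes representing the components of $u$ and of $v$ splits $H$ into two subtrees, which pull back to a $uv$-decomposition $\{U, V \setminus U\}$ of $G$ that coarsens $\Pi$, in the sense that each $U_i$ lies entirely in $U$ or in $V\setminus U$ and both sides are connected in $G$. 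Every edge of $\delta(U)$ then runs between distinct components of $\Pi$ and thus satisfies $x_e=1$, which completes the equivalence of the two $0$-$1$ sets and hence the proof.
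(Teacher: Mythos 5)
Your proposal is correct, and its converse direction takes a genuinely different route from the paper's. The paper first shows that $M = x^{-1}(1)$ is a multicut of the \emph{augmented} graph $\widehat G$, by taking a hypothetical cycle of $\widehat G$ containing exactly one edge of $M$, replacing each uncut $F$-edge on it by an uncut path in $G$, and invoking \Cref{prop:multicuts-statisfy-cycle-ineq} in $\widehat G$; only afterwards does it argue that the induced partition is a decomposition of $G$. You instead work entirely inside $G$: the cycle inequalities \eqref{eq:lifted-multicut-cycle} plus \Cref{prop:multicuts-statisfy-cycle-ineq} applied to $G$ give a decomposition $\Pi$ from the restriction $x|_E$, the path inequalities \eqref{eq:lifted-multicut-path} force $x_{uv}=0$ within components, and the cut inequalities \eqref{eq:lifted-multicut-cut} force $x_{uv}=1$ across components. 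The main thing your version buys is that it makes explicit a step the paper compresses: since a $uv$-cut in this paper must arise from a $uv$-decomposition with \emph{both} sides connected, one cannot take $U$ to be the component of $u$, and your contraction-plus-spanning-tree construction of a coarsening $uv$-decomposition is exactly what is needed where the paper writes only ``Otherwise there would be some $uv$-cut in $G$ violating \eqref{eq:lifted-multicut-cut}, as $G$ is connected.'' (Indeed, the paper's final sentence attributes the last contradiction to a path inequality, where — since $x_{vw}=0$ there — it is really the cut inequality via precisely your construction that does the work.) The forward direction is essentially the same in both proofs, modulo your handling the cycle and path inequalities by direct component-crossing arguments rather than through multicuts of $\widehat G$; both are fine.
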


\proofref{prop:lifted-multicut-polytope-inequalities}

We refer to \eqref{eq:lifted-multicut-cycle}--\eqref{eq:lifted-multicut-cut} as the \emph{cycle}, \emph{path} and \emph{cut inequalities}, respectively.
According to \Cref{prop:lifted-multicut-polytope-inequalities}, a vector $x \in \{0,1\}^{E \cup F}$ is the characteristic vector of a multicut of $\widehat G$ lifted from $G$ if and only if, in addition to the cycle inequalities in $G$, it satisfies all path and cut inequalities.

The cycle inequalities \eqref{eq:lifted-multicut-cycle} are introduced by \citet{Chopra1993} for the multicut polytope $\mc(G)$. 
The path inequalities \eqref{eq:lifted-multicut-path} correspond to cycles in $\widehat{G}$ where all edges except $uv$ are edges also of $G$.
Cycle inequalities with respect to cycles in $\widehat{G}$ that do not correspond to a path inequality are satisfied by all points in $\lmc(G,\widehat{G})$, by 
\Cref{prop:multicuts-statisfy-cycle-ineq} and the fact that every multicut of $\widehat G$ lifted from $G$ is a multicut of $\widehat G$. 
Yet, they are redundant in the description of $\lmc(G,\widehat{G})$, as they are implied by the cut inequalities.

\subsection{Dimension and inclusion properties}\label{sec:dimension}

In this section, we show that the lifted multicut polytope $\lmc(G,\widehat{G})$ is full-dimensional as a polytope in $\R^{E \cup F}$. 
Additionally, we describe which lifted multicut polytopes are subsets of another lifted multicut polytope. From this, we derive connections between faces of different lifted multicut polytopes.

\begin{theorem}\label{thm:dimension}    
    Let $G=(V,E)$ be a connected graph and let $\widehat{G}=(V,E \cup F)$ be an augmentation. 
    Then
    \[
        \dim \lmc(G,\widehat{G}) = \abs{E \cup F} \enspace .
    \]
\end{theorem}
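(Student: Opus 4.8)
### Proof strategy

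The plan is to show that the affine hull of $\lmc(G,\widehat{G})$ is all of $\R^{E \cup F}$, equivalently, that $\lmc(G,\widehat{G})$ contains $\abs{E \cup F} + 1$ affinely independent points. Since $\lmc(G,\widehat{G})$ is a $01$-polytope, a clean way to organize this is to exhibit a set of vertices (characteristic vectors of lifted multicuts) whose affine independence is easy to verify. I would first include the vertex corresponding to the trivial one-component decomposition, namely $\1_\emptyset = 0 \in \R^{E \cup F}$, and then produce, for each edge $g \in E \cup F$, a lifted multicut $M_g$ such that the collection $\{\1_{M_g}\}_{g \in E \cup F}$ together with $0$ is affinely independent.

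The natural construction is to build, for each $g \in E \cup F$, a $2$-decomposition of $V$ designed so that $g$ lies in the induced lifted multicut while controlling which other edges do. The cleanest choice is to use, for each $g = uv \in E \cup F$, the \emph{singleton} $uv$-decomposition $\Pi_g = \{\{u\}, V \setminus \{u\}\}$ (or symmetrically $\{\{v\}, V\setminus\{v\}\}$), whose induced multicut of $\widehat G$ is exactly the set of all edges of $E \cup F$ incident to $u$. Call this lifted multicut $M_u = \delta_{\widehat G}(u)$. These are always valid lifted multicuts, since singletons trivially induce connected subgraphs. The key combinatorial fact I would then exploit is that the incidence structure of these star-cuts is rich enough to span: the difference vectors $\1_{M_u} - 0 = \1_{\delta_{\widehat G}(u)}$ ranging over $u \in V$ do not by themselves have full rank (they only give $\abs{V}$ vectors and encode an incidence-matrix relation), so I would augment this family with further two-part decompositions that isolate individual edges.

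Concretely, I expect the efficient construction to proceed edge by edge. For a fixed edge $g = uv$, take the bipartition placing $u$ on one side and everything else on the other; subtracting the bipartition that places $u \cup \{$one carefully chosen neighbor$\}$ on one side isolates the contribution of a single incident edge. Iterating this, and using connectedness of $G$ to guarantee that every required intermediate bipartition actually induces a valid \emph{decomposition} (both sides connected in $G$), should let me triangulate the identity matrix on $\R^{E \cup F}$ via difference vectors of genuine lifted multicuts. The role of \Cref{prop:lifted-multicut-polytope-inequalities} is only to confirm that the vectors I write down really are vertices; the substance is the affine-independence bookkeeping.

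The main obstacle is the connectedness constraint in \Cref{def:decomposition}: not every bipartition of $V$ is a valid decomposition, since each part must induce a connected subgraph of $G$. For a general connected $G$ this restricts which $2$-cuts $\delta(U)$ are available, so I cannot freely isolate an arbitrary edge by an arbitrary bipartition. The way I would handle this is to process the edges in an order compatible with a spanning tree of $G$: spanning-tree edges can be cut one at a time by splitting along the tree (which keeps both sides connected), giving a triangular system on $E$, and each lifted edge $f = uv \in F$ can then be handled by the path/star decomposition that separates $u$ from the rest, whose extra incident coordinates have already been accounted for by the previously constructed vertices. Getting this ordering right — so that the matrix of vertex vectors is (after the trivial vertex is used as the origin) triangular with unit diagonal — is where the real care is needed, and I would organize the proof around an explicit such spanning-tree-guided enumeration.
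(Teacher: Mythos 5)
There is a genuine gap, and it sits exactly where you defer the ``real care.'' Your basic gadget is not always available: for $\Pi_g = \{\{u\}, V \setminus \{u\}\}$ to be a decomposition in the sense of \Cref{def:decomposition}, \emph{both} classes must induce connected subgraphs of $G$, and $V \setminus \{u\}$ fails whenever $u$ is a cut vertex; ``singletons trivially induce connected subgraphs'' only certifies one side. The honest repair---$\{u\}$ together with the components of $G-u$---induces a lifted multicut strictly larger than $\delta_{\widehat G}(u)$, since it also cuts every pair lying in distinct components of $G-u$, so the star-cut vectors you want to work with are simply not vertices of $\lmc(G,\widehat G)$ at cut vertices; the same objection applies to the shifted bipartitions with class $\{u,w\}$. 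Moreover, even where valid, the difference $\1_{\delta_{\widehat G}(u)} - \1_{\delta_{\widehat G}(\{u,w\})}$ does not isolate the coordinate $uw$: it equals $\1_{\{uw\}}$ minus the indicator of the remaining star at $w$, so isolating one edge spawns a whole new star of unaccounted coordinates and forces a recursion. Your spanning-tree ordering does handle the tree-edge coordinates cleanly (cutting one tree edge keeps both sides connected and zeroes out all other tree-edge coordinates), but for a non-tree or lifted edge $uv$ the star at $u$ contains many other not-yet-processed coordinates $uw'$, so the claim that these ``have already been accounted for by the previously constructed vertices'' is circular as stated. The plan is plausibly completable, but the two load-bearing steps---validity of every bipartition you invoke, and termination of the triangularization over $F$---are precisely the ones left open.

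The paper's proof sidesteps both problems with a purely local construction that needs no ordering at all. For each $uv \in E \cup F$ it fixes a $uv$-path $P = (V_P, E_P)$ in $G$ (this is the only use of connectedness) and considers the four decompositions whose single nontrivial class is $V_P$, $V_P \setminus \{u\}$, $V_P \setminus \{v\}$, or $V_P \setminus \{u,v\}$, all remaining nodes being singletons; each class is a subpath, hence connected, so validity is automatic and cut vertices are irrelevant. Writing $x^1,\dots,x^4$ for the corresponding characteristic vectors, a short coordinatewise case analysis yields $\1_{\{uv\}} = -x^1 - x^2 + x^3 + x^4$, a discrete second difference over ``include/exclude $u$'' and ``include/exclude $v$,'' and since $0 \in \lmc(G,\widehat{G})$, spanning all unit vectors gives $\dim \lmc(G,\widehat G) = \abs{E \cup F}$ with each unit vector obtained independently of all others. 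Your second-difference instinct is exactly right---the paper reuses the same identity later, in the form $\psi([s,t[) + \psi(]s,t]) - \psi(]s,t[) - \psi([s,t]) = \1_{\{st\}}$ in the proof of \Cref{thm:star-glider-facet}, cf.~\eqref{eq:psi-sum-unit-vec}---but anchoring the four sets to a $uv$-path rather than to global bipartitions is what makes every required decomposition valid and eliminates the triangularization bookkeeping entirely.
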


\proofref{thm:dimension}

\begin{proposition}\label{prop:inclusion-property}
    Let $G=(V,E)$ be a connected graph, let $\widehat{G}=(V,E\cup F)$ be an augmentation, and let $E' \subseteq E \cup F$ such that $G' = (V, E')$ is connected. Then, $\lmc(G',\widehat{G}) \subseteq \lmc(G,\widehat{G})$ if and only if $E' \subseteq E$.
\end{proposition}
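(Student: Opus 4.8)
The plan is to reduce the inclusion to a statement about vertices and then settle the two directions separately. Both $\lmc(G',\widehat G)$ and $\lmc(G,\widehat G)$ are $01$-polytopes living in the same cube $[0,1]^{E\cup F}$: their vertices are precisely the characteristic vectors of the multicuts of $\widehat G$ lifted from $G'$ and from $G$, respectively. Since every $\{0,1\}$-vector is a vertex of the cube, it is an extreme point of any subpolytope of $[0,1]^{E\cup F}$ that contains it; hence the only $\{0,1\}$-points of each of these polytopes are its own vertices. Consequently $\lmc(G',\widehat G)\subseteq\lmc(G,\widehat G)$ if and only if every vertex of the former lies in the latter, i.e.\ if and only if every multicut of $\widehat G$ lifted from $G'$ is also a multicut of $\widehat G$ lifted from $G$.

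For the direction ``$E'\subseteq E\Rightarrow$ inclusion'', I would observe that if $E'\subseteq E$ then $G'$ is a spanning connected subgraph of $G$, so any node subset inducing a connected subgraph of $G'$ also induces a connected subgraph of $G$. Therefore every decomposition of $G'$ is a decomposition of $G$. Because the characteristic vector of a lifted multicut depends only on the underlying partition of $V$ and on the common edge set $E\cup F$, each lifted multicut of $G'$ coincides with a lifted multicut of $G$, which yields the inclusion by the reduction above.

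For the converse I would argue by contraposition. Assume $E'\not\subseteq E$ and pick $f=uv\in E'\setminus E$; since $E'\subseteq E\cup F$, we have $f\in F$. Consider the decomposition $\Pi'$ of $G'$ whose only nontrivial block is $\{u,v\}$ (connected in $G'$ through the edge $f$), all other nodes being singletons. Its lifted multicut $x\in\{0,1\}^{E\cup F}$ then satisfies $x_f=0$ and $x_e=1$ for every other $e\in E\cup F$; in particular $x_e=1$ for all $e\in E$. Using that $G$ is connected, I would exhibit a $uv$-cut $\delta(U)$ in $G$ by taking a spanning tree of $G$ and deleting the unique tree edge on the $u$--$v$ path, which splits $V$ into two parts each connected in $G$. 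As $\delta(U)\subseteq E$ and $f\notin\delta(U)$, evaluating the cut inequality \eqref{eq:lifted-multicut-cut} for $f$ and $\delta(U)$ at $x$ gives $1 = 1-x_f \le \sum_{e\in\delta(U)}(1-x_e) = 0$, a contradiction. By \Cref{prop:lifted-multicut-polytope-inequalities} this shows $x\notin\lmc(G,\widehat G)$, whereas $x$ is a vertex of $\lmc(G',\widehat G)$; hence the inclusion fails.

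The easy direction is routine once the vertex reduction is in place. I expect the main obstacle to be the converse: one must produce a concrete lifted multicut of $G'$ that escapes $\lmc(G,\widehat G)$ and \emph{certify} the escape. The certification rests on the existence of a genuine $uv$-cut in $G$ (a $uv$-decomposition with both sides connected), which is exactly where connectivity of $G$ enters through the spanning-tree construction, and on the observation that such a cut consists only of edges in $E$, so that every cut edge carries value $1$ and the inequality \eqref{eq:lifted-multicut-cut} is violated.
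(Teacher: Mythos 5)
Your proof is correct, and its skeleton coincides with the paper's: both reduce the polytope inclusion to an inclusion of vertex sets (equivalently, of decomposition sets $D_{G'} \subseteq D_G$), both settle the direction $E' \subseteq E$ by noting that connectivity in $G'$ implies connectivity in $G$, and both refute the converse with the very same witness, the partition whose only nontrivial block is $\{u,v\}$ for some $uv \in E' \setminus E$. The one place you diverge is in certifying that this witness escapes $\lmc(G,\widehat G)$: the paper argues purely combinatorially that $\{u,v\}$ is disconnected in $G$, so the partition is not a decomposition of $G$ and its characteristic vector is not among the generating vertices (leaving implicit the fact, which you spell out, that a $\{0,1\}$-point of a $01$-polytope must be one of its vertices), whereas you exhibit an explicit polyhedral certificate, namely a cut inequality \eqref{eq:lifted-multicut-cut} for a spanning-tree $uv$-cut $\delta(U) \subseteq E$ that your point violates, and conclude via \Cref{prop:lifted-multicut-polytope-inequalities}. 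Your route invokes marginally heavier machinery (the ILP description plus the spanning-tree construction where the paper needs only the disconnectedness of $\{u,v\}$ in $G$), but in exchange it makes the $01$-vertex reduction explicit and produces a violated valid inequality rather than mere absence from a list, so nothing is lost and the step the paper glosses over is handled cleanly.
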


\proofref{prop:inclusion-property}

From the full dimensionality (\Cref{thm:dimension}) and the inclusion property (\Cref{prop:inclusion-property}), we obtain the following lemmata on facet-defining inequalities:

\begin{lemma}\label{lem:valid-and-facet-defining-for-subset}
    Let $G=(V,E)$ be a connected graph, let $\widehat{G}=(V,E \cup F)$ be an augmentation and let $E' \subseteq E$ such that $G'=(V,E')$ is connected. 
    Let $a^\top x \leq b$ be a valid inequality for $\lmc(G,\widehat{G})$. 
    Then, $a^\top x \leq b$ is valid for $\lmc(G',\widehat{G})$. 
    If, furthermore, $a^\top x \leq b$ is facet-defining for $\lmc(G', \widehat{G})$, then $a^\top x \leq b$ is facet-defining for $\lmc(G,\widehat{G})$.
\end{lemma}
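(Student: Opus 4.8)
The plan is to reduce both claims to the inclusion $\lmc(G',\widehat{G}) \subseteq \lmc(G,\widehat{G})$ supplied by \Cref{prop:inclusion-property}, together with the full-dimensionality guaranteed by \Cref{thm:dimension}. First I would check that $\lmc(G',\widehat{G})$ is well defined, i.e.~that $\widehat{G}=(V,E\cup F)$ is an augmentation of $G'=(V,E')$. This holds because $E' \subseteq E$ implies $(E\setminus E')\cup F \subseteq \binom{V}{2}\setminus E'$, so $\widehat{G}$ is the augmentation of $G'$ by the edge set $(E\setminus E')\cup F$ and has edge set $E'\cup\big((E\setminus E')\cup F\big)=E\cup F$. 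Since $E'\subseteq E \subseteq E\cup F$ and $G'$ is connected, \Cref{prop:inclusion-property} yields $\lmc(G',\widehat{G}) \subseteq \lmc(G,\widehat{G})$. The validity claim is then immediate: every point of $\lmc(G',\widehat{G})$ lies in $\lmc(G,\widehat{G})$ and hence satisfies $a^\top x \le b$.

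For the facet claim I would argue by dimension counting. By \Cref{thm:dimension}, both polytopes are full-dimensional in $\R^{E\cup F}$, i.e.~of dimension $\abs{E\cup F}$; here I use that $\widehat{G}$ has edge set $E\cup F$ regardless of whether it is viewed as an augmentation of $G$ or of $G'$. Write $F_{G'} = \{x \in \lmc(G',\widehat{G}) \mid a^\top x = b\}$ and $F_{G} = \{x \in \lmc(G,\widehat{G}) \mid a^\top x = b\}$. Because $a^\top x \le b$ is facet-defining for the full-dimensional polytope $\lmc(G',\widehat{G})$, the face $F_{G'}$ has dimension $\abs{E\cup F}-1$. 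The inclusion of polytopes gives $F_{G'} \subseteq F_{G}$, whence $\dim F_{G} \ge \abs{E\cup F}-1$. On the other hand, since $a^\top x \le b$ is valid for the full-dimensional polytope $\lmc(G,\widehat{G})$, the hyperplane $a^\top x = b$ cannot contain all of $\lmc(G,\widehat{G})$, so $F_{G}$ is a proper face and $\dim F_{G} \le \abs{E\cup F}-1$. Combining the two bounds gives $\dim F_{G} = \abs{E\cup F}-1$, so $a^\top x \le b$ is facet-defining for $\lmc(G,\widehat{G})$.

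The argument is essentially bookkeeping once the inclusion and the common full dimension are in place. The only point requiring care—and the step I expect to be the main, if minor, obstacle—is verifying that $\widehat{G}$ is simultaneously an augmentation of $G$ and of $G'$, so that \Cref{thm:dimension} and \Cref{prop:inclusion-property} both apply in the same ambient space $\R^{E\cup F}$, and so that the generic polytope fact being invoked—that a valid inequality which is facet-defining for a full-dimensional subpolytope is facet-defining for any full-dimensional polytope containing it—is applied to polytopes of equal dimension.
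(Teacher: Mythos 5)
Your proof is correct and follows essentially the same route as the paper's: both reduce the claim to the inclusion $\lmc(G',\widehat{G}) \subseteq \lmc(G,\widehat{G})$ from \Cref{prop:inclusion-property} together with the common full dimension $\abs{E \cup F}$ from \Cref{thm:dimension}, the only cosmetic difference being that the paper exhibits $\abs{E \cup F}$ affinely independent points of $\lmc(G',\widehat{G})$ on the hyperplane while you count face dimensions directly (and, commendably, make explicit both the properness of the face and the check that $\widehat{G}$ is simultaneously an augmentation of $G$ and of $G'$, which the paper leaves implicit).
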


\proofref{lem:valid-and-facet-defining-for-subset}

\begin{lemma}\label{lem:facet-for-lifting-to-sub-graph}
    Let $G=(V,E)$ be a connected graph and let $\widehat{G}=(V,E \cup F)$ and $\widehat{G}'=(V,E\cup F')$ be augmentations of $G$ with $F \subseteq F' \subseteq \binom{V}{2} \setminus E$. 
    Let $a^\top x \leq b$ with $a \in \R^{E \cup F'}$ and $b \in \R$ be a valid and facet-defining inequality for $\lmc(G,\widehat{G}')$. 
    Let $E_a = \{e \in \binom{V}{2} \mid a_e \neq 0\}$ be the support of $a$ and let $\bar{a} \in \R^{E \cup F}$ with $\bar{a}_e = a_e$ for $e \in E \cup F$. 
    If $E_a \subseteq E \cup F$, then $\bar{a}^\top y \leq b$ is valid and facet-defining for $\lmc(G,\widehat{G})$.
\end{lemma}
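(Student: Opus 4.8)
The plan is to exhibit $\lmc(G,\widehat G)$ as the image of $\lmc(G,\widehat G')$ under the coordinate projection that forgets the extra edges $F' \setminus F$, and then to transport the facet of the larger polytope through this projection. Write $d' = \abs{E \cup F'}$ and $d = \abs{E \cup F}$; by \Cref{thm:dimension} both polytopes are full dimensional, of dimensions $d'$ and $d$ respectively. Let $\pi \colon \R^{E \cup F'} \to \R^{E \cup F}$ denote the linear projection deleting the coordinates indexed by $F' \setminus F$.

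First I would observe that $\pi$ maps $\lmc(G,\widehat G')$ onto $\lmc(G,\widehat G)$, and vertices to vertices. Indeed, both polytopes have their vertices indexed by the decompositions of $G$: a decomposition $\Pi$ induces the lifted multicut of $\widehat G'$ with characteristic vector $x^\Pi \in \{0,1\}^{E \cup F'}$ and the lifted multicut of $\widehat G$ with characteristic vector $y^\Pi \in \{0,1\}^{E \cup F}$. For every pair $uv \in E \cup F$, whether $u$ and $v$ lie in distinct components of $\Pi$ does not depend on the ambient augmentation, so $y^\Pi = \pi(x^\Pi)$. Since $\pi$ is linear, $\pi(\lmc(G,\widehat G')) = \conv\{\pi(x^\Pi)\} = \conv\{y^\Pi\} = \lmc(G,\widehat G)$. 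Validity of $\bar a^\top y \le b$ is then immediate: any vertex $y = \pi(x)$ of $\lmc(G,\widehat G)$ satisfies $\bar a^\top y = a^\top x$, because $a$ vanishes on $F'\setminus F$ by the hypothesis $E_a \subseteq E \cup F$, and $a^\top x \le b$ holds by assumption.

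For the facet property I would count affinely independent vertices. Since $a^\top x \le b$ is facet-defining for the $d'$-dimensional polytope $\lmc(G,\widehat G')$, there exist $d'$ affinely independent vertices $x^{(1)}, \dots, x^{(d')}$ satisfying $a^\top x^{(i)} = b$; as they lie in the hyperplane $H = \{x : a^\top x = b\}$ of dimension $d'-1$, their affine hull equals $H$. Because $a$ is supported on $E \cup F$, the hyperplane $H$ is invariant under translation in every deleted direction, so $H = \pi^{-1}(\pi(H))$ and $\pi(H) = \{y : \bar a^\top y = b\}$ is a hyperplane of dimension $d - 1$ in $\R^{E \cup F}$. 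Since affine maps commute with the operation of taking affine hulls, the projected vertices satisfy $\operatorname{aff}\{\pi(x^{(i)})\} = \pi(H)$, so at least $d$ of them are affinely independent. These are vertices of $\lmc(G,\widehat G)$ lying on $\{\bar a^\top y = b\}$, whence the corresponding face has dimension at least $d-1$. Finally, the facet-definingness of $a^\top x \le b$ yields a vertex $x^*$ with $a^\top x^* < b$, and then $\bar a^\top \pi(x^*) = a^\top x^* < b$ shows the face is proper, hence of dimension exactly $d-1$, i.e.\ a facet.

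The step I expect to carry the weight is the claim that $\pi(H)$ retains full hyperplane dimension $d-1$ in the target space. This is precisely where the hypothesis $E_a \subseteq E \cup F$ enters: only because $a$ does not see the deleted coordinates is $H$ a union of fibers of $\pi$, so that projecting it loses no dimension. Without this assumption the projected affine hull could drop below dimension $d-1$, and the descended inequality would fail to be facet-defining.
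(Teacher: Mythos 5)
Your proof is correct and follows essentially the same route as the paper's: both exhibit $\lmc(G,\widehat{G})$ as the projection of $\lmc(G,\widehat{G}')$ forgetting the coordinates $F' \setminus F$, derive validity from the support condition $E_a \subseteq E \cup F$, and show that enough affinely independent roots survive the projection. The only cosmetic difference is the dimension bookkeeping — the paper deletes $\abs{F' \setminus F}$ columns from the rank-$(d'-1)$ matrix of differences $x^i - x^1$ and bounds the rank drop, whereas you argue that the hyperplane $\{x \mid a^\top x = b\}$ is a union of fibers of the projection so its image retains dimension $d-1$ — and your explicit properness check (a vertex with $a^\top x^* < b$) is a small point the paper leaves implicit.
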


\proofref{lem:facet-for-lifting-to-sub-graph}

\subsection{Facets from canonical inequalities}\label{sec:facets-from-canonical-ineq}

In this section, we investigate which of the inequalities $0 \leq x_e$ and $x_e \leq 1$ for $e \in E \cup F$, and which of the inequalities \eqref{eq:lifted-multicut-cycle}--\eqref{eq:lifted-multicut-cut} are facet-defining for the lifted multicut polytope $\lmc(G,\widehat{G})$. 
Since $\lmc(G,\widehat{G})$ is full dimensional by \Cref{thm:dimension}, its facets are described by inequalities that are unique up to positive scalar multiplication. Moreover, a valid inequality $a^\top x \leq b$ is not facet-defining for $\lmc(G,\widehat{G})$ if and only if every $x$ that satisfies $a^\top x = b$ also satisfies $c^\top x = d$ where $c$ is not a scalar multiple of $a$.

First, we characterize those edges $e \in E \cup F$ for which the inequality $x_e \leq 1$ defines a facet of $\lmc(G,\widehat{G})$.
To this end, we consider separating sets:
For a graph $G = (V,E)$ and nodes $u,v \in V$ a node set $S \subseteq V$ is called a \emph{$uv$-separating set} if any only if every $uv$-path in $G$ contains at least one node in $S$. 
Furthermore, a node $w \in V$ is called a \emph{$uv$-cut-node} if any only if $\{w\}$ is a $uv$-separating set.

\begin{theorem}\label{thm:upper-box-facets}
    Let $G = (V,E)$ be a connected graph and let $\widehat{G} = (V,E\cup F)$ be an augmentation.
    For $e = st \in E \cup F$, the inequality $x_e \leq 1$ defines a facet of $\lmc(G,\widehat{G})$ if and only if there is no $uv \in F \setminus \{e\}$ such that $s$ and $t$ are $uv$-cut-nodes with respect to $G$.
\end{theorem}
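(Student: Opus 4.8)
The plan is to combine full-dimensionality (\Cref{thm:dimension}) with the criterion recalled at the beginning of \Cref{sec:facets-from-canonical-ineq}: since $\lmc(G,\widehat G)$ is full-dimensional and $x_e \leq 1$ is valid (all vertices are $01$-vectors), the inequality $x_e \leq 1$ fails to be facet-defining if and only if every lifted multicut $x$ with $x_e = 1$ satisfies an additional equation $c^\top x = d$ with $c$ not a scalar multiple of $\1_e$. I would write $e = st$ and denote by $F_e = \{x \in \lmc(G,\widehat G) : x_{st} = 1\}$ the face in question, which is nonempty since it contains the all-singletons multicut $\1$, and prove the two directions of the stated equivalence separately.

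For the forward direction (existence of an obstruction implies no facet), suppose there is $uv \in F \setminus \{e\}$ such that both $s$ and $t$ are $uv$-cut-nodes. I claim every lifted multicut $M$ with $st \in M$ has $uv \in M$, i.e.\ $x_{uv} = 1$ on all of $F_e$. Indeed, if $x_{uv} = 0$, then $u$ and $v$ lie in a common component and are therefore joined in $G$ by a path of edges labelled $0$; this is a $uv$-path, so it passes through both cut-nodes $s$ and $t$, which forces $s$ and $t$ into the same component and contradicts $x_{st} = 1$. Hence $F_e \subseteq \{x : x_{uv} = 1\}$, and since $uv \neq st$ the vector $\1_{uv}$ is not a scalar multiple of $\1_{st}$, so by the criterion $x_e \leq 1$ is not facet-defining.

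For the converse (absence of an obstruction implies a facet), I assume that for no $uv \in F \setminus \{e\}$ are both $s,t$ simultaneously $uv$-cut-nodes, take an arbitrary equation $c^\top x = d$ valid on $F_e$, and show $c = \lambda \1_{st}$. Evaluating at $\1$ gives $\sum_g c_g = d$. For each $g = pq \in E \setminus \{st\}$, the multicut whose only nontrivial component is $\{p,q\}$ (connected because $g \in E$), all other nodes being singletons, lies in $F_e$ and equals $\1 - \1_g$; comparing with $\1$ yields $c_g = 0$. Thus $c$ is supported on $\{st\} \cup (F \setminus \{st\})$, and it remains to force $c_{pq} = 0$ for every $pq \in F \setminus \{st\}$. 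Here I would route around whichever of $s,t$ is available: by hypothesis at least one node $w_{pq} \in \{s,t\}$ is not a $pq$-cut-node, so a $pq$-path in $G$ avoiding $w_{pq}$ exists; choosing $Q_{pq}$ to be a shortest such path and letting $U_{pq} = V(Q_{pq})$, the multicut with single nontrivial component $U_{pq}$ lies in $F_e$ (since the singleton $w_{pq} \in \{s,t\}$ keeps $s,t$ separated), and comparing it with $\1$ using $c|_E = 0$ yields the relation $\sum_{ab \in F,\, \{a,b\} \subseteq U_{pq}} c_{ab} = 0$. Because $Q_{pq}$ avoids $w_{pq}$, the pair $st$ is never contained in $U_{pq}$, so $c_{st}$ never enters any such relation.

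The technical heart, and the step I expect to be the main obstacle, is closing this converse argument, since for an $F$-pair one cannot flip the coordinate $pq$ in isolation and single-coordinate comparisons are available only for $E$-edges. I would resolve this by induction on the length of $Q_{pq}$: any $F$-edge $ab \neq pq$ with $\{a,b\} \subseteq U_{pq}$ is spanned by a strictly shorter subpath of $Q_{pq}$ that still avoids $w_{pq}$, so its own minimal routing length is strictly smaller and $c_{ab} = 0$ holds by the induction hypothesis, whereupon the relation collapses to $c_{pq} = 0$. The care needed is precisely in verifying that these shortest avoiding-path components produce a triangular system in the $F$-coordinates while the single decoupled coordinate $c_{st}$ stays absent; granting this, one obtains $c = c_{st}\1_{st}$ and hence that $x_e \leq 1$ is facet-defining.
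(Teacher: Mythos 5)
Your proof is correct, and the sufficiency direction takes a genuinely different route from the paper's. For necessity, both arguments reduce to showing $x_{uv} = 1$ on the whole face; your derivation (a $uv$-path of $0$-labelled edges must pass through the cut-nodes $s$ and $t$, placing them in one component and contradicting $x_{st}=1$) is more direct than the paper's, which argues through the inequalities of \Cref{prop:lifted-multicut-polytope-inequalities} with a case distinction on whether $e$ is a chord of $uv$-paths. For sufficiency, the paper reuses the construction from the proof of \Cref{thm:dimension}: for each $uv \neq e$ it picks a $uv$-path $P$ with $\{s,t\} \not\subseteq V_P$, forms the four decompositions with components $V_P$, $V_P\setminus\{u,v\}$, $V_P\setminus\{v\}$, $V_P\setminus\{u\}$, and observes that $-x^1-x^2+x^3+x^4 = \1_{\{uv\}}$ with all four vertices on the face, producing all $\abs{E \cup F}-1$ unit vectors in the difference space in one shot, with no induction. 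You instead verify the dual criterion (every equation valid on the face is a multiple of $x_{st}=1$) using only single-nontrivial-component decompositions compared against $\1$, which cannot isolate an $F$-coordinate in one step and so requires your induction on routing length to triangularize the $F$-part. Your induction does close, with one piece of bookkeeping to make explicit: the induction quantity should be defined uniformly as $\ell(ab) = \min\{\operatorname{length}(Q) \mid Q \text{ an } ab\text{-path with } \{s,t\} \not\subseteq V(Q)\}$, minimizing over both admissible avoided nodes, because the subpath of $Q_{pq}$ avoids $w_{pq}$, which need not coincide with the node chosen for $ab$; the existence of that subpath itself certifies that $w_{pq}$ is not an $ab$-cut-node, so it bounds $\ell(ab)$, and since a proper subpath of $Q_{pq}$ is strictly shorter and never contains both $s$ and $t$ (hence $ab \neq st$), the relation collapses to $c_{pq}=0$ exactly as you claim. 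In exchange for the induction, your argument uses only the simplest decompositions and makes transparent where the hypothesis enters (the pair $st$ never lies inside any routing component), whereas the paper's four-vector trick is slicker here and is the workhorse it reuses elsewhere, e.g.\ in the proof of \Cref{thm:cycle-facets}.
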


\proofref{thm:upper-box-facets}

Next, we give conditions that contribute to identifying those edges $e \in E \cup F$ for which the inequality $0 \leq x_e$ defines a facet of the lifted multicut polytope $\lmc(G,\widehat{G})$.

\begin{theorem}\label{thm:lower-box-facets}
    Let $G = (V,E)$ be a connected graph, let $\widehat{G} = (V,E\cup F)$ be an augmentation and let $e \in E \cup F$. 
    In case $e \in E$, the inequality $0 \leq x_e$ defines a facet of $\lmc(G,\widehat{G})$ if and only if there is no triangle in $\widehat G$ that contains $e$.
    In case $uv = e \in F$, the inequality $0 \leq x_e$ defines a facet of $\lmc(G,\widehat{G})$ only if the following necessary conditions hold:
    \begin{enumerate}[(i)]
        \item \label{cond:box-1}
        There is no triangle in $\widehat G$ that contains $e$.
        \item \label{cond:box-2}
        The distance of any pair of $uv$-cut-nodes except $uv$ itself is at least $3$ in $\widehat G$.
        \item \label{cond:box-3}
        There is no triangle in $\widehat G$ consisting of nodes $s, s', t$ such that $\{s,s'\}$ is a $uv$-separating node set and $t$ is a $uv$-cut-node.
    \end{enumerate}
\end{theorem}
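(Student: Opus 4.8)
The inequality $0 \le x_e$ is valid because every lifted multicut has coordinates in $\{0,1\}$. To decide whether it is facet-defining I will use the criterion recalled at the start of this section: since $\lmc(G,\widehat G)$ is full-dimensional by \Cref{thm:dimension}, the inequality fails to be facet-defining precisely when every $x$ on the face $\{x \in \lmc(G,\widehat G) \mid x_e = 0\}$ satisfies some equation $c^\top x = d$ whose normal $c$ is not a scalar multiple of the unit vector supported on $e$. Throughout I exploit the fact that a lifted multicut $x$ lies on this face exactly when the two endpoints of $e$ lie in the same component of the induced decomposition, so that transitivity of ``lying in the same component'' forces further coordinates to agree.

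For $e = st \in E$ I would argue both directions around the obstruction of a triangle through $e$. If $\widehat G$ contains a triangle $\{s,t,w\}$, then on the face $s$ and $t$ share a component, whence $w$ is joined to $s$ if and only if it is joined to $t$; this forces the equation $x_{sw} - x_{tw} = 0$, whose normal is supported on $\{sw,tw\}$ and hence is not a multiple of the unit vector on $e$, so $0 \le x_e$ is not facet-defining. Conversely, if no triangle of $\widehat G$ contains $e$, I would contract $e$. Because no triangle is present, no two edges are merged, so $\widehat G / e$ is a simple augmentation of the connected graph $G/e$ with exactly $\abs{E \cup F} - 1$ edges, and the vertices of the face are in an affinity-preserving bijection with the vertices of $\lmc(G/e,\widehat G/e)$. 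By \Cref{thm:dimension} the latter polytope has dimension $\abs{E\cup F} - 1$, so the face is a facet.

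For $uv = e \in F$ only the three necessary conditions are claimed, and I would prove each by exhibiting, under its negation, a forced equation on the face. Condition \eqref{cond:box-1} is exactly the triangle argument above, now applied with $w$ the third node of the triangle. For \eqref{cond:box-2}, if two $uv$-cut-nodes $s \ne s'$ forming a pair other than $uv$ lie at distance at most $2$ in $\widehat G$, then every $uv$-path in $G$, and hence the $0$-labeled path joining $u$ and $v$ on the face, passes through both $s$ and $s'$; thus they always share the component of $u$ and $v$. If $ss' \in E \cup F$ this forces $x_{ss'}=0$, and if instead they have a common neighbour $w$ it forces $x_{sw} - x_{s'w} = 0$. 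In either case the resulting normal is not a multiple of the unit vector on $uv$, so the inequality is not facet-defining.

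For \eqref{cond:box-3}, suppose $\widehat G$ has a triangle $\{s,s',t\}$ with $\{s,s'\}$ a $uv$-separating set and $t$ a $uv$-cut-node. On the face the $0$-labeled $u$--$v$ path passes through $t$ and through at least one of $s,s'$, so $t$ always shares the component of $u,v$ and at least one of $s,s'$ does too. Examining the three possibilities (both, only $s$, only $s'$) shows that $(x_{st},x_{s't},x_{ss'})$ takes the value $(0,0,0)$, $(0,1,1)$ or $(1,0,1)$, and in every case $x_{ss'} = x_{st} + x_{s't}$; this equation holds on the whole face and is not a multiple of the defining one. I expect the discovery of this relation in \eqref{cond:box-3}, together with the case bookkeeping in \eqref{cond:box-2} needed to confirm that the forced equation remains independent of $x_{uv}=0$ even when a cut-node or common neighbour coincides with $u$ or $v$, to be the main obstacle; the triangle and contraction arguments for $e \in E$ are routine by comparison.
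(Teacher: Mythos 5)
Your proof is correct and takes essentially the same route as the paper: contraction of $e$ together with \Cref{thm:dimension} for the case $e \in E$, and, for each violated condition in the case $e \in F$, an additional equation valid on the face $\{x \in \lmc(G,\widehat G) \mid x_e = 0\}$ — indeed the very same equations ($x_{sw} = x_{tw}$ for a triangle, $x_{ss'} = 0$ resp.\ $x_{sw} = x_{s'w}$ for close cut-nodes, and $x_{ss'} = x_{st} + x_{s't}$ for condition (iii)) that the paper exhibits. The only differences are stylistic: you derive these equations directly from the same-component characterization of lifted multicuts (which neatly sidesteps the case bookkeeping you worried about), where the paper obtains them by combining cycle, path and cut inequalities, and for $e \in E$ you add an explicit triangle-implies-equation argument where the paper reads necessity off the edge count of the contracted graph $\widehat G_{[e]}$; both are sound.
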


\proofref{thm:lower-box-facets}

Next, we characterize those inequalities of \eqref{eq:lifted-multicut-cycle} and \eqref{eq:lifted-multicut-path} that are facet-defining for $\lmc(G,\widehat{G})$.
\citet{Chopra1993} show that a cycle inequality defines a facet of the multicut polytope $\mc(G)$ if and only if the associated cycle is chordless. 
We establish a similar characterization of those cycle and path inequalities in the description of $\lmc(G,\widehat{G})$ from \Cref{prop:lifted-multicut-polytope-inequalities} that are facet-defining.

\begin{theorem}\label{thm:cycle-facets}
    Let $G = (V,E)$ be a connected graph and let $\widehat{G} = (V,E\cup F)$ be an augmentation.
    The following statements hold true:
    \begin{enumerate}[(a)]
        \item \label{enum:chordless-cycle}
        For any cycle $C=(V_C,E_C)$ in $G$ and any $f \in E_C$, the corresponding cycle inequality \eqref{eq:lifted-multicut-cycle} defines a facet of $\lmc(G,\widehat{G})$ if and only if $C$ is chordless in $\widehat G$. 
        \item \label{enum:chordless-path} For any edge $uv = f \in F$ and any $uv$-path $P=(V_P,E_P)$ in $G$, the corresponding path inequality \eqref{eq:lifted-multicut-path} defines a facet of $\lmc(G,\widehat{G})$ if and only if $E_P \cup \{f\}$ induces a chordless cycle in $\widehat G$.
    \end{enumerate}
\end{theorem}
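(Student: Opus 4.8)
The plan is to treat both cases uniformly. In either case the inequality has the form $x_f \le \sum_{e \in E_{\tilde C}\setminus\{f\}} x_e$ for a cycle $\tilde C=(V_{\tilde C},E_{\tilde C})$ of $\widehat G$ and a distinguished edge $f\in E_{\tilde C}$: in case (a) this is $\tilde C = C$ with $f\in E$, while in case (b) it is the cycle with edge set $E_P\cup\{f\}$ and $f\in F$. Such an inequality is valid because every lifted multicut is a multicut of $\widehat G$ and thus satisfies all cycle inequalities of $\widehat G$ by \Cref{prop:multicuts-statisfy-cycle-ineq}. Since $\lmc(G,\widehat G)$ is full dimensional by \Cref{thm:dimension}, I use throughout the criterion recalled in \Cref{sec:facets-from-canonical-ineq}: the inequality is facet-defining if and only if every lifted multicut $x$ with $x_f=\sum_{e\in E_{\tilde C}\setminus\{f\}}x_e$ that also satisfies some equation $c^\top x = d$ forces $(c,d)$ to be a scalar multiple of the defining pair.

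For the direction ``only if'' I argue the contrapositive. A chord $g\in (E\cup F)\setminus E_{\tilde C}$ of $\tilde C$ splits $\tilde C$ into two shorter cycles $C_1\ni f$ and $C_2\ni g$ of $\widehat G$ sharing only $g$. The cycle inequalities of $\widehat G$ for $(C_1,f)$ and $(C_2,g)$ are valid for $\lmc(G,\widehat G)$, and their sum is exactly the inequality under consideration. Hence on its face both summands must hold with equality; in particular $x_f = \sum_{e\in E_{C_1}\setminus\{f\}} x_e$ holds for every lifted multicut on the face. As $g\in E_{C_1}\setminus\{f\}$ while $g\notin E_{\tilde C}$, the coefficient vector of this equation is not a scalar multiple of the defining one, so the face is not a facet.

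For ``if'' assume $\tilde C$ is chordless in $\widehat G$ and that $c^\top x = d$ holds for all lifted multicuts on the face. The all-joined decomposition gives the point $\mathbf 0$ on the face, so $d=0$. The main device is the regime $x_f=0$: there the cycle condition forces every edge of $\tilde C$ to $0$, i.e.~$V_{\tilde C}$ lies in a single component, so these face points are precisely the lifted multicuts of the contraction $\widehat G/V_{\tilde C}$ lifted from $G/V_{\tilde C}$. Chordlessness guarantees that the only edges inside $V_{\tilde C}$ are those of $\tilde C$, so no off-cycle coordinate becomes a loop under contraction, and the only identifications are between edges $su,s'u$ with $s,s'\in V_{\tilde C}$. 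By \Cref{thm:dimension} this contracted polytope is full dimensional, whence $c$ vanishes on every contracted edge class: $c_g=0$ for every off-cycle edge that is the unique member of its class, and $\sum_{s\in V_{\tilde C}} c_{su}=0$ for every $u\notin V_{\tilde C}$. It then remains to disaggregate the latter sums and to pin down the coefficients on $E_{\tilde C}$. For the cycle coordinates I use the $\abs{E_{\tilde C}}-1$ ``two-arc'' decompositions $\Pi_i$ that cut exactly $f$ and one further edge $e_i$ of $\tilde C$; evaluating $c^\top x=0$ on these, once the off-cycle coefficients are known to cancel, yields $c_{e_i}=-c_f$ for all $i$, i.e.~$c=c_f\,a$ as desired.

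The main obstacle is the disaggregation step together with the realizability of all decompositions involved. To separate $c_{su}$ from $c_{s'u}$ for a fixed $u\notin V_{\tilde C}$ I attach $u$ to one of the two arcs of a split $\Pi_i$ chosen so that $s$ and $s'$ fall on opposite sides, and compare with the variant in which $u$ is detached into a singleton; since $\tilde C$ is chordless, the cycle stays in an admissible two-arc pattern throughout, so all these points remain on the face and the pairwise differences isolate the individual coefficients. The delicate point is that merging $u$ into an arc must be realized by a genuine, connected decomposition of $G$, which for a lifted edge $su\in F$ requires routing the merge along a path of edges of $G$. Here I would first reduce, via \Cref{lem:valid-and-facet-defining-for-subset}, to a spanning connected subgraph whose only cycle is $\tilde C$ (a tree containing $P$ in case (b)), so that the arcs, their attachment points and the admissible cuts become completely explicit, and then carry out the telescoping of single-component merges to conclude $c_{su}=0$ for all cycle-incident edges, completing the identification $c=c_f\,a$.
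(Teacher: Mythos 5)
Your proposal is correct, but it takes a genuinely different route from the paper in both directions. For the \emph{only if} part, the paper does not re-prove anything: it cites Theorem~3.2 of \citet{Chopra1993} (cycle inequalities of chordal cycles are not facet-defining for $\mc(\widehat G)$) and transfers this to $\lmc(G,\widehat{G})$ via \Cref{prop:inclusion-property} and \Cref{thm:dimension}; your splitting of $\tilde C$ along a chord $g$ into $C_1 \ni f$ and $C_2 \ni g$ and summing the two valid cycle inequalities of $\widehat G$ is a self-contained reconstruction of that argument and is sound, since equality in the sum forces equality in each summand and the equation for $(C_1,f)$ has a nonzero coefficient on $g \notin E_{\tilde C}$. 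For the \emph{if} part, the paper works directly in $G$: it first pins the cycle coefficients via two-arc lifted multicuts, then kills each off-cycle coefficient by explicit four-point combinations $-x^1-x^2+x^3+x^4$ of face points (re-using the construction from the proof of \Cref{thm:dimension}), with a traversal ordering $u_1,\dots,u_k$ of cycle-incident edges and a distinguished index where a $v$-to-cycle path attaches. You instead (i) restrict to the subface $x_f=0$, identify it affinely with $\lmc(G/V_{\tilde C},\widehat G/V_{\tilde C})$ — chordlessness is exactly what prevents off-cycle coordinates from becoming loops — and invoke \Cref{thm:dimension} on the contraction to obtain $c_g=0$ for edges not incident to $V_{\tilde C}$ together with the aggregate relations $\sum_{s \in V_{\tilde C}} c_{su}=0$; (ii) sparsify via \Cref{lem:valid-and-facet-defining-for-subset}, which is legitimate because $\widehat G$ stays fixed (removed $E$-edges simply migrate to $F$, and the inequality remains of type \eqref{eq:lifted-multicut-cycle} resp.\ \eqref{eq:lifted-multicut-path} for the sparser graph since $E_{\tilde C} \cap E$ is retained); (iii) disaggregate by attach/detach comparisons of $u$ over the two-arc splits cutting $f$ and $e_i$. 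I verified that the telescoping closes: the arc-sums on the side containing the attachment node $w(u)$, the arc-sums on the other side, and the aggregate relation together yield $c_{su}=0$ for every $s \in V_{\tilde C}$, including $w(u)$ and both endpoints of $f$, after which evaluating on the two-arc points gives $c_{e_i}=-c_f$ and $c_f \neq 0$ with the right sign. Your contraction step buys a black-box reuse of full-dimensionality where the paper re-runs constructions by hand, and your sparsification makes realizability of the merges trivial; the paper's direct approach avoids contraction bookkeeping at the cost of its delicate ordering argument. One cosmetic point: your opening statement of the facet criterion is garbled — it should say that the inequality is facet-defining if and only if every equation satisfied by \emph{all} points of the face is a scalar multiple of the defining one — but this is the version you actually use throughout.
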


\proofref{thm:cycle-facets}

Any cycle inequality with respect to some cycle $C=(V_C,E_C)$ in $\widehat G$ and $f \in E_C$ is valid for $\lmc(G,\widehat G)$ as it is valid for $\mc(\widehat G)$ which contains $\lmc(G,\widehat G)$, by \Cref{prop:inclusion-property}.
For any cycle inequality to define a facet of $\lmc(G,\widehat G)$, it is necessary that the associated cycle is chordless, as is shown in the proof of \Cref{thm:cycle-facets}.
In general, however, chordlessness is not a sufficient condition if the inequality is neither of the form~\eqref{eq:lifted-multicut-cycle} for a cycle $C$ in $G$ nor of the form~\eqref{eq:lifted-multicut-path}.
For example, consider the graph $\widehat G$ depicted in \Cref{fig:multicut-polytope}b.
Here, the cycle inequality $x_{e_1} \leq x_f + x_{e_2}$ is dominated by the cut inequality
$x_{e_1} \leq x_f \iff 1-x_f \leq 1-x_{e_1}$ together with $x_{e_2} \geq 0$.

Next, we consider the cut inequalities \eqref{eq:lifted-multicut-cut}.
Our goal is to constrain the class of cuts that gives rise to facet-defining inequalities.
To this end, we define several concepts and apply these in \Cref{thm:cut-facets} to formulate necessary conditions under which cut inequalities define facets.
Non-trivial examples of cuts whose associated inequalities fail to define facets of $\lmc(G,\widehat{G})$ are shown in \Cref{figure:violated-conditions}.

For $uv \in F$ and $U \subseteq V$ such that $u \in U$, $v \notin U$ and $\delta(U)$ is a $uv$-cut we define
\begin{align*}
    S(uv, U) 
    & = \left\{ x \in \lmc(G,\widehat{G}) \cap \Z^{E \cup F} \; \middle| \; 1 - x_{uv} = \sum_{e \in \delta(U)} 1 - x_e \right\} \enspace ,
    \\
    \Sigma(uv, U) 
    & = \conv S(uv, U) \enspace ,
\end{align*}
i.e.~$\Sigma(uv,U)$ is the face defined by the cut inequality with respect to $uv$ and $\delta(U)$, and $S(uv,U)$ are the integral points in that face.

\begin{definition}
    Let $G=(V,E)$ be a connected graph, let $u,v \in V$, and let $U \subseteq V$ such that $u \in U$, $v \notin U$ and $\delta(U)$ is a $uv$-cut.
    A connected, induced subgraph $H = (V_H, E_H)$ of $G$ is called \emph{$(uv, U)$-connected} if 
    \begin{align*}
        u, v \in V_H \text{ and } \abs{E_H \cap \delta(U)} = 1 \enspace .
    \end{align*}
\end{definition}

\begin{figure}
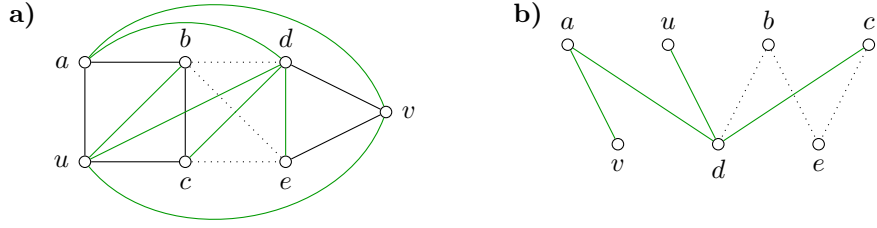

    \centering
    \textbf{a)}
    \raisebox{5ex}{\imagetop{%
    \input{figures/uv-U-connected.tex}
    }}
    \hspace{2em}
    \textbf{b)}
    \raisebox{2ex}{\imagetop{%
    \input{figures/uv-U-connected-G-hat.tex}
    }}
   	\vspace{-2ex}
    \caption{Depicted above in \textbf{a)} are a graph $G$ (black edges) and an augmentation $\widehat{G}$ (black and green edges), together with a $uv$-cut $\delta(U)$ in $G$ (dotted edges) with $U = \{u, a, b, c\}$. 
    The subgraph $H=(V_H,E_H)$ of $G$ that is induced by $V_H=\{u, a, b, e, v\}$ is $(uv,U)$-connected, for example, as $u,v \in V_H$ and $\abs{E_H \cap \delta(U)} = \abs{\{be\}} = 1$. 
    Depicted in \textbf{b)} is the bipartite graph $\widehat{G}(uv,U)$ that consists of all edges in $\delta(U)$ (dashed lines) and $\delta_{F\setminus{uv}}(U)$ (green lines). 
    Here, it holds that $F'_{H'} = \{av\}$.}
    \label{fig:uv-U-connected}
\end{figure}

An example of a $(uv,U)$-connected component is shown in \Cref{fig:uv-U-connected}a. 

\begin{lemma}\label{lem:connectedness}
    Let $G = (V,E)$ be a connected graph and let $\widehat{G} = (V,E\cup F)$ be an augmentation.
    For any $uv \in F$, let $U \subseteq V$ such that $u \in U$, $v \notin U$ and $\delta(U)$ is a $uv$-cut. 
    Every $x \in S(uv, U)$ defines a decomposition of $G$ which contains at most one $(uv, U)$-connected component. 
    That is, at most one maximal component of the graph $(V, \{e \in E \mid x_e = 0\})$ is $(uv, U)$-connected. 
    It exists if and only if $x_{uv} = 0$.
\end{lemma}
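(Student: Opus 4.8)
The plan is to read off from the defining equation of $S(uv,U)$ the exact number of $0$-labeled edges crossing the cut $\delta(U)$, and then to combine this with the elementary fact that any $(uv,U)$-connected component must contain both $u$ and $v$.

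I would first fix $x \in S(uv, U)$ and recall that, as an integral point of $\lmc(G,\widehat G)$, it is the characteristic vector $\1_M$ of a multicut $M$ of $\widehat G$ lifted from $G$, and hence induces a decomposition $\Pi$ of $G$ whose blocks are precisely the node sets of the maximal components of $G_0 := (V, \{e \in E \mid x_e = 0\})$. Two facts will do the work. First, since $M$ is the multicut induced by $\Pi$, an edge of $G$ is $1$-labeled if and only if its endpoints lie in distinct blocks; consequently, for any block $W$, every edge of the induced subgraph $G[W] = (W, E_{G[W]})$ is $0$-labeled, and $G[W]$ is connected by \Cref{def:decomposition}. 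Second, by the lifting, $x_{uv} = 0$ holds if and only if $u$ and $v$ lie in the same block.

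Next I would unpack the equation $1 - x_{uv} = \sum_{e \in \delta(U)} (1 - x_e)$. The right-hand side counts the $0$-labeled edges of $\delta(U) \subseteq E$, while the left-hand side equals $1$ when $x_{uv} = 0$ and $0$ when $x_{uv} = 1$. Hence membership in $S(uv,U)$ is equivalent to: $\delta(U)$ contains exactly one $0$-labeled edge if $x_{uv} = 0$, and no $0$-labeled edge if $x_{uv} = 1$.

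From here both assertions follow. The ``at most one'' statement is immediate, since $(uv,U)$-connectedness requires $u, v \in V_H$ and at most one block contains both $u$ and $v$. For the equivalence, I would split into cases. If $x_{uv} = 1$, then $u$ and $v$ lie in distinct blocks, so no block, and hence no component, is $(uv,U)$-connected. If $x_{uv} = 0$, let $W$ be the common block of $u$ and $v$. A $u$--$v$ path in $G_0$ stays inside $W$, uses only $0$-labeled edges, and must cross the cut since $u \in U$ and $v \notin U$; thus $\delta(U) \cap E_{G[W]}$ contains at least one (necessarily $0$-labeled) edge. On the other hand, every edge of $E_{G[W]}$ is $0$-labeled, so $\delta(U) \cap E_{G[W]}$ is contained in the set of $0$-labeled edges of $\delta(U)$, which has size exactly $1$; therefore $\abs{\delta(U) \cap E_{G[W]}} = 1$ and $G[W]$ is the unique $(uv,U)$-connected component. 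I do not anticipate a genuine obstacle here; the only care needed is in the bookkeeping of which edges of $\delta(U)$ are $0$-labeled and in keeping the two cases $x_{uv} \in \{0,1\}$ cleanly separated.
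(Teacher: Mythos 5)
Your proof is correct, and it shares its counting core with the paper's proof---both arguments read off from the defining equation of $S(uv,U)$ that $\delta(U)$ carries exactly one $0$-labeled edge when $x_{uv}=0$ and none when $x_{uv}=1$---but your surrounding construction runs in the opposite direction. The paper starts from the unique $0$-labeled cut edge $e$, takes $H$ to be the maximal component of $G_0=(V,\{e\in E\mid x_e=0\})$ containing $e$, and then proves $u,v\in V_H$ by a Menger-style step: if every $uv$-path carried a $1$-labeled edge, the $1$-labeled edges would contain a $uv$-cut, and the corresponding cut inequality \eqref{eq:lifted-multicut-cut} would force $x_{uv}=1$. You instead start from the endpoints: invoking the lifting semantics (for an integral point of $\lmc(G,\widehat G)$, one has $x_{uv}=0$ if and only if $u$ and $v$ lie in the same block of the induced decomposition, and the blocks are exactly the node sets of the maximal components of $G_0$, which are induced subgraphs of $G$), you take $W$ to be the common block and show $\abs{\delta(U)\cap E_{G[W]}}=1$ by combining the fact that any $u$--$v$ path inside $G[W]$ must cross the cut with the count of $0$-labeled cut edges. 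This buys two small simplifications: the path-existence step becomes immediate (connectivity of the block replaces the cut-inequality argument), and the ``at most one'' claim follows in one line from the observation that at most one block contains both $u$ and $v$, where the paper argues separately that all other components fail to cross the cut. Conversely, the paper's version is marginally more self-contained, working directly with the inequalities of \Cref{prop:lifted-multicut-polytope-inequalities} rather than through the decomposition bijection; both routes are fully rigorous.
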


\proofref{lem:connectedness}

We denote by $\delta_{F \setminus \{uv\}}(U)$ the set of edges in $F$, except $uv$, that cross the cut, i.e.
\begin{align*}
    \delta_{F \setminus \{uv\}}(U) = 
        \left\{ u'v' \in F \setminus \{uv\} \; \middle|
        \; u' \in U \text{ and } v' \notin U \right\} \enspace .
\end{align*}
Furthermore, let
\begin{align*}
    \widehat G(uv,U) = (V,\ \delta(U) \cup \delta_{F \setminus \{uv\}}(U))
\end{align*}
denote the subgraph of $\widehat G$ that comprises all edges of the cut induced by $U$, except $uv$.
For any $(uv,U)$-connected subgraph $H=(V_H,E_H)$ of G, we denote by 
\begin{align*}
    F'_H = 
        \{ u'v' \in \delta_{F \setminus \{uv\}}(U) \mid 
        u' \in V_H \text{ and } v' \in V_H \}
\end{align*}
the set of those edges $u'v' \in \delta_{F \setminus \{uv\}}(U)$ such that $H$ is also $(u'v',U)$-connected. 
For an exemplary illustration of the above definitions, see \Cref{fig:uv-U-connected}b.

\begin{theorem}\label{thm:cut-facets}
    Let $G = (V,E)$ be a connected graph and let $\widehat{G} = (V,E\cup F)$ be an augmentation.
    For any $uv \in F$ and any $U \subseteq V$ with $u \in U$, $v \notin U$ and $\delta(U)$ a $uv$-cut, the polytope $\Sigma(uv,U)$ is a facet of $\lmc(G,\widehat{G})$ only if the following necessary conditions hold:
    \begin{enumerate}[C1]
        \item \label{cond:cut-1} 
        For any $e \in \delta(U)$, there exists some $(uv,U)$-connected subgraph $H = (V_H, E_H)$ of $G$ such that $e \in E_H$.
        
        \item \label{cond:cut-2} 
        For any $\emptyset \neq F' \subseteq \delta_{F \setminus \{uv\}}(U)$, there exist an edge $e \in \delta(U)$ and $(uv,U)$-connected subgraphs $H = (V_H,E_H)$ and $H'=(V_{H'},E_{H'})$ of $G$ such that
        \begin{align*}
            e \in E_H \text{ and } e \in E_{H'} \text{ and } 
            \abs{F' \cap F'_H} \neq \abs{F' \cap F'_{H'}} \enspace .
        \end{align*}

        \item \label{cond:cut-3} 
        For any $f' \in \delta_{F \setminus \{uv\}}(U)$, any $\emptyset \neq F' \subseteq \delta_{F \setminus \{uv\}}(U) \setminus \{f'\}$ and any $k \in \N$, there exists a $(uv,U)$-connected subgraph $H=(V_H,E_H)$ with $f' \in F'_H$ such that $\abs{F' \cap F'_H} \neq k$ or there exists a $(uv,U)$-connected subgraph $H'=(V_{H'},E_{H'})$ with $f' \notin F'_{H'}$ such that $\abs{F' \cap F'_{H'}} \neq 0$.

        \item \label{cond:cut-4} 
        For any $u' \in U$, any $v' \in V \setminus U$ and any $u'v'$-path $P = (V_P, E_P)$ in $\widehat G(uv,U)$, there exists a $(uv, U)$-connected subgraph $H=(V_H,E_H)$ of $G$ such that
        \begin{align*}
            & (u' \notin V_H \text{ or } \exists v'' \in V_P \setminus U: v'' \notin V_H) \\
            \text{and } \quad & (v' \notin V_H \text{ or } \exists u'' \in V_P \cap U: u'' \notin V_H) \enspace .
        \end{align*}

        \item \label{cond:cut-5} 
        For any cycle $C = (V_C, E_C)$ in $\widehat G(uv,U)$, there exists a $(uv,U)$-connected subgraph $H=(V_H,E_H)$ of $G$ such that
        \begin{align*}
            & (\exists u' \in V_C \cap U : u' \notin V_H) \\
            \text{and} \quad & (\exists v' \in V_C \setminus U : v' \notin V_H) \enspace .
        \end{align*}
    \end{enumerate}
\end{theorem}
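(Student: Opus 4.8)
The plan rests on full-dimensionality. By \Cref{thm:dimension}, $\lmc(G,\widehat{G})$ is full-dimensional, so $\Sigma(uv,U)$ is a facet if and only if the only linear equations satisfied by all points of $S(uv,U)$ are scalar multiples of the equation $\sum_{e\in\delta(U)}x_e-x_{uv}=\abs{\delta(U)}-1$ obtained by turning the cut inequality \eqref{eq:lifted-multicut-cut} into an equality. Write $a$ for its coefficient vector, so $a_e=1$ for $e\in\delta(U)$, $a_{uv}=-1$, and $a_e=0$ otherwise. I would establish each of the conditions C1--C5 by contraposition: assuming the condition fails, I exhibit a linear equation $c^\top x=d$ valid on all of $S(uv,U)$ with $c$ not a scalar multiple of $a$; this forces $\dim \Sigma(uv,U)<\abs{E\cup F}-1$, so $\Sigma(uv,U)$ is not a facet. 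The computations all rely on the description of the face points from \Cref{lem:connectedness}: for $x\in S(uv,U)$, if $x_{uv}=1$ then every edge of $\delta(U)\cup\delta_{F\setminus\{uv\}}(U)$ is cut, whereas if $x_{uv}=0$ there is a unique $(uv,U)$-connected component $H$, and a cut edge is labelled $0$ exactly when both its endpoints lie in $V_H$. Thus, when $x_{uv}=0$, among the edges of $\delta(U)$ precisely the unique crossing edge $e_H$ satisfies $1-x_{e_H}=1$, while an edge $f\in\delta_{F\setminus\{uv\}}(U)$ satisfies $1-x_f=1$ exactly when $f\in F'_H$; in particular $\sum_{f\in F'}(1-x_f)=\abs{F'\cap F'_H}$ for every $F'\subseteq\delta_{F\setminus\{uv\}}(U)$.

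The three conditions on the combinatorics of $F'_H$ then translate directly into equations. If C1 fails for some $e^{*}\in\delta(U)$, then $e^{*}$ is never a crossing edge, so $x_{e^{*}}=1$ holds throughout $S(uv,U)$; since $a_{uv}\neq 0$, this equation is independent of $a$. If C2 fails, there is a nonempty $F'$ for which $\abs{F'\cap F'_H}$ takes a value $c_e$ depending only on the crossing edge $e$; using that $1-x_e$ equals $1$ exactly for $e=e_H$, one obtains $\sum_{f\in F'}(1-x_f)=\sum_{e\in\delta(U)}c_e(1-x_e)$ on all of $S(uv,U)$ (both sides vanishing when $x_{uv}=1$), an equation whose coefficient vector is nonzero on $F'\subseteq F$ where $a$ vanishes, hence independent of $a$. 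If C3 fails, there are $f'$, a nonempty $F'\subseteq\delta_{F\setminus\{uv\}}(U)\setminus\{f'\}$ and some $k\in\N$ with $\abs{F'\cap F'_H}=k$ whenever $f'\in F'_H$ and $\abs{F'\cap F'_H}=0$ otherwise; this yields $\sum_{f\in F'}(1-x_f)=k\,(1-x_{f'})$ on $S(uv,U)$, again supported on $F$ and hence independent of $a$. In each case the extra equation certifies that $\Sigma(uv,U)$ is not a facet.

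Conditions C4 and C5 are the technical heart and the main obstacle. Their failure produces a path, respectively a cycle, in the bipartite cut graph $\widehat{G}(uv,U)$ along which the incidence with every component $H$ is rigidly constrained: every $(uv,U)$-connected $H$ must realise a prescribed monotone pattern of membership among the path, respectively cycle, nodes. The plan is to convert this node-membership constraint, via the encoding that $1-x_e$ equals $1$ exactly when both endpoints of $e$ lie in $V_H$, into a linear equation valid on $S(uv,U)$. The simplest instance already indicates the shape of the argument: if a single cut edge $e$ has both endpoints in every component $H$, then $x_e=x_{uv}$ holds on $S(uv,U)$, an equation independent of $a$. For longer paths and for cycles, however, the forced membership of interior nodes couples the path or cycle variables to further edges of $\delta(U)$ and $\delta_{F\setminus\{uv\}}(U)$, and to $x_{uv}$, so the resulting equation is supported well beyond the path or cycle itself.

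The difficulty I expect to absorb most of the work is precisely the extraction of this equation in the C4 and C5 cases: the monotone Boolean condition ``holds for all $H$'' does not by itself single out a sum over path or cycle edges, so I would need to determine the correct supporting set of variables and coefficients, then verify both validity on $S(uv,U)$ and linear independence from $a$, using the bipartiteness of $\widehat{G}(uv,U)$ together with the single-crossing property of $(uv,U)$-connected components from \Cref{lem:connectedness}. For C5 in particular, the alternating structure of the even cycle and the fact that at most one of its edges in $\delta(U)$ can be uncut will be the key combinatorial levers.
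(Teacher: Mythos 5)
Your overall strategy coincides with the paper's: by \Cref{thm:dimension} it suffices, for each condition, to assume failure and exhibit a linear equation valid on all of $S(uv,U)$ that is not a multiple of the cut equality. Your treatment of Conditions~\ref{cond:cut-1}, \ref{cond:cut-2} and \ref{cond:cut-3} is correct and matches the paper's proof in substance: from \Cref{lem:connectedness} (all edges of $\delta(U)\cup\delta_{F\setminus\{uv\}}(U)$ cut when $x_{uv}=1$; a unique $(uv,U)$-connected component $H$ with $\sum_{f\in F'}(1-x_f)=\abs{F'\cap F'_H}$ when $x_{uv}=0$) the paper derives exactly your equations, writing your Condition~\ref{cond:cut-2} identity as $\sum_m m\sum_{e\in\delta^m(U)}(1-x_e)=\sum_{f'\in F'}(1-x_{f'})$ after partitioning $\delta(U)$ into the classes $\delta^m(U)$ on which $\abs{F'\cap F'_H}$ is constantly $m$ (your $c_e$).

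For Conditions~\ref{cond:cut-4} and \ref{cond:cut-5}, however, you have a plan but not a proof, and the plan's guiding expectation is mistaken: the required equation is \emph{not} ``supported well beyond the path or cycle itself.'' If Condition~\ref{cond:cut-4} fails for a $u'v'$-path $P$ with edges $e_1<\dots<e_{\abs{E_P}}$ ordered from $u'$ (note $\abs{E_P}$ is odd, since $P$ alternates between $U$ and $V\setminus U$), then every $x\in S(uv,U)$ satisfies $x_{uv}=\sum_{j=1}^{\abs{E_P}}(-1)^{j+1}x_{e_j}$; if Condition~\ref{cond:cut-5} fails for a cycle with edges $e_0,\dots,e_{\abs{E_C}-1}$ indexed along the cycle ($\abs{E_C}$ even), then $\sum_{j=0}^{\abs{E_C}-1}(-1)^j x_{e_j}=0$ on $S(uv,U)$. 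The idea you are missing is a telescoping argument: when $x_{uv}=1$, every edge of $P$ (resp.\ $C$) is cut by the cut inequalities \eqref{eq:lifted-multicut-cut}, and the parity of the length makes the alternating sum evaluate to $1$ (resp.\ $0$); when $x_{uv}=0$, the failure of the condition forces the unique component $H$ of \Cref{lem:connectedness} to contain all path/cycle nodes on one fixed side of the cut (up to exchanging the two alternatives, which is harmless because reversing an odd path, resp.\ shifting the cycle's indexing by one, leaves the equation unchanged), so each consecutive pair $e_{2j},e_{2j+1}$ shares its other-side endpoint while both remaining endpoints lie in $V_H$, whence $x_{e_{2j}}=x_{e_{2j+1}}$ (both $0$ or both $1$ according to whether the shared node lies in $V_H$), the alternating sum telescopes, and the leading term $x_{e_1}$ vanishes (resp.\ nothing remains). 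Without this construction the argument for the two conditions that constitute the theorem's technical heart simply does not exist. One point where your framing is more careful than the paper's terse ``additional equation'' conclusion: independence from the cut equality does need checking, and for Condition~\ref{cond:cut-4} it degenerates precisely when $P$ is a single edge constituting all of $\delta(U)$, in which case the derived equation is the face equality itself; in every other case the alternating signs, or the zero coefficient on $uv$ in the cycle equation, rule out proportionality.
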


\proofref{thm:cut-facets}

\begin{figure}
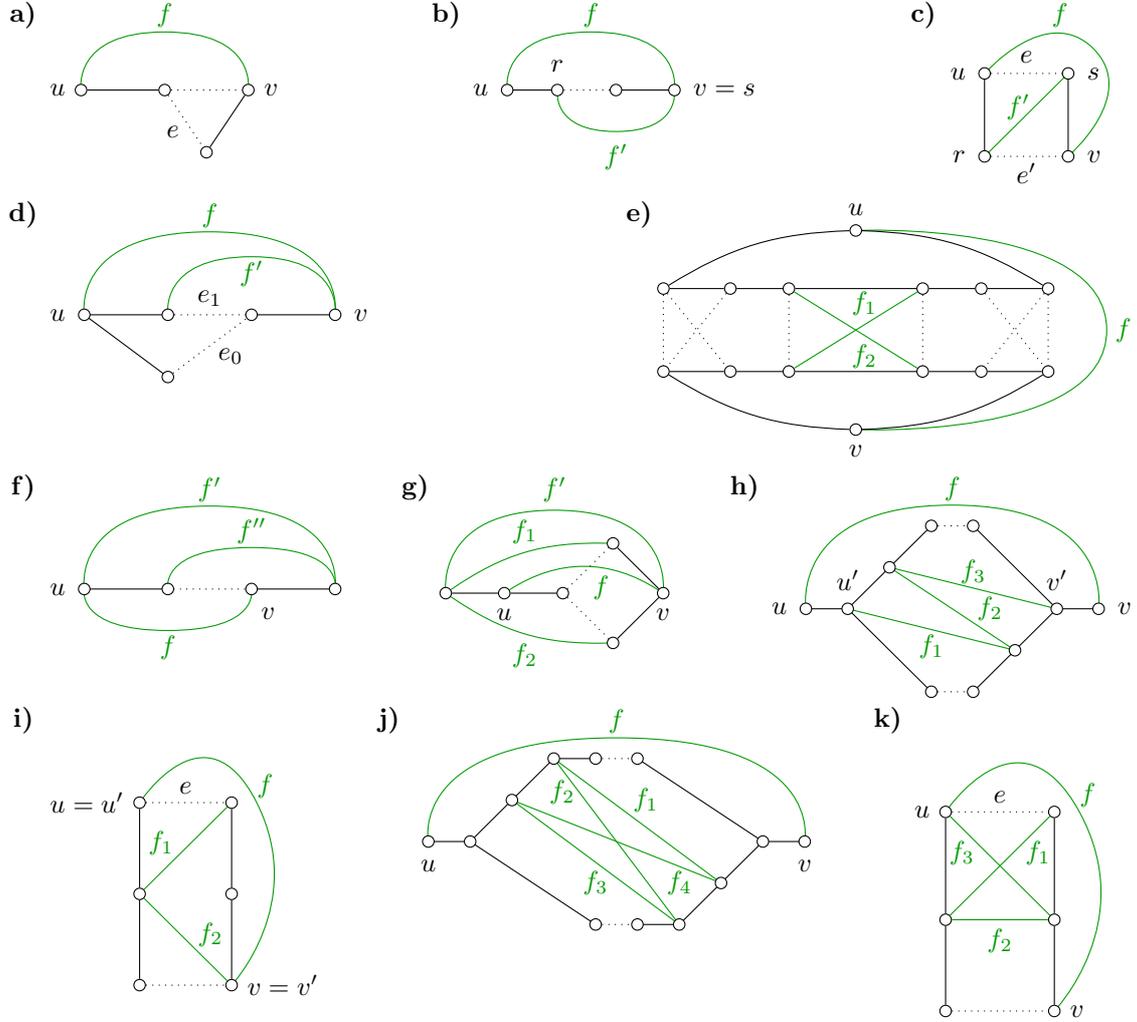

    \makebox(0,0){\textbf{a)}}
    \input{figures/cut-examples/a.tex}
    \hfill
    \makebox(0,0){\textbf{b)}}
    \input{figures/cut-examples/b.tex}
    \hfill
    \makebox(0,0){\textbf{c)}}
    \input{figures/cut-examples/c.tex}
    \\
    \makebox(0,0){\textbf{d)}}
    \input{figures/cut-examples/d.tex}
    \hfill
    \makebox(0,0){\textbf{e)}}
    \input{figures/cut-examples/e.tex}
    \\
    \makebox(0,0){\textbf{f)}}
    \input{figures/cut-examples/f.tex}
    \hfill
    \makebox(0,0){\textbf{g)}}
    \input{figures/cut-examples/g.tex}
    \hfill
    \makebox(0,0){\textbf{h)}}
    \input{figures/cut-examples/h.tex}
    \\
    \makebox(0,0){\textbf{i)}}
    \input{figures/cut-examples/i.tex}
    \hfill
    \makebox(0,0){\textbf{j)}}
    \input{figures/cut-examples/j.tex}
    \hfill
    \makebox(0,0){\textbf{k)}}
    \input{figures/cut-examples/k.tex}
    \caption[Examples of violated necessary conditions for facet-defining cut inequalities]{Depicted above are graphs $G = (V, E)$ (in black)
    and $\widehat G = (V, E \cup F)$ ($F$ in green),
    distinct nodes $u, v \in V$ and a $uv$-cut $\delta(U)$ in $G$ (as dotted lines).
    In any of the above examples, one condition of 
    \Cref{thm:cut-facets}
    is violated and thus, 
    $\Sigma(uv,U)$ is not a facet of the lifted multicut polytope $\lmc(G,\widehat{G})$.
    \textbf{a)} Condition~\ref{cond:cut-1} is violated for $e$.
    \textbf{b)} Condition~\ref{cond:cut-2} is violated as $r$ and $s$ are connected in any $(uv,U)$-connected component.
    \textbf{c)} Condition~\ref{cond:cut-2} is violated as $r$ and $s$ are not connected in any $(uv,U)$-connected component.
    \textbf{d)} Condition~\ref{cond:cut-2} is violated. Specifically, $\delta^0(U) = \{e_0\}$ and $\delta^1(U) = \{e_1\}$ in the proof of
    \Cref{thm:cut-facets}.
    \textbf{e)} Condition~\ref{cond:cut-2} is violated for $F' = \{f_1,f_2\}$.
    \textbf{f)} Condition~\ref{cond:cut-3} is violated.
    \textbf{g)} Condition~\ref{cond:cut-3} is violated for $F' = \{f_1,f_2\}$ and $k=1$.
    \textbf{h)} Condition~\ref{cond:cut-4} is violated for the $u'v'$-path with edges $\{f_1, f_2, f_3\}$.
    \textbf{i)} Condition~\ref{cond:cut-4} is violated for the $u'v'$-path with edges $\{e, f_1, f_2\}$.
    \textbf{j)} Condition~\ref{cond:cut-5} is violated for the cycle with edges $\{f_1, f_2, f_3, f_4\}$.
    \textbf{k)} Condition~\ref{cond:cut-5} is violated for the cycle with edges $\{e, f_1, f_2, f_3\}$.
    }
    \label{figure:violated-conditions}
\end{figure}

Examples in which one of the Conditions \ref{cond:cut-1}--\ref{cond:cut-5} is violated are depicted in \Cref{figure:violated-conditions}.
On the contrary, the example depicted in \Cref{fig:uv-U-connected} satisfies all conditions from \Cref{thm:cut-facets}. 
And indeed, the respective cut inequality defines a facet of the corresponding lifted multicut polytope, which we have verified numerically by computing the affine dimension of all feasible solutions that satisfy the cut inequality with equality.

\section{Multicuts lifted from trees and paths}\label{sec:lmc-trees-and-paths}

In this section, we study multicuts of a tree $T=(V,E)$ lifted to the complete graph. 
For any pair of distinct nodes $u,v \in V$, we denote by $P_{uv}=(V_{uv}, E_{uv})$ the unique path from $u$ to $v$ in $T$.
Moreover, we write $d(u,v)$ for the distance of $u$ and $v$ in $T$, i.e. the length of $P_{uv}$.

\begin{proposition}\label{cor:standard-relaxation-lmc-trees}
    The lifted multicut polytope $\lmc(T)$ with respect to a tree $T=(V,E)$ is the convex hull of all $x \in \{0,1\}^{\binom{V}{2}}$ that satisfy 
    \begin{align}
        x_{uv} & \leq \sum_{e \in E_{uv}} x_e && \forall u,v \in V, \: d(u,v) \geq 2  \label{eq:path-ineq-tree} \\
        x_e & \leq x_{uv} && \forall u,v \in V, \: d(u,v) \geq 2, \; \forall e \in E_{uv} 
        \enspace . \label{eq:cut-ineq-tree}
    \end{align}
\end{proposition}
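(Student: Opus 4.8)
The plan is to derive this description directly from \Cref{prop:lifted-multicut-polytope-inequalities} by specializing it to $G = T$ and $\widehat G = K_V$, so that $F = \binom{V}{2}\setminus E$ and $E \cup F = \binom{V}{2}$. It then suffices to check that, for a tree, the three families of inequalities \eqref{eq:lifted-multicut-cycle}--\eqref{eq:lifted-multicut-cut} collapse to exactly the two families \eqref{eq:path-ineq-tree} and \eqref{eq:cut-ineq-tree}. Since a tree contains no cycle, the family of cycle inequalities \eqref{eq:lifted-multicut-cycle} is empty and can be discarded outright.

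For the path inequalities \eqref{eq:lifted-multicut-path}, I would use that in a tree every pair of distinct nodes $u,v$ is joined by a \emph{unique} path $P_{uv} = (V_{uv}, E_{uv})$. Moreover, $uv \in F$ holds precisely when $uv \notin E$, that is, when $d(u,v) \geq 2$. Hence the path inequalities reduce to $x_{uv} \leq \sum_{e \in E_{uv}} x_e$, ranging over all $u,v$ with $d(u,v) \geq 2$, which is exactly \eqref{eq:path-ineq-tree}.

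The one structural point to establish is the form of the $uv$-cuts in a tree. Here I would argue that a $uv$-cut $\delta(U)$ arises from a $uv$-decomposition $\{U, V\setminus U\}$ in which both $U$ and $V\setminus U$ induce connected subgraphs of $T$. Since every edge of a tree is a bridge, removing $k$ edges yields exactly $k+1$ components, so a partition of $V$ into two connected parts must delete exactly one edge; and for $u \in U$, $v \notin U$ this edge must lie on $P_{uv}$, as deleting any edge off $P_{uv}$ leaves $u$ and $v$ connected. Thus the $uv$-cuts are precisely the singletons $\{e\}$ with $e \in E_{uv}$. Substituting $\delta(U) = \{e\}$ into \eqref{eq:lifted-multicut-cut} gives $1 - x_{uv} \leq 1 - x_e$, i.e.\ $x_e \leq x_{uv}$, for each $e \in E_{uv}$, which is exactly \eqref{eq:cut-ineq-tree}. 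Combining the three reductions, the inequality system of \Cref{prop:lifted-multicut-polytope-inequalities} coincides with \eqref{eq:path-ineq-tree}--\eqref{eq:cut-ineq-tree}, proving the claim.

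The only mild obstacle is this characterization of $uv$-cuts; everything else is immediate specialization. I expect no genuine difficulty, as the uniqueness of paths together with the edge–component count in trees forces the connectivity requirement on both sides of any $uv$-cut to single out exactly one edge of $P_{uv}$.
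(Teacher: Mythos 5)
Your proposal is correct and follows the same route as the paper's proof, which simply specializes \Cref{prop:lifted-multicut-polytope-inequalities} to $G=T$, $\widehat G = K_V$, discards the (empty) family of cycle inequalities, and notes that the path and cut inequalities simplify to \eqref{eq:path-ineq-tree} and \eqref{eq:cut-ineq-tree}. The only difference is that you spell out the one detail the paper leaves implicit — that the $uv$-cuts of a tree are exactly the singletons $\{e\}$ with $e \in E_{uv}$ — and your bridge/edge-count argument for that fact is sound.
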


\begin{proof}
    The claim follows from \Cref{prop:lifted-multicut-polytope-inequalities}:
    Since $T$ is a tree, there are no cycle inequalities \eqref{eq:lifted-multicut-cycle}.
    Moreover, the path and cut inequalities \eqref{eq:lifted-multicut-path} and \eqref{eq:lifted-multicut-cut} simplify to \eqref{eq:path-ineq-tree} and \eqref{eq:cut-ineq-tree}.
\end{proof}

\subsection{Tree partition problem}\label{sec:tree-partition-problem}

The lifted multicut problem \eqref{eq:lifted-multicut-problem} with respect to a tree $T$ can be stated equivalently as the minimization of a particular multilinear polynomial over binary inputs, which we refer to as the \emph{tree partition problem}.

\begin{definition}[Tree Partition Problem]\label{def:tree-partition-problem}
    Let $T = (V,E)$ be a tree and $\bar \theta \in \R^{\binom{V}{2}}$.
    The optimization problem
    \begin{align}\label{eq:tpp-pbo}
        \min_{z \in \{0,1\}^E}
            \sum_{uv \in \binom{V}{2}}
            \bar \theta_{uv} 
            \prod_{e \in E_{uv}} z_e
        \tag{TPP}
    \end{align}
    is called the instance of the \emph{tree partition problem} with respect to~$T$ and 
    $\bar \theta$.
    If $T$ is a path, then we also refer to \eqref{eq:tpp-pbo} as the \emph{path partition problem} with respect to $T$ and $\bar \theta$.
\end{definition}

It is straightforward to see, by a change of variables, that the tree partition problem \eqref{eq:tpp-pbo} and the lifted multicut problem \eqref{eq:lifted-multicut-problem} for trees are equivalent (up to a constant):

\begin{proposition} \label{prop:tree-lmp}
    The vector $z \in \{0,1\}^E$ is a solution to the instance of \eqref{eq:tpp-pbo} with respect to the tree $T=(V,E)$ and costs $\bar \theta \colon \binom{V}{2} \to \R$ if and only if the unique $x \in \lmc(T)$ such that $x_e = 1 - z_e$ for all $e \in E$ is a solution to the instance of \eqref{eq:lifted-multicut-problem} with respect to $T$ and the cost vector $\theta = - \bar \theta$.
\end{proposition}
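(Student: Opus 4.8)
The plan is to exhibit the stated change of variables explicitly and to check that it identifies the feasible sets and the objective values (up to an additive constant) of the two problems. The proof is then essentially a substitution, as the surrounding text anticipates.

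First I would pin down the bijection between the vertices of $\lmc(T)$ and the cube $\{0,1\}^E$. Since $T$ is a tree and hence acyclic, \Cref{prop:multicuts-statisfy-cycle-ineq} shows that \emph{every} subset $M \subseteq E$ is a multicut of $T$; equivalently, the restriction to the edges $E$ of $T$ of any multicut of $K_V$ lifted from $T$ is an arbitrary element of $\{0,1\}^E$, and by the definition of the lifting map $\phi_{K_V} \circ \phi_T^{-1}$ this restriction determines the full vector on $\binom{V}{2}$ uniquely. This is precisely the well-definedness and uniqueness asserted in the statement: given $z \in \{0,1\}^E$, setting $x_e = 1 - z_e$ for $e \in E$ fixes the multicut $M = \{e \in E : x_e = 1\}$ of $T$, whose induced decomposition $\Pi = \phi_T^{-1}(M)$ then determines $x = \1_{\phi_{K_V}(\Pi)}$ on all of $\binom{V}{2}$.

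Second, I would derive the linearization identity $x_{uv} = 1 - \prod_{e \in E_{uv}} z_e$ for all distinct $u,v \in V$. The decomposition $\Pi$ places $u$ and $v$ in the same component precisely when the path $P_{uv}$ in $T$ contains no cut edge, that is, when $z_e = 1$ for every $e \in E_{uv}$, which is exactly the condition $\prod_{e \in E_{uv}} z_e = 1$. By \eqref{eq:phi}, $x_{uv} = 0$ iff $u,v$ lie in a common component, so $x_{uv} = 1 - \prod_{e \in E_{uv}} z_e$; for $uv = e \in E$ this recovers $x_e = 1 - z_e$.

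Finally, I would substitute this identity into the objectives. With $\theta = -\bar\theta$ one computes
\[
\sum_{uv \in \binom{V}{2}} \theta_{uv}\, x_{uv}
= -\sum_{uv} \bar\theta_{uv}\Bigl(1 - \prod_{e \in E_{uv}} z_e\Bigr)
= \sum_{uv} \bar\theta_{uv} \prod_{e \in E_{uv}} z_e \;-\; \sum_{uv} \bar\theta_{uv},
\]
so the \eqref{eq:lifted-multicut-problem} objective and the \eqref{eq:tpp-pbo} objective agree up to the additive constant $-\sum_{uv}\bar\theta_{uv}$, independent of $z$. Since \eqref{eq:lifted-multicut-problem} is a linear program over the polytope $\lmc(T)$, its optimum is attained at a vertex, and the vertices are exactly the points $x$ produced above as $z$ ranges over $\{0,1\}^E$; a constant shift of the objective does not change the set of minimizers. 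Hence $z$ solves \eqref{eq:tpp-pbo} iff the associated vertex $x$ solves \eqref{eq:lifted-multicut-problem}, as claimed. The only genuinely delicate points are the two structural facts used above — that restriction to $E$ is a bijection onto $\{0,1\}^E$, and that same-component membership linearizes to the product over the path edges — both of which rest on the acyclicity of $T$; everything after the substitution is routine.
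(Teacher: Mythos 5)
Your proof is correct and takes essentially the same route as the paper's: the change of variables $x_{uv} = 1 - \prod_{e \in E_{uv}} z_e$, justified by the observation that $x_{uv}=0$ exactly when every edge of $P_{uv}$ is uncut, followed by substitution into the objective, which shifts it by the constant $\sum_{uv} \bar\theta_{uv}$. Your extra care in checking that restriction to $E$ bijects the multicuts of $K_V$ lifted from $T$ onto $\{0,1\}^E$ (via acyclicity and \Cref{prop:multicuts-statisfy-cycle-ineq}) merely makes explicit what the paper's proof compresses into the closing remark that the definition of $\lmc(T)$ captures the relationship \eqref{eq:vars-transform}.
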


\proofref{prop:tree-lmp}

By \Cref{prop:tree-lmp}, \eqref{eq:tpp-pbo} corresponds to the minimization of a specific class of binary multilinear functions.
More precisely, we call any $n$-variate binary multilinear function \emph{tree-sparse} if it can be aligned with a tree such that $n=\abs{E}$ and every non-zero coefficient corresponds to the edge set of a path in the tree.
Similarly, we call it \emph{path-sparse} if the tree is a path.
Tree-sparse binary multilinear functions are exactly those multilinear functions that correspond to tree partition problems \eqref{eq:tpp-pbo}.

The tree partition problem, and thus \eqref{eq:lifted-multicut-problem} for trees, is \textsc{np}-hard in general (\Cref{prop:complexity} below).
However, the path partition problem is solvable in strongly polynomial time \cite{Kernighan1971}. 

\begin{proposition}\label{prop:complexity}
    The tree partition \eqref{eq:tpp-pbo} problem is \emph{\textsc{np}}-hard.
\end{proposition}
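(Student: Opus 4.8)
The plan is to reduce from the \textsc{np}-hard \textsc{Max-Cut} problem, exploiting the observation that on a \emph{star} the tree partition problem \eqref{eq:tpp-pbo} degenerates into the minimization of an arbitrary quadratic pseudo-Boolean function. The whole difficulty is concentrated in choosing the right special case of tree; once that is fixed, the reduction is essentially a coefficient-matching computation.

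First I would specialize $T$ to a star with center $c$ and leaves $v_1, \dots, v_n$. For such a tree the unique path between two leaves $v_i$ and $v_j$ consists of exactly the two edges $cv_i$ and $cv_j$, while the path from $c$ to $v_i$ is the single edge $cv_i$; there are no longer paths. Writing $z_i := z_{cv_i}$, the objective of \eqref{eq:tpp-pbo} therefore collapses to
\[
    \sum_{i=1}^n \bar\theta_{cv_i}\, z_i \;+\; \sum_{1 \le i < j \le n} \bar\theta_{v_i v_j}\, z_i z_j \enspace .
\]
As $\bar\theta$ ranges freely over $\R^{\binom{V}{2}}$, this is precisely a general quadratic pseudo-Boolean function in the $n$ variables $z_1, \dots, z_n$, with the leaf--leaf coefficients $\bar\theta_{v_iv_j}$ supplying the quadratic terms and the center--leaf coefficients $\bar\theta_{cv_i}$ the linear terms. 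Hence solving \eqref{eq:tpp-pbo} on stars is exactly as hard as minimizing quadratic pseudo-Boolean functions, which is well known to be \textsc{np}-hard.

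To make this self-contained, I would encode \textsc{Max-Cut} directly. Given a graph $H = (N, A)$ with $N = \{1, \dots, n\}$, build the star on $\{c, v_1, \dots, v_n\}$ and set $\bar\theta_{v_i v_j} = 2$ for every $ij \in A$ (and $0$ for non-edges), together with $\bar\theta_{cv_i} = -\deg_H(i)$, where $\deg_H(i)$ is the degree of $i$ in $H$. Using the identity $z_i + z_j - 2 z_i z_j = 1$ iff $z_i \neq z_j$ (and $=0$ otherwise) for binary $z$, a short computation shows that the resulting objective equals $-\sum_{ij \in A}(z_i + z_j - 2 z_i z_j)$, the negated weight of the cut induced by $\{i : z_i = 1\}$. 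Thus a minimizer of \eqref{eq:tpp-pbo} yields a maximum cut of $H$, and the construction is clearly polynomial. The only step requiring care is verifying that the coefficient assignment reproduces exactly the negated cut function, and observing that already \emph{unweighted} \textsc{Max-Cut} forces the claimed hardness, so we are genuinely reducing from an \textsc{np}-hard problem rather than a tractable special case. One could equivalently phrase the argument through \Cref{prop:tree-lmp}, as a statement about the lifted multicut problem on the star, but the multilinear formulation makes the equivalence with quadratic pseudo-Boolean minimization most transparent.
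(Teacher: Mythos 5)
Your proposal is correct and follows essentially the same route as the paper: the paper's proof also restricts $T$ to a star and observes that \eqref{eq:tpp-pbo} then becomes unconstrained binary quadratic minimization, whose \textsc{np}-hardness it cites as well known. Your explicit \textsc{Max-Cut} reduction with $\bar\theta_{v_iv_j}=2$ and $\bar\theta_{cv_i}=-\deg_H(i)$ is a correct, self-contained instantiation of that well-known fact, but it does not change the underlying argument.
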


\begin{proof}
If $T$ is a star (see \Cref{fig:lifted-trees}a for an example), then \eqref{eq:tpp-pbo} is equivalent to the unconstrained binary quadratic program with $\abs{E}$ variables, which is well-known to be \textsc{np}-hard.
\end{proof}

\subsection{Lifted multicut polytope for trees}\label{sec:lmp-trees}

In this section, we study the facial structure of the lifted multicut polytope $\lmc(T)$ for a tree $T=(V,E)$.
We characterize all canonical facets and offer a relaxation of $\lmc(T)$ that is tighter than the standard relaxation given by \Cref{cor:standard-relaxation-lmc-trees}.
In \Cref{sec:lmp-paths}, we show that our results yield a complete totally dual integral description of the lifted multicut polytope for paths.

\begin{figure}
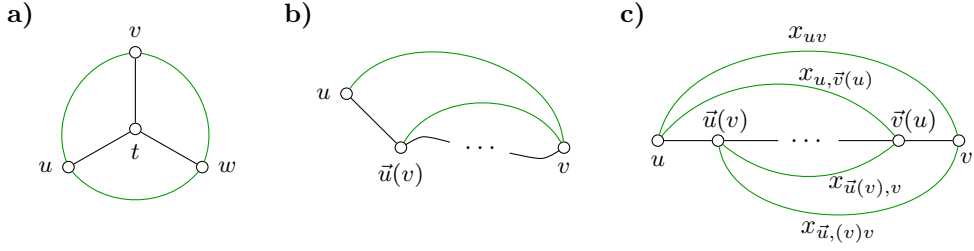

    \center
    \makebox(0,0){\textbf{a)}}
    \imagetop{\input{figures/tree/lifted-star.tex}}
    \hspace{1em}
    \makebox(0,0){\textbf{b)}}
    \imagetop{\input{figures/tree/vec-u-v-example.tex}}
    \hspace{1em}
    \makebox(0,0){\textbf{c)}}
    \imagetop{\input{figures/tree/intersection-inequality-example.tex}}
    \caption{\textbf{a)} A star augmented by additional (green) edges between non-neighboring nodes.
    \textbf{b)} The node $\vec{u}(v)$ is the first internal node on the path~$P_{uv}$. 
    \textbf{c)} A path of length at least three gives rise to an \emph{intersection inequality} \eqref{eq:intersection-inequality}.}
    \label{fig:lifted-trees}
\end{figure}

We denote the standard relaxation of $\lmc(T)$ by
\begin{align*}
    \tpp(T) = 
        \left\{ x \colon \tbinom{V}{2} \to [0,1] \ \middle|\  x \text{ satisfies \eqref{eq:path-ineq-tree} and \eqref{eq:cut-ineq-tree}}\right\} \enspace ,
\end{align*}
which is obtained by dropping the integrality constraints from the definition of $\lmc(T)$.
Given two nodes $u,v \in V$ such that $d(u,v) \geq 2$, let $\vec{u}(v)$ be the first node on the path $P_{uv}$ that is different from both $u$ and $v$ (cf.\ \Cref{fig:lifted-trees}b), and consider the polytope
\begin{align*}
    \tppt(T) = 
        \Big \{ x \colon \tbinom{V}{2} \to [0,1] \; \Big| \; 
            & x_{uv} \leq x_{u,\vec{u}(v)} + x_{\vec{u}(v),v} 
                && \hspace{-1em} \forall u,v \in V, \: d(u,v) \geq 2, \\
            & x_{\vec{u}(v),v} \leq x_{uv} 
                && \hspace{-1em} \forall u,v \in V, \: d(u,v) \geq 2 \Big \} \enspace .
\end{align*}
This description is compact in the sense that it only considers a quadratic number of node triplets, namely those which feature two neighboring nodes and an arbitrary third node. The first inequality in the description of $\tppt(T)$ is depicted in \Cref{fig:lifted-trees}b. The following lemma states that $\tppt(T)$ is indeed a relaxation of $\lmc(T)$ that is at least as tight as $\tpp(T)$.

\begin{proposition} \label{prop:tree-polytope}
    For a tree $T=(V,E)$, we have $\lmc(T) \subseteq \tppt(T) \subseteq \mathsf \tpp(T)$.
\end{proposition}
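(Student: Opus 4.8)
The plan is to establish the two inclusions separately. For $\lmc(T) \subseteq \tppt(T)$, I would use that $\lmc(T)$ is by definition the convex hull of the characteristic vectors $\1_M$ of multicuts of $K_V$ lifted from $T$; since the two relations defining $\tppt(T)$ are linear, it suffices to verify them at these vertices, the general case then following by convex combination. Fix such a vertex $x = \1_M$ corresponding to a decomposition $\Pi$ of $T$, so that $x_{uv} = 1$ precisely when $u$ and $v$ lie in distinct components. The inequality $x_{uv} \leq x_{u,\vec{u}(v)} + x_{\vec{u}(v),v}$ is just the triangle inequality for the cut semimetric induced by $\Pi$ at the triple $u, \vec{u}(v), v$: if $x_{uv} = 1$, the intermediate node $\vec{u}(v)$ shares a component with at most one of $u, v$, so the right-hand side is at least $1$; if $x_{uv}=0$ the inequality is trivial. (Equivalently, this is a triangle inequality valid for $\mc(K_V) \supseteq \lmc(T)$ by \Cref{prop:inclusion-property}.)

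For the companion relation $x_{\vec{u}(v),v} \leq x_{uv}$ I would invoke the tree structure. If $x_{uv} = 0$, then $u$ and $v$ lie in a common component $U \in \Pi$, which induces a connected subgraph of $T$ and is therefore a subtree; since $\vec{u}(v)$ lies on the unique $u$-$v$ path $P_{uv}$, it must also belong to $U$, whence $x_{\vec{u}(v),v} = 0$. The case $x_{uv} = 1$ is again trivial, and $0 \leq x \leq 1$ holds for characteristic vectors. Thus every vertex of $\lmc(T)$, and hence all of $\lmc(T)$, satisfies the defining inequalities of $\tppt(T)$.

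For $\tppt(T) \subseteq \tpp(T)$, I would derive the path inequalities \eqref{eq:path-ineq-tree} and cut inequalities \eqref{eq:cut-ineq-tree} from the $\tppt(T)$ relations by induction on $d(u,v)$, reading the latter as holding for every \emph{ordered} pair so that both orientations are available. Write the path as $u = w_0, w_1, \dots, w_d = v$ with $w_1 = \vec{u}(v)$. For \eqref{eq:path-ineq-tree}, the base case $d=2$ is exactly the first $\tppt(T)$ inequality (here $x_{w_1 v}$ is an edge variable), and for $d \geq 3$ I would combine $x_{uv} \leq x_{w_0 w_1} + x_{w_1 v}$ with the induction hypothesis applied to the shorter path from $w_1$ to $v$, since $d(w_1,v) = d-1 \geq 2$. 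For \eqref{eq:cut-ineq-tree}, the second $\tppt(T)$ relation gives $x_{w_1 v} \leq x_{uv}$, and transitivity with the induction hypothesis for $(w_1,v)$ yields $x_e \leq x_{uv}$ for every edge $e$ on the path except the first one, $w_0 w_1$.

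The step I expect to need the most care is covering this leftover edge $w_0 w_1$, and confirming the two orientations jointly exhaust all edges. Here I would use the reversed orientation: with $\vec{v}(u) = w_{d-1}$, the $\tppt(T)$ relation gives $x_{w_{d-1} u} \leq x_{uv}$, and the induction hypothesis for the pair $(u, w_{d-1})$ (whose distance $d-1 \geq 2$) bounds $x_{w_0 w_1} \leq x_{u,w_{d-1}} \leq x_{uv}$; the base case $d=2$ is handled directly by the two orientations of the second $\tppt(T)$ relation. The only genuinely delicate bookkeeping is verifying that all endpoint distances stay at least $2$ so that the induction hypothesis is applicable; the box constraints $0 \leq x \leq 1$ are shared by both polytopes and require nothing further.
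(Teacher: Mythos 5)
Your proof is correct and follows essentially the same route as the paper's: the inclusion $\lmc(T) \subseteq \tppt(T)$ is checked on integral vertices (you argue directly from the decomposition --- the triangle inequality of the cut semimetric and the connectivity of components within the tree --- where the paper derives the same contradictions from the path and cut inequalities of \Cref{cor:standard-relaxation-lmc-trees}), and $\tppt(T) \subseteq \tpp(T)$ is established by the same induction on $d(u,v)$ with the base case $d(u,v)=2$ supplied by the two orderings of $u$ and $v$. Your explicit treatment of the leftover edge $\{u,\vec{u}(v)\}$ via the reversed orientation and the pair $(u,\vec{v}(u))$ merely spells out the step the paper compresses into its ``without loss of generality'' remark.
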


\proofref{prop:tree-polytope}

In general, the inclusions in \Cref{prop:tree-polytope} are strict, in particular $\tppt(T)$ is a strictly tighter relaxation than $\tpp(T)$. 
For example, consider $T=(V,E)$, the path of length $3$, with $V=\{0,\dots,3\}$ and $E=\{\{0,1\},\{1,2\},\{2,3\}\}$. 
For $x \colon \tbinom{V}{2} \to [0,1]$ with $x_{01} = 0.5$, $x_{12}=0.5$, $x_{23}=0$, $x_{02} = 0.5$, $x_{13}=0.5$ and $x_{03}=1$, we have $x \in \tpp(T)$ but $x_{03} > x_{02} + x_{23}$, i.e.~$x \notin \tppt(P)$. 
For $x \colon \tbinom{V}{2} \to [0,1]$ with $x_{03} = 1$ and $x_e = 0.5$ for all other edges $e \neq \{0,3\}$, we have $x \in \tppt(T)$ but $x \notin \lmc(T)$ (this can be seen, e.g., from the fact that $x$ violates an intersection inequality \eqref{eq:intersection-inequality} which is valid for $\lmc(T)$, see \Cref{lem:tree-intersection-valid}).

\subsection{Facets}\label{sec:lmp-trees-facets}

In this section, we show which inequalities in the definition of $\tppt(T)$ define facets of $\lmc(T)$.
Moreover, we present another type of inequalities associated with paths in~$T$, which define facets of $\lmc(T)$.
We note that further facets can be established through the connection of $\lmc(T)$ to the multilinear polytope and, as a special case, the Boolean quadric polytope \cite{Padberg1989}. 

\begin{proposition}\label{prop:tree-path-facet}
    Let $T = (V,E)$ be a tree and let $u,v \in V$ with $d(u,v) \geq 2$. 
    The inequality
    \begin{align}\label{eq:tree-path}
        x_{uv} \leq x_{u,\vec{u}(v)} + x_{\vec{u}(v),v} 
    \end{align}
    defines a facet of $\lmc(T)$ if and only if $d(u,v) = 2$.
\end{proposition}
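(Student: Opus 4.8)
The plan is to prove the two directions separately, reducing the \emph{if} direction to the already established \Cref{thm:cycle-facets} and proving the \emph{only if} direction by producing a second valid equation that holds on the entire face. For the \emph{if} direction, suppose $d(u,v) = 2$ and write $w = \vec{u}(v)$, so that the unique $uv$-path in $T$ is $u - w - v$ with $uw, wv \in E$ and $uv \in \binom{V}{2} \setminus E$. Then \eqref{eq:tree-path} is exactly the path inequality \eqref{eq:lifted-multicut-path} associated with the lifted edge $uv$ and this path. Since $E_{uv} \cup \{uv\} = \{uw, wv, uv\}$ induces a triangle on $\{u, w, v\}$, and every triangle is chordless in the complete graph $K_V$, \Cref{thm:cycle-facets}(b) immediately yields that \eqref{eq:tree-path} defines a facet of $\lmc(T)$.

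For the \emph{only if} direction I would argue contrapositively, assuming $d(u,v) \geq 3$ and exhibiting a linear equation that holds on the whole face defined by \eqref{eq:tree-path} but is not a scalar multiple of it; by the full-dimensionality of $\lmc(T)$ (\Cref{thm:dimension}) this rules out \eqref{eq:tree-path} being facet-defining. Write $P_{uv}$ as $u = p_0, p_1, \dots, p_d = v$ with $d \geq 3$, set $w = p_1 = \vec{u}(v)$ and $w' = p_2$, and note $uw, ww' \in E$ while $d(u, w') = 2$. The first step is a combinatorial description of the integral points of the face: since all coordinates are binary and obey the multicut (transitivity) constraints, $x_{uv} = x_{uw} + x_{wv}$ holds precisely when it is not the case that $x_{uw} = x_{wv} = 1$, that is, precisely when $w$ lies in the same component as $u$ or in the same component as $v$.

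The key claim is that every such point additionally satisfies $x_{uw'} = x_{uw} + x_{ww'}$, i.e.\ the path inequality for the triangle $\{u, w, w'\}$ holds with equality. If $x_{uw} = 0$, then $u$ and $w$ are in one component, so $x_{uw'} = x_{ww'}$ by transitivity and the claim holds. Otherwise $x_{uw} = 1$, and membership in the face forces $x_{wv} = 0$, so $w$ and $v$ lie in a common component; because any component of a decomposition of the \emph{tree} $T$ is a connected subtree and hence closed under taking $T$-paths, and $w'$ lies on $P_{wv}$, the node $w'$ belongs to that same component, giving $x_{ww'} = 0$ and $x_{uw'} = x_{uw} = 1$, so the claim holds again. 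As the face is the convex hull of its integral points, the equation $x_{uw'} = x_{uw} + x_{ww'}$ holds throughout the face; since its support contains $x_{uw'}$ but not $x_{uv}$, it is not a scalar multiple of \eqref{eq:tree-path}, and the non-facetness criterion applies.

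The crux of the argument is locating the right auxiliary equation in the converse direction and checking it on the entire face; the decisive structural fact that makes this work is that components of a tree decomposition are path-closed subtrees, which forces $w'$ into the same component as $w$ and $v$ whenever $w$ and $v$ are connected. The remaining verifications (that $w'$ is well defined for $d(u,v) \geq 3$ and that the two equations are genuinely distinct) are routine.
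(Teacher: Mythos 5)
Your proof is correct. The \emph{if} direction coincides with the paper's: both reduce to \Cref{thm:cycle-facets}\,\ref{enum:chordless-path} via the chordless triangle on $\{u,\vec{u}(v),v\}$. In the \emph{only if} direction you follow the same overall scheme as the paper --- exhibit a second, linearly independent equation valid on the whole face and conclude via full-dimensionality (\Cref{thm:dimension}) that the face has dimension at most $\abs{E\cup F}-2$ --- but with a different auxiliary equation. The paper shows that every point of the face satisfies the intersection equality $x_{uv} + x_{\vec{u}(v),\vec{v}(u)} = x_{u,\vec{v}(u)} + x_{\vec{u}(v),v}$, i.e.\ the intersection inequality \eqref{eq:intersection-inequality} holds with equality; this choice ties the non-facetness of \eqref{eq:tree-path} to the class of inequalities later shown facet-defining in \Cref{thm:tree-intersection-facet}, exhibiting concretely which stronger valid inequality dominates on this face. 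You instead use the local triangle equality $x_{uw'} = x_{uw} + x_{ww'}$, with $w=\vec{u}(v)$ and $w'$ the second node after $u$ on $P_{uv}$, and your verification is sound: on the face, $x_{uw}=1$ forces $x_{wv}=0$, and path-closedness of the components of a tree decomposition then places $w'$ in the common component of $w$ and $v$, while the case $x_{uw}=0$ follows from transitivity. Your supporting steps also check out --- the integral points of the face are exactly those with $(x_{uw},x_{wv})\neq(1,1)$, a face of this $0/1$-polytope is the convex hull of its integral points, $w'$ exists and is distinct from $v$ precisely because $d(u,v)\geq 3$, and the two equations are not proportional since yours has coefficient zero on $x_{uv}$. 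What the paper's choice buys is the thematic link to the intersection inequalities; what yours buys is locality (only the first two edges of $P_{uv}$ are needed) and a two-case analysis that is marginally simpler than the paper's three cases, one of which appeals to contraction of the edge $\{u,\vec{u}(v)\}$.
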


\proofref{prop:tree-path-facet}

\begin{proposition}\label{prop:tree-cut-facet}
    Let $T = (V,E)$ be a tree and let $u,v \in V$ with $d(u,v) \geq 2$. 
    The inequality 
    \begin{align}\label{eq:tree-cut}
        x_{\vec{u}(v),v} \leq x_{uv}
    \end{align}
    defines a facet of $\lmc(T)$ if and only if $v$ is a leaf of $T$.
\end{proposition}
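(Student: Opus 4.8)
The plan is to establish both directions of the equivalence by exhibiting affinely independent points on the face defined by~\eqref{eq:tree-cut}, since $\lmc(T)$ is full-dimensional of dimension $\abs{\binom{V}{2}}$ by \Cref{thm:dimension} (applied with $\widehat G = K_V$). Write $w = \vec u(v)$ for the first internal node on $P_{uv}$, so that the inequality reads $x_{wv} \le x_{uv}$. The face $\mathcal F$ it defines consists of all lifted multicuts with $x_{wv} = x_{uv}$, equivalently those decompositions of $T$ in which $w$ and $v$ lie in the same component if and only if $u$ and $v$ do. I would first record why this inequality is valid: since $u$, $w$, $v$ lie on a common path in $T$ with $w$ between $u$ and $v$, the cut inequality~\eqref{eq:cut-ineq-tree} gives $x_{wv} \le x_{uv}$ directly (the edges of $P_{wv}$ form a subset of the edges of $P_{uv}$, so this is a genuine instance of~\eqref{eq:cut-ineq-tree} or follows from \Cref{prop:tree-polytope}).

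For the ``if'' direction, assume $v$ is a leaf of $T$. The key observation is that when $v$ is a leaf, its only incident tree-edge is $\{w,v\}$ where $w = \vec u(v)$ is the unique neighbor of $v$, and $P_{wv}$ is exactly this single edge. Thus $x_{wv} = x_{\{w,v\}}$ is a \emph{tree-edge} variable, and the equality $x_{wv} = x_{uv}$ ties the lifted variable $x_{uv}$ to this edge together with the rest of the path via~\eqref{eq:path-ineq-tree}. Concretely, a decomposition lies in $\mathcal F$ iff: either the tree-edge $\{w,v\}$ is cut (so $v$ is a singleton-separated leaf, forcing $x_{wv}=1$ and hence $x_{uv}=1$, which holds automatically since $u,v$ are then separated), or $\{w,v\}$ is uncut (so $v$ and $w$ are together, forcing $x_{uv} = x_{uw}$). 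My plan is to count the degrees of freedom: decompositions with $\{w,v\}$ cut are in bijection with arbitrary decompositions of $T - v$ (the tree on $V \setminus \{v\}$), and decompositions with $\{w,v\}$ uncut are in bijection with arbitrary decompositions of the tree $T/\{w,v\}$ obtained by contracting $v$ into $w$. In both cases the values $x_{uv}$, $x_{wv}$, and all $x_{v,\cdot}$ are determined by the remaining decomposition, and each of these two families is full-dimensional in its own ambient space. I would assemble $\abs{\binom{V}{2}}$ affinely independent points in $\mathcal F$ by taking a full-rank set of decompositions of $T-v$ together with the necessary extra points obtained by toggling the component of $v$, checking that the union spans an affine hyperplane; equivalently, I would show the only linear equation satisfied by all points of $\mathcal F$ is the given one up to scaling. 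Concretely I would start from the reference point (everything cut), and for each edge or pair $e \ne \{w,v\}, \{uv\}, \{wv\}$ produce a multicut in $\mathcal F$ flipping only the coordinate $e$, then handle the coupled coordinates $x_{uv}, x_{wv}$ and the coordinates $x_{vp}$ for $p \ne w$ by the leaf-toggling construction.

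For the ``only if'' direction, suppose $v$ is \emph{not} a leaf, and I would exhibit a second linear equation $c^\top x = d$, not a scalar multiple of $x_{wv} - x_{uv} = 0$, that holds on all of $\mathcal F$; this shows $\mathcal F$ has dimension at most $\abs{\binom{V}{2}} - 2$ and hence is not a facet. When $v$ has another neighbor $w'$ in $T$ (on the side away from $u$, or on a branch), the path $P_{wv}$ still has length one, but now the relation $x_{wv} = x_{uv}$ no longer pins a tree-edge in a way that is free: I expect to find that every point of $\mathcal F$ additionally satisfies some cut inequality~\eqref{eq:cut-ineq-tree} or path inequality~\eqref{eq:path-ineq-tree} with equality, giving the extra hyperplane. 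The cleanest route is to argue that when $d(u,v) \ge 2$ and $v$ is internal, the face $\mathcal F$ is contained in the intersection of two of the polytope's defining inequalities, for instance $x_{wv} \le x_{uv}$ together with a further cut inequality involving the continuation of the path beyond $v$.

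The main obstacle I anticipate is the dimension count in the ``if'' direction: rather than naively listing $\abs{\binom{V}{2}}$ explicit multicuts, it is cleaner to show that the face equals a product-like structure over the two contraction/deletion operations and invoke \Cref{thm:dimension} on the smaller trees $T - v$ and $T/\{w,v\}$ to conclude the affine hull has the right dimension. Verifying that toggling the leaf $v$'s membership genuinely produces the missing independent directions (those involving the coordinates $x_{vp}$ for $p \neq w$) without accidentally forcing another equality is the delicate point, and I would treat it by comparing the reference multicut $\1_{\binom{V}{2}}$ (all separated) with the multicut where $v$ joins $w$'s component, confirming exactly which coordinates change. The ``only if'' direction, by contrast, should reduce to pointing at a single concrete additional tight inequality, which is routine once the right one is identified.
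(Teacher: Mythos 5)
Your proposal rests on a misreading of the notation that is fatal to the ``if'' direction. By the paper's definition (cf.\ \Cref{fig:lifted-trees}b), $\vec{u}(v)$ is the first internal node of $P_{uv}$ \emph{seen from $u$}, i.e.\ the neighbor of $u$ on that path --- not the neighbor of $v$. Hence for $d(u,v)\geq 3$ the pair $\{\vec{u}(v),v\}$ is not a tree edge, $x_{\vec{u}(v),v}$ is itself a lifted variable (so \eqref{eq:tree-cut} is \emph{not} an instance of \eqref{eq:cut-ineq-tree}, although validity does follow from \Cref{prop:tree-polytope}), and your key premise --- that when $v$ is a leaf, $P_{\vec{u}(v),v}$ is the single edge $\{w,v\}$ --- is false, so the deletion/contraction structure $T-v$ versus $T/\{w,v\}$ that your dimension count is built on is unavailable. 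Worse, under your reading the statement you would be proving is false: on the path with nodes $0,1,2,3$, taking $u=0$ and the leaf $v=3$, your inequality reads $x_{23}\leq x_{03}$, which is the sum of the two valid inequalities $x_{23}\leq x_{13}$ and $x_{13}\leq x_{03}$; its face is therefore contained in the hyperplane $x_{13}=x_{03}$ and is not a facet, consistent with the complete description \eqref{eq:path-box}--\eqref{eq:path-intersection}, which contains only the ``consecutive'' cut inequalities. There is also a second, independent error: even granting your reading, your description of the face is wrong in the uncut branch. Tightness $x_{wv}=x_{uv}$ together with $x_{wv}=0$ forces $x_{uv}=0$, i.e.\ $u$ must lie in the \emph{same} component as $w$ and $v$; it does not merely record the automatic identity $x_{uv}=x_{uw}$. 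So the tight points with $\{w,v\}$ uncut are not in bijection with arbitrary decompositions of the contracted tree (the decomposition with component $\{w,v\}$ and $u$ a singleton is feasible but not tight), and the advertised ``two full-dimensional families'' argument collapses.

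For comparison, the paper proves the ``if'' direction without constructing new points: it recycles the explicit multicuts from the proof of \Cref{thm:dimension} (for each pair $st$, the four decompositions with single non-singleton component $V_{P_{st}}$, $V_{P_{st}}\setminus\{s,t\}$, $V_{P_{st}}\setminus\{s\}$, $V_{P_{st}}\setminus\{t\}$) and observes that, when $v$ is a leaf, only the decomposition whose sole non-singleton component is the node set of $P_{\vec{u}(v),v}$ fails to satisfy \eqref{eq:tree-cut} with equality; removing one point from a family whose affine hull is full-dimensional lowers the dimension by at most one, giving $\dim \Sigma \geq \abs{\binom{V}{2}}-1$. Your ``only if'' sketch has the right intuition (exploit the continuation of the path beyond $v$), but it stops at ``I expect to find'' an extra tight equation, and identifying and verifying that equation is the entire content of this half. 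The paper makes it concrete: for a neighbor $w$ of $v$ such that $P_{\vec{u}(v),v}$ is a subpath of $P_{\vec{u}(v),w}$, every tight point satisfies $x_{uw}=x_{\vec{u}(v),w}$, checked in two cases --- if $x_{uv}=x_{\vec{u}(v),v}=1$ both sides equal $1$ because cutting a subpath cuts every superpath, and if $x_{uv}=x_{\vec{u}(v),v}=0$ one contracts $P_{uv}$ and both sides equal $x_{vw}$. As submitted, your argument establishes neither direction.
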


\proofref{prop:tree-cut-facet}

\begin{proposition}\label{prop:tree-box-facets}
    Let $T = (V,E)$ be a tree.
    For any distinct $u,v \in V$, the inequality $x_{uv} \leq 1$ defines a facet of $\lmc(T)$ if and only if both $u$ and $v$ are leaves of $T$.
    Moreover, none of the inequalities $0 \leq x_{uv}$ define facets of $\lmc(T)$.
\end{proposition}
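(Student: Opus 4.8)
The plan is to derive both statements from the two box-facet theorems already established, namely \Cref{thm:upper-box-facets} for the upper bound $x_{uv} \leq 1$ and \Cref{thm:lower-box-facets} for the lower bound $0 \leq x_{uv}$, specialized to $G = T$ and $\widehat G = K_V$, so that $F = \binom{V}{2} \setminus E$ and $\widehat G$ is complete. The key simplification is that, in a tree, the unique $pq$-path is $P_{pq}$; hence a node $w$ is a $pq$-cut-node precisely when $w \in V_{pq}$, and a leaf (a degree-one node) can lie on a path only as one of its two endpoints. Throughout I assume $\abs{V} \geq 3$; the degenerate single-edge tree is trivial and should be excluded or handled separately, as noted below.

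For the first claim I would apply \Cref{thm:upper-box-facets}: $x_{uv} \leq 1$ is facet-defining if and only if there is no $pq \in F \setminus \{uv\}$ for which both $u$ and $v$ are $pq$-cut-nodes, i.e.\ if and only if no path $P_{pq} \neq P_{uv}$ contains both $u$ and $v$. For the ``if'' direction, suppose both $u$ and $v$ are leaves and some $P_{pq}$ contains both; since leaves occur on a path only as endpoints, $\{p,q\} = \{u,v\}$, forcing $pq = uv$, so no admissible $pq$ exists and the inequality defines a facet. For the ``only if'' direction I argue by contraposition: assume (w.l.o.g.) that $u$ is not a leaf, and let $u'$ be a neighbour of $u$ off the path $P_{uv}$, which exists because $u$ has degree at least $2$ while $P_{uv}$ uses only one edge at $u$. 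Then $P_{u'v}$ runs $u', u, \dots, v$ and contains both $u$ and $v$, while $d(u',v) = 1 + d(u,v) \geq 2$ gives $u'v \in F$ and $u' \neq u$ gives $u'v \neq uv$; hence $x_{uv} \leq 1$ is not facet-defining. The only bookkeeping is checking $u' \neq v$ and that $P_{u'v}$ is genuinely the one-edge extension of $P_{uv}$.

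For the second claim I would invoke \Cref{thm:lower-box-facets} together with the observation that, since $\widehat G = K_V$ is complete and $\abs{V} \geq 3$, every pair $uv$ lies in a triangle $\{u,v,w\}$ for any third node $w$. If $uv \in E$, the theorem states that $0 \leq x_{uv}$ defines a facet if and only if no triangle of $\widehat G$ contains $uv$; such a triangle always exists, so the inequality is not facet-defining. If $uv \in F$, then condition~(i) of \Cref{thm:lower-box-facets}---the absence of a triangle through $uv$---is a \emph{necessary} condition for $0 \leq x_{uv}$ to be facet-defining, and it likewise fails. In both cases $0 \leq x_{uv}$ fails to define a facet.

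The argument is short, and the real care lies in two places. First, the entire first claim rests on correctly translating ``$pq$-cut-node'' into ``$u \in V_{pq}$'' and on the endpoint property of leaves on tree paths; this is the only conceptual step and the main thing to get right. Second, the degenerate small cases deserve a word: for a single-edge tree one has $\lmc(T) = [0,1]$, where $0 \leq x_{uv}$ is in fact a facet, so the statement should be read under the standing assumption $\abs{V} \geq 3$ (consistent with the neighbouring propositions, which presuppose $d(u,v) \geq 2$).
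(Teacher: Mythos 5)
Your proof is correct and follows essentially the same route as the paper's own: specializing \Cref{thm:upper-box-facets} via the observation that in a tree the $pq$-cut-nodes are exactly the nodes of the unique path $P_{pq}$ (so a pair $u,v$ lies on some other path $P_{pq}$ with $pq \neq uv$ precisely when $u$ or $v$ is not a leaf), and dismissing all lower bounds via the triangle condition of \Cref{thm:lower-box-facets}, which is always violated when lifting to the complete graph on at least three nodes. Your caveat about the single-edge tree, where $\lmc(T) = [0,1]$ and $0 \leq x_{uv}$ \emph{is} a facet, is a legitimate boundary case that the paper's terser proof leaves implicit.
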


\proofref{prop:tree-box-facets}

Next, we present an additional class of facets of $\lmc(T)$.
For any $u,v \in V$ with $d(u,v) \geq 3$ consider the inequality
\begin{align}\label{eq:intersection-inequality}
    x_{uv} + x_{\vec{u}(v),\vec{v}(u)}
    \leq x_{u,\vec{v}(u)} + x_{\vec{u}(v),v} \enspace, 
\end{align}
which we refer to as the \emph{intersection inequality}.
For an illustration, see \Cref{fig:lifted-trees}c.

\begin{lemma}\label{lem:tree-intersection-valid}
    Let $T = (V,E)$ be a tree. Any intersection inequality is valid for $\lmc(T)$.
\end{lemma}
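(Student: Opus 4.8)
The plan is to verify the inequality at every vertex of $\lmc(T)$ and then conclude by convexity. Since $\lmc(T)$ is by definition the convex hull of the characteristic vectors $\1_M$ of multicuts lifted from $T$, it suffices to show that \eqref{eq:intersection-inequality} holds for each such $\1_M$, i.e.\ for every $x \in \{0,1\}^{\binom{V}{2}}$ that is the characteristic vector of a lifted multicut.

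First I would record the combinatorial meaning of the coordinates for a tree. For a lifted multicut induced by a decomposition of $T$ and any two distinct nodes $s,t$, we have $x_{st} = 0$ precisely when $s$ and $t$ lie in the same component, which---because $T$ is a tree---happens exactly when every edge on the unique path $P_{st}$ carries the label $0$. Equivalently, $x_{st} = 1$ if and only if at least one edge of $P_{st}$ is cut. Next I would exploit that all four node pairs appearing in \eqref{eq:intersection-inequality} lie on the single path $P_{uv}$. Writing this path as $u = p_0, p_1, \dots, p_k = v$ with $k = d(u,v) \geq 3$, so that $\vec{u}(v) = p_1$ and $\vec{v}(u) = p_{k-1}$, and letting $e_i = \{p_{i-1},p_i\}$, the four coordinates are determined solely by the labels $x_{e_1}, \dots, x_{e_k}$ along this path.

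Denoting by $Z \subseteq \{1,\dots,k\}$ the set of indices of cut edges and, for an index set $S$, setting $f(S) = \1[S \cap Z \neq \emptyset]$, the previous observation gives $x_{uv} = f(\{1,\dots,k\})$, $x_{\vec{u}(v),\vec{v}(u)} = f(\{2,\dots,k-1\})$, $x_{u,\vec{v}(u)} = f(\{1,\dots,k-1\})$ and $x_{\vec{u}(v),v} = f(\{2,\dots,k\})$. Setting $C = \{1,\dots,k-1\}$ and $D = \{2,\dots,k\}$, one has $C \cup D = \{1,\dots,k\}$ and $C \cap D = \{2,\dots,k-1\}$, so the inequality to prove is exactly $f(C \cup D) + f(C \cap D) \leq f(C) + f(D)$.

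This last inequality is the submodularity of the hitting-set (coverage) function $f$, which holds because $f(S) = \min(\abs{S \cap Z}, 1)$ is a nondecreasing concave function of the modular map $S \mapsto \abs{S \cap Z}$; alternatively it can be verified by a short case distinction according to whether $Z$ meets the interior $\{2,\dots,k-1\}$, meets only the two end edges $e_1, e_k$, or is empty. The only point requiring care is the reduction itself---recognising that all four terms collapse to interval-hitting indicators on the common path $P_{uv}$ and that the relevant index sets are related by $\cup$ and $\cap$---after which validity is immediate. I do not expect any genuine analytic obstacle beyond setting up this correspondence correctly.
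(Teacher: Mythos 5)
Your proof is correct, and it takes a different route from the paper's, though both share the same starting reduction: check validity only at the integral points $\1_M$ and conclude by convexity (this restriction is essential here, since the paper exhibits a fractional point of $\tppt(T)$ that violates an intersection inequality, so no argument within the relaxation could work). From there the paper proceeds by a case analysis on the right-hand-side variables, driven by inequalities of the canonical description: if, say, $x_{u,\vec{v}(u)} = 0$, then the cut inequalities \eqref{eq:cut-ineq-tree} force $x_e = 0$ on every edge of $P_{u,\vec{v}(u)}$, whence the path inequality \eqref{eq:path-ineq-tree} yields $x_{\vec{u}(v),\vec{v}(u)} = 0$ (and, if both right-hand variables vanish, $x_{uv} = 0$ as well); the remaining case $x_{u,\vec{v}(u)} = x_{\vec{u}(v),v} = 1$ is trivial. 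You instead translate all four coordinates into hitting indicators $f(S) = \1[S \cap Z \neq \emptyset]$ of the index intervals $C \cup D$, $C \cap D$, $C$, $D$ along the common path $P_{uv}$ and recognize \eqref{eq:intersection-inequality} as the submodular inequality $f(C \cup D) + f(C \cap D) \leq f(C) + f(D)$ for the coverage function $S \mapsto \min(\abs{S \cap Z}, 1)$; this is a valid identification (your translation $x_{st} = \1[Z \cap E_{st} \neq \emptyset]$ is exactly the defining property of lifting from a tree, cf.\ the product formula in the proof of \Cref{prop:tree-lmp}), and the submodularity itself is standard. What each approach buys: the paper's argument stays entirely inside the inequality system \eqref{eq:path-ineq-tree}--\eqref{eq:cut-ineq-tree}, which is the template it reuses in later proofs; your framing isolates the structural reason the inequality holds --- submodularity over a crossing pair of intervals, which also explains the name ``intersection inequality'' --- makes the equality cases transparent (strictness occurs exactly when $Z$ cuts both end edges $e_1, e_k$ but no interior edge, matching the unique non-tight multicut identified in the proof of \Cref{thm:tree-intersection-facet}), and points naturally toward the generalized intersection inequalities \eqref{eq:gen-ineq} and the running intersection inequalities discussed in \Cref{sec:multilinear}.
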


\proofref{lem:tree-intersection-valid}

\begin{theorem}\label{thm:tree-intersection-facet}
    Let $T = (V,E)$ be a tree. Any intersection inequality defines a facet of $\lmc(T)$.
\end{theorem}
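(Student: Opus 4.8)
The plan is to exploit that $\lmc(T)$ is full-dimensional (\Cref{thm:dimension}), so by the criterion recalled at the beginning of \Cref{sec:facets-from-canonical-ineq} it suffices to show: whenever an equation $c^\top x = d$ holds for every lifted multicut $x$ lying on the face defined by \eqref{eq:intersection-inequality}, the pair $(c,d)$ is a scalar multiple of the pair defining that inequality. Throughout I would write the path as $P_{uv} = (u = p_0, p_1, \dots, p_k = v)$ with $k = d(u,v) \ge 3$ and $e_i = \{p_{i-1},p_i\}$, abbreviate $u' = \vec u(v) = p_1$ and $v' = \vec v(u) = p_{k-1}$, and set $R = E \setminus \{e_1,\dots,e_k\}$.

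First I would fix a convenient parametrization. Since $T$ is a tree, cutting an arbitrary subset of edges yields a valid decomposition, so lifted multicuts are in bijection with $z \in \{0,1\}^E$ via $z_e = 1 - x_e$ on $E$, and all lifted coordinates are recovered as $x_{ab} = 1 - \prod_{e \in E_{ab}} z_e$. Validity of \eqref{eq:intersection-inequality} is \Cref{lem:tree-intersection-valid}, and a short case analysis on the three quantities ``$e_1$ cut'', ``$e_k$ cut'' and ``some middle edge $e_2,\dots,e_{k-1}$ cut'' shows that equality holds for $z$ unless $z_{e_1} = z_{e_k} = 0$ and $z_{e_i} = 1$ for all $2 \le i \le k-1$. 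Hence the integral points of the face are exactly $\{0,1\}^E \setminus B$, where $B$ is the subcube fixing $z_{e_1}, z_{e_k}$ to $0$, the middle edges $e_2,\dots,e_{k-1}$ to $1$, and leaving the edges in $R$ free.

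Next I would translate the hypothesis into a statement about the multilinear polynomial $Q(z) := \sum_{ab} c_{ab}\prod_{e \in E_{ab}} z_e$. Substituting $x_{ab} = 1 - \prod_{e\in E_{ab}} z_e$ shows that $Q$ is constant, say $\gamma$, on $\{0,1\}^E \setminus B$. A multilinear function vanishing off the subcube $B$ is the product of its indicator with a polynomial in the free variables, so
\[
    Q(z) - \gamma = (1 - z_{e_1})(1 - z_{e_k})\Big(\prod_{i=2}^{k-1} z_{e_i}\Big)\, h(z_R)
\]
for some multilinear $h$ in $\{z_r : r \in R\}$. As every pair $ab$ satisfies $\abs{E_{ab}} \ge 1$, the polynomial $Q$ has no constant term, and since $k \ge 3$ the right-hand side has none either, forcing $\gamma = 0$. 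Expanding $(1-z_{e_1})(1-z_{e_k})$, each monomial $\prod_{r\in A} z_r$ of $h$ (with coefficient $g_A$) produces the four edge sets $E_{u'v'}\cup A$, $E_{uv'}\cup A$, $E_{u'v}\cup A$ and $E_{uv}\cup A$; these are pairwise distinct across all choices, so no cancellation occurs, and since the monomials of $Q$ are precisely the path monomials $\prod_{e\in E_{ab}} z_e$, each set with $g_A \ne 0$ must be the edge set of a path in $T$.

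The crux, and the step I expect to be the main obstacle, is the combinatorial claim that this is impossible unless $A = \emptyset$: if $g_A \ne 0$ then both $E_{uv}\cup A$ and $E_{u'v'}\cup A$ would have to be paths. In $E_{uv}\cup A$ the interior nodes $p_1,\dots,p_{k-1}$ already have degree two, so no edge of $A$ may touch them and $A$ can only form pendant paths at $p_0$ or $p_k$; but then in $E_{u'v'}\cup A = \{e_2,\dots,e_{k-1}\}\cup A$ these pendant edges, whose only possible link to the inner sub-path would be $e_1$ or $e_k$ (neither of which lies in $E_{u'v'}\cup A$), leave the edge set disconnected, hence not a path. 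Thus $g_A = 0$ for all $A \ne \emptyset$, i.e.\ $h \equiv \lambda$ is constant. Reading off the four surviving monomials gives $c_{uv} = c_{u'v'} = \lambda$, $c_{uv'} = c_{u'v} = -\lambda$ and $c_{ab} = 0$ otherwise, while $d = \sum_{ab} c_{ab} - \gamma = 0$. Therefore $(c,d)$ is a scalar multiple of the coefficients of \eqref{eq:intersection-inequality}, and the inequality is facet-defining.
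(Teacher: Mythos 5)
Your proof is correct, and it takes a genuinely different route from the paper's. The paper's argument is a three-line primal one: it recycles the lifted multicuts constructed in the proof of \Cref{thm:dimension}, observes that among them only the one induced by the decomposition into the nodes of $P_{\vec{u}(v),\vec{v}(u)}$ plus singletons fails to satisfy \eqref{eq:intersection-inequality} with equality (a point of the subcube you call $B$ --- your case analysis shows the full set of off-face integral points is exactly $B$), and concludes that the face $\Sigma$ has $\dim \Sigma \geq \dim \lmc(T) - 1$ because deleting one point from an affinely spanning family lowers the affine span by at most one; together with \Cref{lem:tree-intersection-valid} this gives the facet. You instead run the dual criterion: you characterize the integral points of the face exactly (as $\{0,1\}^E \setminus B$, using that every edge subset of a tree is a multicut), convert any equation $c^\top x = d$ valid on the face into the multilinear identity $Q - \gamma = (1-z_{e_1})(1-z_{e_k})\bigl(\prod_{i=2}^{k-1} z_{e_i}\bigr) h(z_R)$ via uniqueness of multilinear representations, and kill every nonconstant monomial of $h$ by the observation that $E_{uv} \cup A$ being a path forces $A$ to consist of pendant extensions at $u$ or $v$, which then disconnect $E_{\vec{u}(v),\vec{v}(u)} \cup A$. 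All the delicate steps check out: the equality analysis (strictness occurs precisely when $e_1,e_k$ are cut and no middle edge is), the no-cancellation claim (the four supports are distinguished by their trace on $\{e_1,e_k\}$ and on $R$, and $ab \mapsto E_{ab}$ is injective on a tree), and $\gamma = 0$ (neither side has a constant term since $k \geq 3$). Your route is longer but buys more: an exact description of the vertex set of the facet, a direct uniqueness proof for the supporting hyperplane rather than a dimension count, and --- since it works with the path monomials $\prod_{e \in E_{ab}} z_e$ --- an explicit bridge to the multilinear-polytope picture of \Cref{sec:multilinear}, where \eqref{eq:intersection-inequality} reappears as a running-intersection-type inequality. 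The paper's route buys brevity and reuse of the dimension construction.
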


\proofref{thm:tree-intersection-facet}

Next, we establish the exact condition under which the intersection inequalities are valid and facet-defining for the lifted multicut polytope $\lmc(G)$, for an arbitrary graph $G$. 
For an example, see \Cref{fig:intersection-arbitrary-g}.

\begin{theorem}\label{thm:intersection-arbitrary-g}
    Let $G=(V,E)$ be a connected graph and let $uu',vv' \in E$ such that $u,u',v,v'$ are pairwise distinct. Then, the intersection inequality 
    \begin{align}\label{eq:intersection-arbitrary-g}
        x_{uv} + x_{u'v'} \leq x_{uv'} + x_{vu'}
    \end{align}
    is valid and facet-defining for $\lmc(G)$ if and only if $\{u,v'\}$ is a $vu'$-separating node set and $\{v,u'\}$ is a $uv'$-separating node set. 
\end{theorem}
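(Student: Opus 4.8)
The plan is to prove validity together with the ``only if'' direction by a direct analysis of decompositions, and then to obtain the facet property in the ``if'' direction by reducing to the tree case of \Cref{thm:tree-intersection-facet} via \Cref{lem:valid-and-facet-defining-for-subset}.

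First I would translate \eqref{eq:intersection-arbitrary-g} into the language of decompositions. Evaluating at a lifted multicut $\1_M$ induced by a decomposition $\Pi$, the coordinate $x_{ab}$ equals $0$ precisely when $a$ and $b$ lie in a common block of $\Pi$. As all coordinates are in $\{0,1\}$, the inequality can be violated only when $x_{uv} = x_{u'v'} = 1$ and at least one of $x_{uv'}, x_{vu'}$ is $0$; transitivity of ``lying in a common block'' rules out the case $x_{uv'} = x_{vu'} = 0$. Thus a violating decomposition exists if and only if there is one with $u \not\sim v$, $u' \not\sim v'$ and $u \sim v'$ (a \emph{left violation}), or the symmetric \emph{right violation} with $v \sim u'$. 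In a left violation the block containing $u$ also contains $v'$ but neither $v$ nor $u'$, and since blocks induce connected subgraphs this yields a $uv'$-path in $G$ avoiding $\{v,u'\}$; conversely, any such path gives a left-violating decomposition by taking its node set as one block and all remaining nodes as singletons. Hence a left violation is realizable exactly when $\{v,u'\}$ is not a $uv'$-separating set, and symmetrically a right violation exactly when $\{u,v'\}$ is not a $vu'$-separating set. Therefore \eqref{eq:intersection-arbitrary-g} is valid for $\lmc(G)$ if and only if both separating conditions hold. This already settles the ``only if'' direction, since a facet-defining inequality is in particular valid.

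For the ``if'' direction I would assume both separating conditions and build a suitable spanning tree. The key combinatorial step is the dichotomy that the conditions imply either (i) $u'$ and $v'$ lie in a common component of $G - \{u,v\}$, or (ii) $u$ and $v$ lie in a common component of $G - \{u',v'\}$. I would argue by contradiction: assuming (i) fails, let $X$ and $Y$ be the components of $u'$ and of $v'$ in $G-\{u,v\}$. If $u$ had a neighbour $y \in Y$, then $u$ followed by a path inside $Y$ to $v'$ would be a $uv'$-path avoiding $\{v,u'\}$, contradicting that $\{v,u'\}$ is $uv'$-separating; symmetrically the condition that $\{u,v'\}$ is $vu'$-separating forbids $v$ from having a neighbour in $X$. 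Consequently $X$ attaches to the rest of $G$ only through $u$ and $Y$ only through $v$, so any $uv$-path in the connected graph $G$ must use the edge $uv$ or pass through a component of $G-\{u,v\}$ distinct from $X,Y$; in either case there is a $uv$-path avoiding $X \cup Y$, hence avoiding $u'$ and $v'$, which establishes (ii). In case (i) I take a $u'v'$-path $Q$ in $G-\{u,v\}$ and extend the $uv$-path $u,u',Q,v',v$ to a spanning tree $T \subseteq G$; in case (ii) I symmetrically extend $u',u,Q',v,v'$. Since $T$ is a tree, the unique path between the two chosen endpoints is exactly this path, realizing $\vec u(v)=u'$, $\vec v(u)=v'$ (or the primed analogue) with endpoint distance at least $3$, so that the tree intersection inequality \eqref{eq:intersection-inequality} for $T$ coincides with \eqref{eq:intersection-arbitrary-g}.

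Finally I would assemble the pieces: \eqref{eq:intersection-arbitrary-g} is valid for $\lmc(G)$ by the first step and facet-defining for $\lmc(T)$ by \Cref{thm:tree-intersection-facet}; as $E(T) \subseteq E$ and $T$ is connected, \Cref{lem:valid-and-facet-defining-for-subset} then yields that it is facet-defining for $\lmc(G)$, completing the proof. I expect the main obstacle to be the dichotomy (i)/(ii): this is where the separating-set hypotheses do real work, and small examples show that neither alternative holds automatically, so one must exploit both conditions simultaneously to exclude a $uv'$- or $vu'$-shortcut through the components $X$ and $Y$. Once this dichotomy is in place, the validity case analysis and the spanning-tree extension are routine.
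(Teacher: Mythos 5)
Your proposal is correct and takes essentially the same route as the paper's proof: validity and the ``only if'' direction by direct analysis of decompositions (a violating decomposition built from a path plus singletons when a separating condition fails; block-connectivity plus the two separating hypotheses forcing collapse of the remaining cases), and the facet property by extending the path $u,u',\dots,v',v$ to a spanning tree and combining \Cref{thm:tree-intersection-facet} with \Cref{lem:valid-and-facet-defining-for-subset}. Your careful component-based proof of the dichotomy (i)/(ii) in fact fills in a step the paper states only tersely (it asserts, without the avoidance condition made explicit, that a $uv$-path or a $u'v'$-path suitable for the tree construction exists), so this is a welcome elaboration rather than a deviation.
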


\proofref{thm:intersection-arbitrary-g}

\begin{figure}
    \center
    \input{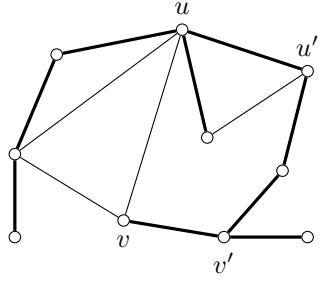}
    \caption{Depicted above are a graph $G$ and four nodes $u,u',v,v'$ such that $uu'$ and $vv'$ are edges in $G$. The conditions of \Cref{thm:intersection-arbitrary-g} are satisfied and the corresponding intersection inequality \eqref{eq:intersection-arbitrary-g} defines a facet of $\lmc(G)$. The thick edges form a spanning tree $T$ of $G$. Inequality \eqref{eq:intersection-arbitrary-g} with respect to $u,u',v,v'$ for $\lmc(G)$ coincides with the intersection inequality \eqref{eq:intersection-inequality} with respect to $u$ and $v$ for $\lmc(T)$.}
    \label{fig:intersection-arbitrary-g}
\end{figure}

\subsection{Lifted multicut polytope for paths}\label{sec:lmp-paths}

In this section, we show that the facets established in the previous section yield a complete description of $\lmc(T)$ when $T$ is a path.
To this end, suppose that $V = \{ 0, \dotsc, n \}$ and $E = \big \{ \{i,i+1\} \mid i \in \{0, \dotsc, n -1 \} \big \}$ are linearly ordered.
Therefore, $T = (V,E)$ is path.
We consider only paths of length $n \geq 2$, since for $n = 1$, the polytope $\lmc(T) = [0,1]$ is simply the unit interval.
Let $\ppp(n)$ be the convex hull of all $x \in \R^{\binom{V}{2}}$ that satisfy the system
\begin{align}
    x_{0n} & \leq 1 \label{eq:path-box} 
    \\
    x_{in} & \leq x_{i-1,n} & \forall i \in \{1, \dotsc, n-1\} \label{eq:path-cut-right} \\
    x_{0i} & \leq x_{0,i+1} & \forall i \in \{1, \dotsc, n-1\} \label{eq:path-cut-left} \\
    x_{i-1,i+1} & \leq x_{i-1,i} + x_{i,i+1} & \forall i \in \{1, \dotsc, n-1\} \label{eq:path-triangle} \\
    x_{j,k} + x_{j+1,k-1} & \leq x_{j+1,k} + x_{j,k-1} & \forall j,k \in \{0, \dotsc, n\}, \: j < k-2 \label{eq:path-intersection} 
    \enspace .
\end{align}
Note that this system consists precisely of those inequalities that we have shown to define facets of $\lmc(T)$ in the previous section.
We first prove that $\ppp(n)$ indeed yields a relaxation of $\lmc(T)$.

\begin{lemma} \label{lem:path-polytope}
    For a path $T$ of length $n$, we have $\lmc(T) \subseteq \ppp(n) \subseteq \tppt(T)$.
\end{lemma}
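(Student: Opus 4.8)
The plan is to prove the two inclusions separately. The inclusion $\lmc(T) \subseteq \ppp(n)$ will be essentially immediate: since the feasible set of a linear system is convex, $\ppp(n)$ is exactly the polyhedron cut out by \eqref{eq:path-box}--\eqref{eq:path-intersection}, so it suffices to observe that each of these inequalities has already been shown valid for $\lmc(T)$ in \Cref{sec:lmp-trees-facets}. Concretely, \eqref{eq:path-box} is the box facet of \Cref{prop:tree-box-facets} for the two leaves $0,n$; \eqref{eq:path-cut-right} and \eqref{eq:path-cut-left} are instances of the cut inequality \eqref{eq:tree-cut} from \Cref{prop:tree-cut-facet} for the leaves $v=n$ and $v=0$; \eqref{eq:path-triangle} is the distance-two path inequality \eqref{eq:tree-path} of \Cref{prop:tree-path-facet}; and \eqref{eq:path-intersection} is the intersection inequality, valid by \Cref{lem:tree-intersection-valid}. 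As every vertex of $\lmc(T)$ satisfies all of them, the inclusion follows.

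The substance is in $\ppp(n) \subseteq \tppt(T)$. In the path case the defining inequalities of $\tppt(T)$, taken over ordered pairs $u<v$ with $v-u\ge 2$, are $x_{u+1,v}\le x_{uv}$, $x_{u,v-1}\le x_{uv}$, $x_{uv}\le x_{u,u+1}+x_{u+1,v}$ and $x_{uv}\le x_{v-1,v}+x_{u,v-1}$, together with $0\le x\le 1$. The idea is to read the intersection inequality \eqref{eq:path-intersection} as a monotonicity statement for the gaps $D_L(j,k):=x_{jk}-x_{j+1,k}$ and $D_R(j,k):=x_{jk}-x_{j,k-1}$: rearranging \eqref{eq:path-intersection} gives $D_L(j,k)\le D_L(j,k-1)$ and $D_R(j,k)\le D_R(j+1,k)$. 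Telescoping $D_L(j,\cdot)$ upward yields $D_L(j,k)\ge D_L(j,n)=x_{jn}-x_{j+1,n}\ge 0$, where nonnegativity of the boundary gap is \eqref{eq:path-cut-right}; this is the general left cut inequality $x_{j+1,k}\le x_{jk}$. Symmetrically, telescoping $D_R(\cdot,k)$ downward in the first index yields $D_R(j,k)\ge D_R(0,k)=x_{0k}-x_{0,k-1}\ge 0$ by \eqref{eq:path-cut-left}, giving the general right cut inequality $x_{j,k-1}\le x_{jk}$.

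For the two path inequalities I would telescope in the reverse direction. Since $D_L$ is nonincreasing in its second index, $D_L(u,v)\le D_L(u,u+2)=x_{u,u+2}-x_{u+1,u+2}\le x_{u,u+1}$, where the last step is the consecutive triangle inequality \eqref{eq:path-triangle} at the middle node $u+1$; this is precisely $x_{uv}\le x_{u,u+1}+x_{u+1,v}$. The mirror chain on $D_R$, using the triangle at $v-1$, gives $x_{uv}\le x_{v-1,v}+x_{u,v-1}$. The box constraints then come for free from the inequalities already derived: for an edge $\{i-1,i\}$ the triangle $x_{i-1,i+1}\le x_{i-1,i}+x_{i,i+1}$ together with the derived cut $x_{i,i+1}\le x_{i-1,i+1}$ forces $0\le x_{i-1,i}$ (with the last edge handled symmetrically via a right cut), and iterating the cut inequalities bounds every $x_{jk}$ below by an edge variable, so $x\ge 0$; likewise growing both endpoints via the cut inequalities gives $x_{jk}\le x_{0n}$, which is $\le 1$ by \eqref{eq:path-box}.

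The main obstacle I anticipate is not conceptual but organizational: getting the two telescoping directions right, that is, identifying which endpoint is driven to the boundary ($n$ or $0$) versus to the near-diagonal ($u+2$ or $v-2$), and whether the summed intersection inequalities produce an upper or a lower bound on the relevant gap. Alongside this, I would need to check that every invoked instance of \eqref{eq:path-cut-right}, \eqref{eq:path-cut-left}, \eqref{eq:path-triangle} and \eqref{eq:path-intersection} lies within its stated index range, so that the chains actually reach the boundary terms that carry the sign information. No individual step is deep; the proof is entirely a matter of assembling these chains consistently.
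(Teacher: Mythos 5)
Your proposal is correct and follows essentially the same route as the paper's proof: the first inclusion by citing validity of each inequality \eqref{eq:path-box}--\eqref{eq:path-intersection} from \Cref{sec:lmp-trees-facets}, and the second by deriving the cut inequalities of $\tppt(T)$ from the boundary cases \eqref{eq:path-cut-right}/\eqref{eq:path-cut-left} via the intersection inequalities, the path inequalities from the triangles \eqref{eq:path-triangle} via the intersection inequalities, and finally the box constraints $0 \leq x \leq 1$ by combining the derived inequalities with \eqref{eq:path-box}. Your monotone-gap telescoping of $D_L$ and $D_R$ is exactly the paper's two inductions unrolled (the paper inducts on $v$ toward $n$, respectively on $d(u,v)$ toward $2$, using \eqref{eq:path-intersection} at each step), so the arguments coincide step for step.
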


\proofref{lem:path-polytope}

As our main result of this section, we prove that the system defining $\ppp(n)$ is in fact a complete description of $\lmc(T)$ and, moreover, it is \emph{totally dual integral}.
For an extensive reference on the subject of total dual integrality, we refer the reader to \citet{conforti2014integer}.

\begin{theorem}\label{thm:path-tdi}
    The system \eqref{eq:path-box} -- \eqref{eq:path-intersection} is totally dual integral.
\end{theorem}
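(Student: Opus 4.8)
The plan is to establish total dual integrality (TDI) of the system \eqref{eq:path-box}--\eqref{eq:path-intersection} by exhibiting, for every integral objective vector $c \in \Z^{\binom{V}{2}}$ for which the LP $\max\{c^\top x : x \text{ satisfies } \eqref{eq:path-box}\text{--}\eqref{eq:path-intersection}\}$ is finite, an \emph{integral} optimal dual solution. Since \Cref{lem:path-polytope} together with the facet results of \Cref{sec:lmp-trees-facets} shows that this system is a complete (and irredundant) description of $\lmc(T)$, and since $\lmc(T)$ is an integral polytope (its vertices are $01$-vectors), the primal LP attains its optimum at an integral point. TDI is the stronger dual statement, so integrality of the polytope alone does not suffice; I must actually produce integral dual certificates.

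The key structural idea I would exploit is that the system has a near-triangular, ``staircase'' form when the variables $x_{jk}$ are indexed by intervals $[j,k]$ of the linearly ordered node set $\{0,\dots,n\}$. First I would reformulate the LP in the complementary variables $z_{jk} = 1 - x_{jk}$ (as in \Cref{prop:tree-lmp}), under which the intersection inequalities \eqref{eq:path-intersection} become supermodularity-type relations $z_{j,k} z^{-1}\dots$ — more precisely, the constraints turn into the statement that $z$, viewed as a function on intervals, is determined by products along the path, linking back to the path partition problem \eqref{eq:tpp-pbo}. The plan is to show that the constraint matrix, after this reindexing, is an \emph{interval / network-type} matrix, or at least that the dual system decomposes along a laminar family of intervals. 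Concretely, I would attempt one of two standard routes: (a) show the constraint matrix is a network matrix (or balanced / totally unimodular in the relevant submatrix sense), which immediately yields TDI for all integral right-hand sides and objectives; or (b) give a direct combinatorial dual construction — a greedy/primal-dual algorithm that, given $c$, builds a laminar collection of tight intervals and assigns integer multipliers to the inequalities \eqref{eq:path-cut-right}--\eqref{eq:path-intersection} telescoping correctly.

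The concrete steps I would carry out are: (1) fix $c \in \Z^{\binom{V}{2}}$ and write the dual LP explicitly, identifying the dual variables $y$ for each inequality class and the dual constraint $\sum y_i a_i = c$ together with $y \ge 0$; (2) invoke the known solution of the path partition problem by \citet{Kernighan1971}, which is solvable in strongly polynomial time and whose optimality certificate I expect to reinterpret as a feasible integral dual solution; (3) verify that Kernighan's / the dynamic-programming optimality conditions on the path correspond exactly to complementary slackness with an integral $y$, so that primal and dual objectives coincide at integer values. The reindexing by intervals should make the dual constraints telescope: an integral primal optimum partitions $\{0,\dots,n\}$ into consecutive blocks, and the ``cut'' structure of a block partition on a path is encoded by consecutive-ones patterns, which is the hallmark of TDI-via-network-matrices.

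The main obstacle I anticipate is handling the intersection inequalities \eqref{eq:path-intersection}, which have \emph{two} positive and \emph{two} negative coefficients and therefore do not fit the simplest interval/consecutive-ones template that makes TDI automatic. These are the genuinely new inequalities (cf.\ \Cref{thm:tree-intersection-facet}), and they are what make the argument nontrivial. The hard part will be showing that, despite these mixed-sign rows, the full constraint matrix is still a network matrix (equivalently, that the inequalities \eqref{eq:path-cut-right}--\eqref{eq:path-intersection} can be oriented as the incidence structure of a directed graph on the intervals), or else that the greedy dual always selects multipliers keeping the residual cost representable by the remaining rows with integer coefficients. I would resolve this by identifying \eqref{eq:path-intersection} with differences of two overlapping intervals $[j,k]-[j+1,k]-[j,k-1]+[j+1,k-1]$, i.e.\ a discrete second difference, and arguing that the transformation sending $x_{jk}$ to this second-difference basis is unimodular; in that basis the whole system should become a simple box/monotonicity system whose TDI is standard. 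Establishing this unimodular change of coordinates, and checking it maps \emph{integral} objectives to integral ones both ways, is the crux of the proof.
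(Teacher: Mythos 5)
Your central idea is sound and can be completed: pass to the second-difference variables $\lambda_{ij} = x_{i-1,j} + x_{i,j+1} - x_{ij} - x_{i-1,j+1}$ (with the boundary convention $x_{-1,k} = x_{k,n+1} = 1$), note that this integral affine substitution is unimodular, and show that in the $\lambda$-basis the system becomes an interval system. Indeed, under this substitution the inequalities \eqref{eq:path-box}--\eqref{eq:path-intersection} map row-by-row onto $\lambda_{ij} \geq 0$ together with $\sum_{i \leq k \leq j,\, i \neq j} \lambda_{ij} \leq 1$ for $k \in \{1,\dots,n-1\}$; this matrix has the consecutive-ones property with respect to its columns, hence is totally unimodular, and a TU system is TDI for every integral right-hand side. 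Because the substitution sends each inequality of the original system to exactly one inequality of the new one, the two dual programs are literally identical (the unimodular transpose maps integral objectives to integral objectives bijectively, and the dual feasible sets coincide), so TDI transfers back --- this invariance step must be stated, but it is straightforward. Your route is genuinely different in packaging from the paper's proof, which works directly on the dual: there, the dual \emph{equality} constraints force the multipliers of \eqref{eq:path-box}--\eqref{eq:path-cut-left} and \eqref{eq:path-intersection} to be integral affine functions of the multipliers $d$ of the triangle inequalities \eqref{eq:path-triangle} and of $\theta$; eliminating them reduces the dual to the program \eqref{eq:path-dual-simplified} in $d$ alone, whose matrix has consecutive ones in its \emph{rows}, hence is TU, hence admits an integral optimum. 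The two routes are dual to one another in a precise sense: the paper observes in the proof of \Cref{prop:sequential-set-partition-problem} that your transformed primal is exactly the dual of \eqref{eq:path-dual-simplified}. Your approach buys a one-step TU argument without writing out the dual; the paper's buys an explicit closed-form dual solution, from which \Cref{prop:sequential-set-partition-problem} falls out as a by-product.

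Two cautions. First, your fallback route (a) --- showing the original constraint matrix is a network/TU matrix --- is a dead end: the paper remarks immediately after \Cref{thm:path-tdi} that the constraint matrix of \eqref{eq:path-box}--\eqref{eq:path-intersection} is \emph{not} totally unimodular already for the path of length~$4$, so the TU structure exists only after the change of basis (or on the reduced dual); your instinct that the mixed-sign intersection inequalities \eqref{eq:path-intersection} are the obstruction is correct. Second, your opening appeal to the system being a complete description of $\lmc(T)$ is circular at this point in the paper: completeness (\Cref{cor:complete-description}) is \emph{deduced from} the TDI theorem, not available before it. Your plan does not actually use completeness, so this is harmless, but it should be deleted rather than relied upon; the Kernighan-certificate detour can likewise be dropped, since once the unimodular substitution is in place it plays no role.
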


\proofref{thm:path-tdi}

\begin{remark}
    The constraint matrix corresponding to the system \eqref{eq:path-box}--\eqref{eq:path-intersection} is in general \emph{not} totally unimodular. 
    A minimal example is the path of length $4$.
\end{remark}

\begin{corollary}\label{cor:complete-description}
    For a path $T$ of length $n$, we have $\lmc(T) = \ppp(n)$.
\end{corollary}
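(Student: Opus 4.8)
The plan is to deduce the equality directly from the total dual integrality established in \Cref{thm:path-tdi}, combined with the sandwiching inclusions of \Cref{lem:path-polytope} and \Cref{prop:tree-polytope}. The classical fact I would invoke is the theorem of Edmonds and Giles: a rational system $Ax \leq b$ that is totally dual integral and has an integral right-hand side $b$ describes an integral polyhedron (see \citet{conforti2014integer}). Since every right-hand side of \eqref{eq:path-box}--\eqref{eq:path-intersection} is either $0$ or $1$, and the system is TDI by \Cref{thm:path-tdi}, this at once yields that $\ppp(n)$ is an integral polytope, i.e.\ all of its vertices are integral points.

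The next step would be to identify those vertices as lifted multicuts. By \Cref{lem:path-polytope} together with \Cref{prop:tree-polytope}, one has $\ppp(n) \subseteq \tppt(T) \subseteq \tpp(T) \subseteq [0,1]^{\binom{V}{2}}$, so every integral point of $\ppp(n)$ is in fact a $\{0,1\}$-vector that satisfies the path and cut inequalities \eqref{eq:path-ineq-tree} and \eqref{eq:cut-ineq-tree}. By \Cref{cor:standard-relaxation-lmc-trees}, any such $\{0,1\}$-vector is the characteristic vector of a multicut of the complete graph lifted from $T$, and hence a vertex of $\lmc(T)$.

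Combining these observations, every vertex of $\ppp(n)$ lies in $\lmc(T)$, whence $\ppp(n) = \conv\{\text{vertices of } \ppp(n)\} \subseteq \lmc(T)$. The reverse inclusion $\lmc(T) \subseteq \ppp(n)$ is precisely the left-hand inclusion of \Cref{lem:path-polytope}, so the two polytopes coincide.

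I expect essentially all the difficulty to have been front-loaded into the proof of \Cref{thm:path-tdi}; the corollary itself is a short deduction once integrality of $\ppp(n)$ is in hand. The only points demanding genuine care are, first, that the integral points produced by the Edmonds--Giles argument are truly $\{0,1\}$-vectors rather than merely integral, which is exactly why the containment in $\tpp(T) \subseteq [0,1]^{\binom{V}{2}}$ is invoked, and second, that these $\{0,1\}$-vectors are indeed lifted multicuts, which is guaranteed by the integral characterization in \Cref{cor:standard-relaxation-lmc-trees}.
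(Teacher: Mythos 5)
Your proof is correct and takes essentially the same route as the paper: the paper's one-line proof likewise deduces integrality of $\ppp(n)$ from \Cref{thm:path-tdi} via the Edmonds--Giles theorem (the paper's phrase ``totally unimodular system'' there is a slip for ``totally dual integral system'', since its own remark notes the constraint matrix is \emph{not} totally unimodular) and then concludes with \Cref{lem:path-polytope}. The extra steps you supply---invoking \Cref{prop:tree-polytope} to see that the integral points are $\{0,1\}$-vectors and \Cref{cor:standard-relaxation-lmc-trees} to see that they are lifted multicuts---merely make explicit what the paper leaves implicit.
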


\begin{proof}
    As the polytope corresponding to any totally unimodular system is integral \cite{Edmonds1977}, the claim follows from \Cref{thm:path-tdi} and \Cref{lem:path-polytope}.
\end{proof}

The path partition problem admits a smaller representation as a set partition problem where the variables do not correspond to edges but to connected subpaths. 

\begin{proposition}\label{prop:sequential-set-partition-problem}
    The path partition problem $\min \; \{\theta^\top x \mid x \in \ppp(n)\}$ is equivalent to the \emph{sequential set partition} problem
    \begin{align}
        \min 
            && \Theta_{0n} - \Theta^\top \lambda \tag{SSP} \label{eq:ssp} \\
        \textup{subject to} 
            && \sum_{0 \leq i \leq k \leq j \leq n} \lambda_{ij} &= 1 && \forall k \in \{1,\dots,n-1\} \nonumber \\
            && \lambda_{ij} & \geq 0 && \forall i,j \in \{0,\dots,n\}, i \leq j \nonumber 
    \end{align}
    with $\Theta_{ij} = \sum_{i \leq k < \ell \leq j} \theta_{k\ell}$ for all $i,j \in \{0,\dots,n\}$ with $i \leq j$.
\end{proposition}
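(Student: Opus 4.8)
The plan is to exhibit an explicit objective-preserving linear map from the feasible region of \eqref{eq:ssp} into $\ppp(n)$, and then to read off equality of the two optimal values from the vertices of $\ppp(n)$.

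First I would introduce the map. Writing $Q$ for the feasible region of \eqref{eq:ssp}, I define for $\lambda \in Q$ the point $x = \psi(\lambda) \in \R^{\binom{V}{2}}$ by
\[
    x_{k\ell} = 1 - \sum_{\substack{0 \le i \le k \\ \ell \le j \le n}} \lambda_{ij}
    \qquad \text{for all } k < \ell ,
\]
so that $1 - x_{k\ell}$ records the total $\lambda$-mass on intervals $[i,j]$ containing both $k$ and $\ell$. The guiding intuition is that $\lambda$ plays the role of a distribution over interval partitions: if $\lambda$ is the $0/1$ indicator of the blocks of a partition of the path into intervals, then $\psi(\lambda)$ is exactly the characteristic vector $\1_M$ of the induced lifted multicut, because $k$ and $\ell$ lie in a common block if and only if some block $[i,j]$ satisfies $i \le k < \ell \le j$.

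The core computation is that $\psi$ maps $Q$ into $\ppp(n)$, which I would verify by telescoping the defining sums against the five inequality families \eqref{eq:path-box}--\eqref{eq:path-intersection}. Abbreviating $S(p,q) = \sum_{i \le p,\, j \ge q} \lambda_{ij}$, the box inequality and the two cut inequalities reduce immediately to $-\lambda_{0n} \le 0$, $-\lambda_{in} \le 0$ and $-\lambda_{0i} \le 0$; the intersection inequality \eqref{eq:path-intersection} collapses to $-\lambda_{j+1,k-1} \le 0$ upon expanding $S(j+1,k) + S(j,k-1) - S(j,k) - S(j+1,k-1)$. The only family that genuinely invokes the equality constraints of \eqref{eq:ssp} is the triangle inequality \eqref{eq:path-triangle}: its left-minus-right side equals $-1$ plus the total $\lambda$-mass on intervals covering node $i$ but distinct from the singleton $\{i\}$, which by the constraint $\sum_{i' \le i \le j} \lambda_{i'j} = 1$ (valid since $i \in \{1,\dots,n-1\}$) equals $-\lambda_{ii} \le 0$. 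In parallel, swapping the order of summation shows that $\psi$ preserves the objective,
\[
    \theta^\top \psi(\lambda)
    = \Theta_{0n} - \sum_{i \le j} \Big( \sum_{i \le k < \ell \le j} \theta_{k\ell} \Big) \lambda_{ij}
    = \Theta_{0n} - \Theta^\top \lambda .
\]

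Finally I would combine these facts. Since $\psi$ sends every $\lambda \in Q$ into $\ppp(n) = \lmc(T)$ (\Cref{cor:complete-description}) with equal objective value, every value $\Theta_{0n} - \Theta^\top \lambda$ is bounded below by $\min_{x \in \ppp(n)} \theta^\top x$. Conversely, an optimal vertex of $\ppp(n)$ is the characteristic vector $\1_M$ of a lifted multicut induced by an interval partition $\mathcal P$, whose block indicator $\lambda^{\mathcal P}$ lies in $Q$ and satisfies $\Theta_{0n} - \Theta^\top \lambda^{\mathcal P} = \theta^\top \1_M$. The two bounds coincide, yielding equality of the optimal values together with an explicit correspondence of optimizers. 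I expect the main obstacle to be purely the bookkeeping in the telescoping verification of \eqref{eq:path-triangle} and \eqref{eq:path-intersection}, in particular tracking which $(i,j)$-regions survive in the differences of the $S(p,q)$ terms; I note also that $Q$ is unbounded, since the endpoint-singleton variables $\lambda_{00}$ and $\lambda_{nn}$ are unconstrained, but this causes no difficulty because those recession directions carry zero objective weight, so the inclusion $\psi(Q) \subseteq \ppp(n)$ already certifies that the minimum in \eqref{eq:ssp} is finite and attained.
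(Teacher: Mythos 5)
Your proposal is correct: the substitution $x_{k\ell} = 1 - \sum_{i \leq k,\, \ell \leq j} \lambda_{ij}$ is exactly the formula at the heart of the paper's proof, and your inequality-by-inequality verification (box, cut and intersection inequalities reducing to nonnegativity of single $\lambda$-entries, the triangle inequality \eqref{eq:path-triangle} reducing to $-\lambda_{ii} \leq 0$ via the covering equality at node $i$) matches the computations the paper obtains by substitution, as does the objective identity $\theta^\top \psi(\lambda) = \Theta_{0n} - \Theta^\top \lambda$. Where you genuinely diverge is the converse direction. The paper does not argue via optima at all: it defines $\lambda_{ij} := x_{i-1,j} + x_{i,j+1} - x_{ij} - x_{i-1,j+1}$ (with boundary conventions), inverts this second-order difference to recover the formula above, and substitutes into the whole system, so that the feasible region of \eqref{eq:path-box}--\eqref{eq:path-intersection} is carried affinely onto the inequality form of \eqref{eq:ssp} and the variables $\lambda_{ii}$ enter as slacks; this yields an equivalence of the two linear programs as such, valid also for fractional points, is self-contained, and exposes the reformulation as the LP dual of the simplified dual system from the proof of \Cref{thm:path-tdi}. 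You instead close the loop through integrality: an optimal vertex of $\ppp(n)$ is, by \Cref{cor:complete-description}, the characteristic vector of a multicut lifted from the path, whose interval-partition indicator is feasible for \eqref{eq:ssp} with equal value. This is legitimate (the corollary precedes the proposition and does not depend on it) and spares you the bookkeeping of inverting the substitution, but it makes the proof logically heavier, since \Cref{cor:complete-description} rests on the total dual integrality theorem, and it establishes equivalence only at optimality rather than as a polyhedral reformulation. Your handling of the unbounded directions $\lambda_{00}, \lambda_{nn}$ is also sound: they carry zero objective weight (indeed $\Theta_{ii} = 0$ for all $i$) and are annihilated by $\psi$, so the value equality is unaffected.
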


\proofref{prop:sequential-set-partition-problem}

Each variable $\lambda_{ij}$ in \eqref{eq:ssp} corresponds to the subpath containing nodes $i$ to $j$. 
Problem \eqref{eq:ssp} is precisely the formulation of the path partition problem used by \citet{Joseph1997}. 
It admits a quadratic number of variables and a linear number of constraints (in contrast to a quadratic number of constraints in the description of $\ppp(n)$).
The constraint matrix satisfies the \emph{consecutive-ones} property with respect to its columns.
Therefore, the integrality constraint need not be enforced, since the constraint matrix is totally unimodular.

Before closing this section, we observe that Theorem~\ref{thm:path-tdi} leads to a second way of solving the path partition problem by means of linear programing, alongside the formulation \eqref{eq:ssp} from \citet{Joseph1997}.
This complements the fully combinatorial result of \citet{Kernighan1971}.
Moreover, the system \eqref{eq:path-box}--\eqref{eq:path-intersection} satisfies the assumptions of \citet{Tardos1986} which imply that, for any linear objective function, the corresponding linear program over $\ppp(n)$ can be solved in strongly polynomial time.

\subsection{Lifting to an arbitrary graph} \label{sec:arb-graph}

So far, we have lifted multicuts of a tree $T = (V,E)$ to the complete graph with the node set $V$.
In this section, we study the lifted multicut problem for multicuts lifted from a tree $T$ to an arbitrary augmented graph $\widehat{G}$.
This means that we are not interested in understanding whether $u$ and $v$ are in distinct components (of the decomposition defined by multicut) for \emph{every} pair $\{u, v\}$ of nodes, but we are interested in understanding this property \emph{only for a subset} of pairs of nodes.

More formally, let $T = (V, E)$ be a tree and let $F \subseteq \binom V2 \setminus E$ contain all the pairs of non-neighboring nodes for which we wish to make explicit whether they are in the same component.
The characteristic vector $x \in \{0,1\}^E$ of a multicut is lifted to the set $\{0,1\}^{E \cup F}$ with the identical meaning as in \Cref{sec:tree-partition-problem}.
Hence, the lifted multicut polytope with respect to $T$ and $\widehat{G}$ becomes the convex hull of the vectors $x \in \{0,1\}^{E \cup F}$ that satisfy the following path and cut inequalities:
\begin{align}
    x_{uv} 
        & \leq \sum_{e \in E_{uv}} x_e 
        && \forall uv \in F \label{eq:path-lifted} \\
    x_e 
        & \leq x_{uv} 
        && \forall uv \in F \quad \forall e \in E_{uv} \label{eq:cut-lifted} 
\end{align}

As before, we denote the lifted multicut polytope with respect to $T$ and $\widehat{G} = (V, E \cup F)$ by $\lmc(T,\widehat{G})$.
The lifted multicut problem, tree partition problem, and path partition problem are then defined accordingly.

Two remarks are in order.
On the one hand, the tree partition problem when lifting to an arbitrary graph is \textsc{np}-hard, since lifting to the complete graph is a special case.
On the other hand, the path partition problem is still solvable in polynomial time, for instance, by lifting to the complete graph and setting the coefficients to zero for all edges in $\tbinom{V}{2} \setminus F$.
Nevertheless, we study the case of lifting to a general graph from a polyhedral perspective.

\subsubsection{Generalized intersection inequalities}

In this section, we introduce a class of valid inequalities for the lifted multicut polytope when lifting to an arbitrary augmented graph.
We call these the \emph{generalized intersection inequalities}.
In fact, we will see that the intersection inequalities \eqref{eq:intersection-inequality} are a special case, the one in which all pairs of nodes in \eqref{eq:intersection-inequality} are in $E \cup F$.
We remark that also the simpler inequalities \eqref{eq:tree-path} and \eqref{eq:tree-cut} can be written as generalized intersection inequalities, granted that all pairs of nodes that occur in \eqref{eq:tree-path} and \eqref{eq:tree-cut}, respectively, are in $E \cup F$.

To begin with, let $uv \in E \cup F$, let $K$ be a finite, ordered index set, i.e.~$K=\{1,...,|K|\}$, and for $k \in K$ let $\{u_k, v_k\} \in E \cup F$ such that $E_{uv} \cap E_{u_kv_k} \neq \emptyset$, i.e.~the paths $P_{uv}$ and $P_{u_k v_k}$ must share at least one edge. 
Starting from these paths, we define the following sets: 
\begin{align}\label{eq:def-N}
    & N_1 = \emptyset \nonumber \\
    & N_k = E_{uv} \cap E_{u_k v_k} \cap 
        \left( \bigcup_{0 < i < k} E_{u_i v_i}\right) 
        \quad \forall k \in \{2,\dots,|K|\} 
\end{align}
Furthermore, for all $k \in K$ for which $N_k \neq \emptyset$, let $f_k \in E \cup F$ such that $E_{f_k} \subseteq N_k$ (where $E_{f_k}$ is the set of edges on the unique path in $T$ between the endpoints of $f_k$).
Then, we define the corresponding \emph{generalized intersection inequality} as
\begin{equation}\label{eq:gen-ineq}
    x_{uv} + \sum_{\substack{k \in K : \\N_k \neq \emptyset}} x_{f_k} 
    \leq \sum_{k \in K} x_{u_k v_k} + 
    \sum_{e \in E_{uv} \setminus \bigcup_{k \in K} E_{u_k v_k}} x_e \enspace . 
\end{equation}

\begin{proposition}\label{prop:general-intersection-valid}
    The generalized intersection inequalities are valid for $\lmc(T,\widehat{G})$.
\end{proposition}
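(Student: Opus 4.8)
The plan is to verify the inequality \eqref{eq:gen-ineq} at every vertex of $\lmc(T,\widehat{G})$, i.e.\ at the characteristic vector $\1_M$ of each multicut $M$ of $\widehat G$ lifted from $T$; validity on the whole polytope then follows by taking convex combinations, since by definition $\lmc(T,\widehat{G})$ is the convex hull of these vectors. Every such vertex is determined by the set $S = M \cap E$ of cut tree-edges, and since $T$ is a tree we have, for each pair $st \in E \cup F$, that $x_{st} = 1$ iff $E_{st} \cap S \neq \emptyset$. I would abbreviate $I_k := E_{uv} \cap E_{u_k v_k}$ and record two structural facts about $P_{uv}$: writing $U := E_{uv} \setminus \bigcup_{k} I_k$ for the uncovered edges, the set $E_{uv}$ is the disjoint union of $U$ with the ``new parts'' $I_k \setminus N_k$, because each edge of $\bigcup_k I_k$ lies in $I_{k^\ast} \setminus N_{k^\ast}$ for the smallest index $k^\ast$ containing it; and since $N_k = I_k \cap \bigcup_{i<k} I_i$, every edge of $N_k$ lies in some new part $I_i \setminus N_i$ with $i < k$. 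Throughout I set $c_k := x_{u_k v_k} \in \{0,1\}$.

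First I would dispose of the case $S \cap E_{uv} = \emptyset$: here $x_{uv} = 0$, and since $E_{f_k} \subseteq N_k \subseteq E_{uv}$ each $x_{f_k} = 0$ as well, so the left-hand side vanishes and the inequality holds trivially. In the remaining case $S \cap E_{uv} \neq \emptyset$ we have $x_{uv} = 1$, and I would recast the target as the counting statement
\[
    1 + |A| \;\leq\; \sum_{k \in K} c_k + |S \cap U|,
    \qquad A := \{\, k : N_k \neq \emptyset,\ x_{f_k} = 1 \,\},
\]
using that $\sum_{k : N_k \neq \emptyset} x_{f_k} = |A|$. The key observation is that $k \in A$ forces $c_k = 1$: from $\emptyset \neq E_{f_k} \cap S$ and $E_{f_k} \subseteq N_k \subseteq E_{u_k v_k}$ the path $P_{u_k v_k}$ is cut. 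Assigning to each $k \in A$ its own summand $c_k = 1$ thus accounts injectively for $|A|$ units on the right, and it remains only to supply \emph{one further} unit for the leading $1$; concretely, it suffices to exhibit either a cut edge in $U$ or an index $k' \notin A$ with $c_{k'} = 1$.

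The heart of the argument, and the step I expect to be the main obstacle, is establishing this ``one further unit''. A naive induction deleting the last index does not telescope, because it would require the auxiliary bound $x_{f_{|K|}} + \sum_{e \in I_{|K|} \setminus N_{|K|}} x_e \leq x_{u_{|K|} v_{|K|}}$, which fails as soon as two edges of $I_{|K|} \setminus N_{|K|}$ are cut. Instead I would argue by infinite descent. Suppose, for contradiction, that $S \cap U = \emptyset$ and that every index with $c_k = 1$ lies in $A$. Pick any cut edge of $E_{uv}$; it is not in $U$ by hypothesis, so by the first structural fact it lies in some new part $I_k \setminus N_k$, whence $c_k = 1$ and, by assumption, $k \in A$; then there is a cut edge inside $N_k$, which by the second structural fact lies in $I_{k'} \setminus N_{k'}$ for some $k' < k$, giving $c_{k'} = 1$ and again $k' \in A$. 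Iterating produces a strictly decreasing sequence of indices all lying in $A$; but $N_1 = \emptyset$ forbids $1 \in A$, so the descent can never terminate, which is impossible for positive integers. This contradiction yields the required extra unit, completing the count; the two cases recombine and the convex-hull argument extends validity from the vertices to all of $\lmc(T,\widehat{G})$.
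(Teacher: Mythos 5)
Your proof is correct, and it reorganizes the argument in a way that differs meaningfully from the paper's. Both proofs rest on the same cascade: $x_{f_k}=1$ together with $E_{f_k} \subseteq N_k \subseteq E_{u_k v_k}$ forces $x_{u_k v_k}=1$ \emph{and} produces a cut edge inside $N_k$, which by the definition \eqref{eq:def-N} lies on an earlier path $P_{u_i v_i}$ with $i<k$, and both ultimately terminate the cascade at $N_1=\emptyset$. The paper, following the validity proof for the running intersection inequalities of \citet{dPKha21}, first partitions $K$ into blocks $K_1,\dots,K_p$ according to the components of $E_{uv} \cap \bigcup_{k} E_{u_k v_k}$, proves the sub-inequality $\sum_{k \in K_i:\,N_k\neq\emptyset} x_{f_k} \leq \sum_{k \in K_i} x_{u_k v_k}$ per block (with slack $-1$ whenever some $x_{f_k}=1$) by an informal recursion with a cancellation remark, and then patches in the uncovered-edge bounds and the $x_{uv}$ term at the end — a final step the paper treats rather loosely. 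You instead do a single global count at each vertex: you charge each active index $k\in A$ one unit via $c_k=1$, and you secure the one extra unit needed for $x_{uv}=1$ by a well-founded descent through your two structural facts (the disjoint decomposition of $E_{uv}$ into $U$ and the new parts $I_k \setminus N_k$, and the covering of $N_k$ by earlier new parts), with $N_1=\emptyset$ making the infinite descent absurd. Your version is more rigorous precisely where the paper is informal — your explicit observation that a naive one-step induction fails (the auxiliary bound $x_{f_{|K|}} + \sum_{e \in I_{|K|}\setminus N_{|K|}} x_e \leq x_{u_{|K|}v_{|K|}}$ is indeed false when two new-part edges are cut) pinpoints why the cascade is genuinely needed. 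What the paper's organization buys in exchange is the stand-alone validity of each per-block inequality \eqref{eq:gen-ineq} restricted to $K_i$, and a transparent structural parallel to the multilinear-polytope literature, which the paper exploits in its later discussion; your argument, by contrast, is self-contained and avoids the block partition entirely.
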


\proofref{prop:general-intersection-valid}

Next, we present two conditions every facet-defining generalized intersection inequality satisfies.

\begin{proposition}\label{prop:nec-cond-gen-ineq}
    A generalized intersection inequality defines a facet of $\lmc(T,\widehat{G})$ only if the following necessary conditions hold:
    \begin{enumerate}[(i)]
        \item \label{cond:generalized-intersection-1}
        For all $k \in K$ such that $N_k \neq \emptyset$, $E_{f_k}$ is maximal with respect to set inclusion.
        \item \label{cond:generalized-intersection-2}
        For all $k,k' \in K$, $k \neq k'$ such that $E_{f_k}, E_{f_{k'}} \subseteq N_k \cap N_{k'}$, we have $f_k = f_{k'}$. 
    \end{enumerate}
\end{proposition}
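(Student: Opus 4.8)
The plan is to prove each of the two necessary conditions by contraposition: assuming the stated condition fails, I exhibit the generalized intersection inequality \eqref{eq:gen-ineq} as a positive combination of two \emph{valid} inequalities, at least one of which is not a scalar multiple of it. The underlying tool is the standard domination principle for full-dimensional polytopes (recall that $\lmc(T,\widehat G)$ is full-dimensional by \Cref{thm:dimension}): if a valid inequality $a^\top x \le b$ can be written as $\lambda_1(a_1^\top x \le b_1) + \lambda_2(a_2^\top x \le b_2)$ with $\lambda_1,\lambda_2 > 0$, both summands valid, and $a_1$ not a nonnegative multiple of $a$, then every integral point satisfying $a^\top x = b$ also satisfies $a_1^\top x = b_1$; hence the face defined by $a^\top x \le b$ is contained in a distinct proper face and cannot be a facet. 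Throughout I write $I$ for the inequality \eqref{eq:gen-ineq} and $a$ for its coefficient vector, and I use repeatedly that the sets $N_j$ in \eqref{eq:def-N} depend only on $uv$ and the edges $u_jv_j$, not on the choice of the correction edges $f_j$, so these may be re-selected freely.

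For condition~(i), suppose some $E_{f_k}$ with $N_k \neq \emptyset$ is not maximal, so there is $f' \in E \cup F$ with $E_{f_k} \subsetneq E_{f'} \subseteq N_k$. Let $I'$ be obtained from $I$ by replacing the correction edge $f_k$ by $f'$; since $E_{f'} \subseteq N_k$, the edge $f'$ is an admissible choice and $I'$ is valid by \Cref{prop:general-intersection-valid}. Because $E_{f_k} \subseteq E_{f'}$, every lifted multicut satisfies $x_{f_k} \le x_{f'}$, which is therefore a valid inequality $J$ (a consequence of chaining \eqref{eq:cut-lifted}). A direct comparison of coefficients gives $I = I' + J$, and the coefficient vector of $I'$ differs from $a$ precisely in the entries indexed by $f_k$ and $f'$, hence is not a scalar multiple of $a$. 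By the domination principle, $I$ is not facet-defining, which proves the necessity of condition~(i).

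For condition~(ii), suppose there are $k \neq k'$ with $E_{f_k}, E_{f_{k'}} \subseteq N_k \cap N_{k'}$ but $f_k \neq f_{k'}$. Let $I_a$ be obtained from $I$ by using $f_k$ also in position $k'$ (i.e.\ replacing $f_{k'}$ by $f_k$); this is admissible because $E_{f_k} \subseteq N_k \cap N_{k'} \subseteq N_{k'}$, so $I_a$ is valid by \Cref{prop:general-intersection-valid}. Symmetrically, let $I_b$ use $f_{k'}$ in position $k$, which is admissible because $E_{f_{k'}} \subseteq N_k$. Since the right-hand sides and all remaining terms of $I$, $I_a$ and $I_b$ coincide and only the correction terms at positions $k,k'$ differ, one checks that $a = \tfrac12 a_a + \tfrac12 a_b$, that is, $I = \tfrac12 I_a + \tfrac12 I_b$. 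As $f_k \neq f_{k'}$, the vector $a_a$, which carries coefficient $2$ on $f_k$ and $0$ on $f_{k'}$ while $a$ carries $1$ on each, is not a scalar multiple of $a$, so the domination principle again shows that $I$ is not facet-defining.

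The routine parts are the two coefficient comparisons establishing $I = I' + J$ and $I = \tfrac12 I_a + \tfrac12 I_b$, together with the monotonicity $x_{f_k} \le x_{f'}$. The only genuinely substantive point is the validity of the re-chosen inequalities, which reduces to the containments $E_{f'} \subseteq N_k$, $E_{f_k} \subseteq N_{k'}$ and $E_{f_{k'}} \subseteq N_k$ followed by an appeal to \Cref{prop:general-intersection-valid}; the first is immediate from the choice of $f'$ and the latter two follow directly from the hypothesis $E_{f_k}, E_{f_{k'}} \subseteq N_k \cap N_{k'}$. I expect the main care to be needed in confirming that these modified edge selections really do yield admissible instances of \eqref{eq:gen-ineq}, in particular that re-using an edge in a different position does not violate the defining constraint $E_{f_j} \subseteq N_j$ for any other index $j$, rather than in any computation.
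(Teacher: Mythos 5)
Your proposal is correct and follows essentially the same route as the paper's proof: for condition~(i) you replace $f_k$ by a maximal admissible $f'$ and add the valid inequality $x_{f_k} \leq x_{f'}$ (which the paper observes is itself a generalized intersection inequality with $uv = f_k$, $|K|=1$, $u_1v_1 = f'$), and for condition~(ii) you write the inequality as the average of the two generalized intersection inequalities using $f_k$, respectively $f_{k'}$, at both positions $k$ and $k'$. Your explicit appeal to the domination principle for full-dimensional polytopes, including the check that the summands are not scalar multiples of the original coefficient vector, only makes precise the redundancy argument the paper states more briefly.
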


\proofref{prop:nec-cond-gen-ineq}

In the beginning of this section, we have claimed that the generalized intersection inequalities contain, as a special case, the intersection inequalities for trees.
Let us recall now the form of the intersection inequalities~\eqref{eq:intersection-inequality}.
For $u,v \in V$ such that $d(u,v) \geq 3$, the intersection inequality is $x_{uv} + x_{\vec{u}(v),\vec{v}(u)} \leq x_{u,\vec{v}(u)} + x_{\vec{u}(v),v}$.
We can assume that $\{u,v\}$, $\{\vec{u}(v),\vec{v}(u)\}$, $\{u,\vec{v}(u)\}$, and $\{\vec{u}(v),v\}$ all belong to $E \cup F$.
Otherwise, the intersection inequality would not be well-defined.
Let $|K| = 2$ and let $\{u_1,v_1\} = \{u,\vec{v}(u)\}$ and $\{u_2,v_2\} = \{\vec{u}(v),v\}$.
Note that $E_{uv} \setminus (E_{u,\vec{v}(u)} \cup E_{\vec{u}(v),v}) = \emptyset$.
Thus, the last sum in \eqref{eq:gen-ineq} is vacuous.
By \eqref{eq:def-N}, we have $N_1 = \emptyset$ and $N_2 = E_{uv} \cap E_{\vec{u}(v),v} \cap E_{u,\vec{v}(u)} = E_{\vec{u}(v),\vec{v}(u)}$.
Hence, $f_2$ can be chosen as the edge $\{\vec{u}(v), \vec{v}(u)\}$.
In this case, \eqref{eq:gen-ineq} becomes precisely $x_{uv} + x_{\vec{u}(v),\vec{v}(u)} \leq x_{u,\vec{v}(u)} + x_{\vec{u}(v),v}$.

We conclude this section by discussing an example of a generalized intersection inequality that is not an intersection inequality.

\begin{figure}
    \centering
    \input{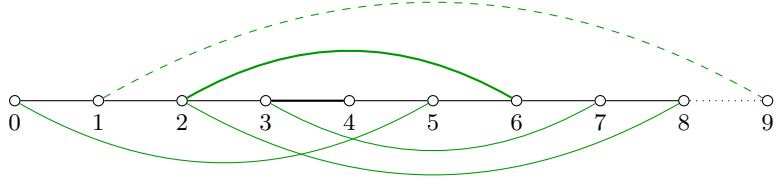}
    \caption{Illustrated above is the generalized intersection inquality of \Cref{example:gen-int-ineq}.
    Depicted in black is a path, augmented by additional edges $F$, depicted in green, to form the augmented graph $\widehat{G}$.
    The edge $P_{uv}$ defined in \Cref{example:gen-int-ineq} is depicted as a dashed green line. 
    The edges corresponding to some $f_k$ are depicted as thick lines (black and green). 
    The edges representing paths indexed in $K$ in \Cref{example:gen-int-ineq} are depicted as solid thin lines in green.
    Lastly, the only edge in $P_{uv}$ not contained in any other paths is the edge $\{8,9\}$ depicted as a dotted line in black.}
	\label{fig:example-gen-intersec}
\end{figure}

\begin{example}
\label{example:gen-int-ineq}
Let $T = (V, E)$, where $V = \{0, \dots, 9\}$, $E = \{ \{i, i+1\} \mid i = 0, \dots, 8 \}$, and $F = \{ \{0,5\}$, $\{1,9\}$, $\{2,8\}$, $\{2,6\}$, $\{3,7\} \}$.
The augmented graph $\widehat{G} = (V, E \cup F)$ is depicted in \Cref{fig:example-gen-intersec}.
Consider the generalized intersection inequality
\begin{equation}\label{eq:gen-example}
    x_{19} + x_{34} + x_{26} \leq x_{05} + x_{37} + x_{28} + x_{89}
    \enspace .
\end{equation}
Here, $uv = \{1,9\}$, $|K| = 3$ and $\{u_1,v_1\} = \{0,5\}$, $\{u_2,v_2\} = \{3,7\}$, and $\{u_3,v_3\} = \{2,8\}$.
By the definition of the sets $N_k$ in \eqref{eq:def-N}, it follows that $N_1 = \emptyset$, $N_2 = E_{19} \cap E_{37} \cap E_{05} = \{ \{3,4\}$, $\{4,5\} \}$, and $N_3 = E_{19} \cap E_{28} \cap (E_{05} \cup E_{37}) = \{ \{2,3\}$, $\{3,4\}$, $\{4,5\}$, $\{5,6\}$, $\{6,7\} \}$.
Then, we must pick $f_2$ to be either an edge in $N_2$ or an edge in $F$ such that the path between the endpoints of $f$ is completely contained in $N_2$. 
Here, for example, $f_2$ is chosen to be just one edge in $N_2$, namely the edge $\{3,4\}$. 
On the other hand, we choose $f_3$ to be an edge in $F$ such that the endpoints are completely contained in $N_3$.
We observe that $\{2,6\} \in F$, and that $E_{26} \subseteq N_3$.
Therefore, we set $f_3 = \{2, 6\}$.

In \Cref{fig:example-gen-intersec}, we can observe the relation between the ``center'' of the inequality, i.e.~$uv = \{1,9\}$, and the other terms.
We remark that the union of the paths indexed by $K$ does not need to be a subpath of $P_{uv}$. In fact, $\{0,1\} \in \bigcup_{k \in K} E_{u_k v_k} \setminus E_{uv}$. 
Nor must the union of these paths cover $P_{uv}$.
This last fact is highlighted in the example, as we see that $\{8,9\} \in E_{uv}$ but $\{8,9\}$ is not on any of the paths $P_{u_k v_k}$.
Hence, in \eqref{eq:gen-example}, we have
\begin{align*}
    & x_{uv} = x_{19} \enspace , 
    & \sum_{\substack{k \in K : \\ N_k \neq \emptyset}} x_{f_k} = x_{34} + x_{26} \enspace , \\
    & \sum_{k \in K} x_{u_k v_k} = x_{05} + x_{37} + x_{28} \enspace , 
    & \sum_{e \in E_{uv} \setminus \bigcup_{k \in K} E_{u_k v_k}} x_e = x_{89} \enspace .
\end{align*}

By means of Fourier-Motzkin elimination (see e.g.~\citet{Ziegler2000}, implemented e.g.~in \citet{porta}), we compute all facets of this lifted multicut polytope with respect to $T$ and $\widehat{G}$. 
It has a total of 56 facets.
Of these, 11 are given by the bounds on the variables, 
16 are cut inequalities \eqref{eq:cut-lifted}, 
3 are path inequalities \eqref{eq:path-lifted}, and 
26 facets are induced by generalized intersection inequalities. 
In this example, it is impossible to construct ``regular'' intersection inequalities \eqref{eq:intersection-inequality}.
One of the 26 facets is given by \eqref{eq:gen-example}.
In this example, the generalized intersection inequalities, together with the bounds on the variables, the path and cut inequalities, provide a perfect formulation of $\lmc(T,\widehat{G})$.
We close the example by observing that this does not happen when $T$ is a tree.
In fact, consider $T$ and $\widehat{G}$ as depicted in \Cref{fig:tree-not-beta}.
Then, it can be checked that the inequality $x_{14}+x_{25}+x_{46} \leq x_{12} + x_{13} + x_{34} + x_{36} + 1$ defines a facet of $\lmc(T,\widehat{G})$ and is not a generalized intersection inequality. 
\end{example}

\subsection{Connections to the multilinear polytope}\label{sec:multilinear}

In~\Cref{sec:tree-partition-problem}, we have seen that the lifted multicut problem on trees is equivalent to the tree partition problem, which in turn is a special case of binary multilinear optimization.
Here, we explore this connection in more detail.
To begin with, we recall elementary notions from the field of binary multilinear optimization:

\paragraph{Notation.}
Given a binary multilinear optimization problem, it is common practice to introduce an extra variable for each monomial of length at least 2, see, e.g., \citet{For60,GloWoo74}.
Once this is done, the object $\mathcal{S} = \{ (x,y) \mid y_I = \prod_{i \in I} x_i, I \in \mathcal{I}, x \in \{0,1\}^n \}$ is studied.
In order to exploit the structure of the problem better, a hypergraph representation can be used, where $H = (\bar V,\bar E)$ is a hypergraph if $\bar E$ consists of subsets of $\bar V$. 
We will say that $\bar V$ is the set of nodes of $H$, while $\bar E$ is the set of edges of $H$.
Note that if all the edges contain only two nodes, then the hypergraph is simply a regular graph. 
Each node of $H$ represents a variable of the multilinear form, and each edge of $H$ corresponds to a monomial.
By using the hypergraph representation, the multilinear polytope \mpg{H} is then defined by \citet{dPKha17} as the convex hull of $\{ z \in \{0,1\}^{\bar V \cup \bar E} \mid z_e = \prod_{v \in e} z_v, e \in \bar E \}$.

With this notation set in place, we observe from \Cref{sec:tree-partition-problem} that each edge of the tree $T$ in the lifted multicut problem becomes a node in the hypergraph $H$ representing the tree partition problem.
Furthermore, every augmented edge $f \in F$ in the lifted multicut problem becomes an edge in $H$ containing the edges on the unique path between the endpoint of $f$.
Hence, $\bar V = E$, where $E$ is the set of edges of the tree $T$, and $\bar E = \{ E_{uv} \mid uv \in F \}$, where $F$ has the same meaning as in \Cref{sec:arb-graph}.
\Cref{prop:tree-lmp} states that the lifted multicut problem on trees is equivalent to the tree partition problem when applying the affine transformation $x_e = 1 - z_e$ for $e \in E = \bar V$, and $x_{uv} = 1 - z_{E_{uv}}$ for $uv \in F$ and hence $E_{uv} \in \bar E$.

\subsubsection{Inequalities}\label{sec:multilinea-ineq}

We begin our analysis by establishing correspondences in the multilinear setting of the inequalities defined in \Cref{sec:tree-partition-problem,sec:lmp-trees}, starting with path and cut inequalities \eqref{eq:path-ineq-tree} and \eqref{eq:cut-ineq-tree}.
It is easy to see that these sets of inequalities lead to the inequalities in the standard linearization of the multilinear sets that are different from the bounds $0 \leq z \leq 1$, see, e.g., \citet{dPKha18SIOPT}.
In particular, inequalities \eqref{eq:path-ineq-tree} become $\sum_{e \in E_{uv}} z_e - z_{E_{uv}} \leq |E_{uv}| - 1$ for every $E_{uv} \in \bar E$, after applying the affine transformation described in the previous paragraph.
Similarly, inequalities \eqref{eq:cut-ineq-tree} correspond to the inequalities $z_{E_{uv}} \leq z_e$ defined for all $E_{uv} \in \bar E$, for all $e \in E_{uv}$.

Now, we move on to considering the facets identified in \Cref{sec:lmp-trees}, namely inequalities \eqref{eq:tree-path}, \eqref{eq:tree-cut}, \eqref{eq:intersection-inequality}, and their affine transformations in the multilinear setting. 
Let us start from inequality \eqref{eq:tree-path}. 
It is easy to see that the affine transformation results in $z_{E_{u,\vec{u}(v)}} + z_{E_{\vec{u}(v),v}} - z_{E_{uv}} \leq 1$. 
This inequality is a flower inequality with the edge $E_{uv}$ as a center and only one adjacent edge, which is $E_{\vec{u}(v),v}$.
In particular, this edge contains all the nodes in $E_{uv}$ except for one.
This node is precisely the one corresponding to $\{u,\vec{u}(v)\}$. 
We refer to \citet{dPKha18SIOPT} for a thorough introduction to flower inequalities.

Consider now inequality \eqref{eq:tree-cut}. 
After the transformation, this becomes $z_{E_{uv}} - z_{E_{\vec{u}(v),v}} \leq 0$, a flower inequality as well. 
Here, the center is $E_{\vec{u}(v),v}$ and the only adjacent edge is $E_{uv}$ that strictly contains $E_{\vec{u}(v),v}$.
Note that this is, in particular, also a 2-link inequality as defined by \citet{CraRod17}. 
We remind the reader that 2-link inequalities have been generalized by the flower inequalities.

Next, we look into inequality \eqref{eq:intersection-inequality}.
This becomes $-z_{E_{\vec{u}(v),\vec{v}(u)}} + z_{E_{u,\vec{v}(u)}} + z_{E_{\vec{u}(v),v}} - z_{E_{uv}} \leq 0$, once the transformation is applied.
We observe that this inequality resembles the running intersection inequalities introduced by \citet{dPKha21}.
In fact, there is a center, $E_{uv}$, and two neighbors, $E_{u,\vec{v}(u)}$, $E_{\vec{u}(v),v}$, that satisfy the running intersection property.
However, $E_{\vec{u}(v),\vec{v}(u)}$ is in general an edge contained in $E_{u,\vec{v}(u)} \cap E_{\vec{u}(v),v}$ rather than a node, unless $d(u,v) = 3$ in $T$.
We discuss this connection more in depth now, in connection with the generalized intersection inequalities that contain the intersection inequalities, as we have seen in the previous section. 

Consider the generalized intersection inequalities~\eqref{eq:gen-ineq} that are valid for the more general case in which we lift the tree $T$ to an arbitrary augmented graph $\widehat{G}$.
When we apply the affine transformation from the lifted multicut polytope to the multilinear polytope, an arbitrary inequality of the type~\eqref{eq:gen-ineq} becomes
\begin{multline}\label{eq:gen-run-int}
    - \sum_{\substack{k \in K : \\ N_k \neq \emptyset}} z_{f_k} 
    + \sum_{e \in E_{uv} \setminus \bigcup_{k \in K} E_{u_k v_k}} z_e 
    + \sum_{k \in K} z_{E_{u_k v_k}} - z_{E_{uv}} \leq \\
    \left|E_{uv} \setminus \bigcup_{k \in K} E_{u_k v_k} \right| 
    + |K| - \bigl|\{k \in K \mid N_k \neq \emptyset\}\bigr| - 1 \enspace .
\end{multline}
Note that \eqref{eq:gen-run-int} includes the running intersection inequalities.
The interested reader will have noticed the similarities between our definition of the sets $N_k$
in \eqref{eq:def-N} and the definition of $N(e_0 \cap e_k)$ by \citet{dPKha21}.
Moreover, our proof of the validity of the generalized intersection inequalities, \Cref{prop:general-intersection-valid}, uses the same technique as the proof by \citet{dPKha21} of the validity of the running intersection inequalities.
At the same time, there are several differences between the two classes of inequalities:
1.~In the generalized intersection inequalities, we do not take into account the number of components of $\tilde G$, as we do not need it in order to determine the right-hand side. 
In fact, all terms in the generalized intersection inequalities are variables. 
2.~We do not assume the running intersection property.
This also implies that, in general, the right-hand side of \eqref{eq:gen-run-int} is not equal to the number of components of $\tilde G$ minus $1$, like instead happens when the running intersection property holds for the sets $N(e_0 \cap e_k)$. 
3.~Some of the paths might comprise only one edge (which would become simply one node in the multilinear setting and not an edge, as required instead by the definition of running intersection inequalities).
4.~We allow $u_k \in N(e_0 \cap e_k)$ to possibly be an edge of the hypergraph (not just a node).

We close this section by showing that the generalized intersection inequalities are not a trivial generalization of the running intersection inequalities. 
For example, the inequality \eqref{eq:gen-example} of \Cref{example:gen-int-ineq} is not a running intersection inequality, after the affine transformation is applied.
Recall, however, that it defines a facet of $\lmc(T,\widehat{G})$.
Observe that the running intersection property does not hold for the order of the paths chosen, i.e. $P_{05}$, $P_{37}$, and finally $P_{28}$.
That property holds only if we change the order in which we consider the paths, but then it would be impossible to have $x_{26}$ in the left-hand side and hence obtain the facet-defining inequality \eqref{eq:gen-example}.
Moreover, out of all the necessary conditions and sufficient conditions for a running intersection to be facet-defining, only one of the necessary conditions still holds for the generalized intersection inequalities.
Namely, this is Condition~\ref{cond:generalized-intersection-2} in \Cref{prop:nec-cond-gen-ineq}.
All other necessary conditions, as well as the sufficient condition, described by \citet{dPKha21} turn out to be violated in this case.

\subsubsection{Path partition problem}\label{sec:multilinear-path-beta}

When we assume that $T$ is a path, we can establish more:
We can prove that the path partition problem is a special case of binary multilinear optimization on $\beta$-acyclic hypergraphs, which are hypergraphs that do not contain any $\beta$-cycle. 
For a complete introduction to the different types of cycles that can arise in hypergraphs, we refer the reader to \citet{Fag83}.
\citet{dPDiG20} show that binary multilinear optimization on $\beta$-acyclic hypergraphs can be solved in strongly polynomial time.
However, their approach is not polyhedral, and a complete description of the multilinear polytope in the $\beta$-acyclic setting is currently unknown.

In order to show that the path partition problem can be formulated as a $\beta$-acyclic binary multilinear problem, we use the characterization of $\beta$-acyclic hypergraphs that can be found in \citet{Dur12}.
This characterization is based on the definition of a nest point.
In any hypergraph $H = (\bar V, \bar E)$, a node $\bar v$ is called a \emph{nest point} if, for every two edges $\bar e, \bar f$ that contain $\bar v$, we have $\bar e \subseteq \bar f$ or $\bar f \subseteq \bar e$.

\begin{theorem}[\citet{Dur12}]\label{thm:charact-beta}
    A hypergraph $H$ is $\beta$-acyclic if and only if, after removing successively a nest point, we obtain the empty hypergraph.
\end{theorem}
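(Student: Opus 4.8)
The plan is to prove both implications of \Cref{thm:charact-beta} by induction on the number of nodes, reducing everything to two structural lemmas. Writing $H - \bar v$ for the hypergraph obtained from $H = (\bar V, \bar E)$ by deleting $\bar v$ from $\bar V$ and from every edge containing it (merging any edges that thereby coincide), the two lemmas are: \textbf{(L1)} if $\bar v$ is a nest point of $H$, then $H$ is $\beta$-acyclic if and only if $H - \bar v$ is $\beta$-acyclic; and \textbf{(L2)} every non-empty $\beta$-acyclic hypergraph has a nest point. Granting these, the theorem is immediate. For the ``if'' direction, the empty hypergraph is trivially $\beta$-acyclic, and peeling the nest points back on one at a time while applying the corresponding half of (L1) shows $H$ is $\beta$-acyclic. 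For the ``only if'' direction, (L2) produces a nest point $\bar v$, the other half of (L1) shows that $H - \bar v$ remains $\beta$-acyclic, and the inductive hypothesis supplies a nest-point elimination ordering of $H - \bar v$, which I extend by prepending $\bar v$.

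The core observation behind (L1) is that a nest point can never serve as one of the distinguished vertices $x_i$ of a $\beta$-cycle (in the sense of \citet{Fag83}: a sequence of distinct edges $e_1, \dots, e_m$ with $m \geq 3$ and distinct vertices $x_1, \dots, x_m$, indices taken cyclically, such that $x_i \in e_i \cap e_{i+1}$ and $x_i \notin e_j$ for $j \notin \{i, i+1\}$). Indeed, if $\bar v = x_i$ were such a vertex, then $\bar v \in e_i \cap e_{i+1}$ forces $e_i \subseteq e_{i+1}$ or $e_{i+1} \subseteq e_i$ by the chain property of a nest point; in the first case $x_{i-1} \in e_{i-1} \cap e_i \subseteq e_{i+1}$ contradicts the privacy of $x_{i-1}$, since $m \geq 3$ makes $e_{i+1}$ distinct from both $e_{i-1}$ and $e_i$, and the second case is symmetric. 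Given this, I would check that a $\beta$-cycle of $H$ projects to one of $H - \bar v$ (the $x_i$ all survive, and a short case analysis, again using $m \geq 3$, rules out two cycle edges coinciding after deletion of $\bar v$), and conversely that a $\beta$-cycle of $H - \bar v$ lifts to $H$ by choosing one preimage edge per cycle edge. Both checks are routine once the key observation is in hand, and together they give both implications of (L1).

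The main obstacle is (L2), the existence of a nest point. I would argue the contrapositive: if no vertex is a nest point, then $H$ contains a $\beta$-cycle. The hypothesis supplies, at every vertex $v$, two $\subseteq$-incomparable edges through $v$, and hence witnessing vertices lying in one but not the other. The plan is to grow an alternating sequence of edges and private vertices, at each step using the incomparability at the current vertex to select the next edge together with a vertex separating it from its predecessor, and then to force the walk to close into a cycle by a finiteness argument. The genuinely delicate point, where the proof must do real work, is guaranteeing the \emph{privacy} requirement simultaneously for all distinguished vertices of the closed-up cycle; I expect to secure this by passing to a closure of minimal length and choosing the separating vertices inside suitable set differences, so that no distinguished vertex accidentally lands in a third cycle edge. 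This is the step carried out by \citet{Dur12}, and it is the part I would expect to consume most of the effort.

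Finally, I note that for the application at hand, where $T$ is a path and the edges of $H$ are the vertex sets of subpaths, i.e.\ an interval hypergraph on the linearly ordered edge set of $T$, the existence claim (L2) is transparent: the node occupying the leftmost position contained in any edge lies only in edges that start at that position, which are nested and hence form a chain, so it is a nest point directly. Thus the hard part of the argument is not needed in the special case that is actually used in \Cref{sec:multilinear-path-beta}.
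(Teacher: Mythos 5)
This statement is imported: the paper attributes it to \citet{Dur12} and offers no proof of its own, so there is no in-paper argument to compare against. Judged on its own terms, your architecture is the right one: reducing \Cref{thm:charact-beta} to (L1) and (L2) and inducting on $\abs{\bar V}$ is exactly how the result is organized in the literature, and your treatment of (L1) is correct. The key observation that a nest point $\bar v$ can never be a distinguished vertex $x_i$ of a $\beta$-cycle is sound: $\bar v \in e_i \cap e_{i+1}$ forces comparability, and in either case the privacy of $x_{i-1}$ (or $x_{i+1}$) is violated because $m \geq 3$ keeps the offending edge distinct from the two permitted ones. The projection and lifting checks you describe do go through — your case analysis ruling out merged cycle edges works via the privacy of $x_j$ and $x_{j-1}$ — and it is worth noting that the lifting direction needs no nest-point hypothesis at all, since $\beta$-acyclicity is preserved under arbitrary vertex deletion. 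Your closing remark about the interval case is also correct and matches how the paper actually uses the theorem in \Cref{prop:path-partition-beta-acyclic}: the leftmost position covered by any edge is a nest point, so the hard lemma is bypassed in the application.

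The genuine gap is (L2), and it is not a side lemma: given (L1), the claim that every non-empty $\beta$-acyclic hypergraph has a nest point \emph{is} the nontrivial direction of \Cref{thm:charact-beta}, so a proposal that proves (L1) but only sketches (L2) has deferred essentially all of the content. Your contrapositive plan — at each vertex pick two $\subseteq$-incomparable edges, grow an alternating walk, close it by finiteness, then repair privacy by minimality — names the right difficulty but does not resolve it. The failure mode is concrete: when the walk closes, a separating vertex chosen early may lie in a cycle edge selected much later, and taking a closure of minimal \emph{length} does not by itself exclude this, since shortcutting through the offending edge may destroy incomparability or reuse a vertex; the known arguments (as in \citet{Dur12}) need a more carefully engineered extremal choice than the one you state. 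You acknowledge this yourself ("the step carried out by \citet{Dur12}"), which is honest, but it means the attempt is an outline of the standard proof with its hardest step cited rather than executed — acceptable given that the paper itself imports the theorem by citation, but not a self-contained proof.
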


\begin{proposition}\label{prop:path-partition-beta-acyclic}
    Let $T = (V,E)$ be a path, and let $\widehat{G} = (V, E \cup F)$ be an augmentation of $T$. 
    Then, the corresponding path partition problem is represented by a $\beta$-acyclic hypergraph.
\end{proposition}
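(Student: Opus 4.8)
The plan is to invoke the nest-point characterization of $\beta$-acyclicity from \Cref{thm:charact-beta}: it suffices to exhibit, at each stage of a successive node-elimination, a nest point, so that after deleting all nodes we are left with the empty hypergraph. The key observation is that, because $T$ is a path, the hypergraph $H = (\bar V, \bar E)$ representing the path partition problem is an \emph{interval hypergraph}, and such hypergraphs admit the required elimination order via their extreme nodes.

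First I would make the interval structure explicit. Recall from \Cref{sec:multilinear} that $\bar V = E$ and $\bar E = \{E_{uv} \mid uv \in F\}$. Writing $V = \{0,\dots,n\}$ with the linear order and $e_i = \{i-1,i\}$ for $i \in \{1,\dots,n\}$, every $uv \in F$ with $u < v$ yields $E_{uv} = \{e_{u+1}, e_{u+2}, \dots, e_v\}$, a block of \emph{consecutive} edges. Identifying each node $e_i \in \bar V$ with its index $i$, every hyperedge of $H$ is therefore an interval $\{u+1,\dots,v\} \subseteq \{1,\dots,n\}$.

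Next I would show that the current minimum element $m$ of the ground set is always a nest point. Since all remaining indices are at least $m$, any interval containing $m$ must have $m$ as its least element, hence is of the form $\{m, m+1, \dots, b\}$; any two such intervals are comparable under inclusion (the one with the larger right endpoint contains the other). Thus the hyperedges through $m$ form a chain, which is precisely the nest-point condition (and if $m$ lies in no hyperedge the condition holds vacuously). I would then verify that deleting $m$ preserves the interval structure: a hyperedge $\{m,\dots,b\}$ becomes $\{m+1,\dots,b\}$ (empty if $b=m$, in which case it is dropped), while hyperedges not containing $m$ remain unchanged; all surviving hyperedges are intervals in $\{m+1,\dots,n\}$. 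Iterating this elimination of the current minimum removes one node at a time and, after $n$ steps, leaves the empty hypergraph; by \Cref{thm:charact-beta} this proves that $H$ is $\beta$-acyclic.

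I do not anticipate a genuinely hard step here; the only points requiring care are bookkeeping ones — confirming that $E_{uv}$ really is a set of consecutive path edges (which uses that $T$ is a \emph{path}, not merely a tree, so that $\beta$-acyclicity may well fail for general trees), and handling the degenerate cases in the elimination (isolated nodes of $\bar V$, hyperedges that shrink to the empty set, and possible duplicate hyperedges), all of which are dropped or merged without affecting the argument.
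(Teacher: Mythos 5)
Your proposal is correct and follows essentially the same route as the paper: both invoke the nest-point characterization of \Cref{thm:charact-beta} and eliminate the nodes of $H$ left to right along the path, the leftmost remaining node always being a nest point because the hyperedges through it form a chain. Your explicit interval-hypergraph framing is merely a cleaner packaging of the paper's contradiction argument (which shows two incomparable hyperedges through $\bar v_1$ would contradict $T$ being a path), and your attention to the degenerate cases (empty or duplicate hyperedges after deletion) is a minor tightening the paper glosses over.
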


\proofref{prop:path-partition-beta-acyclic}

\begin{figure}
    \centering
    \input{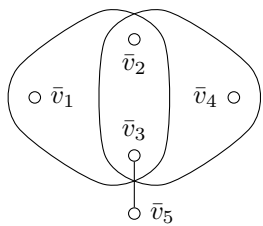}
    \caption{Depicted above is a $\beta$-acyclic hypergraph $H$ for which the $\beta$-acyclic binary multilinear problem represented by $H$ is not a path partition problem.}
    \label{fig:example-beta}
\end{figure}

On other hand, not all $\beta$-acyclic binary multilinear problems can be represented via a path partition problem.
For example, consider the hypergraph $H = (\bar V, \bar E)$ depicted in \Cref{fig:example-beta}, where $\bar V = \{\bar v_1$, $\bar v_2$, $\bar v_3$, $\bar v_4$, $\bar v_5\}$ and $\bar E = \{ \{ \bar v_1, \bar v_2, \bar v_3 \}$, $\{ \bar v_2, \bar v_3, \bar v_4 \}$, $\{ \bar v_3, \bar v_5\} \}$.
By \Cref{thm:charact-beta}, it can be checked easily that $H$ is $\beta$-acyclic.
Assume that $H$ is the hypergraph corresponding to a path partition problem.
Then, $T = (V,E)$ is a path with 6 nodes $\{0,1,2,3,4,5\}$ and 5 edges $\{\{0,1\}$, $\{1,2\}$, $\{2,3\}$, $\{3,4\}$, $\{4,5\} \}$.
Observe that the paths represented by $\{ \bar v_1, \bar v_2, \bar v_3 \}$, $\{ \bar v_2, \bar v_3, \bar v_4 \}$ contain three edges each, two of them in common.
Without loss of generality, we can hence assume that the path represented by $\{ \bar v_1, \bar v_2, \bar v_3 \}$ is $P_{03}$, and similarly, that the path corresponding to $\{ \bar v_2, \bar v_3, \bar v_4 \}$ is $P_{14}$.
The third edge of $H$ is $\{ \bar v_3, \bar v_5\}$, so its corresponding path in $T$ must contain the only edge in $T$ that is neither on $P_{03}$ nor on $P_{14}$, together with one edge that is in common between $P_{03}$ and $P_{14}$.
Thus, $\{ \bar v_3, \bar v_5\}$ corresponds to either $\{ \{1,2\}, \{4,5\} \}$ or $\{ \{2,3\}, \{4,5\} \}$.
However, none of them induces a path, since the two edges in both possibilities are not adjacent.
We conclude that the $\beta$-acyclic binary multilinear problem represented by $H$ is not a path partition problem.

Lastly, we observe that the tree partition problem cannot be represented by $\beta$-acyclic hypergraphs in general.
In fact, we can consider $T = (V,E)$ with $V = \{ 1,2,3,4,5,6 \}$ and $E = \{ \{1,2\}$, $\{1,3\}$, $\{3,4\}$, $\{4,5\}$, $\{5,6\} \}$, and $\widehat{G} = (V, E \cup F)$ where $F = \{ \{1,4\}$, $\{2,5\}$, $\{3,6\}$, $\{4,6\} \}$.
Then, we construct the hypergraph $H$ representing the corresponding tree partition problem.
It remains to observe that the three nodes in $H$ corresponding to $\{1,3\}$, $\{3,4\}$, $\{3,5\}$ together with the three edges in $H$ related to the paths $P_{14}$, $P_{46}$, $P_{25}$ form a $\beta$-cycle in $H$.
Both $\widehat{G}$ and $H$ are depicted in \Cref{fig:tree-not-beta}.

\begin{figure}
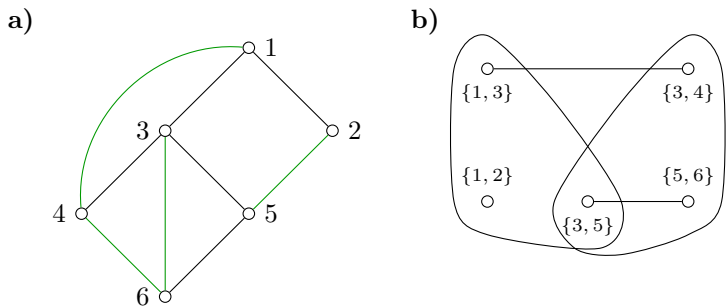

    \centering
    \textbf{a)}
    \imagetop{\input{figures/tree/tree-partition-beta-cycle-a.tex}}
    \hspace{0.5em}
    \textbf{b)}
    \imagetop{\input{figures/tree/tree-partition-beta-cycle-b.tex}}
    \caption{Depicted above are \textbf{a)} an augmented graph $\widehat{G} = (V, E \cup F)$ with augmented edges $F$ depicted in green. 
    \textbf{b)} The hypergraph $H$ representing the corresponding tree partition problem.}
    \label{fig:tree-not-beta}
\end{figure}


\section{Multicuts lifted from cycles}\label{sec:lmc-for-cycles}

In this section, we study the facial structure of the lifted multicut polytope $\lmc(C)$ for a cycle $C=(V,E)$. 
To begin with, we note that the lifted multicut problem for a cycle $C$, in the following called \emph{cycle partition problem}, can be solved in strongly polynomial time, by solving a linear number of path partition problems:
For any given edge in $C$, remove that edge from $C$ and take it to be an augmented edge in the resulting path partition problem.
The feasible solutions of this path partition problem are also feasible for the cycle partition problem.
Conversely, each feasible solution of the cycle partition problem is a feasible solution of at least one of the path partition problems (the all zeros solution is feasible for all path partition problems and all other solutions are feasible for those path partition problems that are obtained by removing an edge that is cut by the solution).
As the coefficients of the objective function remain the same, any feasible solution provides the same objective value for both the path partition problem and the cycle partition problem.
Therefore, the optimal solution of the cycle partition problem can be obtained by solving all path partition problems that are obtained by removing one edge respectively and picking a solution with minimal objective value.
The path partition problem can be solved in strongly polynomial time, as discussed in \Cref{sec:lmp-paths}.

Despite this simple reduction to the path partition problem and despite the simple description of the lifted multicut polytope for paths (cf. \Cref{cor:complete-description}), the lifted multicut polytope for cycles does not admit a simple description. 
For a path $P$ with $8$ nodes, $\lmc(P)$ has $34$ facets. For a cycle $C$ with $8$ node, $\lmc(C)$ has $37815$ facets.

We proceed as follows:
First and for clarity, we introduce some notation for describing a cycle $C$ of size $n$ and its properties.
Then, we state which of the cycle, path, cut and box inequalities are facet-defining for $\lmc(C)$. 
Thereafter, we show that most of the known facet-defining inequalities of the multicut polytope $\mc(K_n)$ of the complete graph with $n$ nodes, while valid, are not facet-defining for $\lmc(C)$. 
As our main contribution in this section, we establish several large classes of facet-defining inequalities for $\lmc(C)$. 
Of particular significance are the \emph{half-chorded odd cycle inequalities}, as they are also valid and facet-defining for $\mc(K_n)$.

\paragraph*{Notation.}
Let $n$ be the size of the cycle $C=(V,E)$.
For convenience, we identify the $n$ nodes with $\Z_n$, the ring of integers modulo $n$, such that $E = \{\{v, v+1\} \mid v \in \Z_n\}$. 
Then, we have $C = (\Z_n,E)$. 
Consistent with the previous sections, we let $F := \binom{\Z_n}{2} \setminus E$ and let $K_n = (\Z_n, E \cup F) = (\Z_n, \binom{\Z_n}{2})$ be the complete graph on the nodes $\Z_n$. 
For brevity, let
\[X_n := \lmc(C) \cap \{0,1\}^{\binom{\Z_n}{2}}\] 
denote the set of all characteristic vectors of multicuts of $K_n$ lifted from $C$.
For any $v,w \in \Z_n$, we define the interval from $v$ to $w$ as
$[v, w] := \{u \in \Z_n \mid u - v \leq w - v\} = \{v, v+1, \dots, w-1,w\}.$
Additionally, we define the (half) open intervals $]v,w[ \; := [v,w] \setminus \{v,w\}$ and $]v,w] := [v,w] \setminus \{v\}$ and $[v,w[ \; := [v,w] \setminus \{w\}$.
For any integer $1 \leq k \leq n$, we call a cyclic sequence $v \colon \Z_k \to \Z_n$ \emph{true to the cycle $C$} if one of the following two conditions holds:
\begin{enumerate*}[(a)]
\item \label{item:definition-true-a}
$v_i \in [v_0,v_{i+1}[$ for all $i=0,\dots,k-2$,
\item \label{item:definition-true-b}
$v_i \in \; ]v_{i+1},v_0]$ for all $i=0,\dots,k-2$. 
\end{enumerate*}
If \ref{item:definition-true-a} holds, the node $v_i$ comes before the node $v_{i+1}$ in $v_0,v_0+1,v_0+2,\dots$ for all $i=0,\dots,k-2$. 
If \ref{item:definition-true-b} holds, the node $v_i$ comes before the node $v_{i+1}$ in $v_0,v_0-1,v_0-2,\dots$ . 
In order to make such a sequence explicit, we write $\langle v_0,\dots,v_{k-1} \rangle$.
For example, let $n=k=4$. 
The cyclic sequences $\langle 0,1,2,3 \rangle$, $\langle 2,3,0,1 \rangle$ and $\langle 1,0,3,2 \rangle$ are true to $C$, while $\langle 0,2,1,3 \rangle$ and $\langle 3,2,0,1 \rangle$ are not. 
Notice: If $\langle v_0,\dots,v_{k-1} \rangle$ satisfies \ref{item:definition-true-a}, the reverse cyclic sequence $\langle v_{k-1},\dots,v_0 \rangle$ satisfies \ref{item:definition-true-b}, and vice versa. 
In the following, we assume without loss of generality that \ref{item:definition-true-a} is satisfied when we say a cyclic sequence is true to $C$.

With this notation, the cycle, path and cut inequalities \eqref{eq:lifted-multicut-cycle}, \eqref{eq:lifted-multicut-path} and \eqref{eq:lifted-multicut-cut} can be written compactly for the lifted multicut polytope for cycles. 

\begin{proposition}
    The lifted multicut polytope $\lmc(C)$ for a cycle $C=(\Z_n,E)$ is the convex hull of all vectors $x \in \{0,1\}^{\binom{\Z_n}{2}}$ that satisfy the cycle, path and cut inequalities for cycles:
    \begin{align}
        x_e &\leq \sum_{e' \in E \setminus \{e\}} x_{e'} 
            && \forall e \in E \label{eq:cycle} \\
        x_{vw} &\leq \sum_{u \in [v,w[} x_{u, u+1} \enspace, \quad
            x_{vw} \leq \sum_{u \in [w,v[} x_{u, u+1} 
            && \forall vw \in F \label{eq:path} \\
        1-x_{vw} &\leq \left(1-x_{s, s+1}\right) + \left(1-x_{t, t+1}\right) 
            && \forall vw \in F \quad \forall s \in [v, w[ \quad \forall t \in [w, v[ \label{eq:cut}
    \end{align}
\end{proposition}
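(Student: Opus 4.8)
The plan is to specialize \Cref{prop:lifted-multicut-polytope-inequalities} to the case $G = C$ and $\widehat G = K_n$, that is $F = \binom{\Z_n}{2} \setminus E$, and to verify that the three families \eqref{eq:lifted-multicut-cycle}--\eqref{eq:lifted-multicut-cut} collapse, respectively, into the compact forms \eqref{eq:cycle}--\eqref{eq:cut}. Since $\lmc(C) = \lmc(C, K_n)$ by definition, \Cref{prop:lifted-multicut-polytope-inequalities} already describes it as the convex hull of the integral points satisfying \eqref{eq:lifted-multicut-cycle}--\eqref{eq:lifted-multicut-cut}; hence the whole task is the combinatorial one of enumerating the cycles, $vw$-paths, and $vw$-cuts of the graph $C$.

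First I would handle the cycle inequalities. As every node of $C = (\Z_n, E)$ has degree two and $C$ is connected, the only cycle subgraph of $C$ is $C$ itself, with edge set $E$. Thus \eqref{eq:lifted-multicut-cycle} reduces to $x_e \leq \sum_{e' \in E \setminus \{e\}} x_{e'}$ for every $e \in E$, which is exactly \eqref{eq:cycle}. Next, for fixed $vw \in F$ I would use that a cycle admits exactly two $vw$-paths: the arc through $v, v+1, \dotsc, w$ with edge set $\{\{u,u+1\} \mid u \in [v,w[\}$, and the complementary arc through $w, w+1, \dotsc, v$ with edge set $\{\{u,u+1\} \mid u \in [w,v[\}$. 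Substituting these two edge sets into \eqref{eq:lifted-multicut-path} yields precisely the two path inequalities in \eqref{eq:path}.

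The cut inequalities require the most care and are the main, though still elementary, obstacle. Here I would first characterize the $vw$-cuts of $C$. A $vw$-cut $\delta(U)$ arises from a $2$-decomposition $\{U, V \setminus U\}$ in which both parts induce connected subgraphs of $C$; in a cycle this forces both $U$ and its complement to be arcs, i.e.\ cyclic intervals. Writing such an arc as $U = \,]s,t]$ for suitable $s,t \in \Z_n$, its cut consists of exactly the two boundary edges $\{s,s+1\}$ and $\{t,t+1\}$. It then remains to determine which of these two-edge cuts separate $v$ from $w$: since $v$ and $w$ split $E$ into the two disjoint arc edge sets identified above, and since deleting two edges from the same arc leaves $v$ and $w$ joined through the other arc, the cut separates $v$ and $w$ if and only if one boundary edge lies in each of the two arcs, that is $s \in [v,w[$ and $t \in [w,v[$ up to exchanging $s$ and $t$. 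Conversely, every such pair $(s,t)$ does produce a $vw$-cut.

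Feeding this enumeration into \eqref{eq:lifted-multicut-cut}, and using $\delta(U) = \{\{s,s+1\}, \{t,t+1\}\}$, produces exactly \eqref{eq:cut}, with the quantifiers $\forall s \in [v,w[$ and $\forall t \in [w,v[$ running over all $vw$-cuts. Combining the three reductions establishes the claim. The only point demanding genuine attention is the bookkeeping with the interval notation in the cut step, ensuring that the two boundary edges land in the two distinct $v$-$w$ arcs; the cycle and path reductions are immediate.
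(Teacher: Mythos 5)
Your proposal is correct and follows essentially the same route as the paper's proof: both specialize \Cref{prop:lifted-multicut-polytope-inequalities} to $G=C$, $\widehat G = K_n$, observe that $C$ is the only cycle in $C$ and that each $vw \in F$ admits exactly two $vw$-paths (the arcs over $[v,w]$ and $[w,v]$), and characterize every $vw$-cut as a pair of edges $\{s,s+1\}$, $\{t,t+1\}$ with $s \in [v,w[$ and $t \in [w,v[$. Your extra bookkeeping on the cut enumeration (arcs as cyclic intervals, one boundary edge per arc) is a correct elaboration of what the paper states without proof.
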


\begin{proof}
    The claim follows from \Cref{prop:lifted-multicut-polytope-inequalities}:
    Since $C$ is the only cycle in $C$, the cycle inequalities \eqref{eq:lifted-multicut-cycle} can be written as \eqref{eq:cycle}.
    For any $vw \in F$, there are precisely two $vw$-paths, namely the paths along the nodes $[v,w]$ and $[w,v]$, respectively.
    Thus, the path inequalities \eqref{eq:lifted-multicut-path} can be written as \eqref{eq:path}.
    For any $vw \in F$, any $vw$-cut in $C$ consists of precisely two edges $e=\{s,s+1\}$ and $e'=\{t,t+1\}$ for some $s \in [v, w[$ and $t \in [w, v[$. 
    Thus, the cut inequalities \eqref{eq:lifted-multicut-cut} can be written as \eqref{eq:cut}.
\end{proof}

\paragraph*{Notation.}
Every (connected) component of $C = (\Z_n, E)$ that is not the entire graph is of the form $[v,w]$ for some $v,w \in \Z_n$. 
Every decomposition of $C$ into $k \geq 2$ components is a partition of $\Z_n$ of the form $\{]v_0,v_1],\allowbreak\;]v_1,v_2],\allowbreak\dots,\allowbreak\;]v_{k-2},v_{k-1}],\allowbreak\;]v_{k-1},v_0]\}$ for some $v \colon Z_k \to \Z_n$ true to $C$. 
For brevity, we introduce the following notation:
For any set of nodes $W \subseteq \Z_n$ with $k:=|W| \geq 2$, 
let $w \colon \mathbb{Z}_k \to W$ bijective and true to $C$, 
let $\Pi_W := \{]w_i,w_{i+1}] \mid i \in \Z_k \}$, 
and let
\begin{align}\label{eq:defn-chi}
    \chi(W) := \1_{\phi_{K_n}(\Pi_W)} \enspace ,
\end{align}  
the characteristic vector corresponding to the multicut of $K_n$ lifted from $C$ where the edges that are cut in $C$ are precisely the $w_iw_{i+1}$ for which $i \in \Z_k$.
For an example, see \Cref{fig:chi-example}. 
With the convention $\chi(\emptyset) := 0$, the zero vector, this notation allows us to describe precisely all characteristic vectors of multicuts of $K_n$ lifted from $C$, as $X_n = \{\chi(W) \mid W \subseteq \Z_n, |W| \neq 1\}$. 

\begin{figure}
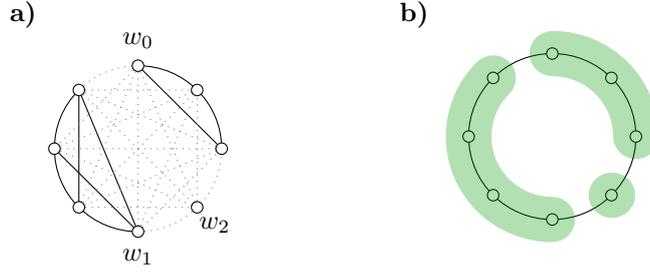

    \center
    \textbf{a)} \imagetop{\input{figures/cycle/chi-example.tex}}
    \hspace{5em}
    \textbf{b)} \imagetop{\input{figures/cycle/chi-example-decomposition.tex}}
    \caption{Depicted above are \textbf{a)} the lifted multicut corresponding to the vector $\chi(\{w_0,w_1,w_2\})$, with cut edges drawn as dotted lines, and in \textbf{b)} the corresponding decomposition of $C$.}
    \label{fig:chi-example}
\end{figure}

\subsection{Canonical facets}

In the following, we apply our results on the facet-defining\-ness of canonical inequalities for general graphs from \Cref{sec:facets-from-canonical-ineq} to the special case of cycles.

\begin{corollary}[of \Cref{thm:upper-box-facets}]\label{cor:upper-box-facet}
	For any integer $n \geq 3$,
	the cycle $C = (\Z_n, E)$ and 
	any $e \in \binom{\Z_n}{2}$, the inequality $x_e \leq 1$ defines a facet of $\lmc(C)$.
\end{corollary}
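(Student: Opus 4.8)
The plan is to derive this corollary directly from \Cref{thm:upper-box-facets} applied to the graph $G = C = (\Z_n, E)$ and its augmentation $\widehat G = K_n = (\Z_n, \binom{\Z_n}{2})$, where $F = \binom{\Z_n}{2} \setminus E$. According to that theorem, for $e = st \in \binom{\Z_n}{2}$ the inequality $x_e \leq 1$ defines a facet of $\lmc(C)$ if and only if there is no $uv \in F \setminus \{e\}$ such that both $s$ and $t$ are $uv$-cut-nodes with respect to $C$. Thus it suffices to verify that this characterizing condition is satisfied for \emph{every} edge $e \in \binom{\Z_n}{2}$.

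The key step I would carry out first is to characterize the $uv$-cut-nodes of a cycle. For any two distinct nodes $u, v \in \Z_n$, there are in $C$ exactly two $uv$-paths, namely the two arcs traversing the node sets $[u,v]$ and $[v,u]$, respectively. These two arcs are internally disjoint: their node sets intersect precisely in the endpoints $\{u,v\}$. Since a node $w$ is a $uv$-cut-node exactly when it lies on \emph{every} $uv$-path, and hence on both arcs, it follows that the only $uv$-cut-nodes of $C$ are $u$ and $v$ themselves.

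It then remains to conclude by contradiction. Suppose some $uv \in F \setminus \{e\}$ had both $s$ and $t$ as $uv$-cut-nodes. By the characterization above, this forces $\{s,t\} \subseteq \{u,v\}$, and because both $\{s,t\}$ and $\{u,v\}$ consist of two distinct nodes, we would obtain $\{s,t\} = \{u,v\}$, i.e.\ $e = st = uv$. This contradicts the requirement $uv \neq e$. Hence no such $uv$ exists, the condition of \Cref{thm:upper-box-facets} holds for every $e$, and the inequality $x_e \leq 1$ is facet-defining for $\lmc(C)$.

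I do not anticipate a genuine obstacle here: the entire argument rests on the elementary fact that a cycle offers only two internally disjoint paths between any pair of nodes, so the reduction to \Cref{thm:upper-box-facets} is essentially immediate. The only point meriting a brief remark is the degenerate case $n = 3$, where $\binom{\Z_3}{2} = E$ and thus $F = \emptyset$; there the condition ``no $uv \in F \setminus \{e\}$'' is vacuously satisfied, so all three inequalities $x_e \leq 1$ are facet-defining as well.
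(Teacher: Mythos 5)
Your proposal is correct and matches the paper's intent exactly: the paper states this result as an immediate consequence of \Cref{thm:upper-box-facets} without a separate proof, and you supply precisely the omitted verification, namely that the two internally disjoint arcs between any pair of distinct nodes force the only $uv$-cut-nodes of a cycle to be $u$ and $v$ themselves, so the theorem's characterizing condition holds for every $e \in \binom{\Z_n}{2}$. Your closing remark on the vacuous case $n=3$ (where $F = \emptyset$) is a sound, if optional, addition.
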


\begin{corollary}[of \Cref{thm:lower-box-facets}]\label{cor:lower-box-not-facet}
	For any integer $n \geq 3$,
	the cycle $C = (\Z_n, E)$ and 
    any $e \in \binom{\Z_n}{2}$, the inequality $0 \leq x_e$ does not define a facet of $\lmc(C)$.
\end{corollary}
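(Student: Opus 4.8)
The plan is to derive \Cref{cor:lower-box-not-facet} directly from \Cref{thm:lower-box-facets}, exploiting the fact that in this setting the augmentation $\widehat G$ is the complete graph $K_n = (\Z_n, \binom{\Z_n}{2})$. The single observation driving the argument is that, since $n \geq 3$, every edge $e = st \in \binom{\Z_n}{2}$ lies on a triangle in $\widehat G$: choosing any third node $w \in \Z_n \setminus \{s,t\}$, which exists because $n \geq 3$, the nodes $s,t,w$ induce a triangle containing $e$, as all three of their pairs are edges of the complete graph $K_n$.

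First I would treat the case $e \in E$. For this case, \Cref{thm:lower-box-facets} gives the equivalence that $0 \leq x_e$ defines a facet of $\lmc(C)$ if and only if no triangle of $\widehat G$ contains $e$. By the observation above, such a triangle always exists, so the inequality fails to define a facet. Next I would treat the complementary case $e = uv \in F$, which is non-empty precisely when $n \geq 4$. Here \Cref{thm:lower-box-facets} supplies Condition~\eqref{cond:box-1}---the absence of a triangle through $e$ in $\widehat G$---as a \emph{necessary} condition for $0 \leq x_e$ to be facet-defining. The triangle on $s,t,w$ again violates this condition, so the inequality is not facet-defining in this case either. Since $E$ and $F$ partition $\binom{\Z_n}{2}$, these two cases together cover every $e$, and the corollary follows.

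There is essentially no obstacle here: the entire combinatorial content is already packaged in \Cref{thm:lower-box-facets}, and the completeness of $K_n$ makes the triangle condition trivially fail for every edge. The only points worth stating carefully are that the two cases $e \in E$ and $e \in F$ together exhaust $\binom{\Z_n}{2}$, and that when $n = 3$ one has $C = K_3$ and hence $F = \emptyset$, so only the first case arises while the argument remains valid.
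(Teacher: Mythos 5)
Your proposal is correct and is exactly the intended derivation: the paper states this result as an immediate corollary of \Cref{thm:lower-box-facets} without a separate proof, relying on precisely your observation that in $\widehat G = K_n$ with $n \geq 3$ every edge $e = st$ lies in a triangle (via any third node $w$), which falsifies the facet condition for $e \in E$ and violates necessary Condition~(i) for $e \in F$. Your case split between $E$ and $F$, and the remark that $F = \emptyset$ when $n = 3$, are both accurate and complete the argument.
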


\begin{corollary}[of \Cref{thm:cycle-facets}]
	For any integer $n \geq 4$ and the cycle $C = (\Z_n, E)$, none of the cycle inequalities \eqref{eq:cycle} defines a facet of $\lmc(C)$.
\end{corollary}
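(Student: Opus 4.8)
The plan is to invoke \Cref{thm:cycle-facets}\ref{enum:chordless-cycle} directly, since the cycle inequalities \eqref{eq:cycle} are exactly the cycle inequalities \eqref{eq:lifted-multicut-cycle} instantiated for the graph $G = C$. The first step is to observe that the cycle graph $C = (\Z_n, E)$ contains exactly one cycle, namely $C$ itself, with node set $V_C = \Z_n$ and edge set $E_C = E$. Consequently, every inequality of the form \eqref{eq:cycle}, indexed by some $e \in E$, coincides with the cycle inequality \eqref{eq:lifted-multicut-cycle} associated with the cycle $C$ and the choice $f = e$. Thus the whole family \eqref{eq:cycle} is covered by \Cref{thm:cycle-facets}\ref{enum:chordless-cycle}, which asserts that such an inequality defines a facet of $\lmc(C) = \lmc(C, K_n)$ if and only if $C$ is chordless in $\widehat G = K_n$.

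The second step is to check that $C$ is \emph{not} chordless in $K_n$ whenever $n \geq 4$. This is where the hypothesis $n \geq 4$ enters: since $\widehat G = K_n$ contains the edge $\{v, v+2\}$ for every $v \in \Z_n$, and for $n \geq 4$ this pair is distinct from both endpoints' incident edges of $C$ and is not itself an edge of $C$, it is a chord of $C$ in $K_n$. Hence $C$ has at least one chord in $\widehat G$, so it fails to be chordless. Applying \Cref{thm:cycle-facets}\ref{enum:chordless-cycle}, none of the cycle inequalities \eqref{eq:cycle} defines a facet of $\lmc(C)$, which is the claim.

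I expect no real obstacle here: the argument is a one-line application of the already-established characterization, and the only content is the elementary observation about chords. The only point worth flagging, for completeness rather than difficulty, is the boundary case $n = 3$, which the corollary correctly excludes: for a triangle, $C = K_3$ is chordless in $\widehat G = K_3$, so the cycle inequality \emph{does} define a facet (recovering the triangle inequality), and this is precisely why the statement is restricted to $n \geq 4$.
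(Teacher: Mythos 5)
Your proof is correct and is precisely the argument the paper intends: the corollary is stated as an immediate consequence of \Cref{thm:cycle-facets}\ref{enum:chordless-cycle}, and your two observations --- that $C$ is the unique cycle of the cycle graph, and that $\{v,v+2\}$ is a chord of $C$ in $K_n$ exactly when $n \geq 4$ --- supply the only content needed. Your remark on the excluded case $n=3$ also matches the paper, which notes there that the cycle inequalities coincide with path inequalities for paths of length $2$ and are facet-defining.
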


\begin{corollary}[of \Cref{thm:cycle-facets}] \label{cor:path-length-2-facet}
	For any integer $n \geq 3$ and the cycle $C = (\Z_n, E)$, a path inequality \eqref{eq:path} defines a facet of $\lmc(C)$ if and only if the path has length $2$.
\end{corollary}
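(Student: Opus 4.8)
The plan is to derive this directly from part~\ref{enum:chordless-path} of \Cref{thm:cycle-facets}, specialized to the case $G = C = (\Z_n, E)$ and $\widehat G = K_n$, so that $F = \binom{\Z_n}{2} \setminus E$. First I would recall that, by construction, each path inequality \eqref{eq:path} is an instance of the general path inequality \eqref{eq:lifted-multicut-path} for some $vw \in F$ and one of the two $vw$-paths $P = (V_P, E_P)$ in $C$, namely the arc through $[v,w]$ or the arc through $[w,v]$. Part~\ref{enum:chordless-path} of \Cref{thm:cycle-facets} then states that such a path inequality defines a facet of $\lmc(C)$ if and only if $E_P \cup \{vw\}$ induces a chordless cycle in $K_n$.

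The heart of the argument is the observation that, in a complete graph, chordless cycles are exactly triangles. Concretely, I would argue that any cycle of length at least $4$ in $K_n$ possesses a chord, since $K_n$ contains an edge between every pair of its nodes and, in particular, between any two nodes of the cycle that are not consecutive on it. Conversely, a triangle has no pair of non-consecutive nodes and is therefore chordless. Hence the subgraph induced by $E_P \cup \{vw\}$ is a chordless cycle in $K_n$ if and only if it is a triangle, i.e.\ a cycle of length $3$.

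It then remains to translate ``triangle'' into a statement about the length of $P$. The cycle induced by $E_P \cup \{vw\}$ has $\abs{E_P} + 1$ edges, so it is a triangle precisely when $\abs{E_P} = 2$, that is, when the path $P$ has length $2$. Combining this with the previous two steps yields the claimed equivalence.

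I do not anticipate a substantial obstacle, as the statement is essentially a specialization of \Cref{thm:cycle-facets}; the only point requiring care is to confirm that the two path inequalities in \eqref{eq:path} for a given $vw \in F$ correspond exactly to the two $vw$-paths in $C$, and that, since $vw \in F$ implies $vw \notin E$, every such path has length at least $2$, so that the degenerate length-$1$ case does not arise. For $n = 3$ the set $F$ is empty and there are no path inequalities, making the statement vacuously true; for $n \geq 4$ the argument above applies verbatim.
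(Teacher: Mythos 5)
Your proposal is correct and coincides with the paper's own (implicit) derivation: the corollary is obtained exactly by specializing part~\ref{enum:chordless-path} of \Cref{thm:cycle-facets} to $G = C$ and $\widehat G = K_n$ and observing that chordless cycles in a complete graph are precisely triangles, so the condition $E_P \cup \{vw\}$ chordless holds if and only if $\abs{E_P} = 2$. Your additional care about the two $vw$-paths per lifted edge and the vacuous case $n = 3$ (where $F = \emptyset$, consistent with the paper's remark that cycle inequalities then play the role of length-$2$ path inequalities) is accurate and needs no change.
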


Note that, for $n=3$, cycle inequalities coincide with path inequalities for paths of length $2$.

\begin{proposition}\label{prop:cut-not-facet}
	For any integer $n \geq 3$,
	the cycle $C = (\Z_n, E)$ and 
	any $f=vw \in F$, 
	let $\delta(U) = \{e, e'\}$ be a $vw$-cut. 
    The cut inequality $1-x_f \leq (1-x_e) + (1-x_{e'})$
    defines a facet of $\lmc(C)$ only if $e$ and $e'$ share a node. 
\end{proposition}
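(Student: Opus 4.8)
The plan is to invoke full-dimensionality of $\lmc(C)$ (\Cref{thm:dimension}) together with the facet criterion recalled at the beginning of \Cref{sec:facets-from-canonical-ineq}: since the cut inequality is valid, it defines a facet if and only if $\dim \Sigma(vw,U) = \binom{n}{2}-1$, and it fails to do so as soon as every $x \in S(vw,U)$ satisfies one further linear equation whose normal is not a scalar multiple of that of the cut inequality. I would therefore assume that $e$ and $e'$ do not share a node and produce such an equation. Writing $e=\{a,a+1\}$ and $e'=\{b,b+1\}$, non-adjacency means that deleting $e,e'$ from $C$ leaves two arcs $U \ni v$ and $\bar U := \Z_n \setminus U \ni w$, each containing at least two nodes.

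Next I would split the coordinates $\binom{\Z_n}{2}$ into the within-$U$ pairs $\binom{U}{2}$, the within-$\bar U$ pairs $\binom{\bar U}{2}$, and the $|U|\,|\bar U|$ \emph{cross} pairs $\{p,q\}$ with $p\in U$, $q\in \bar U$; note $\binom{n}{2}=\binom{|U|}{2}+\binom{|\bar U|}{2}+|U|\,|\bar U|$. The key observation is that the tight multicuts with $\{e,e'\}\subseteq M$ (the pattern $(x_e,x_{e'},x_{vw})=(1,1,1)$) keep $U$ and $\bar U$ in distinct components, so all cross coordinates equal $1$, while $M\cap E(C[U])$ and $M\cap E(C[\bar U])$ range over arbitrary multicuts of the two paths $C[U]$ and $C[\bar U]$. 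Hence these points alone realise every combination of the within-coordinates at cross value $\mathbf 1$, contributing the full $\binom{|U|}{2}+\binom{|\bar U|}{2}$ (by \Cref{thm:dimension} applied to the two paths) in directions whose cross-component vanishes. Consequently $\dim \Sigma(vw,U)=\binom{|U|}{2}+\binom{|\bar U|}{2}+d_\times$, where $d_\times$ is the dimension of the affine hull of the projections of $S(vw,U)$ onto the cross coordinates, and the cut inequality is facet-defining if and only if $d_\times = |U|\,|\bar U|-1$. So it suffices to prove $d_\times \le |U|\,|\bar U|-2$.

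To bound $d_\times$ I would classify the remaining tight multicuts, namely those of pattern $(1,0,0)$ or $(0,1,0)$. By \Cref{lem:connectedness} such a point has a single component $H$ straddling the cut, with $v,w\in V_H$ and exactly one of $e,e'$ internal to $H$; since components of a cycle are arcs, $H$ is the arc obtained by leaving the entire opposite arc uncut, so the cross pairs inside $H$ form a \emph{corner rectangle} $R \subseteq U\times\bar U$ of zeros, anchored at the endpoints of $e'$ (pattern $(1,0,0)$) or of $e$ (pattern $(0,1,0)$). The cross-projection is then $\mathbf 1-\mathbf 1_R$, and pattern $(1,1,1)$ gives $R=\emptyset$. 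Each $\mathbf 1_R$ is a rank-one tensor $\mathbf 1_{I}\otimes \mathbf 1_{J}$ of an interval $I\subseteq U$ and an interval $J\subseteq\bar U$, the $I,J$ running through nested families issued from two \emph{opposite} corners of the $U\times\bar U$ grid. I would then show that the linear span of these corner-rectangle indicators has dimension at most $|U|\,|\bar U|-2$, which gives $d_\times\le|U|\,|\bar U|-2$ and finishes the proof. The main obstacle is exactly this rank estimate: the two interval families are constrained (the $(1,0,0)$-rectangles all contain the block of cross pairs between $w$ and $e'$, and symmetrically for $(0,1,0)$), and one must verify that jointly they cannot span a hyperplane of $\R^{U\times\bar U}$. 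I expect no extra equation of small support to exist in general—indeed for $n=4$ the single extra equation collapses to $x_g=1$ for the cross chord $g$ complementary to $vw$, whereas for larger cycles the dependencies are genuine linear combinations—so the argument should go through the dimension count rather than a guessed inequality.
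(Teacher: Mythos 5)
Your reduction is sound and genuinely different from the paper's route, and its first two steps are correct: the splitting $\dim \Sigma(vw,U) = \binom{|U|}{2} + \binom{|\bar U|}{2} + d_\times$ is exact (the all-ones cross fiber contains the vertex set of a product of two full-dimensional path polytopes, by \Cref{thm:dimension}, which exhausts the kernel of the cross projection), and your classification of the tight integral points by the patterns $(1,1,1)$, $(1,0,0)$, $(0,1,0)$ with corner-rectangle cross supports is right. But the proof stops exactly at its crux: the bound $d_\times \leq |U|\,|\bar U| - 2$ is asserted, flagged by you as the main obstacle, and never established, so what you have is a correct reduction of the proposition to an unproven rank estimate, not a proof. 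Moreover, your closing heuristic misleads in both directions. First, the count you gesture at fails as stated: writing $e=\{s,s+1\}$, $e'=\{t,t+1\}$, $a=|[s+1,w]|$, $c=|[w,t]|$, $b=|[v,s]|$, $d=|[t+1,v]|$, the two families contain $ab$ resp.\ $cd$ linearly independent rank-one interval tensors, and $ab+cd$ can exceed the target $|U|\,|\bar U|-2$ (e.g.\ $a=1$, $b=c=d=2$, i.e.\ $n=5$, gives $6 > 4$). To make the estimate true one must use that each family excludes the full-grid rectangle $U \times \bar U$ (it would make the whole cycle one component, contradicting $x_e=1$ resp.\ $x_{e'}=1$), which yields $d_\times \leq (ab-1)+(cd-1)$; the remaining arithmetic $(a-1)(d-1)+(b-1)(c-1)\geq 1$ then holds precisely because $e$ and $e'$ share no node (so $|U|,|\bar U|\geq 2$, excluding $a=c=1$ and $b=d=1$) and $vw \in F$ (excluding $a=b=1$ and $c=d=1$). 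None of this is in your write-up.

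Second, your expectation that ``no extra equation of small support'' exists is simply false, and it hides the shortest completion, which is the paper's actual proof: the paper verifies that Condition~\ref{cond:cut-4} of \Cref{thm:cut-facets} is violated for the $u'v'$-path along the nodes $v,t,s,w$ (or $v,s+1,t+1,w$ in the complementary case), whence every tight point satisfies the sparse alternating equation $x_{vw} = x_{ws} - x_{st} + x_{tv}$, of support at most $4$ for every $n$; together with the cut equality $x_e + x_{e'} - x_{vw} = 1$ this gives two independent equations on the face and hence codimension at least $2$ immediately. (Your $n=4$ observation that the extra relation collapses to $x_g=1$ is a special case of exactly this equation, not evidence against it.) In your coordinates, both functionals vanish on every corner-rectangle indicator $\1_R$, which proves your rank estimate in two lines; so the gap is closable, but either by the exclusion-and-count argument above or by rediscovering the paper's equation, and your proposal currently does neither.
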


\proofref{prop:cut-not-facet}

\Cref{prop:cut-not-facet} gives only a necessary conditions for a cut inequality to be facet-defining. 
The only $vw$-cuts that satisfy this necessary condition are $\{\{v-1,v\},\{v,v+1\}\}$ and $\{\{w-1,w\},\{w,w+1\}\}$. 
In \Cref{sec:star-glider}, we establish classes of inequalities that generalize those inequalities and show that these classes of inequalities are indeed facet-defining, i.e.~the condition in \Cref{prop:cut-not-facet} is not only necessary but also sufficient (cf. \Cref{cor:cut_facet_equivalence}).

\subsection{Facets inherited from the multicut polytope for complete graphs} \label{sec:facets-of-mc}

Since $\lmc(C) \subseteq \mc(K_n)$ (cf. \Cref{prop:inclusion-property}), any inequality that is valid for $\mc(K_n)$ is valid also for $\lmc(C)$. 
This raises the question which of the known classes of facet-defining inequalities of $\mc(K_n)$ are facet-defining also for $\lmc(C)$. 
Below, we show that, with a few exceptions that coincide with canonical inequalities, many of the known classes of facet-defining inequalities of $\mc(K_n)$ are not facet-defining for $\lmc(C)$.
For clarity, we state those inequalities that were introduced originally for the isomorphic clique partitioning polytope equivalently here for $\mc(K_n)$, by substituting $1 - x$ for $x$.

\begin{definition}[Definition 1.1 of \citet{Deza1992}]
    Let $p,q \geq 1$, $r \geq 0$ with $p - q \geq 2r + 1$ and let $S,T \subset \Z_n$ with $S \cap T = \emptyset$, $|S| = q$ and $|T|=p$. 
    Let $v\colon \Z_p \to T$ bijective. 
    The \emph{anti-web} with respect to $r$ and $v$ is defined as $AW := \{\{v_i,v_{i+\ell}\} \mid i\in \Z_p, \ell \in \{1,\dots,r\} \}$, the \emph{web} with respect to $r$ and $v$ is defined as $W := \binom{T}{2} \setminus AW$. 
    The \emph{clique-web inequality} with respect to $r$, $S$, $T$ and $W$ is defined as 
    \begin{align}\label{eq:clique-web-inequality}
        \sum_{e \in W} x_e 
            + \sum_{v,w \in S, v \neq w} x_{vw} 
            - \sum_{v \in S, w \in T} x_{vw} 
        \leq \frac{(p-q)(p-q-2r-1)}{2} \enspace.
    \end{align}
\end{definition}

\begin{lemma}[Proposition 1.7 of \citet{Deza1992}]\label{lem:clique-web-valid}
    The clique-web inequality \eqref{eq:clique-web-inequality} is valid for the multicut polytope of the complete graph $\mc(K_n)$.
\end{lemma}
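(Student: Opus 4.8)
The plan is to verify the inequality directly on the vertices of $\mc(K_n)$, which by definition are the characteristic vectors $\1_M$ of multicuts and thus correspond to partitions of $\Z_n$. Every variable occurring in \eqref{eq:clique-web-inequality} is indexed by a pair inside $S \cup T$, and for a multicut $x_{vw} = 1$ exactly when $v$ and $w$ lie in distinct blocks; hence the left-hand side depends only on the restriction to $S \cup T$ of the partition $\pi$ inducing $M$. I would therefore fix such a partition, write its blocks as $B$, and set $a_B = |B \cap S|$, $A_B = B \cap T$ and $t_B = |A_B|$, so that $\sum_B a_B = q$ and $\sum_B t_B = p$.

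Next I would rewrite each of the three sums through its within-block complement, using $x_e = 1 - [\text{both endpoints of } e \text{ lie in one block}]$. Writing $w(A)$ for the number of \emph{web} edges with both endpoints in $A \subseteq T$, and using $|AW| = pr$ (here $p > 2r$ is guaranteed by the hypotheses) together with $|W| = \binom{p}{2} - pr$, a short computation shows that all constant terms collapse and that \eqref{eq:clique-web-inequality} is equivalent to
\[
  \sum_B \Bigl( a_B\, t_B - \tbinom{a_B}{2} - w(A_B) \Bigr) \;\le\; (r+1)\,q .
\]
Since $\sum_B a_B = q$, it suffices to establish the single-block estimate $a\,t - \binom{a}{2} - w(A) \le (r+1)\,a$ for every block, with $a = a_B$, $t = t_B$, $A = A_B$; summing over the pairwise disjoint blocks then yields the displayed bound.

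The heart of the argument is a lower bound on $w(A)$, equivalently an upper bound on the number of \emph{anti-web} edges inside $A$. The anti-web $AW$ is exactly the circulant graph on $\Z_p$ joining nodes at cyclic distance at most $r$, and since $2\,(\text{anti-web edges inside }A) = 2rt - e(A, A^c)$, maximizing the internal count is the same as minimizing the anti-web edge boundary $e(A,A^c)$. This is an edge-isoperimetric problem on a circulant graph whose optimum is attained by an \emph{arc} of $t$ consecutive nodes; I would prove this by a standard compression/exchange argument, shifting a node of $A$ toward the cluster without decreasing the internal count. Counting the anti-web edges of such an arc — and, when $t$ is large, the additional pairs created by wrap-around through the gap $A^c$ — then yields the exact minimum of $w(A)$ in closed form as a difference of binomial coefficients.

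Finally I would substitute this minimum into the single-block estimate. In the generic range the expression factors as $-\binom{a - (t-r) + 1}{2}$, which is $\le 0$ because a product of two consecutive integers is nonnegative. The delicate case is a block containing almost all of $T$, where the wrap-around term appears: here I would invoke the hypotheses $q \ge 1$ and $p - q \ge 2r+1$, which force $a \le q \le p - 2r - 1$, and close the estimate using the monotonicity $\binom{m}{2} \ge \binom{r+1}{2}$ valid for every integer $m \le -r$. I expect this last case — reconciling the wrap-around contribution with the parameter constraints — together with the isoperimetric arc lemma to be the main obstacle; the remaining manipulations are elementary.
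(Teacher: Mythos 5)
Your proposal is essentially correct, but it cannot coincide with ``the paper's own proof'' for a simple reason: this paper does not prove \Cref{lem:clique-web-valid} at all --- it is stated as a known result, citing Proposition~1.7 of \citet{Deza1992}, and no delayed proof appears in \Cref{sec:proofs}. So the relevant comparison is with the cited original, which establishes validity by polyhedral techniques built around \emph{pure} clique-web inequalities and collapsing/node-identification operations, not by direct verification at the vertices. Your route is a genuinely different, self-contained and elementary one, and I have checked that its algebra works: since every vertex of $\mc(K_n)$ is the characteristic vector of an arbitrary partition, the complementation $x_e = 1 - [\text{same block}]$ together with $|AW| = pr$ (valid since $p \geq q + 2r + 1 > 2r+1$) does collapse all constants, reducing \eqref{eq:clique-web-inequality} exactly to $\sum_B ( a_B t_B - \binom{a_B}{2} - w(A_B) ) \leq (r+1)q$; the per-block estimate then suffices because blocks with $a_B = 0$ contribute $-w(A_B) \leq 0$. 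In the generic range the minimum of $w$ over $t$-subsets is the clean quantity $\binom{t-r}{2}$, and the per-block slack indeed factors as $-\binom{a - (t-r) + 1}{2} \leq 0$, a negated product of consecutive integers. In the wrap-around regime with gap $g = p - t < r$, the correction to the arc count is $\binom{r-g+1}{2}$, and the per-block slack becomes $-\binom{m}{2} + \binom{r-g+1}{2}$ with $m = a - p + g + r + 1$; the hypothesis $a \leq q \leq p - 2r - 1$ gives $m \leq -(r-g)$, and monotonicity of $\binom{m}{2}$ closes the case. What your approach buys is transparency: it exposes that the right-hand side $\frac{(p-q)(p-q-2r-1)}{2}$ is exactly the slack of the single-block-containing-everything configuration, i.e.\ validity is tight precisely where the parameter constraint $p - q \geq 2r+1$ bites.

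Two points must be tightened in a full write-up. First, your sketch states the delicate case only for ``a block containing almost all of $T$'' with the bound $\binom{m}{2} \geq \binom{r+1}{2}$ for $m \leq -r$; that is the $g = 0$ instance, and you need the uniform version $\binom{m}{2} \geq \binom{r-g+1}{2}$ for $m \leq -(r-g)$, for every gap size $0 \leq g < r$ --- same argument, but it must be carried out for all $g$. Second, the edge-isoperimetric lemma (arcs maximize the number of internal anti-web edges in the circulant $C_p^r$) is true but is the one step you cannot wave through: compression arguments are cleaner on paths than on cycles, and on a cycle one must handle the wrap-around configurations explicitly (your own case analysis shows they are exactly where the extremal sets change character). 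For $t > p - r$ the optimality of arcs admits a short direct proof via the complement (the gap of size $g \leq r$ spans a clique of the anti-web, so minimizing the edges incident to the gap is immediate); for $t \leq p - r$ you should either give the compression in detail or cite a known edge-isoperimetric result for powers of cycles.
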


\begin{theorem}[Theorem 1.11 of \citet{Deza1992} and Proposition 3 of \citet{Sorensen2002}]\label{thm:clique-web-facet}
    For any integers $p, q \geq 1$ and $r \geq 0$ with $p + q \leq n$, $p - q \geq 2r + 1$, $q \geq \frac{p - 1}{2} - r$ if $r \geq 1$, and $q \geq 2$ if $p - q = 2r + 1$, the clique-web inequality \eqref{eq:clique-web-inequality} defines a facet of $\mc(K_n)$.
\end{theorem}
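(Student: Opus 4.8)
The plan is to prove facet-definingness by the indirect (uniqueness-of-equation) method, first cutting down the ambient dimension by a zero-lifting reduction to the nodes that actually appear in the inequality, and then doing the real work on the complete graph $K_{p+q}$ induced by $S \cup T$.

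First I would record the setup. By \Cref{thm:dimension}, the polytope $\mc(K_n) = \lmc(K_n,K_n)$ is full-dimensional in $\R^{\binom{\Z_n}{2}}$, and \eqref{eq:clique-web-inequality} is valid by \Cref{lem:clique-web-valid}. Write $a \in \R^{\binom{\Z_n}{2}}$ and $b \in \R$ for the coefficient vector and right-hand side of \eqref{eq:clique-web-inequality}, so that $a_e = 1$ on the web $W$, $a_{vw} = 1$ on the edges inside $S$, $a_{vw} = -1$ on the $S$--$T$ edges, $a_e = 0$ otherwise, and $b = (p-q)(p-q-2r-1)/2$. Because $\mc(K_n)$ is full-dimensional, \eqref{eq:clique-web-inequality} defines a facet if and only if every affine equation $c^\top x = \gamma$ satisfied by all integral roots (multicuts $x \in X_n$ with $a^\top x = b$) is a scalar multiple of $a^\top x = b$. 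I would therefore fix such a pair $(c,\gamma)$ and aim to deduce $(c,\gamma) = \lambda(a,b)$ for some $\lambda$; validity then fixes the sign and the claim follows.

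Next comes the zero-lifting reduction. A node $w \in \Z_n \setminus (S \cup T)$ appears in no term of \eqref{eq:clique-web-inequality}, so isolating $w$ as a singleton or moving it between existing components of any root leaves the left-hand side of \eqref{eq:clique-web-inequality} unchanged and hence produces further roots. Taking the difference of two roots that agree except that $w$ is placed into, or removed from, a given component isolates a single coordinate $c_{wz}$ and forces $c_{wz} = 0$ for every $z$. This shows $c$ is supported on $\binom{S\cup T}{2}$, so that $c^\top x = \gamma$ restricts to an equation satisfied by all roots of the inequality viewed inside $\mc(K_{p+q})$, where $S \cup T = \Z_{p+q}$. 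It thus suffices to prove the theorem for $n = p+q$, i.e.\ to show the restricted inequality is facet-defining on the complete graph induced by $S \cup T$.

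The core argument on $K_{p+q}$ is the heart of the proof. Here I would first extract from the validity computation the combinatorial description of the tight partitions: those decompositions of $S \cup T$ for which the balance between the antiweb arcs on the cyclically arranged set $T$ and the $q$ nodes of $S$ is extremal. Using these, I would build, for each class of active edges, pairs of roots related by a single elementary move (merging or splitting exactly one pair of parts), so that the equation $c^\top(\chi(\Pi) - \chi(\Pi')) = 0$ relates only two groups of coordinates at a time. Running through such exchanges I would force $c$ to be constant on $W$, constant on the $S$-internal edges, constant on the $S$--$T$ edges with the opposite sign, and zero on the antiweb $AW$; matching the constants through a common root then yields $c = \lambda a$ and $\gamma = \lambda b$. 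The main obstacle is exactly the existence of enough such elementary-exchange roots: this is where the hypotheses $p - q \geq 2r+1$, $q \geq \tfrac{p-1}{2} - r$ (for $r \geq 1$), and $q \geq 2$ (in the boundary case $p-q = 2r+1$) are indispensable, since they are precisely the conditions preventing the tight set from collapsing into a lower-dimensional face (equivalently, preventing \eqref{eq:clique-web-inequality} from being dominated by, or decomposing into, web, $[S,T]$- and triangle facets). I expect the delicate case $p - q = 2r+1$ with small $q$ to require the most careful, case-specific construction of roots, and this is the step I would treat in full detail while leaving the generic exchanges to routine verification.
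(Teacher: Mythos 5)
Note first that this theorem is one the paper does not prove: it is imported verbatim from the literature (Theorem~1.11 of \citet{Deza1992}, with the corrected hypotheses from Proposition~3 of \citet{Sorensen2002}), and there is no proof of \Cref{thm:clique-web-facet} in the appendix. So there is no in-paper proof to compare against; your proposal must stand on its own, and measured that way it is a plan rather than a proof. The outer scaffolding is sound and matches the architecture used in the literature: full-dimensionality of $\mc(K_n)$ (via \Cref{thm:dimension}), validity (via \Cref{lem:clique-web-valid}), the indirect uniqueness-of-equation method, and a zero-lifting reduction to $K_{p+q}$ (node-lifting of facets is indeed known to hold unconditionally for the clique partitioning / complete-graph multicut polytope, since every partition of the node set is feasible and the coefficients on edges at an outside node vanish, so tightness is preserved under all placements of that node). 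But the entire mathematical content of the theorem sits in the paragraph you defer: the explicit description of the tight multicuts of the clique-web inequality and the exchange arguments that force $c$ to be constant on $W$, on $\binom{S}{2}$, on the $S$--$T$ edges (with opposite sign), and zero on the antiweb $AW$. That this part cannot be waved through as ``routine verification'' is demonstrated by the history of the result itself: the original argument of \citet{Deza1992} was incomplete in exactly these case analyses, which is why \citet{Sorensen2002} is cited alongside it, and the hypotheses $q \geq \frac{p-1}{2}-r$ (for $r \geq 1$) and $q \geq 2$ (when $p-q = 2r+1$) enter only through delicate root constructions you never exhibit. A proof that does not produce these roots has not used the hypotheses and hence cannot be correct as stated.

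There is also a concrete technical slip in the reduction step: the difference of two roots that agree except that an outside node $w$ is moved into a component $U$ does \emph{not} isolate a single coordinate $c_{wz}$; it yields only $\sum_{z \in U} c_{wz} = 0$, since all edges from $w$ into $U$ change simultaneously. To isolate individual coordinates you need either roots in which the relevant component is a singleton, or a four-partition combination of the form $-x^1 - x^2 + x^3 + x^4$ as in the paper's proofs of \Cref{thm:dimension} and \Cref{thm:cycle-facets}. This is fixable, but as written the claimed deduction $c_{wz}=0$ for all $z$ does not follow, and for $z \in S \cup T$ it again requires knowing enough about the root structure of \eqref{eq:clique-web-inequality} --- which loops back to the unproven core. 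In short: correct strategy, standard and legitimate reduction, but the heart of the argument (where all four hypotheses are consumed) is missing, so the proposal does not establish the theorem.
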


\begin{proposition}\label{prop:clique-web-not-facet}
	For any integer $n \geq 3$ and the cycle $C=(\Z_n,E)$, the clique-web inequality \eqref{eq:clique-web-inequality} is facet-defining for $\lmc(C)$ if and only if it coincides with a path inequality \eqref{eq:path} of a path of length $2$, i.e.~$p=2$, $q=1$, $r=0$, $S=\{u\}$ and $T=\{u-1,u+1\}$ for some $u \in \Z_n$.
\end{proposition}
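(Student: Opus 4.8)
The plan is to prove the two directions separately, exploiting that $\lmc(C)$ is full-dimensional (\Cref{thm:dimension}): a valid inequality is facet-defining if and only if the characteristic vectors of lifted multicuts that satisfy it with equality affinely span a hyperplane, equivalently, every equation $c^\top x = d$ holding on all these roots is a scalar multiple of the clique-web equation.

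For the \emph{if} direction, I would simply substitute the stated parameters into \eqref{eq:clique-web-inequality}. With $r=0$ we have $AW=\emptyset$ and hence $W=\binom{T}{2}=\{\{u-1,u+1\}\}$, while $\binom{S}{2}=\emptyset$ and the $S$--$T$ pairs are $\{u,u-1\}$ and $\{u,u+1\}$; the right-hand side is $\tfrac{(p-q)(p-q-2r-1)}{2}=\tfrac{1\cdot 0}{2}=0$. Thus the clique-web inequality reduces to $x_{u-1,u+1}\le x_{u-1,u}+x_{u,u+1}$, which is exactly the path inequality \eqref{eq:path} for the length-$2$ path $u-1,u,u+1$. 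By \Cref{cor:path-length-2-facet} this defines a facet of $\lmc(C)$, establishing sufficiency.

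For the \emph{only if} direction, assume the clique-web inequality $a^\top x\le b$ is facet-defining for $\lmc(C)$; validity is inherited from $\mc(K_n)$ via \Cref{lem:clique-web-valid} and $\lmc(C)\subseteq\mc(K_n)$ (\Cref{prop:inclusion-property}). The support of $a$ is $E_a=W\cup\binom{S}{2}\cup(S\times T)$, and I would use the basic principle that for every edge $e\notin E_a$ there must exist roots $\chi(W')$ with $x_e=0$ and with $x_e=1$, since otherwise $x_e=\mathrm{const}$ is an equation valid on the whole face that is not a multiple of $a^\top x=b$. The strategy is then to characterize the lifted-multicut roots explicitly — on $\lmc(C)$ a root corresponds to a decomposition of the cycle into arcs — and to distinguish cases: (i) if $r\ge 1$, the antiweb edges lie in $\binom{T}{2}\setminus E_a$, and I expect to show that the cyclic web ordering of $T$ cannot be realized consistently with the arc structure of $C$ on the tight decompositions, forcing an extra equation; (ii) if $r=0$ (the $[S,T]$ case) I would pass to the clique-partitioning form $\sum_{v\in S,w\in T}y_{vw}-\sum_{\mathrm{within}\ S\ \mathrm{or}\ T}y_{vw}\le q$ with $y=1-x$, and argue that whenever $q=|S|\ge 2$, or $p=|T|\ge 3$, or $S\cup T\neq\Z_n$, the arc decompositions attaining equality all satisfy a further linear relation among the $x_e$, so that the face has affine dimension strictly below $\binom{n}{2}-1$.

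I expect the main obstacle to be step (ii): the explicit enumeration of the tight arc decompositions together with the bookkeeping needed to exhibit the additional valid equation (or, equivalently, to bound the affine dimension of the roots). The cleanest route is probably to fix an edge of $C$ whose status is forced by tightness, thereby reducing the analysis to path decompositions as in \Cref{sec:lmp-paths}, and to invoke \Cref{prop:cut-not-facet} to rule out the remaining degenerate configurations, leaving only the parameters $p=2$, $q=1$, $r=0$ with $S=\{u\}$ and $T=\{u-1,u+1\}$.
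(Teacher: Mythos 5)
Your \emph{if} direction is correct and is exactly the paper's: substituting $p=2$, $q=1$, $r=0$, $S=\{u\}$, $T=\{u-1,u+1\}$ reduces \eqref{eq:clique-web-inequality} to $x_{u-1,u+1} \leq x_{u-1,u} + x_{u,u+1}$, the path inequality \eqref{eq:path} for a length-$2$ path, which is facet-defining by \Cref{cor:path-length-2-facet}. The \emph{only if} direction, however, has a genuine gap: you never exhibit the additional equation that all tight points satisfy; in each branch of your parameter split you only assert that you ``expect to show'' such a relation exists. The paper's proof supplies precisely this missing ingredient, and it does so uniformly, with no case split on $r \geq 1$ versus $r = 0$ and no enumeration of tight arc decompositions. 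Namely, assuming the clique-web inequality does not coincide with a length-$2$ path inequality, one shows that every $x \in X_n$ tight for \eqref{eq:clique-web-inequality} also satisfies $x_{u-1,u+1} = x_{u-1,u} + x_{u,u+1}$ for $u \in S$. The mechanism: if this fails, either $x_{u-1,u+1}=0$ with exactly one of $x_{u-1,u}, x_{u,u+1}$ cut --- ruled out because the cut inequalities \eqref{eq:cut} then force all remaining cycle edges to be uncut, violating the cycle inequality \eqref{eq:cycle} --- or $x_{u-1,u}=x_{u,u+1}=1$, in which case $\{\{u-1,u\},\{u,u+1\}\}$ is a $uw$-cut for \emph{every} $w \in \Z_n \setminus \{u\}$, so the cut inequalities force $x_{uw}=1$ for all $w$, hence $\sum_{w\in T} x_{uw} = p$; a counting argument (direct for $q=1$; for $q \geq 2$ using validity of the clique-web inequality for $\mathring{S} = S \setminus \{u\}$, \Cref{lem:clique-web-valid}) then shows the clique-web inequality is strict, a contradiction. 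This tightness-transfer step is the heart of the proof and is absent from your plan.

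Two further concrete problems with your case analysis. First, your criterion in step (ii) --- non-facetness whenever ``$q \geq 2$, or $p \geq 3$, or $S \cup T \neq \Z_n$'' --- is false as stated: for $p=2$, $q=1$, $r=0$, $T=\{u-1,u+1\}$ and $n \geq 4$ one has $S \cup T \neq \Z_n$, yet this inequality \emph{is} facet-defining. What must be excluded after fixing $p=2$, $q=1$, $r=0$ is misalignment of $T$ with the cycle-neighbors of $u$, not $S \cup T \neq \Z_n$; the paper's argument handles the misaligned triangle automatically, since then the derived equation $x_{u-1,u+1} = x_{u-1,u} + x_{u,u+1}$ is genuinely different from the clique-web equation. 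Second, \Cref{prop:cut-not-facet} is not a usable tool here: it concerns the cut inequalities \eqref{eq:cut} and is proved by violating Condition~C\ref{cond:cut-4} of \Cref{thm:cut-facets}; no clique-web inequality is a cut inequality, and your proposed reduction (``fix an edge of $C$ whose status is forced by tightness'' and pass to path decompositions as in \Cref{sec:lmp-paths}) is not available in general --- on the facet of the aligned triangle inequality, for instance, no edge of $C$ has a constant value over the tight points, as it cannot on any facet.
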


\proofref{prop:clique-web-not-facet}

The clique-web inequalities \eqref{eq:clique-web-inequality} generalize several other classes of facet-defining inequalities of $\mc(K_n)$:
For $q=2$, $p=1$ and $r=0$, the clique-web inequality \eqref{eq:clique-web-inequality} is a triangle inequality from \citet{Groetschel1990}. 
For $r \geq 0$, $p=2r+3$ and $q=1$ (respectively $q=2$), the clique-web inequality \eqref{eq:clique-web-inequality} coincides with a \emph{wheel inequality} (respectively \emph{bicycle wheel inequality}) from \citet{Chopra1993} (cf. Section 1.3 of \citet{Deza1992}). 
For $r=0$, the clique-web inequality \eqref{eq:clique-web-inequality} is a \emph{$[S,T]$-inequality} (also called $2$-partition inequality) from \citet{Groetschel1990}. 
By \Cref{prop:clique-web-not-facet}, all those inequalities are not facet-defining for the polytope $\lmc(C)$, unless they coincide with a path inequality \eqref{eq:path} for a path of length $2$.

\begin{definition}[Section 5 of \citet{Groetschel1990}]
    Let $5 \leq k \leq n$ and let $v \colon \Z_k \to \Z_n$ be injective.
    The \emph{2-chorded cycle inequality} with respect to $v_0,\dots,v_{k-1}$ is defined as
    \begin{align}\label{eq:2-chorded-cycle}
        \sum_{i \in \Z_k} \left(x_{v_iv_{i+2}} - x_{v_iv_{i+1}}\right)
            \leq \left\lfloor \frac{k}{2} \right\rfloor \enspace.
    \end{align}
\end{definition}

\begin{theorem}[Theorem 5.1 of \citet{Groetschel1990}]\label{thm:2-chorded-valid-and-facet}
    The 2-chorded cycle inequality \eqref{eq:2-chorded-cycle} is valid for $\mc(K_n)$. It defines a facet of $\mc(K_n)$ if and only if $k$ is odd.
\end{theorem}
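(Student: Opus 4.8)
The plan is to treat validity and the facet characterization separately, working throughout with the partition interpretation in which $x_e = 1$ means the endpoints of $e$ lie in distinct components. For the sequence $v_0, \dots, v_{k-1}$ I abbreviate $a_i = 1 - x_{v_iv_{i+1}}$ for the cycle edges and $b_i = 1 - x_{v_iv_{i+2}}$ for the $2$-chords, so that $x_{v_iv_{i+2}} - x_{v_iv_{i+1}} = a_i - b_i$ and \eqref{eq:2-chorded-cycle} reads $S := \sum_{i \in \Z_k}(a_i - b_i) \le \lfloor k/2 \rfloor$, with $a_i, b_i \in \{0,1\}$ recording whether the respective pair lies in a common component.

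For validity, fix a multicut and set $m = \sum_i a_i$, the number of uncut cycle edges. If $m = k$, all cycle nodes share one component and $S = 0$. Otherwise the uncut cycle edges decompose into $t$ maximal cyclic runs; a run spanning $\ell$ consecutive edges joins $\ell + 1$ consecutive nodes into one component, so by transitivity it forces at least $\ell - 1$ of the corresponding $2$-chords to be uncut. Summing over runs yields $\sum_i b_i \ge m - t$, hence $S \le t$. In a cyclic $0/1$ sequence with $m$ ones and $k - m > 0$ zeros the number of runs of ones satisfies $t \le k - m$, while $S \le m$ holds trivially because $b_i \ge 0$. Therefore $S \le \min(m, k-m) \le \lfloor k/2 \rfloor$, and integrality of $S$ gives the bound.

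For even $k$ I would show the inequality is \emph{not} facet-defining by writing it as a nonnegative combination of other valid inequalities that pushes every tight point onto a proper subface. Summing the $k/2$ triangle inequalities $a_i + a_{i+1} - b_i \le 1$ over even indices $i$ gives $\sum_i a_i - \sum_{i \text{ even}} b_i \le k/2$; adding the bounds $-b_i \le 0$ (that is, $x_{v_iv_{i+2}} \le 1$) for odd $i$ reproduces exactly $\sum_i a_i - \sum_i b_i \le k/2$. Since this expresses \eqref{eq:2-chorded-cycle} as a sum of valid inequalities, any integral point tight for \eqref{eq:2-chorded-cycle} must be tight for each summand, in particular $x_{v_1 v_3} = 1$; as $x_{v_1v_3} \le 1$ is not a scalar multiple of \eqref{eq:2-chorded-cycle}, the induced face is contained in a strictly smaller face and is not a facet.

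The substantial part is the facet direction for odd $k$, which I would settle by the standard dimension argument: because $\mc(K_n)$ is full dimensional, it suffices to show that every valid inequality $c^\top x \le d$ satisfied with equality by all integral points of the face is a scalar multiple of \eqref{eq:2-chorded-cycle}. The raw material is the set of tight multicuts: for odd $k$ the value $(k-1)/2$ is attained precisely when the uncut cycle edges form a maximum matching of the $k$-cycle with all $2$-chords cut (the case $m = (k-1)/2$), together with the closely related configurations having one length-$2$ run (the case $m = (k+1)/2$), and the remaining $n-k$ nodes placed freely. I would fix one such base multicut and then apply elementary tightness-preserving moves — sliding a matched pair by one position along the cycle, toggling a singleton, and relocating an off-cycle node into or out of a block — comparing $c$ on pairs of tight multicuts that differ by a single move. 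Each comparison pins down a coordinate relation, and carried out systematically these force $c$ to vanish off the support of \eqref{eq:2-chorded-cycle}, to be constant on the cycle edges and constant on the $2$-chords, with the two constants negatives of one another. The main obstacle is purely combinatorial bookkeeping: one must choose a family of tight multicuts rich enough to determine all $\binom{n}{2}$ coordinates of $c$, and here the odd parity of the cycle is essential — it is exactly the extra freedom to realize the matching value in several non-aligned ways that is absent for even $k$, matching the failure exhibited in the previous paragraph.
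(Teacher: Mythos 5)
Your validity proof is correct, and it is a genuinely different route from the paper's: the paper reproduces the Chv\'atal--Gomory derivation of \citet{Groetschel1990} (sum the $k$ triangle inequalities $x_{v_{i-1}v_{i+1}} \leq x_{v_{i-1}v_i} + x_{v_iv_{i+1}}$ and the $k$ box inequalities $x_{v_{i-1}v_{i+1}} \leq 1$, halve, round down), whereas you bound $S \leq \min(m, k-m)$ directly by counting maximal runs of uncut cycle edges. Your even-$k$ argument is also correct and complete; it is in substance the observation that for even $k$ the CG derivation requires no rounding, so \eqref{eq:2-chorded-cycle} is a nonnegative combination of triangle and box inequalities and every tight point satisfies the additional equation $x_{v_1v_3} = 1$. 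Be aware that the paper itself proves only validity and imports the facet characterization from \citet{Groetschel1990}, so for the facet direction there is no in-paper proof to compare against.

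The odd-$k$ facet direction, however, has a genuine gap, and it is not merely that the argument is a sketch: your characterization of the tight points is wrong, and taken literally it contradicts the theorem you are proving. Tight multicuts are \emph{not} exhausted by matchings of the $k$-cycle (blocks of size at most two among the cycle nodes) with the remaining $n-k$ nodes placed freely. For example, with $k = 7$ the partition into blocks $\{v_0, v_1, v_4\}$, $\{v_2, v_3\}$, $\{v_5, v_6\}$ has $m = 3$ uncut cycle edges and every $2$-chord cut (no block contains two nodes at cyclic distance $2$), hence attains equality; yet it contains cycle nodes at cyclic distance $3$ in one block. Under your ``precisely'' characterization, off-cycle nodes placed into blocks never merge two blocks, so every tight point would satisfy $x_{v_0v_4} = 1$; the face would then lie inside the hyperplane $x_{v_0v_4} = 1$ and could not be a facet of $\mc(K_n)$ --- the same mechanism you correctly exploit in the even case would kill the odd case too.

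This matters operationally, not just cosmetically. To force $c_{v_iv_j} = 0$ for pairs of cycle nodes outside the support of \eqref{eq:2-chorded-cycle}, you need pairs of tight multicuts differing exactly in whether $v_i$ and $v_j$ share a block, and none of your listed moves (sliding a matched pair, toggling a singleton, relocating an off-cycle node) ever produces a block containing two cycle nodes at cyclic distance at least $3$. So the family you propose is provably too poor to determine all $\binom{n}{2}$ coordinates of $c$, and the ``combinatorial bookkeeping'' cannot succeed as planned. The fix is to enlarge the move set, e.g.\ by fusing a singleton or matched pair $\{v_j, v_{j+1}\}$ into a distant block whenever no pair at cyclic distance $2$ becomes internal (as in the $k=7$ example above), and then to carry out the coordinate-pinning argument over this richer family; as it stands, the facet direction is unproven.
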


\proofref{thm:2-chorded-valid-and-facet}

\begin{proposition}\label{prop:2-chorded-not-facet}
	For any integers $5 \leq k \leq n$,
	the cycle $C=(\Z_n,E)$ 
	and any $v \colon \Z_k \to \Z_n$ injective,
    the 2-chorded cycle inequality \eqref{eq:2-chorded-cycle} can only be facet-defining for $\lmc(C)$ if the following necessary conditions are satisfied:
    \begin{enumerate}[(i)]
        \item \label{item:2-chorded-not-true-not-facet}
        $v$ is true to $C$
        \item \label{item:2-chorded-geq-k-not-facet}
        $k = 5$.
    \end{enumerate}
\end{proposition}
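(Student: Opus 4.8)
The plan is to exploit the full-dimensionality of $\lmc(C)$ (\Cref{thm:dimension}) together with the standard criterion that a valid inequality fails to be facet-defining exactly when every integral point of $\lmc(C)$ attaining equality also satisfies some further equation $c^\top x = d$ whose coefficient vector $c$ is not a scalar multiple of the inequality's. Write the $2$-chorded cycle inequality as $a^\top x \le \lfloor k/2 \rfloor$, where $a_{v_iv_{i+2}} = 1$, $a_{v_iv_{i+1}} = -1$ and $a_e = 0$ otherwise; its validity for $\lmc(C) \subseteq \mc(K_n)$ follows from \Cref{prop:inclusion-property} and \Cref{thm:2-chorded-valid-and-facet}. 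The vertices of $\lmc(C)$ are the vectors $\chi(W)$, and for any pair $pq$ one has $\chi(W)_{pq} = 1$ iff $W$ meets both arcs of $C$ between $p$ and $q$. First I would establish, for $v$ true to $C$, the clean formula: letting $A_i = [v_i,v_{i+1}[$ be the arcs cut out by the marked nodes and $S = \{ i \in \Z_k \mid W \cap A_i \neq \emptyset \}$ the set of active gaps, a short computation gives $a^\top \chi(W) = |S| - \mathrm{adj}(S)$, where $\mathrm{adj}(S)$ counts the cyclically adjacent pairs in $S$, so that $a^\top \chi(W)$ equals the number of maximal cyclic runs of $S$. Since a cyclic subset of $\Z_k$ has at most $\lfloor k/2\rfloor$ runs, this reproves validity and shows that $\chi(W)$ is tight iff $S$ has exactly $\lfloor k/2 \rfloor$ runs.

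For condition \ref{item:2-chorded-not-true-not-facet}, I would argue that if $v$ is not true to $C$ then the face is empty and hence not a facet. The maximizers in $\mc(K_n)$ arise by pairing the consecutive marked nodes $v_{2j}, v_{2j+1}$ into common blocks; when $v$ is true to $C$ these pairs are arcs and the pairing is a lifted multicut, but a cyclic inversion in $v$ turns at least one such pair into a non-arc, so the $\mc(K_n)$-maximizer does not lie in $\lmc(C)$. To upgrade this to $\max_{x \in \lmc(C)} a^\top x < \lfloor k/2 \rfloor$, I would isolate an index $i$ at which the cyclic order of $v_i, v_{i+1}, v_{i+2}$ is reversed and show, using the path and cut inequalities \eqref{eq:path} and \eqref{eq:cut}, that every lifted multicut satisfies $\chi(W)_{v_iv_{i+2}} \le \chi(W)_{v_iv_{i+1}}$ at this index; this deletes a $+1$ contribution required by the run bound, dropping the attainable value strictly below $\lfloor k/2 \rfloor$.

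For condition \ref{item:2-chorded-geq-k-not-facet}, assume $v$ is true to $C$ and $k > 5$; now the face is nonempty and I must produce the extra equation. The run characterization constrains the tight vertices sharply: their active-gap set $S$ must realize the maximal number $\lfloor k/2\rfloor$ of runs, which for odd $k = 2m+1$ forces either an independent $S$ of size $m$ with a single length-two gap or an $S$ of size $m+1$ with a single adjacent pair, and for even $k$ forces $S$ to be a maximum alternating independent set. Using this rigidity I would exhibit a linear relation $c^\top x = d$, independent of $a$, common to all tight vertices. Concretely, I expect to show that for $k > 5$ the $2$-chorded cycle inequality is dominated by a nonnegative combination of length-two path inequalities \eqref{eq:path} and pentagon ($k=5$) $2$-chorded inequalities \eqref{eq:2-chorded-cycle} that are simultaneously tight on the face; any such domination confines the face to an additional hyperplane and so precludes it from being a facet, while the same combinations fail to exist for $k=5$.

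The main obstacle I anticipate is the last step of condition \ref{item:2-chorded-geq-k-not-facet}: pinning down the explicit certificate $c^\top x = d$ (equivalently, the dominating combination of shorter valid inequalities) and checking that it holds across \emph{all} run-patterns of tight vertices for every $k > 5$, both parities included. The empty-face argument for condition \ref{item:2-chorded-not-true-not-facet} is the second delicate point, since the pointwise bound $\chi(W)_{v_iv_{i+2}} \le \chi(W)_{v_iv_{i+1}}$ must be extracted uniformly from whatever cyclic inversion witnesses that $v$ fails to be true to $C$.
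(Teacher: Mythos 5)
Your mechanism for condition~\ref{item:2-chorded-not-true-not-facet} is wrong: the face defined by \eqref{eq:2-chorded-cycle} need \emph{not} be empty when $v$ is not true to $C$. Take $n=k=5$ and $v=\langle 0,2,1,3,4\rangle$, which is not true to $C$, and the lifted multicut $\chi(\{0,2,4\})$, i.e.\ the decomposition $\{\{0\},\{1,2\},\{3,4\}\}$ of $C$: the left-hand side of \eqref{eq:2-chorded-cycle} evaluates to $5-3=2=\lfloor 5/2\rfloor$, so the inequality is tight at a point of $\lmc(C)$. Your auxiliary pointwise bound at an inversion index fails as well: for the reversed triple $v_0,v_1,v_2=0,2,1$ the claimed $x_{v_0v_2}\le x_{v_0v_1}$ reads $x_{01}\le x_{02}$, but the decomposition with single nontrivial component $\{2,3,4,0\}$ has $x_{01}=1$ and $x_{02}=0$; in a cycle one can always join $v_i$ to $v_{i+2}$ along the arc avoiding $v_{i+1}$, so no such domination between coordinates can hold (this is precisely why \Cref{thm:intersection-arbitrary-g} requires separating node sets). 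The paper's actual argument is different: when $v$ is not true to $C$ there exist marked nodes $u,w$ with $uw\neq v_iv_{i+1}$ for all $i$ and $]u,w[\,\cap\{v_0,\dots,v_{k-1}\}=\emptyset$, and every tight point is shown to satisfy the additional intersection equality $x_{uw}+x_{u-1,w+1}=x_{u,w+1}+x_{u-1,w}$; strictness would force $x_{u-1,u}=x_{w,w+1}=1$, which via the cut inequalities \eqref{eq:cut} turns two triangle inequalities in the Chv\'atal derivation of \Cref{thm:2-chorded-valid-and-facet} into ``$\le-1$'' and caps the left-hand side at $\lfloor k/2\rfloor-1$, a contradiction.

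For condition~\ref{item:2-chorded-geq-k-not-facet} your run-count characterization of the tight vertices (given $v$ true to $C$) is essentially sound---though the identity $a^\top\chi(W)=|S|-\mathrm{adj}(S)$ fails in degenerate cases such as $|S|=1$ or $S$ a single adjacent pair, which however are never tight---but the proof then stops at ``I expect to show'' a dominating combination of shorter inequalities; no certificate $c^\top x=d$ is actually produced, and you flag this yourself as the open obstacle. That is exactly the step carrying the content, so the argument for $k>5$ is incomplete. For comparison, the paper dispatches even $k$ with no new construction: by \Cref{thm:2-chorded-valid-and-facet} the inequality is not facet-defining for $\mc(K_n)$, and since $\lmc(C)\subseteq\mc(K_n)$ (\Cref{prop:inclusion-property}) with both polytopes full-dimensional (\Cref{thm:dimension}), the face of $\lmc(C)$ lies in a face of $\mc(K_n)$ of dimension at most $\abs{E\cup F}-2$. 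For odd $k\ge 7$ it exhibits the explicit extra equations $x_{v_iv_{i+d}}=1$ with $d=\tfrac{k-1}{2}$, valid on the whole face: assuming $x_{v_iv_{i+d}}=0$, the cut inequalities \eqref{eq:cut} force one of the two arcs between $v_i$ and $v_{i+d}$ to be uncut, which zeroes out a long consecutive block of both the $+1$ and the $-1$ terms, and a short case analysis on the number of remaining cut cycle edges drops the left-hand side strictly below $\lfloor k/2\rfloor$ (here $k\ge 7$ is used). Your rigidity analysis could plausibly be completed along the same lines, e.g.\ by verifying $x_{v_iv_{i+d}}=1$ directly on the two admissible run patterns, but as written the certificate is missing.
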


\proofref{prop:2-chorded-not-facet}

For $k=5$, the 2-chorded cycle inequality coincides with a half-chorded odd cycle inequality from \Cref{sec:half-chorded}. 
By \Cref{thm:half-chorded-facet}, it is facet-defining for $\lmc(C)$ if and only if $v \colon \Z_5 \to \Z_n$ is true to $C$.
Therefore the conditions in \Cref{prop:2-chorded-not-facet} are not only necessary but also sufficient.

In the literature, several other classes of facet-defining inequalities of $\mc(K_n)$ are obtained by lifting and/or patching the facet-defining inequalities discussed above \cite{Groetschel1990composition,Chopra1995,Bandelt1999lifting,Oosten2001clique}. 
For example, the \emph{general 2-partition inequalities} from \citet{Groetschel1990composition} are obtained by patching 2-partition inequalities, the special case of the clique-web inequalities \eqref{eq:clique-web-inequality} with $r=0$. 
The proofs showing that the derived classes of inequalities are facet-defining rely on the fact that the initial classes of inequalities are facet-defining. 
We conjecture that the derived inequalities are not facet-defining for $\lmc(C)$.
This conjecture is supported by computational experiments, cf. \Cref{rem:facets-of-both-mc-and-lmc}.

\subsection{Intersection inequalities}\label{sec:intersection-ineq-cycle}

In this section, we apply the results about the intersection inequalities for arbitrary graphs from \Cref{thm:intersection-arbitrary-g} to the special case of the cycle.

\begin{corollary}\label{cor:intersection-cycle}
	For any integer $n \geq 3$, the cycle $C=(\Z_n,E)$ and any $vw \in F$, the \emph{intersection inequality} with respect to $vw$, written below, is valid and facet-defining for $\lmc(C)$.
    \begin{align}\label{eq:cycle-intersection-inequality}
        x_{v,w+1} + x_{v+1,w} \leq x_{vw} + x_{v+1,w+1}
    \end{align}
\end{corollary}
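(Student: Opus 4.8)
The plan is to deduce \Cref{cor:intersection-cycle} directly from \Cref{thm:intersection-arbitrary-g}, exploiting that $C$ is connected and that $\lmc(C) = \lmc(C, K_n)$, so that the theorem applies verbatim. The first step is to choose the two base edges and the four nodes. I would instantiate the four pairwise distinct nodes $(u, u', v, v')$ of \Cref{thm:intersection-arbitrary-g} by the cycle nodes $(v, v+1, w+1, w)$, so that the theorem's two base edges become $uu' = \{v, v+1\}$ and $vv' = \{w+1, w\}$, both of which are edges of $C$. With this choice the theorem's inequality $x_{uv} + x_{u'v'} \leq x_{uv'} + x_{vu'}$ specializes to
\[
    x_{v,w+1} + x_{v+1,w} \leq x_{v,w} + x_{v+1,w+1},
\]
which is exactly the intersection inequality \eqref{eq:cycle-intersection-inequality}.

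Next I would confirm that the four nodes $v, v+1, w, w+1$ are pairwise distinct, as required by the theorem. This is immediate from $vw \in F$: since $\{v,w\}$ is not an edge of $C$, we have $w \notin \{v-1, v, v+1\}$, which rules out every possible coincidence among the four nodes. (In particular $F \neq \emptyset$ forces $n \geq 4$, so for $n = 3$ the statement is vacuous.)

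The substantive step is to check the two separation hypotheses of \Cref{thm:intersection-arbitrary-g}, which here read: $\{v, w\}$ separates $v+1$ from $w+1$ in $C$, and $\{v+1, w+1\}$ separates $v$ from $w$ in $C$. Both follow from the fact that in a cycle there are exactly two simple paths between any two distinct nodes, namely the two arcs. For the first condition, the arc from $v+1$ to $w+1$ running through increasing indices passes through $w$, while the complementary arc passes through $v$; hence every $v+1$--$w+1$-path meets $\{v, w\}$. The second condition is symmetric: the two $vw$-arcs pass through $v+1$ and through $w+1$, respectively, so every $vw$-path meets $\{v+1, w+1\}$. Here one also invokes the distinctness from the previous step to see that these separating nodes are genuinely interior to the arcs rather than endpoints. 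With both hypotheses verified, \Cref{thm:intersection-arbitrary-g} yields validity and the facet property simultaneously.

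The only delicate point — and it is pure bookkeeping rather than a real obstacle — is getting the orientation of the identification right. The naive matching $(u,u',v,v') = (v, v+1, w, w+1)$ produces the reversed inequality $x_{v,w} + x_{v+1,w+1} \leq x_{v,w+1} + x_{v+1,w}$; one must therefore swap the roles of $w$ and $w+1$ (equivalently, reverse the orientation of the second base edge) so that the signs line up with \eqref{eq:cycle-intersection-inequality}. Once the correct matching is fixed, the cyclic arithmetic in the arc argument is routine.
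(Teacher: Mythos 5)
Your proposal is correct and matches the paper's proof, which likewise obtains the corollary as a direct instantiation of \Cref{thm:intersection-arbitrary-g} after observing that $\{v+1,w+1\}$ is a $vw$-separating node set and $\{v,w\}$ is a $(v+1)(w+1)$-separating node set in the cycle. Your additional bookkeeping --- the pairwise distinctness of $v,v+1,w,w+1$ from $vw \in F$, the two-arc argument for the separating sets, and the orientation of the identification $(u,u',v,v') = (v,v+1,w+1,w)$ --- is exactly what the paper's one-line proof leaves implicit.
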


\begin{proof}
    The claim follows from \Cref{thm:intersection-arbitrary-g}, since $\{v+1,w+1\}$ is a $vw$-separating node set and $\{v,w\}$ is a $v+1,w+1$-separating node set.
\end{proof}

\subsection{Half-chorded odd cycle inequalities}\label{sec:half-chorded}

In this section, we take three steps. 
Firstly, we introduce a class of inequalities that are valid and facet-defining for $\mc(K_n)$, the multicut polytope of the complete graph. 
Secondly, we establish conditions under which these inequalities are facet-defining for $\lmc(C)$. 
Thirdly, we obtain facet-defining inequalities for the lifted multicut polytope $\lmc(G)$ for arbitrary graphs $G$ that arise from cycles in $G$.

\begin{definition}
    Let $5 \leq k \leq n$ with $k$ odd, let $d=\frac{k-1}{2}$, and let $v \colon \Z_k \to \Z_n$ be injective. 
    The \emph{half-chorded odd cycle inequality} with respect to $v$ is defined as
    \begin{align}\label{eq:half-chorded-odd-cycle-ineq}
        \sum_{i \in \Z_k} \left( x_{v_iv_{i+d}} - x_{v_iv_{i+1}} \right) 
        \leq k - 3 \enspace.
    \end{align}
\end{definition}

Herein, the edges $v_iv_{i+1}$ for $i \in \Z_k$ form a cycle in $K_n$, while the edges $v_iv_{i+d}$ are \emph{half-chords} of that cycle, i.e.~chords that halve the cycle into two parts with $d$ and $d+1$ edges, respectively. 
For $d=2$, the half-chords are 2-chords and the half-chorded odd cycle inequality coincides with the 2-chorded cycle inequality \eqref{eq:2-chorded-cycle} from \citet{Groetschel1990}. 
Note that the edges $v_iv_{i+d}$ for $i \in \Z_k$ also form a cycle in $K_n$, and that the edges $v_iv_{i+1}$ are the 2-chords of that cycle. 
Therefore, the left hand side of \eqref{eq:half-chorded-odd-cycle-ineq} is the negative of the left hand side of \eqref{eq:2-chorded-cycle}, and together the inequalities can be written as
\begin{align}\label{eq:2-chorded-and-half-chorded}
    - d \leq 
    \sum_{i \in \Z_k} \left( x_{v_iv_{i+d}} - x_{v_iv_{i+1}} \right) 
    \leq k-3 \enspace .
\end{align}
For an illustration of the support graph of a half-chorded odd cycle inequality, see \Cref{fig:half-chorded-odd-cycle}. 

\begin{figure}
    \centering
    \input{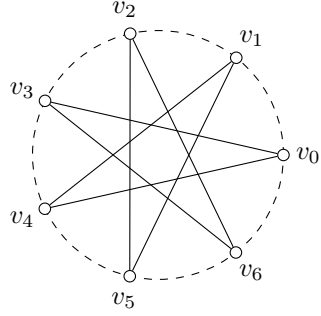}
    \caption{Depicted above is the support graph of a half-chorded odd cycle inequality \eqref{eq:half-chorded-odd-cycle-ineq} for $d=3$. 
    The edges $\{v_i,v_{i+d}\}$ corresponding to coefficients $+1$ are depicted as solid lines, the edges $\{v_i,v_{i+1}\}$ corresponding to coefficients $-1$ are depicted as dashed lines.}
    \label{fig:half-chorded-odd-cycle}
\end{figure}

\begin{proposition}\label{prop:num-half-chorded-odd-cycle}
    For any $n \in \mathbb{N}$, there are 
    \[    
        \sum_{k=5 \text{ odd}}^n \frac{n!}{(n-k)! \; 2k}
    \]
    distinct half-chorded odd cycle inequalities.
\end{proposition}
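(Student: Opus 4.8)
The plan is to count, for each admissible $k$, the injective sequences $v\colon \Z_k \to \Z_n$ and then quotient by the symmetries that leave the inequality \eqref{eq:half-chorded-odd-cycle-ineq} unchanged. For a fixed odd $k$ with $5 \le k \le n$, the number of injective maps $v\colon \Z_k \to \Z_n$ is $\frac{n!}{(n-k)!}$. I would show that two such maps produce the same inequality precisely when they are related by a dihedral relabeling of the index set $\Z_k$, so that the number of inequalities for this $k$ equals $\frac{1}{2k}\cdot\frac{n!}{(n-k)!}$. Summing over the odd $k$ with $5 \le k \le n$ then yields the claimed formula; inequalities arising from different values of $k$ are automatically distinct, since such an inequality has exactly $2k$ nonzero coefficients (each equal to $\pm 1$).

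First I would make the relevant symmetry precise. Let the dihedral group $D_k$ of order $2k$ act on injective sequences through the index, generated by the rotation $r\colon v_i \mapsto v_{i+1}$ and the reflection $s\colon v_i \mapsto v_{-i}$ (indices taken mod $k$). A direct check shows that both generators map the set of ``$-1$ edges'' $\{\{v_i,v_{i+1}\}\mid i \in \Z_k\}$ to itself and the set of ``$+1$ edges'' $\{\{v_i,v_{i+d}\}\mid i \in \Z_k\}$ to itself; hence the entire coefficient vector of \eqref{eq:half-chorded-odd-cycle-ineq}, and thus the inequality, is invariant under $D_k$. Moreover, because $v$ is injective, any index symmetry fixing $v$ must be the identity, so the $D_k$-action on injective sequences is free and every orbit has exactly $2k$ elements.

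The main step, and the one requiring the most care, is the converse: two injective sequences that give the same inequality must lie in a common $D_k$-orbit. Here I would first observe that the $+1$ and $-1$ edge sets are disjoint, since an equality $\{v_i,v_{i+1}\} = \{v_j,v_{j+d}\}$ would force $d \equiv \pm 1 \pmod k$, which is impossible for $d = \frac{k-1}{2}$ and $k \ge 5$. Consequently the coefficient vector unambiguously recovers the set of $-1$ edges, and these edges form a Hamiltonian cycle on the $k$ nodes $\{v_0,\dots,v_{k-1}\}$, traversed in the index order $v_0,v_1,\dots,v_{k-1}$. Since the automorphism group of the cycle graph on $k$ vertices is exactly $D_k$, the cyclic sequence $v$ is determined by this edge set up to rotation and reflection, i.e.\ up to $D_k$. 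This closes the loop: the assignment sending a $D_k$-orbit to its inequality is a bijection onto the set of half-chorded odd cycle inequalities with parameter $k$.

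Combining these facts, for each fixed odd $k$ the number of distinct inequalities is $\frac{1}{2k}\cdot\frac{n!}{(n-k)!}$, and summing over all odd $k$ with $5 \le k \le n$ gives the stated total. I expect the disjointness of the two edge sets together with the identification $\operatorname{Aut}(C_k) = D_k$ to be the crux; the factorial count of injective maps and the freeness of the action are then routine.
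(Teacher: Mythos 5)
Your proposal is correct and takes essentially the same approach as the paper: for each odd $k$ with $5 \leq k \leq n$ it counts the $\frac{n!}{(n-k)!}$ injective sequences $v \colon \Z_k \to \Z_n$ and divides by the $2k$ dihedral relabelings that yield the same inequality, which is exactly the paper's computation $\binom{n}{k}\frac{k!}{2k} = \frac{n!}{(n-k)!\,2k}$. The paper simply asserts that ``$2k$ permutations represent the same cycle,'' whereas you additionally verify what this assertion tacitly uses — the $D_k$-invariance of the coefficient vector, freeness of the action on injective sequences, the converse via disjointness of the $+1$ and $-1$ edge sets together with $\operatorname{Aut}(C_k) = D_k$, and distinctness across different $k$ via the support size $2k$ — all of which is sound.
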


\proofref{prop:num-half-chorded-odd-cycle}

We show that the half-chorded odd cycle inequalities are Chv\'atal inequalities for $\mc(K_n)$ and, thus, are valid for $\mc(K_n)$. We refer to \citet{conforti2014integer} for a thorough introduction to Chv\'atal inequalities. 

\begin{lemma}\label{lem:half-chorded-valid}
    The half-chorded odd cycle inequalities \eqref{eq:half-chorded-odd-cycle-ineq} are Chv\'atal inequalities for the multicut polytope $\mc(K_n)$ of the complete graph. 
    In particular, they are valid for $\mc(K_n)$.
\end{lemma}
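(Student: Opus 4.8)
The plan is to exhibit \eqref{eq:half-chorded-odd-cycle-ineq} as a rank-one Chv\'atal--Gomory cut of the system of triangle and upper-bound inequalities whose integer hull is $\mc(K_n)$. Throughout I abbreviate the half-chords by $h_i = x_{v_iv_{i+d}}$ and the cycle edges by $c_i = x_{v_iv_{i+1}}$, so that the claimed inequality reads $\sum_{i\in\Z_k}(h_i - c_i) \le k-3$.

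The first step is to bound each half-chord by its \emph{shorter} arc. For every $i \in \Z_k$ the edge $v_iv_{i+d}$ together with the path $v_i, v_{i+1}, \dots, v_{i+d}$ forms a cycle of length $d+1$ in $K_n$, and the associated inequality
$$ x_{v_iv_{i+d}} \le \sum_{\ell=0}^{d-1} x_{v_{i+\ell}v_{i+\ell+1}} $$
is valid for $\mc(K_n)$; indeed it is the sum of the $d-1$ triangle inequalities obtained by triangulating the arc, the intermediate chords $x_{v_{i+\ell}v_{i+d}}$ cancelling telescopically. Summing these $k$ inequalities over $i$ and observing that each cycle edge $c_m$ lies on exactly $d$ of the shorter arcs (those indexed by $i \in \{m-d+1,\dots,m\}$) yields $\sum_i h_i \le d\sum_i c_i$, a valid inequality for $\mc(K_n)$ whose left-hand side is a nonnegative combination of triangle inequalities.

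The second step is the Chv\'atal rounding. I take the inequality $\sum_i h_i - d\sum_i c_i \le 0$ with multiplier $1/d$ and add the upper bounds $h_i \le 1$, each with multiplier $1 - 1/d \ge 0$ (using $d \ge 2$, which holds since $k \ge 5$). The coefficient of each $h_i$ becomes $1/d + (1-1/d) = 1$, the coefficient of each $c_m$ becomes $-1$, and the right-hand side becomes $(1-1/d)k = k - k/d$, so the combination reads $\sum_i (h_i - c_i) \le k - k/d$ with integral left-hand side. Since $d = (k-1)/2$ we have $k/d = 2 + 2/(k-1)$, and for odd $k \ge 5$ the quantity $2/(k-1)$ lies in $(0,1)$, hence $k - k/d \in (k-3, k-2)$. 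Rounding the right-hand side down therefore produces exactly $\sum_i(h_i - c_i) \le k-3$, so \eqref{eq:half-chorded-odd-cycle-ineq} is a Chv\'atal inequality and, in particular, valid for $\mc(K_n)$.

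The step I expect to require the most care is the choice of arc: one must bound each half-chord by the shorter arc (of length $d$) rather than the longer one (of length $d+1$). The longer arc would give $\sum_i h_i \le (d+1)\sum_i c_i$ and, after the analogous rounding, only the weaker value $k - k/(d+1) \in (k-2,k-1)$, i.e.\ the bound $\le k-2$. Getting the factor exactly $d$ is what places $k - k/d$ in the window $(k-3,k-2)$ and makes the floor equal to $k-3$; I would carefully verify both the counting that each cycle edge appears on precisely $d$ shorter arcs and the elementary estimate $2 < k/d < 3$.
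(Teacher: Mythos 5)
Your proof is correct and takes essentially the same route as the paper's: both aggregate the $k$ shorter-arc cycle inequalities $x_{v_iv_{i+d}} \leq \sum_{\ell=0}^{d-1} x_{v_{i+\ell}v_{i+\ell+1}}$ with multiplier $1/d$ and the $k$ box inequalities $x_{v_iv_{i+d}} \leq 1$ with multiplier $(d-1)/d$, obtaining right-hand side $k - k/d$, which rounds down to $k-3$, so the inequality is a rank-one Chv\'atal cut. The only cosmetic difference is that you further decompose each arc inequality into a telescoping sum of $d-1$ triangle inequalities, whereas the paper simply invokes these cycle inequalities as part of the canonical relaxation of $\mc(K_n)$; your extra verifications (each cycle edge lying on exactly $d$ shorter arcs, and $2 < k/d < 3$) match the paper's computation $\lfloor (k-3)k/(k-1) \rfloor = k-3$.
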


\proofref{lem:half-chorded-valid}

By \Cref{prop:inclusion-property}, the lifted multicut polytope $\lmc(C)$ is a subset of the multicut polytope $\mc(K_n)$. 
Thus, we obtain the following corollary.

\begin{corollary}[of \Cref{lem:half-chorded-valid}]\label{cor:half-chorded-valid-lmc}
    For any integer $n \geq 3$ and the cycle $C=(\Z_n,E)$, the half-chorded odd cycle inequalities \eqref{eq:half-chorded-odd-cycle-ineq} are valid for $\lmc(C)$.
\end{corollary}

In order to show that the half-chorded odd cycle inequalities are facet-defining, we establish which characteristic vectors $\chi(W)$, defined in \eqref{eq:defn-chi}, satisfy \eqref{eq:half-chorded-odd-cycle-ineq} with equality:

\begin{lemma}\label{lem:roots-of-half-chorded}
    For any integers $5 \leq k \leq n$ with $k$ odd, $d=\frac{k-1}{2}$, and for the cycle $C=(\Z_n,E)$, let $v \colon \Z_k \to \Z_n$ be injective and true to $C$. 
    Furthermore, let $W \subseteq \Z_n$ with $|W|\neq 1$, and let $x:=\chi(W)$. 
    Define $I := \bigl\{i \in \Z_k \mid W \cap [v_i,v_{i+1}[ \; \neq \emptyset \bigr\}$ the set of indices $i$ such that the path along the nodes $[v_i,v_{i+1}]$ is cut with respect to $x$. 
    Then, $x$ satisfies the half-chorded odd cycle inequality \eqref{eq:half-chorded-odd-cycle-ineq} with respect to $v$ with equality if and only if precisely one of the following conditions is satisfied:
    \begin{enumerate}[(a)]
        \item \label{cond:roots-half-chorded-a}
        $\abs{I} = 2$ and for all $j,\ell \in I$ with $j \neq \ell$, we have $j - \ell \in \{d, d+1\}$ 
        \item \label{cond:roots-half-chorded-b}
        $\abs{I} = 3$ and for all $j,\ell \in I$ with $j \neq \ell$, we have $j-\ell \in \{1,\dots,d\}$ or $\ell - j \in \{1,\dots,d\}$
    \end{enumerate}
\end{lemma}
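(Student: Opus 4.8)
The plan is to reduce the entire statement to the combinatorics of the index set $I \subseteq \Z_k$, exploiting that the entries of $x = \chi(W)$ that appear in \eqref{eq:half-chorded-odd-cycle-ineq} depend on $W$ only through $I$. First I would record the separation principle for cycles: since the components of $\Pi_W$ are arcs of $C$, two nodes lie in distinct components (so their $x$-entry is $1$) if and only if \emph{both} arcs of $C$ between them contain a cut edge $\{w,w+1\}$ with $w \in W$. As $v$ is true to $C$, the nodes $v_0,\dots,v_{k-1}$ occur in cyclic order, so the arcs $[v_i,v_{i+1}[$ partition $\Z_n$, and such an arc carries a cut edge exactly when $i \in I$. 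This yields two lookup formulas: $x_{v_iv_{i+1}} = 1$ iff $i \in I$ and $\abs{I} \geq 2$; and $x_{v_iv_{i+d}} = 1$ iff $I$ meets both index-blocks $A_i := \{i,\dots,i+d-1\}$ and its complement $B_i := \{i+d,\dots,i-1\}$ in $\Z_k$ (the coarse arcs lying on the two sides of the half-chord).

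From these formulas I would evaluate the two sums. For $\abs{I} \geq 2$ the cycle-edge sum is $\sum_i x_{v_iv_{i+1}} = \abs{I}$. Writing $Z$ for the number of indices $i$ with $x_{v_iv_{i+d}} = 0$, i.e. those $i$ for which $I \subseteq A_i$ or $I \subseteq B_i$, the half-chord sum is $k - Z$, so the left-hand side of \eqref{eq:half-chorded-odd-cycle-ineq} equals $k - Z - \abs{I}$ (and it equals $0$ whenever $\abs{I} \leq 1$). Since $k \geq 5$, the cases $\abs{I} \leq 1$ give value $0 < k-3$ and are non-tight; and for $\abs{I} \geq 2$ the bound $k - Z - \abs{I} \leq k - \abs{I}$ (using $Z \geq 0$) forces $\abs{I} \leq 3$ at equality. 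Hence equality can hold only for $\abs{I} \in \{2,3\}$, where it is equivalent to $Z = 3 - \abs{I}$, i.e. $Z = 1$ when $\abs{I} = 2$ and $Z = 0$ when $\abs{I} = 3$.

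It then remains to translate $Z = 1$ and $Z = 0$ into the gap conditions. For $\abs{I} = 2$ with the two elements at shorter cyclic distance $a \in \{1,\dots,d\}$ (the longer arc then having $2d+1-a \geq d+1$ indices, so it can fit in neither an $A_i$ nor a $B_i$), a direct count of the length-$d$ windows $A_i$ and length-$(d+1)$ windows $B_i$ enclosing the block of $a+1$ consecutive nodes gives $Z = (d-a) + (d+1-a) = k - 2a$; thus $Z = 1$ iff $a = d$, which is exactly condition~\ref{cond:roots-half-chorded-a} that the two indices differ by $d$ or $d+1$. For $\abs{I} = 3$ I would use the elementary fact that $I$ fits inside a window of $L$ consecutive indices iff its largest cyclic gap is at least $k - L + 1$; applying this with $L = d+1$ (which dominates the $L = d$ constraint) shows that $Z = 0$ iff every cyclic gap of $I$ is at most $d$, the spread condition recorded in condition~\ref{cond:roots-half-chorded-b}.

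The main obstacle I anticipate is the window-counting in the last paragraph: verifying $Z = k - 2a$ demands careful bookkeeping of which windows of lengths $d$ and $d+1$ enclose a given block of $a+1$ consecutive nodes, together with the observation (from $k = 2d+1$) that the complementary long arc can never be enclosed; and the $\abs{I}=3$ case requires the clean ``fits in a window iff largest gap is large enough'' lemma plus the remark that the length-$(d+1)$ constraint governs. Once these counts are secured, combining them with the reduction $\abs{I} \in \{2,3\}$ and $Z = 3 - \abs{I}$ yields the stated characterization.
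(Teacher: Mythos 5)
Your proposal is correct and follows essentially the same route as the paper's proof: both reduce the question entirely to the index set $I$, note that the left-hand side of \eqref{eq:half-chorded-odd-cycle-ineq} equals the number of cut half-chords minus $\abs{I}$ once $\abs{I} \geq 2$ (and vanishes for $\abs{I} \leq 1$), eliminate $\abs{I} \geq 4$, and characterize tightness for $\abs{I} = 2$ (all but one half-chord cut) and $\abs{I} = 3$ (all half-chords cut) --- your window/gap counting simply supplies, in explicit form, the two final combinatorial equivalences that the paper asserts without detail, and your reading of condition~\ref{cond:roots-half-chorded-b} as ``every cyclic gap of $I$ is at most $d$'' is the intended one (indeed the only tenable one, since the literal pairwise condition holds for \emph{every} pair of distinct elements of $\Z_{2d+1}$). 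One small repair to your bookkeeping: the long arc between the two indices of a $2$-element $I$ at cyclic distance $a$ has $2d+2-a$ elements, not $2d+1-a$; with your count it would have exactly $d+1$ elements when $a = d$ and would then fit into one $B_i$-window, yielding $Z = 2$ and spoiling $Z = k-2a$ precisely in the tight case, whereas the corrected size $2d+2-a \geq d+2$ exceeds both window lengths, so the long arc never contributes and your formula $Z = (d-a) + (d+1-a)$ stands.
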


\proofref{lem:roots-of-half-chorded}

For an illustration of a lifted multicut corresponding to $\chi(W)$ that satisfies a half-chorded odd cycle inequality \eqref{eq:half-chorded-odd-cycle-ineq} with equality, see \Cref{fig:root-of-half-chorded-odd-cycle}.

\begin{figure}
    \centering
    \input{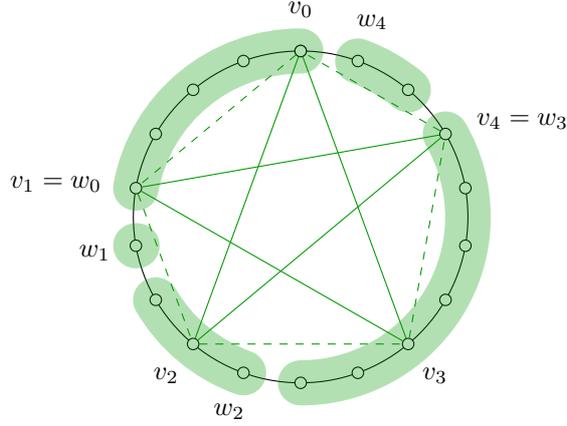}
    \caption{Depicted above is a half-chorded odd cycle inequality \eqref{eq:half-chorded-odd-cycle-ineq} for $n=18$ and $k=5$, along with the decomposition of the cycle $C$ (black edges) that corresponds to $x:=\chi(W)$ with $W=\{w_0,\dots,w_4\}$. 
    The support graph of the inequality is depicted by the green edges where the edges with coefficients $+1$ and $-1$ are depicted by continuous and dashed lines, respectively. 
    The components of the decomposition of $C$ are depicted as green areas. 
    The edges $\{v_i,v_{i+2}\}$ for $i=0,\dots,4$ are all cut with respect to $x$. 
    The edges $\{v_0,v_1\}$ and $\{v_3,v_4\}$ are not cut with respect to $x$, while the edges $\{v_1,v_2\}$, $\{v_2,v_3\}$ and $\{v_0,v_4\}$ are. 
    Therefore, $x$ satisfies the inequality with equality. 
    Indeed, Condition~\ref{cond:roots-half-chorded-b} of \Cref{lem:roots-of-half-chorded} is satisfied, since we have $I=\{1,2,4\}$. 
    For $W=\{w_0,w_1, w_3\}$, for instance, the vector $\chi(W)$ also satisfies the inequality with equality because $I=\{1, 4\}$ and Condition~\ref{cond:roots-half-chorded-a} of \Cref{lem:roots-of-half-chorded} is satisfied. 
    For $W=\{w_0,w_1\}$, for instance, the vector $\chi(W)$ does not satisfy the inequality with equality, since $I=\{1\}$.}
    \label{fig:root-of-half-chorded-odd-cycle} 
\end{figure}

\begin{lemma}\label{lem:half-chorded-facet}
	For any integers $5 \leq k \leq n$ with $k$ odd, and for the cycle $C=(\Z_n,E)$,
	let $v \colon \Z_k \to \Z_n$ be injective.
    If $v$ is true to $C$, then the half-chorded odd cycle inequality \eqref{eq:half-chorded-odd-cycle-ineq} with respect to $v$ is facet-defining for $\lmc(C)$.
\end{lemma}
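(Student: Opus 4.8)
The plan is to use that $\lmc(C)$ is full dimensional by \Cref{thm:dimension}, so that, as recalled at the beginning of \Cref{sec:facets-from-canonical-ineq}, the valid inequality \eqref{eq:half-chorded-odd-cycle-ineq} (valid by \Cref{cor:half-chorded-valid-lmc}) defines a facet if and only if every linear equation $c^\top x = \gamma$ satisfied by all of its roots is a scalar multiple of it. Writing $a$ for the coefficient vector of \eqref{eq:half-chorded-odd-cycle-ineq} and $b = k-3$, I therefore assume $c^\top \chi(W) = \gamma$ for every $W \subseteq \Z_n$ with $\chi(W)$ on the face, i.e.\ every $W$ whose induced index set $I$ (notation of \Cref{lem:roots-of-half-chorded}) satisfies condition~\ref{cond:roots-half-chorded-a} or~\ref{cond:roots-half-chorded-b}, and aim to deduce $c = \lambda a$ and $\gamma = \lambda b$. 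The support of $a$ consists of the $2k$ edges $\{v_i, v_{i+d}\}$ and $\{v_i, v_{i+1}\}$; since $2d \equiv -1 \pmod k$, the integer $d$ is invertible modulo $k$, so each of these two edge sets forms a single Hamiltonian cycle on $v_0, \dots, v_{k-1}$, a fact I will use at the end to tie all support coefficients to one scalar.

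The engine is a difference-of-roots argument. The nodes $v_0, \dots, v_{k-1}$ split $C$ into the arcs $A_i = [v_i, v_{i+1}[$\,, and \Cref{lem:roots-of-half-chorded} shows that the roots are exactly the lifted multicuts $\chi(W)$ for which the set $I$ of cut arcs is either two arcs at forward distance $d$ or $d+1$ (type~\ref{cond:roots-half-chorded-a}) or three arcs all of whose cyclic gaps are at most $d$ (type~\ref{cond:roots-half-chorded-b}). Crucially, for a fixed such $I$ the positions of the cuts inside each cut arc, and the number of cuts inside an already cut arc, may be chosen freely without leaving the face, since only $I$ enters the evaluation of $a$. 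For two roots $\chi(W), \chi(W')$ that differ by such a local modification, $c^\top(\chi(W) - \chi(W')) = 0$ yields a linear relation among the coefficients $c_e$ on the edges whose separation status changed.

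First I would use these relations to show $c_e = 0$ for every edge $e$ outside the support of $a$. The idea is to move a single interior node across a component boundary inside a cut arc (sliding a cut by one step, or adding an auxiliary cut inside an arc that already lies in $I$), which changes only coordinates incident to that node; by varying which node is moved and between which pair of components, and by letting every arc occur as a cut arc in some type-\ref{cond:roots-half-chorded-a} root (there is one for each $i$), I would successively force every non-support coefficient incident to an interior node, and then every non-support coefficient between two of the $v_i$, to vanish. Once $c$ is supported on the $2k$ support edges, I would equate the coefficients on them up to a common scalar $\lambda$ by comparing type-\ref{cond:roots-half-chorded-a} roots of gap $d$ with those of gap $d+1$ (shifting one cut arc by one step), transitions that toggle only a controlled handful of support coordinates, and by propagating the resulting equalities around the two Hamiltonian cycles formed by the support edges. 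Evaluating $c^\top \chi(W) = \gamma$ at a single root then fixes $\gamma = \lambda b$.

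The main obstacle is exactly the restrictiveness of the root structure: unlike in the complete-graph setting, the admissible cut patterns are confined to the two families of \Cref{lem:roots-of-half-chorded}, so many elementary toggles are unavailable and the relations among the $c_e$ must be assembled by chaining several root-to-root differences. The delicate bookkeeping is twofold: isolating each individual off-support coefficient from the multi-term relation produced by a boundary move, and ensuring that the chain of comparisons genuinely connects all $2k$ support edges into a single scalar. The latter is where the oddness of $k$ (equivalently $\gcd(d,k)=1$) is indispensable, as it guarantees that the half-chords and the cycle-chords each form one connected orbit rather than several, so that no support coefficient is left free.
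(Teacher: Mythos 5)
Your overall architecture coincides with the paper's proof: validity plus full-dimensionality (\Cref{thm:dimension}), the root characterization of \Cref{lem:roots-of-half-chorded}, and differences of roots $\chi(W)-\chi(W')$ obtained by moving cut positions within arcs while the index set $I$ stays in one of the two admissible families. However, there is a genuine gap at the single hardest step, namely showing $c_{v_iv_{i+\ell}} = 0$ for $2 \leq \ell \leq d-1$, i.e.\ for edges joining two of the $v_i$ that are neither cycle edges nor half-chords. No single boundary move isolates such a coordinate: what one can extract --- and what the paper extracts with the four-root combination $x^1(i,\ell)-x^2(i,\ell)+x^3(i,\ell)-x^4(i,\ell)$ followed by telescoping --- is only the signed pair relation $c_{v_iv_{i+\ell}} + c_{v_iv_{i-\ell}} = 0$. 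Your plan to ``successively force every non-support coefficient to vanish'' by varying the moves is precisely what fails here, because these pair relations admit nonzero solutions with alternating signs. The missing idea is the paper's odd-cycle argument: the relations say that consecutive edges of the auxiliary cycle $v_i, v_{i+\ell}, v_{i+2\ell},\dots$ carry opposite coefficients, and this cycle has odd length (the least $p$ with $\ell p \equiv 0 \pmod{k}$ is odd because $k$ is odd), so each coefficient equals its own negative and hence vanishes.

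Relatedly, you misplace where the oddness of $k$ is indispensable. For the support edges the connectivity you invoke is automatic and not where the work lies: $\gcd(d,k)=\gcd(d,2d+1)=1$ holds for every $k=2d+1$, and the paper ties the support coefficients together not by propagating around the Hamiltonian cycle of half-chords but by the direct relations $a_{v_iv_{i-1}} = a_{v_iv_{i+1}}$ and $a_{v_iv_{i+1}} + a_{v_iv_{i+d}} = 0$, each obtained from a difference of two explicit roots. Oddness of $k$ is truly needed in the off-support step above, where $\gcd(\ell,k)$ may exceed $1$ yet the orbit length $k/\gcd(\ell,k)$ is still odd. Finally, for $n > k$ your sketch leaves implicit the paper's step of killing coefficients of edges incident to nodes outside $\{v_0,\dots,v_{k-1}\}$; this requires roots such as $\chi(\{u-1,u,w,v^*\})$ with $v^* = v_{i-d}$ chosen so that Condition~\ref{cond:roots-half-chorded-a} or \ref{cond:roots-half-chorded-b} still holds --- your ``slide one node'' moves are of the right kind, but the admissibility of every intermediate root must be verified against \Cref{lem:roots-of-half-chorded}, which is exactly the bookkeeping your proposal defers.
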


\proofref{lem:half-chorded-facet}

\begin{lemma}\label{lem:half-chorded-not-true-not-facet}
	For any integers $5 \leq k \leq n$ with $k$ odd, and for the cycle $C=(\Z_n,E)$,
	let $v \colon \Z_k \to \Z_n$ be injective.
    If $v$ is not true to $C$, then the half-chorded odd cycle inequality \eqref{eq:half-chorded-odd-cycle-ineq} with respect to $v$ is not facet-defining for $\lmc(C)$.
\end{lemma}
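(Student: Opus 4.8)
The plan is to show that when $v \colon \Z_k \to \Z_n$ is injective but \emph{not} true to $C$, the half-chorded odd cycle inequality \eqref{eq:half-chorded-odd-cycle-ineq} is dominated by (is a consequence of) other valid inequalities, so the face it defines has dimension strictly less than $\abs{\binom{\Z_n}{2}} - 1$ and hence is not a facet. The natural route is to exhibit a valid inequality $c^\top x \leq d$ for $\lmc(C)$ that is not a positive scalar multiple of \eqref{eq:half-chorded-odd-cycle-ineq} yet is satisfied with equality by every point in $X_n$ satisfying \eqref{eq:half-chorded-odd-cycle-ineq} with equality (cf.\ the criterion recalled at the start of \Cref{sec:facets-from-canonical-ineq}). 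By \Cref{lem:roots-of-half-chorded}, the vectors $\chi(W)$ attaining equality are completely characterized by conditions \ref{cond:roots-half-chorded-a} and \ref{cond:roots-half-chorded-b} on the index set $I$, but that lemma is stated only for $v$ true to $C$; so the first task is to understand how the root structure degenerates when $v$ is not true.

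First I would exploit the symmetry reduction afforded by the last paragraph of the Notation section: up to reversing the cyclic order we may assume the sequence fails to be true to $C$ in a controlled way, i.e.\ there is some index at which the ``winding'' of $v$ around $C$ reverses or overshoots. Concretely, not being true to $C$ means the cyclic points $v_0, v_1, \dots, v_{k-1}$ do not appear in monotone cyclic order around $\Z_n$; I would pin down the smallest offending index and use it to split the analysis. The key structural consequence to extract is that for some chord $v_iv_{i+d}$ and some cut edge of $C$, the two lie on the ``wrong side,'' so that the half-chord no longer separates the cycle into the balanced $d$ versus $d+1$ arcs that the validity proof (\Cref{lem:half-chorded-valid}) and root count (\Cref{lem:roots-of-half-chorded}) rely on.

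Next I would argue that this geometric degeneracy forces a collapse of the roots: either the inequality is no longer tight at enough affinely independent integral points of $X_n$, or — more likely and cleaner — every root simultaneously satisfies an additional equation. The concrete strategy is to identify a single edge variable $x_e$ (or a small linear combination, e.g.\ a difference $x_{e} - x_{e'}$ of two edge variables, or one of the box, path, cut, or intersection relations already shown valid for $\lmc(C)$) that is forced to a constant value on all roots. Because $v$ is not true to $C$, I expect two of the nodes $v_i, v_{i+1}$ to bound an arc $[v_i, v_{i+1}[$ that is empty or that is always (resp.\ never) cut on every root, pinning the corresponding $x_{v_iv_{i+1}}$ to $0$ or forcing some $x_{v_iv_{i+d}}$ to a fixed value; this extra equality $c^\top x = d$, independent of \eqref{eq:half-chorded-odd-cycle-ineq}, certifies non-facetness.

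The main obstacle I anticipate is the bookkeeping of exactly \emph{which} degeneracy arises, since ``not true to $C$'' is a disjunction of several qualitatively different orderings of the $v_i$ around $\Z_n$, and each may collapse a different root condition; I would need to either reduce all cases to a canonical one via the cyclic/reflection symmetry or handle a bounded number of cases uniformly. A secondary subtlety is that \Cref{lem:roots-of-half-chorded} is only available for the true-to-$C$ case, so I cannot simply quote its root characterization; instead I must re-derive, directly from the definition of $\chi(W)$ and the support of the inequality, that no family of $\binom{\Z_n}{2}$ affinely independent roots exists. The cleanest finish would be to produce the explicit dominating inequality and verify equality on roots by the same arc-counting argument used in \Cref{lem:half-chorded-valid}, rather than counting dimensions directly.
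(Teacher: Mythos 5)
Your overall strategy coincides with the paper's: certify non-facetness by exhibiting a second valid inequality that every root of \eqref{eq:half-chorded-odd-cycle-ineq} satisfies with equality, and run the verification through the same summation-and-rounding argument as in the validity proof (\Cref{lem:half-chorded-valid}). You even list the intersection relations among your candidate equalities and correctly anticipate the arc-counting finish. But as written the proposal has a genuine gap at its center: you never identify \emph{which} equality holds on all roots, and your leading concrete guess --- that some single edge variable $x_e$ is pinned to a constant on all roots because some arc $[v_i,v_{i+1}[$ is ``always or never cut'' --- is the wrong mechanism: the roots include vectors $\chi(W)$ for many configurations $W$, and no single coordinate is constant across them. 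What the paper extracts from ``$v$ not true to $C$'' is sharper and purely combinatorial: there exist two support nodes $u,w \in \{v_0,\dots,v_{k-1}\}$ that are consecutive along $C$, i.e.\ $]u,w[\;\cap\;\{v_0,\dots,v_{k-1}\} = \emptyset$, yet not consecutive in the sequence, i.e.\ $uw \neq v_iv_{i+1}$ for all $i \in \Z_k$. The additional equation is then the intersection equality $x_{uw} + x_{u-1,w+1} = x_{u,w+1} + x_{u-1,w}$, whose validity as an inequality is \Cref{cor:intersection-cycle} --- not a pinned coordinate.

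The verification also needs a step your plan does not supply. Arguing by contradiction, a root violating that equality is shown (exactly as in the proof of \Cref{prop:2-chorded-not-facet}\ref{item:2-chorded-not-true-not-facet}) to satisfy $x_{u-1,u} = x_{w,w+1} = 1$; with $u = v_{i_1}$ and $w = v_{i_2}$, the cut inequalities \eqref{eq:cut} then force $x_{v_jv_{j-1}} = x_{v_jv_{j+1}} = 1$ for $j = i_1, i_2$, whence at least $d$ of the cycle inequalities \eqref{eq:cycle-ineq-for-half-chorded} hold with slack at least $1$; repeating the summation of \Cref{lem:half-chorded-valid} with this extra slack bounds the left-hand side of \eqref{eq:half-chorded-odd-cycle-ineq} by $k-4$, contradicting tightness. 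Note that this route deliberately sidesteps the subtask you flag as your main obstacle: it never needs an analogue of \Cref{lem:roots-of-half-chorded} for non-true $v$, so the multi-case bookkeeping over the possible degenerate orderings simply does not arise. In short: right skeleton and correct anticipation of the finish, but the decisive combinatorial extraction (the pair $u,w$) and the specific intersection equality are missing, and without them the plan does not yet constitute a proof.
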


\proofref{lem:half-chorded-not-true-not-facet}

\begin{theorem}\label{thm:half-chorded-facet}
	For any integers $5 \leq k \leq n$ with $k$ odd, and for the cycle $C=(\Z_n,E)$,
	let $v \colon \Z_k \to \Z_n$ be injective.
    The half-chorded odd cycle inequality \eqref{eq:half-chorded-odd-cycle-ineq} with respect to $v$ is facet-defining for $\lmc(C)$ if and only if $v$ is true to $C$.
\end{theorem}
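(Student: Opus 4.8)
The plan is to obtain \Cref{thm:half-chorded-facet} directly as the conjunction of the two lemmas that immediately precede it, since together they already cover both implications of the stated equivalence. Validity of the half-chorded odd cycle inequality \eqref{eq:half-chorded-odd-cycle-ineq} for $\lmc(C)$ is guaranteed by \Cref{cor:half-chorded-valid-lmc}, so only the facet-definingness needs to be characterized, and both halves of this characterization are supplied by the preceding lemmas.

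For the \emph{if} direction, I would simply invoke \Cref{lem:half-chorded-facet}: if $v$ is true to $C$, that lemma asserts precisely that \eqref{eq:half-chorded-odd-cycle-ineq} is facet-defining for $\lmc(C)$, and there is nothing further to prove at the level of the theorem. The substance of this direction lives in the dimension count behind \Cref{lem:half-chorded-facet}, which rests on the classification of roots in \Cref{lem:roots-of-half-chorded}: one exhibits enough characteristic vectors $\chi(W)$ on the face so that any valid equation $a^\top x = b$ satisfied by all of them must be a scalar multiple of \eqref{eq:half-chorded-odd-cycle-ineq}. Concretely, one compares the roots produced by Conditions~\ref{cond:roots-half-chorded-a} and~\ref{cond:roots-half-chorded-b} of \Cref{lem:roots-of-half-chorded} and, from differences of such roots, forces the coefficients $a_e$ edge by edge to agree with those of the inequality up to a common positive factor; since $\lmc(C)$ is full-dimensional by \Cref{thm:dimension}, a valid inequality whose roots span a hyperplane is facet-defining.

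For the \emph{only if} direction, I would argue by contraposition using \Cref{lem:half-chorded-not-true-not-facet}: if $v$ is not true to $C$, that lemma states that \eqref{eq:half-chorded-odd-cycle-ineq} fails to be facet-defining for $\lmc(C)$. Taking the contrapositive, facet-definingness forces $v$ to be true to $C$, which is exactly the remaining implication. Combining the two directions yields the equivalence claimed in \Cref{thm:half-chorded-facet}.

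I expect no genuine obstacle at the level of the theorem itself, as it merely packages the two lemmas into a single equivalence. The real difficulty has already been delegated: the hard part is the coefficient-forcing and affine-independence argument inside \Cref{lem:half-chorded-facet}, which depends crucially on having the complete root description of \Cref{lem:roots-of-half-chorded}, together with the domination or rank-deficiency argument inside \Cref{lem:half-chorded-not-true-not-facet} that rules out the non-true case.
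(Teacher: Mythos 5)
Your proposal is correct and takes essentially the same approach as the paper: the paper's proof of \Cref{thm:half-chorded-facet} consists of exactly the one-line combination of \Cref{lem:half-chorded-facet} and \Cref{lem:half-chorded-not-true-not-facet} that you describe. Your sketch of what lies inside the two lemmas (coefficient-forcing via the root classification of \Cref{lem:roots-of-half-chorded}, and the additional-equation argument ruling out the non-true case) also accurately reflects the paper's delegated proofs.
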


\begin{proof}
    By \Cref{lem:half-chorded-facet} and \Cref{lem:half-chorded-not-true-not-facet}.
\end{proof}

\begin{proposition}\label{prop:num-half-chorded-odd-cycle-facet}
	For any integer $n \geq 3$ and the cycle $C = (\Z_n, E)$, 
	there are precisely $2^{n-1} - n - \frac{n(n-1)(n-2)}{6}$ distinct half-chorded odd cycle inequalities that are facet-defining for $\lmc(C)$.
\end{proposition}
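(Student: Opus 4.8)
The plan is to combine the characterization from \Cref{thm:half-chorded-facet} with a direct counting argument. By that theorem, the facet-defining half-chorded odd cycle inequalities for $\lmc(C)$ are precisely those whose defining injective map $v \colon \Z_k \to \Z_n$ is true to $C$. Hence the whole task reduces to counting the \emph{distinct} inequalities arising from true-to-$C$ sequences, and the crux is to determine exactly which sequences give the same inequality.

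First I would argue that each such inequality is determined by nothing more than its underlying node set $W = \{v_0,\dots,v_{k-1}\}$. The inequality \eqref{eq:half-chorded-odd-cycle-ineq} depends on $v$ only through the unordered edge sets $\{\{v_i,v_{i+1}\} \mid i \in \Z_k\}$ (coefficient $-1$) and $\{\{v_i,v_{i+d}\} \mid i \in \Z_k\}$ (coefficient $+1$); since $5 \leq k$ forces $d \geq 2$, these two sets are disjoint, and the right-hand side $k-3$ is fixed by $k = |W|$. A short reindexing shows that both edge sets are invariant under cyclically rotating the starting index (replacing $v_i$ by $v_{i+s}$) and under reversing the sequence; consequently all rotations and the reversal of $v$ yield literally the same inequality. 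Because a true-to-$C$ sequence lists the elements of $W$ in the cyclic order induced by $C$ (by conditions (a) and (b) in the definition of true to $C$), any two true-to-$C$ orderings of a fixed $W$ differ only by a rotation or the reversal. Therefore the map from a true-to-$C$ sequence to its inequality factors through the node set $W$.

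Second I would verify that the resulting map, from odd subsets $W \subseteq \Z_n$ with $5 \leq |W| \leq n$ to facet-defining inequalities, is a bijection. Surjectivity is immediate from \Cref{thm:half-chorded-facet}. For injectivity, observe that $W$ can be recovered from the inequality as the set of endpoints of its support (each $v_i$ is an endpoint of the edge $\{v_i,v_{i+1}\}$), so distinct subsets produce distinct supports and hence distinct inequalities; subsets of different cardinalities $k$ additionally differ in their right-hand side $k-3$.

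Finally, the count is elementary: the number of facet-defining half-chorded odd cycle inequalities equals the number of odd-sized subsets of $\Z_n$ of size at least $5$, namely $\sum_{k=5,\ k \text{ odd}}^{n} \binom{n}{k}$. Using $\sum_{k \text{ odd}} \binom{n}{k} = 2^{n-1}$ and subtracting the two omitted odd sizes $k \in \{1,3\}$ gives $2^{n-1} - \binom{n}{1} - \binom{n}{3} = 2^{n-1} - n - \frac{n(n-1)(n-2)}{6}$, as claimed. I expect the only delicate step to be the first one, namely pinning down that rotation and reversal are exactly the symmetries collapsing distinct sequences once $v$ is required to be true to $C$; the remainder is routine. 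A useful consistency check is that this agrees with \Cref{prop:num-half-chorded-odd-cycle}, whose per-$k$ total $\frac{n!}{(n-k)!\,2k}$ of \emph{all} half-chorded odd cycle inequalities reduces to exactly $\binom{n}{k}$ once the cyclic order is forced to match $C$.
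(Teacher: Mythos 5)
Your proposal is correct and follows essentially the same route as the paper's proof: reduce via \Cref{thm:half-chorded-facet} to counting true-to-$C$ sequences, observe that all true-to-$C$ enumerations of a fixed odd node set $W$ with $\abs{W} \geq 5$ yield the same inequality (so there is exactly one facet-defining inequality per such subset), and compute $2^{n-1} - \binom{n}{1} - \binom{n}{3}$. Your write-up merely makes explicit the rotation/reversal invariance, the disjointness of the $+1$ and $-1$ edge sets, and the recovery of $W$ from the support, all of which the paper's terse proof leaves implicit.
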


\proofref{prop:num-half-chorded-odd-cycle-facet}

Exploiting the inclusion properties (\Cref{prop:inclusion-property}) of different lifted multicut polytopes, we now derive conditions under which the half-chorded odd cycle inequalities are facet-defining also for the lifted multicut polytope $\lmc(G)$ for an arbitrary graph $G$:

\begin{corollary}\label{cor:half-chorded-facet-hamiltonian}
    Let $G$ be a graph with $n$ nodes, 
    let $C = (V_C, E_C)$ be a hamiltonian cycle in $G$,
    let $5 \leq k \leq n$ with $k$ odd, and let $v \colon \Z_k \to V_C$ such that $v$ is injective and true to $C$. 
    Then, the half-chorded odd cycle inequality \eqref{eq:half-chorded-odd-cycle-ineq} with respect to $v$ is valid and facet-defining for $\lmc(G)$.
\end{corollary}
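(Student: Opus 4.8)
The plan is to derive Corollary~\ref{cor:half-chorded-facet-hamiltonian} by combining the already-established result for cycles (Theorem~\ref{thm:half-chorded-facet}) with the inclusion and facet-transfer lemmata from Section~\ref{sec:dimension}. The key observation is that a Hamiltonian cycle $C = (V_C, E_C)$ in $G$ has exactly the node set $V = V_C$, so $C$ is a connected spanning subgraph of $G$ whose edge set $E_C$ is contained in the edge set of $G$. This is precisely the situation addressed by Lemma~\ref{lem:valid-and-facet-defining-for-subset}, where $C$ plays the role of the sparser graph $G'$ and $G$ the role of the denser graph.

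\textbf{First}, I would apply Theorem~\ref{thm:half-chorded-facet}: since $v \colon \Z_k \to V_C$ is injective and true to $C$, the half-chorded odd cycle inequality \eqref{eq:half-chorded-odd-cycle-ineq} with respect to $v$ is facet-defining for $\lmc(C) = \lmc(C, K_V)$. \textbf{Second}, I would invoke Lemma~\ref{lem:valid-and-facet-defining-for-subset} with the identifications $G' = C$ (which is connected, being a Hamiltonian cycle) and $\widehat{G} = K_V$ the complete graph on $V$, noting that $E_C \subseteq E(G)$ because $C$ is a subgraph of $G$. That lemma states precisely that an inequality facet-defining for $\lmc(C, K_V)$ is then facet-defining for $\lmc(G, K_V) = \lmc(G)$, and validity transfers in the same direction. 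This yields that the inequality is valid and facet-defining for $\lmc(G)$.

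\textbf{The one subtlety to check} is a dimensional/support consistency point: the half-chorded odd cycle inequality is written over edges of the complete graph $K_V$, all of which are present in $\lmc(G) = \lmc(G, K_V)$ by definition, so no coordinate restriction via Lemma~\ref{lem:facet-for-lifting-to-sub-graph} is needed here—the augmentation target is the complete graph in both $\lmc(C)$ and $\lmc(G)$. One must only confirm that $k \leq n = |V_C| = |V|$ guarantees the sequence $v$ can indeed be chosen injective into $V_C$, which is given by hypothesis. The proof is therefore a short two-step composition, and I expect no genuine obstacle—the entire weight of the argument rests on the cycle case (Theorem~\ref{thm:half-chorded-facet}) and the transfer Lemma~\ref{lem:valid-and-facet-defining-for-subset}, both of which are available. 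The only thing to state carefully is that $C$ being Hamiltonian is exactly what makes $C$ a \emph{connected spanning} subgraph, so that the hypotheses of Lemma~\ref{lem:valid-and-facet-defining-for-subset} (namely $E_C \subseteq E$ and $(V, E_C)$ connected) are met.
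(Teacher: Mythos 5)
Your two-step composition is the paper's own route (Theorem~\ref{thm:half-chorded-facet} for $\lmc(C)$, then the transfer Lemma~\ref{lem:valid-and-facet-defining-for-subset} with $G'=C$ a connected spanning subgraph of $G$), but there is a genuine gap in how you handle validity. You write that Lemma~\ref{lem:valid-and-facet-defining-for-subset} makes ``validity transfer in the same direction'' as facetness, i.e.\ from $\lmc(C)$ up to $\lmc(G)$. It does not: validity for $\lmc(G,\widehat G)$ is a \emph{hypothesis} of that lemma, and validity flows \emph{downward}, from the denser graph to the sparser one, because $\lmc(C) \subseteq \lmc(G)$ by Proposition~\ref{prop:inclusion-property}. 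An inequality facet-defining for the smaller polytope $\lmc(C)$ can perfectly well cut off points of the larger polytope $\lmc(G)$ --- for instance, a cut inequality \eqref{eq:cut} with respect to a $vw$-cut in $C$ is valid (even facet-defining, for adjacent cut edges) for $\lmc(C)$, yet fails for $\lmc(G)$ as soon as $G$ contains an additional $vw$-path outside $C$. So facetness for $\lmc(C)$ alone does not license the conclusion; without validity for $\lmc(G)$ established independently, Lemma~\ref{lem:valid-and-facet-defining-for-subset} simply does not apply.

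The missing step is exactly what the paper supplies first: by Lemma~\ref{lem:half-chorded-valid}, the half-chorded odd cycle inequality is valid for $\mc(K_n)$ (it is a Chv\'atal inequality for the canonical relaxation), and since $\lmc(G) \subseteq \mc(K_n)$ by Proposition~\ref{prop:inclusion-property}, it is valid for $\lmc(G)$. With that hypothesis in hand, your remaining reasoning --- facetness for $\lmc(C)$ via Theorem~\ref{thm:half-chorded-facet}, then transfer to $\lmc(G)$ via Lemma~\ref{lem:valid-and-facet-defining-for-subset} --- is correct and matches the paper. Your side remarks are fine: no coordinate restriction via Lemma~\ref{lem:facet-for-lifting-to-sub-graph} is needed since both polytopes lift to $K_V$, and hamiltonicity of $C$ is what makes $C$ spanning and connected, as the transfer lemma requires.
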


\begin{proof}
    By \Cref{prop:inclusion-property}, $\lmc(C) \subseteq \lmc(G) \subseteq \mc(K_n)$. 
    By \Cref{lem:half-chorded-valid}, the inequality is valid for $\mc(K_n)$.
    By \Cref{thm:half-chorded-facet}, it is facet-defining for $\lmc(C)$. 
    The claim follows by \Cref{lem:valid-and-facet-defining-for-subset}.
\end{proof}

The multicut polytope $\mc(K_n)$ of the complete graph is the special case of $\lmc(G,\widehat{G})$ with $G=\widehat{G}=K_n$. 
As an application of \Cref{cor:half-chorded-facet-hamiltonian}, we obtain that the class of half-chorded odd cycle inequalities is facet-defining for $\mc(K_n)$. 
This establishes a new class of facet-defining inequalities for the isomorphic and intensively studied clique partitioning polytope.

\begin{proposition}\label{cor:half-chorded-facet-mc}
    Any half-chorded odd cycle inequality \eqref{eq:half-chorded-odd-cycle-ineq} is facet-defining for the multicut polytope $\mc(K_n)$ of the complete graph.
\end{proposition}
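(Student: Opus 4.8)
The plan is to obtain this statement as an immediate specialization of \Cref{cor:half-chorded-facet-hamiltonian}, whose hypotheses the complete graph satisfies for \emph{every} half-chorded odd cycle inequality. The only nontrivial point is to verify that, no matter which injective map $v$ defines the inequality, $K_n$ admits a Hamiltonian cycle with respect to which $v$ is true.

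First I would observe that, by the conventions $\lmc(G) = \lmc(G, K_V)$ and $\mc(G) = \lmc(G, G)$, taking $G = K_n$ gives $\mc(K_n) = \lmc(K_n, K_n) = \lmc(K_n)$. Hence it suffices to show that an arbitrary half-chorded odd cycle inequality \eqref{eq:half-chorded-odd-cycle-ineq} is facet-defining for $\lmc(K_n)$. Such an inequality is specified by an injective map $v \colon \Z_k \to \Z_n$ with $5 \le k \le n$ and $k$ odd; note that its coefficient vector involves only the nodes $v_0, \dots, v_{k-1}$ and does not by itself single out any particular cycle of $K_n$.

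The key step is to produce a suitable Hamiltonian cycle. Since $K_n$ is complete, every cyclic ordering of its $n$ nodes is a Hamiltonian cycle. I would therefore fix a cyclic ordering of $\Z_n$ in which the nodes $v_0, v_1, \dots, v_{k-1}$ appear in this cyclic order, inserting the remaining $n - k$ nodes arbitrarily into the gaps, and let $C$ be the associated Hamiltonian cycle. Identifying $V_C$ with $\Z_n$ along this cyclic order, condition (a) in the definition of a cyclic sequence being true to the cycle holds for $v$, so $v$ is true to $C$. Applying \Cref{cor:half-chorded-facet-hamiltonian} with $G = K_n$, this cycle $C$, and the map $v$ then yields that the inequality is valid and facet-defining for $\lmc(K_n) = \mc(K_n)$.

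I do not expect a real obstacle here: the substance of the argument is carried by \Cref{cor:half-chorded-facet-hamiltonian} (which itself rests on \Cref{lem:half-chorded-valid}, \Cref{thm:half-chorded-facet} and the inclusion-based transfer in \Cref{lem:valid-and-facet-defining-for-subset}). The single point deserving care is that the hypothesis that $v$ be true to $C$ can always be met, and this follows at once from the completeness of $K_n$, which lets us route the Hamiltonian cycle so that $v_0, \dots, v_{k-1}$ occur in the prescribed cyclic order.
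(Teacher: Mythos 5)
Your proposal is correct and coincides with the paper's own proof: the paper likewise observes that $\mc(K_n)$ is the special case $\lmc(K_n,K_n)=\lmc(K_n)$ and that for any injective $v\colon \Z_k \to \Z_n$ one can choose a hamiltonian cycle $C$ of $K_n$ (possible by completeness) in which $v_0,\dots,v_{k-1}$ occur in cyclic order, so that $v$ is true to $C$, and then invokes \Cref{cor:half-chorded-facet-hamiltonian}. Your elaboration of the insertion of the remaining $n-k$ nodes into the gaps is exactly the routine verification the paper leaves implicit.
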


\begin{proof}
    Let $5 \leq k \leq n$ with $k$ odd.
    For any $v\colon \Z_k \to \Z_n$, there exists a hamiltonian cycle $C$ in $K_n$ such that $v$ is true to $C$. 
    The claim follows from \Cref{cor:half-chorded-facet-hamiltonian}.
\end{proof}

From \Cref{thm:2-chorded-valid-and-facet} and \Cref{cor:half-chorded-facet-mc} together follows that both the lower and upper inequality of \eqref{eq:2-chorded-and-half-chorded} are facet-defining for $\mc(K_n)$.

One can impose various constraints on multicuts and study the polytopes of the characteristic vectors of the multicuts that satisfy these constraints.
One example is the polytope corresponding to multicuts of the complete graph into at most $r$ components, for some $r \leq n$ (cf. \citet{Chopra1995,Deza1992}). 
This polytope is denoted by $\mc^r_\leq(K_n)$. 

\begin{proposition}
    Any half-chorded odd cycle inequality \eqref{eq:half-chorded-odd-cycle-ineq} is facet-defining for $\mc^r_\leq(K_n)$ for all $r \geq 4$.
\end{proposition}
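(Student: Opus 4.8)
The plan is to descend from the full multicut polytope to $\mc^r_\leq(K_n)$ by controlling the number of components among the roots. First I would record the two cheap ingredients. Validity is inherited: since $\mc^r_\leq(K_n) \subseteq \mc(K_n)$ and \eqref{eq:half-chorded-odd-cycle-ineq} is valid for $\mc(K_n)$ by \Cref{lem:half-chorded-valid}, it is valid for $\mc^r_\leq(K_n)$. The inequality also defines a proper face, because the one-component (all-zeros) multicut lies in $\mc^r_\leq(K_n)$ and satisfies \eqref{eq:half-chorded-odd-cycle-ineq} strictly (its left-hand side is $0$, while $k-3 \geq 2$). Hence it suffices to exhibit $\binom{n}{2}$ affinely independent roots of \eqref{eq:half-chorded-odd-cycle-ineq} that are multicuts of $K_n$ into at most $r$ components; such a family simultaneously witnesses that $\mc^r_\leq(K_n)$ is full-dimensional and that the induced proper face is a facet.

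The heart of the argument is to re-examine the roots furnished by the facet proof for $\mc(K_n)$ (\Cref{cor:half-chorded-facet-mc}) and to bound their component count by $4$. By \Cref{lem:roots-of-half-chorded}, a lifted multicut $\chi(W)$ is a root precisely when the set $I$ of cut arcs satisfies $|I|=2$ (Condition~\ref{cond:roots-half-chorded-a}) or $|I|=3$ (Condition~\ref{cond:roots-half-chorded-b}); placing a single break point in each cut arc realizes it with exactly $|I|\in\{2,3\}$ components. Since any $\binom{n}{2}$ affinely independent roots on the face suffice, I am free to re-choose them, organized by the coordinate each is responsible for flipping: pairs of support nodes $v_i,v_j$, handled by varying the equality configuration between Conditions~\ref{cond:roots-half-chorded-a} and~\ref{cond:roots-half-chorded-b} (two or three components); pairs with at least one non-support node that are separated by a cut arc, handled by a two- or three-component configuration with a suitably placed break point; and pairs $p,q$ lying strictly inside a single arc, which require a second break point in that arc, hence one extra component. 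A triangular ordering—each newly added root flipping a coordinate untouched by its predecessors—then yields affine independence, and one checks via \Cref{lem:roots-of-half-chorded} that every chosen root still satisfies \eqref{eq:half-chorded-odd-cycle-ineq} with equality.

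The main obstacle, and the source of the threshold $r \geq 4$, is exactly the last group: splitting two nodes inside a common arc forces a second break point there, and when this must be combined with a three-arc equality configuration (Condition~\ref{cond:roots-half-chorded-b}) the decomposition has four components. The careful part is verifying that four components always suffice—i.e. that at most one arc is ever split in two while Condition~\ref{cond:roots-half-chorded-b} already accounts for the three cut arcs—and, conversely, that the three-component roots alone do \emph{not} affinely span the facet, which is what rules out $r=3$. Granting this bound, the $\binom{n}{2}$ affinely independent roots lie in $\mc^r_\leq(K_n)$ for every $r \geq 4$, and together with validity and properness they establish that \eqref{eq:half-chorded-odd-cycle-ineq} is facet-defining for $\mc^r_\leq(K_n)$; the remaining bookkeeping is routine.
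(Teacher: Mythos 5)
Your proposal is correct and takes essentially the paper's route: the paper likewise reduces the claim to the observation that the facet proof of \Cref{lem:half-chorded-facet} uses only vectors $\chi(W)$ with $\abs{W} \leq 4$, i.e.\ multicuts into at most four components (exactly one arc is ever split twice, in the vectors $x^1(w) = \chi(\{u-1,u,w,v^*\})$), so the affinely independent roots produced there already lie in $\mc^r_\leq(K_n)$ for $r \geq 4$; you rebuild this root family from \Cref{lem:roots-of-half-chorded} rather than citing it, which is the same mechanism with extra work. One slip worth correcting: your claim that one must also verify that the three-component roots do not affinely span the face, ``which is what rules out $r=3$,'' is both unnecessary — the proposition asserts nothing about $r \leq 3$ — and insufficient for that purpose, since the failure of one particular family of roots to span would not show that the inequality fails to be facet-defining for $\mc^3_\leq(K_n)$.
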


\begin{proof}
    Clearly, $\mc^r_\leq(K_n) \subseteq \mc(K_n)$.
    The claim follows from the proof of \Cref{lem:half-chorded-facet} in which only multicuts into at most $4$ components are considered.
\end{proof}

Next, we generalize the above results to arbitrary connected graphs. 
In particular, we show that, for a connected graph $G$, a half-chorded odd cycle inequality is facet-defining for $\lmc(G)$ if $v$ is true to any (not necessarily hamiltonian) cycle in $G$.

\begin{theorem}\label{thm:half-chorded-facet-general-g}
    Let $G=(V,E)$ be a connected graph and let $C = (V_C,E_C)$ be a cycle in $G$. 
    Let $5 \leq k \leq |V_C|$ with $k$ odd and $v \colon \Z_k \to V_C$ such that $v$ is injective and true to $C$. 
    Then, the half-chorded odd cycle inequality \eqref{eq:half-chorded-odd-cycle-ineq} with respect to $v$ is facet-defining for $\lmc(G)$.
\end{theorem}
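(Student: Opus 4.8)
The plan is to reduce the statement to a conveniently chosen connected spanning subgraph of $G$ that contains $C$ and has $C$ as its only cycle, and then to transport the facet from $\lmc(C)$ up to this subgraph one node at a time, bottoming out at the pure cycle case of \Cref{thm:half-chorded-facet}. Validity is immediate: by \Cref{lem:half-chorded-valid} the inequality is valid for $\mc(K_n)$, and $\lmc(G) \subseteq \mc(K_n)$ by \Cref{prop:inclusion-property}. For the facet claim, contract $C$ to a single node, take a spanning tree of the contraction inside $G$, and lift its edges to a forest $E_F \subseteq E$; then $G' = (V, E_C \cup E_F)$ is a connected spanning subgraph of $G$ whose only cycle is $C$. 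Since $E_C \cup E_F \subseteq E$, \Cref{lem:valid-and-facet-defining-for-subset} (with $\widehat G = K_V$, so that $\lmc(G',K_V) = \lmc(G')$) reduces everything to showing that the half-chorded odd cycle inequality is facet-defining for $\lmc(G')$.

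I would prove the latter by induction on $\abs{V \setminus V_C}$. The base case $V = V_C$ is exactly $\lmc(C)$, which is a facet by \Cref{thm:half-chorded-facet} because $v$ is true to $C$ (recall $\lmc(G')$ is full-dimensional by \Cref{thm:dimension}). For the inductive step, let $p$ be a leaf of the forest part of $G'$, attached to its unique neighbour $q$, and let $G'' = G' - p$; this is again connected, spanning on $V \setminus \{p\}$, and has $C$ as its only cycle, so by the induction hypothesis the inequality is facet-defining for $\lmc(G'')$. Denote by $a^\top x \le b$ the inequality, whose support lies in $\binom{V_C}{2}$, and extend it to $\bar a$ on $\binom V2$ by setting $\bar a_{pw} = 0$. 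Because $p$ has degree one in $G'$, every lifted multicut of $G'$ either places $p$ in $q$'s block, forcing $x_{pw} = x_{qw}$ for all $w$, or places $p$ in a singleton block, forcing $x_{pw} = 1$ for all $w$; in both cases the $\binom{V\setminus\{p\}}{2}$-coordinates equal those of an arbitrary multicut of $G''$, so a multicut of $G''$ is a root of $a^\top x = b$ if and only if both of its extensions are roots of $\bar a^\top x = b$.

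The coefficient-determination step is then standard. Suppose $c^\top x = d$ holds on all integral points of $\lmc(G')$ with $\bar a^\top x = b$. Restricting to the roots in which $p$ is a singleton (where all $x_{pw} = 1$) yields a linear equation on the $\binom{V \setminus \{p\}}{2}$-coordinates satisfied by every root of the facet $a^\top x = b$ of $\lmc(G'')$; as that face has dimension $\dim \lmc(G'') - 1$, this forces $c$ restricted to $\binom{V\setminus\{p\}}{2}$ to equal $\lambda a$ for some scalar $\lambda$, and $d - \sum_{w} c_{pw} = \lambda b$. Next, for each root decomposition $\Pi$ of $G''$, comparing its two extensions (both roots) gives
\[
    \sum_{w \in \mathrm{comp}_\Pi(q)} c_{pw} = 0 ,
\]
where $\mathrm{comp}_\Pi(q)$ is the block of $\Pi$ containing $q$.

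The main obstacle is to show that these constraints force $c_{pw} = 0$ for every $w \in V \setminus \{p\}$, i.e.\ that the indicator vectors $\1_{\mathrm{comp}_\Pi(q)}$, ranging over all root decompositions $\Pi$, span $\R^{V \setminus \{p\}}$. For forest nodes this is easy, since attaching or detaching a pendant toggles exactly one node in a block while preserving roothood. The delicate part concerns cycle nodes $w \in V_C$: one must produce, for each such $w$, two root decompositions whose $q$-blocks differ precisely in $w$. Here I would invoke the explicit characterization of roots in \Cref{lem:roots-of-half-chorded}, which describes roots as decompositions of $V_C$ with $\abs{I} \in \{2,3\}$ and cut points that may be slid freely within each arc $[v_i,v_{i+1}[$; this freedom lets one move the boundary of $q$'s block across a single node while keeping the index set $I$, and hence the equality, intact. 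Once $c_{pw} = 0$ for all $w$, the relation above gives $d = \lambda b$, so $(c,d) = \lambda(\bar a, b)$, which proves $\bar a^\top x \le b$ is facet-defining for $\lmc(G')$ and, by \Cref{lem:valid-and-facet-defining-for-subset}, for $\lmc(G)$.
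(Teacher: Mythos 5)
Your architecture is genuinely different from the paper's: the paper works on $G$ directly, reusing the coefficient computation of \Cref{lem:half-chorded-facet} via extensions in which all nodes of $V \setminus V_C$ are singletons, and then kills the mixed coefficients $a_{uw}$, $\{u,w\} \not\subseteq V_C$, by an induction on the length of path-shaped blocks; you instead prune $G$ to a spanning subgraph $G'$ whose unique cycle is $C$ (legitimate, via \Cref{lem:valid-and-facet-defining-for-subset}) and induct on forest leaves. The skeleton of your induction is sound: since the support of $\bar a$ avoids $p$, the integral roots of $\bar a^\top x = b$ in $\lmc(G')$ are exactly the two extensions (join $p$ to $q$'s block, or make $p$ a singleton) of the roots in $\lmc(G'')$, the restriction step correctly yields $c\vert_{\binom{V\setminus\{p\}}{2}} = \lambda a$ by full-dimensionality, and the derived equations $\sum_{w \in \mathrm{comp}_\Pi(q)} c_{pw} = 0$ are right. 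Also, your observation that each tree of the forest meets $V_C$ in at most one node (else a second cycle) is what makes cycle traces of blocks intervals, so \Cref{lem:roots-of-half-chorded} is indeed the correct description of roothood.

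The gap is in the final forcing step, in two places. First, your sliding mechanism cannot toggle the special nodes $w = v_i$: moving the boundary of $q$'s block across $v_i$ necessarily moves the corresponding element of $W$ from the arc $[v_{i-1},v_i[$ into $[v_i,v_{i+1}[$ (or back), so the index set $I$ \emph{changes} --- ``keeping $I$ intact'' is available only for nodes interior to an arc, yet you need $c_{pv_i}=0$ as well, since $\bar a_{pv_i}=0$. To toggle $v_i$ you must pair two roots with \emph{different} admissible index sets (e.g.\ two $\abs{I}=3$ sets, both satisfying Condition (b) of \Cref{lem:roots-of-half-chorded}, differing by a shift of one index); such pairs exist because $d \geq 2$, but this verification is absent and is not supplied by the mechanism you state. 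Second, ``for forest nodes this is easy'' holds only for nodes in $q$'s own tree: to toggle a node $w$ of a tree attached to the cycle at $z'$, the $q$-block must traverse a cycle arc between $z$ (the attachment node of $q$'s tree) and $z'$ inside a \emph{root} decomposition. The existence of such a traversing root block is not automatic --- the complementary arc must still accommodate cuts forming an admissible $I$ --- and it is, in substance, exactly \Cref{claim:1} in the paper's proof, established there by the explicit root $\chi(\{u'-1,\, w',\, v_{i_1+d+1}\})$ under the normalization $i_2 - i_1 \leq d \pmod{k}$. Without an analogue of that construction your toggling never reaches trees attached elsewhere on the cycle, so $c_{pw}=0$ is not proved for those $w$. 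Once both points are supplied, your route goes through; note, however, that the hard kernel of the paper's argument reappears in it essentially unchanged.
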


\proofref{thm:half-chorded-facet-general-g}

By \Cref{thm:half-chorded-facet-general-g} and \Cref{lem:facet-for-lifting-to-sub-graph}, the half-chorded odd cycle inequalities are not only facet-defining for the multicut polytope of the complete graph but also for the multicut polytope of an arbitrary graph:

\begin{corollary}
    Let $G=(V,E)$ be a connected graph. 
    Let $5 \leq k \leq |V|$ with $k$ odd, $d=\frac{k-1}{2}$, and $v \colon \Z_k \to V$ such that $v$ is injective and such that for all $i \in \Z_k$, we have $v_i v_{i+1} \in E$ and $v_i v_{i+d} \in E$.
	Then, the half-chorded odd cycle inequality with respect to $v$ is facet-defining for $\mc(G)$.
\end{corollary}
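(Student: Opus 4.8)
The plan is to deduce the statement directly from \Cref{thm:half-chorded-facet-general-g} together with \Cref{lem:facet-for-lifting-to-sub-graph}, exactly as the preceding paragraph anticipates. First I would exhibit the cycle to which \Cref{thm:half-chorded-facet-general-g} applies. Set $V_C = \{v_0, \dots, v_{k-1}\}$, which has exactly $k$ elements since $v$ is injective, and set $E_C = \{v_i v_{i+1} \mid i \in \Z_k\}$. By hypothesis every $v_i v_{i+1}$ lies in $E$, so $C = (V_C, E_C)$ is a genuine cycle in $G$, and $v \colon \Z_k \to V_C$ is a bijection onto $V_C$ that is true to $C$: it simply lists the nodes of $C$ in their cyclic order, so condition~(a) of the definition holds verbatim. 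Hence the hypotheses $5 \le k = |V_C|$, $k$ odd, and $v$ injective and true to $C$ are all met.

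Applying \Cref{thm:half-chorded-facet-general-g} to this cycle $C$ then yields that the half-chorded odd cycle inequality \eqref{eq:half-chorded-odd-cycle-ineq} with respect to $v$ is facet-defining for $\lmc(G) = \lmc(G, K_V)$, where $K_V$ denotes the complete graph on $V$. I note that this first step uses only the hypothesis $v_i v_{i+1} \in E$, which supplies the cycle; it does not yet require $v_i v_{i+d} \in E$, since the half-chords $v_i v_{i+d}$ are permitted to be edges of the lifted complete graph rather than of $G$ itself.

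The second and final step is the descent from $\lmc(G, K_V)$ to $\mc(G) = \lmc(G, G)$ by means of \Cref{lem:facet-for-lifting-to-sub-graph}. I would invoke that lemma with $\widehat{G}' = K_V$, so $F' = \binom{V}{2} \setminus E$, and $\widehat{G} = G$, so $F = \emptyset$ and $E \cup F = E$; the inclusion $F \subseteq F'$ is then trivial. The support of the inequality is $E_a = \{v_i v_{i+1} \mid i \in \Z_k\} \cup \{v_i v_{i+d} \mid i \in \Z_k\}$, and here the remaining hypothesis $v_i v_{i+d} \in E$ becomes essential: together with $v_i v_{i+1} \in E$ it gives $E_a \subseteq E = E \cup F$. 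The lemma therefore certifies that the unchanged inequality remains valid and facet-defining for $\lmc(G, G) = \mc(G)$, which is the claim.

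I do not anticipate any genuine obstacle, since the entire content of this corollary is the bookkeeping of matching the two cited results. The only points demanding a moment's care are confirming that the cyclic sequence $v$ is \emph{true to} the very cycle $C$ it induces, and that $E_a \subseteq E$; both follow immediately from the corollary's assumptions, the former because $C$ is defined by the cyclic order of $v$ and the latter because both families of support edges are assumed to lie in $E$.
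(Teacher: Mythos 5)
Your proposal is correct and matches the paper's intended argument exactly: the paper derives this corollary from \Cref{thm:half-chorded-facet-general-g} combined with \Cref{lem:facet-for-lifting-to-sub-graph}, which is precisely your two-step route of first obtaining the facet for $\lmc(G)$ via the cycle $C$ formed by the edges $v_iv_{i+1} \in E$ and then restricting to $\mc(G) = \lmc(G,G)$ using the support containment $E_a \subseteq E$ guaranteed by $v_iv_{i+d} \in E$. Your observation about which hypothesis is used where (the cycle edges for the theorem, the chords only for the support condition) is a correct and welcome clarification of the bookkeeping.
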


Independently, \citet{muller1996partial} and \citet{caprara1996} show that the separation problem for the 2-chorded cycle inequalities \eqref{eq:2-chorded-cycle} can be solved in polynomial time. 
The result by \citet{caprara1996} relies on the fact that the 2-chorded odd cycle inequalities are $\{0,\tfrac{1}{2}\}$- Chv\'atal inequalities.
As suggested by the proof of \Cref{lem:half-chorded-valid}, this is not the case for the half-chorded odd cycle inequalities.
The separation algorithm by \citet{muller1996partial}, on the other hand, searches for a shortest weighted walk in a directed auxiliary graph. 
This algorithm can be easily adapted to also separate the half-chorded odd cycle inequalities.

\begin{proposition}\label{prop:half-chorded-separation}
    The separation problem for the half-chorded odd cycle inequalities \eqref{eq:half-chorded-odd-cycle-ineq} can be solved in polynomial time.
\end{proposition}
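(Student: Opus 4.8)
The plan is to reduce the separation problem to the computation of a minimum-weight closed walk of odd length in an auxiliary digraph, adapting the shortest-walk method of \citet{muller1996partial} for the 2-chorded cycle inequalities \eqref{eq:2-chorded-cycle}. The starting point is the reindexing already recorded in \eqref{eq:2-chorded-and-half-chorded}: setting $w_j := v_{jd \bmod k}$ turns the half-chords $v_iv_{i+d}$ into consecutive edges $w_jw_{j+1}$ and the cycle edges $v_iv_{i+1}$ into 2-chords $w_jw_{j+2}$ (using $2d \equiv -1 \pmod k$, so $\gcd(d,k)=1$ and $d^{-1} \equiv -2$). Since $j \mapsto jd$ is a bijection of $\Z_k$, ranging over all injective $v$ of odd length $k$ is the same as ranging over all injective $w$, and a half-chorded odd cycle inequality \eqref{eq:half-chorded-odd-cycle-ineq} is violated by a given $x \in [0,1]^{\binom{\Z_n}{2}}$ if and only if $\sum_{j \in \Z_k}\bigl(x_{w_jw_{j+1}} - x_{w_jw_{j+2}}\bigr) > k-3$ for the corresponding $w$. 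Thus separating the whole family amounts to finding, over all odd $k \ge 5$ and all injective $w \colon \Z_k \to \Z_n$, a sequence whose value exceeds the threshold $k-3$; this is exactly the 2-chorded separation problem with the roles of the $+1$ and $-1$ edges exchanged and with a shifted threshold.

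I would then build the auxiliary digraph $D$ whose nodes are the ordered pairs $(a,b)$ of distinct nodes of $K_n$, with an arc $(a,b)\to(b,c)$ for every $c \notin \{a,b\}$, so that a closed walk $(w_0,w_1)\to(w_1,w_2)\to\cdots\to(w_{k-1},w_0)$ encodes the cyclic sequence $w$. To the arc $(a,b)\to(b,c)$ I assign the weight $\gamma(a,b,c) := 1 - x_{ab} + x_{ac}$, so that the total weight of such a closed walk equals $k - \sum_{j}(x_{w_jw_{j+1}} - x_{w_jw_{j+2}})$. The constant summand $1$ per arc absorbs the right-hand side $k-3$ into a single constant: the reindexed inequality is violated precisely when the closed walk has weight strictly below $3$, independently of its length. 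Crucially $\gamma(a,b,c) \ge 0$, because $x_{ab} \le 1$ and $x_{ac} \ge 0$, so all arc weights are non-negative and shortest walks admit Dijkstra-type computation. Odd length is enforced by the usual parity device of working in an \emph{even} and an \emph{odd} copy of $D$ and searching for a shortest walk from a node in one copy to its image in the other; iterating over all starting pairs and taking the minimum produces a minimum-weight closed odd walk in polynomial time, as $D$ has $O(n^2)$ nodes and $O(n^3)$ arcs.

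The step I expect to be the main obstacle is passing from such a walk back to a genuine half-chorded inequality, which requires the nodes $w_0,\dots,w_{k-1}$ to be \emph{distinct} and $k \ge 5$. One direction is immediate: a violated inequality reindexes to a closed odd walk with distinct nodes of weight below $3$, so the computed minimum is below $3$. For the converse one must reduce a witnessing walk to an injective (simple) one, and here the 2-chord terms couple non-adjacent positions, so splitting at a repeated node is not weight-additive; the reduction therefore has to be carried out at the level of the auxiliary graph (repeated pairs) and then completed by the same shortcutting arguments used by \citet{muller1996partial}. This reduction cannot bottom out at a degenerate short walk: a closed walk of length $1$ does not occur in $D$, and every closed walk of length $3$ has weight \emph{exactly} $3$, since the two sums $x_{w_0w_1}+x_{w_1w_2}+x_{w_2w_0}$ and $x_{w_0w_2}+x_{w_1w_0}+x_{w_2w_1}$ cancel by symmetry of $x$; hence any walk of weight $<3$ automatically has length $\ge 5$, which disposes of the lower bound on $k$ for free. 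Once the witnessing walk is reduced to an injective $w$ of odd length $\ge 5$ with $\sum_j(x_{w_jw_{j+1}} - x_{w_jw_{j+2}}) > k-3$, undoing the reindexing yields a violated half-chorded odd cycle inequality \eqref{eq:half-chorded-odd-cycle-ineq}. As all computations are non-negative-weight shortest-walk computations on a graph of polynomial size, the overall separation runs in polynomial time.
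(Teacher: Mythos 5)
Your proposal is correct and follows essentially the same route as the paper: the reindexing $w_i := v_{id}$ (using $2d \equiv -1 \pmod{k}$) to turn half-chords into consecutive edges and cycle edges into 2-chords, followed by Müller's shortest-odd-closed-walk separation in the auxiliary digraph of ordered pairs, is exactly the paper's argument, with the paper citing \citet{muller1996partial} for the walk machinery that you spell out (weight $1 - x_{ab} + x_{ac}$, non-negativity, parity doubling, and the deferral of the injectivity reduction to Müller's shortcutting arguments). Your observation that every closed walk of length $3$ has weight exactly $3$, so the bound $k \geq 5$ comes for free, is a nice explicit detail the paper leaves implicit; the only cosmetic difference is that the paper works in the clique-partitioning variables via the substitution $1-x$, which you avoid by defining the arc weights directly for the multicut variables.
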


\proofref{prop:half-chorded-separation}

We conclude this section with observations and open questions regarding the half-chorded odd cycle inequalities:

\begin{remark}
    By \Cref{thm:half-chorded-facet-general-g}, the condition that $v$ is true to a cycle in $G$ is sufficient for the half-chorded odd cycle inequality \eqref{eq:half-chorded-odd-cycle-ineq} with respect to $v$ to be facet-defining for $\lmc(G)$. 
    However, this condition is not necessary. 
    For an example, see \Cref{fig:half-chorded-not-true-still-facet}.
\end{remark}

\begin{figure}
    \centering
    \input{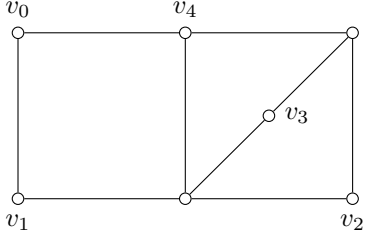}
    \caption{For the graph $G$ depicted above, there is no cycle $C=(V_C,E_C)$ in $G$ such that $v \colon \Z_5 \to V_C$ as depicted is true to $C$. 
    Still, the half-chorded odd cycle inequality \eqref{eq:half-chorded-odd-cycle-ineq} with respect to $v$ is facet-defining for the polytope $\lmc(G)$.}
    \label{fig:half-chorded-not-true-still-facet}
\end{figure}

\begin{remark}\label{rem:facets-of-both-mc-and-lmc}
    Using \citet{porta}, we have computed all facets of the lifted multicut polytope $\lmc(C)$ for cycles of size $n = 3,\dots,8$. 
    For those $n$, the box inequalities $x_e \leq 1$ (cf. \Cref{cor:upper-box-facet}), the path inequalities \eqref{eq:path} for paths of length $2$ (cf. \Cref{cor:path-length-2-facet}) and the half-chorded odd cycle inequalities \eqref{eq:half-chorded-odd-cycle-ineq} (cf. \Cref{thm:half-chorded-facet}) are the only inequalities that are facet-defining for both, the lifted multicut polytope $\lmc(C)$ for a cycle $C$ with $n$ nodes and the multicut polytope $\mc(K_n)$ of complete graph $K_n$. 
    This raises the question whether this holds for all $n$.
\end{remark}

\subsection{Star-glider inequalities}\label{sec:star-glider}

In the following, we introduce three large classes of facet-defining inequalities of $\lmc(C)$. 
Firstly, we define the classes of star inequalities and glider inequalities, both of which are of exponential size. 
Then, we generalize these classes to the class of star-glider inequalities and show that this class of inequalities is indeed valid and facet-defining for $\lmc(C)$.

\begin{definition}\label{defn:star-inequality}
	For any cycle $C=(\Z_n,E)$,
	any odd number $k$,
	any $v \in \Z_n$ and 
	any $w_1,\dots,w_k \in \Z_n \setminus \{v\}$ distinct such that $\langle v,w_1,\dots,w_k \rangle$ is true to $C$, 
    the \emph{star inequality} with respect to $v$ and $w_1,\dots,w_k$ is defined as 
	\begin{align}\label{eq:star-inequality}
		\sum_{i=1}^k (-1)^{i-1} x_{vw_i} \leq 1 \enspace.
	\end{align}
\end{definition}

The support graph of a star inequality is a star graph with center $v$, hence the name. 
For $k = 1$, a star inequality is merely a box inequality $x_e \leq 1$ for $e=\{v,w_1\}$. 
For $k=3$, $w_1=v+1$ and $w_3 = v-1$ the star inequality
\begin{align*}
	& x_{v,v+1} - x_{vw_2} + x_{v,v-1} \leq 1 \\
    \Leftrightarrow \quad &
    1 - x_{vw_2} \leq (1-x_{v,v+1}) + (1 - x_{v,v-1})
\end{align*}
is merely a cut inequality \eqref{eq:cut} with respect to the $vw_2$-cut $\{\{v-1,v\},\{v,v+1\}\}$. 
For an illustration of the support graph of a star inequality with $k=5$, see \Cref{fig:star-glider-inequalities}a.

\begin{proposition}\label{prop:num-star-inequalities}
	For any integer $n \geq 3$ and the cycle $C = (\Z_n, E)$, there are precisely $n2^{n-2} - n(n-1)/2$ distinct star inequalities.
\end{proposition}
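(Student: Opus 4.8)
The plan is to exhibit a correspondence between star inequalities and pairs $(v, S)$ consisting of a center $v \in \Z_n$ and an odd-cardinality subset $S \subseteq \Z_n \setminus \{v\}$, and then to count these pairs, correcting for the over-counting that arises when $|S| = 1$. First I would observe that, under the convention that a cyclic sequence true to $C$ satisfies condition (a), the requirement that $\langle v, w_1, \dots, w_k \rangle$ be true to $C$ fixes the ordering of $w_1, \dots, w_k$ uniquely once the center $v$ and the underlying set $S = \{w_1, \dots, w_k\}$ are given: the $w_i$ must be listed in the increasing cyclic order starting immediately after $v$. Hence the data in \Cref{defn:star-inequality} is equivalent to a pair $(v, S)$ with $|S|$ odd, and each such pair determines a single star inequality whose support is the star $\{vw \mid w \in S\}$ and whose coefficients alternate $+, -, +, \dots$ along this order.

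Next I would argue that this assignment is injective on pairs with $|S| \geq 3$. Given a star inequality \eqref{eq:star-inequality} with $k \geq 3$, its support consists of $k \geq 2$ distinct edges that pairwise intersect only in $v$; therefore $v$ is the unique node incident to every edge of the support and is recovered from the inequality, after which $S$ is recovered as the set of remaining endpoints. Since all coefficients are $\pm 1$, and hence nonzero, distinct sets $S$ yield distinct supports, so distinct pairs $(v, S)$ with $|S| \geq 3$ give distinct inequalities. The reverse ordering arising from condition (b) produces the same inequality, because reversing a tuple of odd length leaves the alternating sign pattern unchanged, so no additional inequalities appear. The only ambiguity occurs when $|S| = 1$: here the inequality reduces to the box inequality $x_{vw} \leq 1$ for the single edge $vw$, whose center is not determined, so the two pairs $(v, \{w\})$ and $(w, \{v\})$ collapse to one inequality.

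Finally I would count. For each fixed $v$, the set $\Z_n \setminus \{v\}$ has $n-1$ elements and hence $2^{n-2}$ subsets of odd cardinality, giving $n2^{n-2}$ pairs $(v, S)$ in total. Of these, exactly $n(n-1)$ have $|S| = 1$. The remaining $n2^{n-2} - n(n-1)$ pairs, all with $|S| \geq 3$, correspond bijectively to distinct star inequalities, while the $n(n-1)$ pairs with $|S| = 1$ collapse two-to-one onto the $\tfrac{n(n-1)}{2}$ distinct box inequalities. Since a box inequality (support of size $1$) can never coincide with a star inequality of support size $\geq 3$, the total number of distinct star inequalities is
\[
    \big(n2^{n-2} - n(n-1)\big) + \frac{n(n-1)}{2} = n2^{n-2} - \frac{n(n-1)}{2},
\]
as claimed. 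The step requiring the most care is the bookkeeping around $|S| = 1$: verifying both that the center is genuinely ambiguous there, which forces the factor-of-two correction, and that no such degeneracy can occur for $|S| \geq 3$, where the center is pinned down as the unique apex of the star.
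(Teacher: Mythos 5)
Your proof is correct and takes essentially the same approach as the paper's: count the $n2^{n-2}$ pairs of a center $v$ with an odd-cardinality subset of $\Z_n \setminus \{v\}$, then subtract $n(n-1)/2$ because each box inequality (the case $k=1$) is counted twice. You merely spell out details the paper leaves implicit, namely that trueness to $C$ fixes the ordering of $S$ (the reversed ordering yielding the same inequality since $k$ is odd) and that the center is uniquely recoverable from the support when $k \geq 3$.
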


\proofref{prop:num-star-inequalities}

\begin{definition}\label{defn:glider-inequality}
	For any integer $n \geq 3$, 
	the cycle $C = (\Z_n, E)$, 
    any $0 \leq k \leq n-3$, 
    any $w \in \Z_n$ and 
    any $v_0,\dots,v_{k+1} \in \Z_n \setminus \{w\}$ distinct such that $\langle w,v_0,\dots,v_{k+1} \rangle$ is true to $C$,
    the \emph{glider inequality} with respect to $w$ and $v_0,\dots,v_{k+1}$ is defined as 
    \begin{align}\label{eq:glider-inequality}
        \sum_{i=0}^k x_{v_iv_{i+1}} - \sum_{i=1}^k x_{v_iw} \leq 1 \enspace.
    \end{align}
    We call the edge set $\{v_iv_{i+1} \mid i \in \{0,\dots,k\}\}$ the \emph{sail} of the glider inequality and the edge set $\{v_iw \mid i \in \{1,\dots,k\}\}$ the \emph{strings} of the glider inequality.
\end{definition}

For $k=0$, a glider inequality is merely a box inequality $x_e \leq 1$ with $e=v_0v_1$. 
For $k=1$, a glider inequality is merely a star inequality (with $k=3$). 
For an illustration of the support graph of a glider inequality with $k=3$, see \Cref{fig:star-glider-inequalities}b. 

\begin{proposition}\label{prop:num-glider-inequalities}
    For the $n$-cycle, there are precisely $n2^{n-1} - n(n + (n-1)(n-3)/2)$ distinct glider inequalities. 
\end{proposition}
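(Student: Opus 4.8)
The plan is to count distinct glider inequalities by identifying each one with a combinatorial datum and then carefully accounting for the cases in which different data yield the same coefficient vector. First I would observe that, once the center $w \in \Z_n$ and the node set $S = \{v_0,\dots,v_{k+1}\} \subseteq \Z_n \setminus \{w\}$ are fixed, the requirement that $\langle w, v_0, \dots, v_{k+1}\rangle$ be true to $C$ determines the order of $v_0,\dots,v_{k+1}$ uniquely, and hence determines the entire glider inequality \eqref{eq:glider-inequality}. Conversely, every glider arises from such a pair $(w, S)$ with $2 \le |S| = k+2 \le n-1$. Thus the naive enumeration is by (center, set) pairs, and the task reduces to determining how many distinct coefficient vectors these pairs produce.

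Next I would show that the parameter $k$ is an invariant of the coefficient vector: the sail contributes exactly $k+1$ edges with coefficient $+1$ and the strings contribute exactly $k$ edges with coefficient $-1$, and since $w \notin S$ no edge is simultaneously a sail edge and a string edge, so no cancellation occurs. Hence gliders with different values of $k$ never coincide, and I can treat each $k$ separately. For $k \ge 1$ I would prove that the pair $(w, S)$ is recoverable from the coefficient vector, so that the assignment $(w,S) \mapsto$ inequality is injective: the center $w$ is the unique node incident to a $-1$ edge but to no $+1$ edge (every internal sail node $v_i$ lies on the two sail edges $v_{i-1}v_i$ and $v_iv_{i+1}$, so it fails this property), while $S$ is exactly the set of endpoints of the $+1$ edges. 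Consequently the number of distinct gliders with $k \ge 1$ equals $n$ times the number of subsets of $\Z_n \setminus \{w\}$ of size at least $3$, namely $n\bigl(2^{n-1} - 1 - (n-1) - \binom{n-1}{2}\bigr)$.

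The remaining, degenerate case $k = 0$ is where the overcounting occurs and is the step requiring the most care. Here the glider is simply the box inequality $x_{v_0v_1} \le 1$, which carries no $-1$ edge and therefore does not record its center at all; for a fixed size-$2$ set $S = \{v_0,v_1\}$ every center $w \in \Z_n \setminus S$ yields the same inequality, and there are $n - 2 \ge 1$ such centers. Hence the $k=0$ gliders contribute exactly $\binom{n}{2}$ distinct inequalities, one per edge of $K_n$. Adding this to the $k \ge 1$ contribution and simplifying, using $\binom{n-1}{2} = \tfrac{(n-1)(n-2)}{2}$ together with $\tfrac{n(n-1)}{2}\bigl(1-(n-2)\bigr) = -\tfrac{n(n-1)(n-3)}{2}$, gives $n2^{n-1} - n^2 - \tfrac{n(n-1)(n-3)}{2} = n2^{n-1} - n\bigl(n + (n-1)(n-3)/2\bigr)$, as claimed. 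The main obstacle is the collision analysis, i.e.\ establishing that for $k \ge 1$ the center is genuinely recoverable (so that there is no overcounting) whereas for $k = 0$ it is not (so that each box inequality is counted exactly once), rather than the final arithmetic simplification.
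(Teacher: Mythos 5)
Your proof is correct and takes essentially the same route as the paper: for each of the $n$ choices of center $w$, count the $2^{n-1}-n-\tfrac{(n-1)(n-2)}{2}$ subsets of $\Z_n\setminus\{w\}$ of size at least $3$ (the case $k\ge 1$, where the true-to-$C$ requirement fixes the order uniquely), then add the $\binom{n}{2}$ box inequalities arising from $k=0$. Your explicit recoverability argument---that for $k\ge 1$ the center is the unique node incident to a $-1$ edge but to no $+1$ edge, so distinct pairs $(w,S)$ yield distinct coefficient vectors---makes rigorous a distinctness claim the paper's two-line proof leaves implicit (note also that the displayed formula in the paper's proof contains a typo, $2^n$ in place of $2^{n-1}$; the surrounding text and your arithmetic give the correct value).
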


\proofref{prop:num-glider-inequalities}

The star and glider inequalities are generalized by the following definition:

\begin{definition}\label{defn:star-glider-inequality}
	Let $C=(\Z_n,E)$ an $n$-cycle, let $0 \leq k \leq n-3$ and let $v_0,\dots,v_{k+1} \in \Z_n$ distinct such that $\langle v_0,\dots,v_{k+1} \rangle$ is true to $C$. 
    For $i=1,\dots,k$, let $m_i \in \N$ odd, let $w_i^1,\dots,w_i^{m_i} \in \;]v_{k+1}, v_0[$ distinct such that $\langle w_i^1,\dots,w_i^{m_i} \rangle$ is true to the cycle $C$ and such that $w_{i+1}^{m_{i+1}} \in \;]{v_{k+1}}, w_i^1]$ for $i=1,\dots,k-1$. 
    The \emph{star-glider inequality} with respect to $v_0,\dots,v_{k+1}$ and $w_1^1,\dots,w_1^{m_1}, \dots, w_k^1,\dots,w_k^{m_k}$ is defined as
    \begin{align}\label{eq:star-glider-inequality}
        \sum_{i=0}^k x_{v_iv_{i+1}} + \sum_{i=1}^k \sum_{j=1}^{m_i} (-1)^j x_{v_iw_i^j} \leq 1 \enspace.
    \end{align}
    Similarly to \Cref{defn:glider-inequality}, we call the edge set $\{v_iv_{i+1} \mid i \in \{0,\dots,k\}\}$ the \emph{sail} of the star-glider inequality and the edge set $\{v_iw_i^j \mid i\in \{1,\dots,k\}, j\in \{1,\dots,m_i\}\}$ the \emph{strings} of the star-glider inequality.
\end{definition}

The conditions on the orderings of the $v_i$ and $w_i^j$ from above definition ensure that the sequence $\langle v_0,\dots,v_{k+1},w_k^1,\dots,w_k^{m_k},\dots,w_1^1,\dots,w_1^{m_1} \rangle$ is true to the cycle $C$ and that all these nodes are distinct, except potentially $w_i^{m_i} = w_{i+1}^1$, for $i=1,\dots,k-1$.

For $k=0$, a star-glider inequality is merely a box inequality $x_e \leq 1$ with $e = v_0v_1$. 
The star-glider inequalities with $k=1$ are precisely the star inequalities according to \Cref{defn:star-inequality}. 
The star-glider inequalities with $m_i=1$ and $w_i^1 = w$ for all $i=1,\dots,k$ for a fixed $w \in \Z_n$ are precisely the glider inequalities according to \Cref{defn:glider-inequality}.
For an illustration of the support graph of a star-glider inequality that is neither a star inequality nor a glider inequality, see \Cref{fig:star-glider-inequalities}c.

\begin{figure}
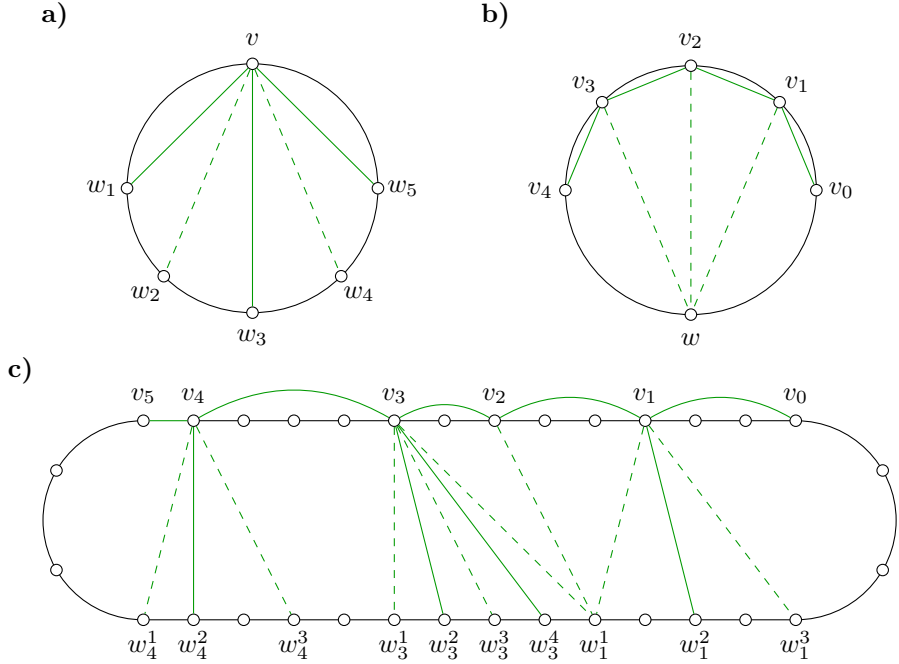

    \center
    \textbf{a)} \imagetop{\input{figures/cycle/star-inequality.tex}} 
    \hspace{1em}
    \textbf{b)} \imagetop{\input{figures/cycle/glider-inequality.tex}}
    \linebreak
    \textbf{c)} \imagetop{\input{figures/cycle/star-glider-inequalities.tex}}
    \caption{Depicted in green are the support graphs of a star inequality from \Cref{defn:star-inequality} in \textbf{(a)}, of a glider inequality from \Cref{defn:glider-inequality} in \textbf{(b)} and of a star-glider inequality from \Cref{defn:star-glider-inequality} in \textbf{(c)}. 
    The edges corresponding to coefficients $+1$ and $-1$ are depicted as continuous and dashed lines, respectively. 
    With a bit of imagination, the support graph of a glider inequality looks like a hang glider consisting of a sail and strings. 
    Note for the depicted star-glider inequality that $w_1^1=w_2^1=w_3^5$.}
    \label{fig:star-glider-inequalities}
\end{figure}

\begin{lemma}\label{lem:star-glider-valid}
    For any $n$-cycle $C=(\Z_n,E)$, any star-glider inequality \eqref{eq:star-glider-inequality} is valid for $\lmc(C)$.
\end{lemma}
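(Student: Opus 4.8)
The plan is to use that $\lmc(C)$ is, by definition, the convex hull of the characteristic vectors $\chi(W)$ of the lifted multicuts of the cycle. Since a linear inequality that holds at every generator of a convex hull holds on the whole polytope, it suffices to verify \eqref{eq:star-glider-inequality} for every $x=\chi(W)$ with $W\subseteq\Z_n$, $\lvert W\rvert\neq 1$. I fix such a decomposition of $C$ into arc components and recall that, for any two nodes $a,b$, the entry $x_{ab}$ equals $1$ precisely when $a$ and $b$ lie in distinct components.

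First I would treat each string cluster separately. Fix $i\in\{1,\dots,k\}$ and set $b_j=x_{v_iw_i^j}\in\{0,1\}$ for $j=1,\dots,m_i$. The nodes $w_i^1,\dots,w_i^{m_i}$ appear in cyclic order inside the arc $]v_{k+1},v_0[$, and the component of $v_i$ is itself an arc; hence, apart from the special situation in which the component of $v_i$ wraps around the whole ``top'' of the cycle, the index set $J_i=\{\,j\mid b_j=0\,\}$ of those $w_i^j$ lying in the component of $v_i$ is a contiguous block. Using that $m_i$ is odd, so that $\sum_{j=1}^{m_i}(-1)^j=-1$, the cluster contribution becomes
\[
    T_i:=\sum_{j=1}^{m_i}(-1)^j x_{v_iw_i^j}
    = -1-\sum_{j\in J_i}(-1)^j ,
\]
and for a contiguous block $J_i$ one has $\sum_{j\in J_i}(-1)^j\in\{-1,0,1\}$, whence $T_i\in\{-2,-1,0\}$; in particular $T_i\le 0$. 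I would record exactly when $T_i=0$ (an odd block starting at an odd index) versus $T_i\le -1$, since this information is what pays for the sail edges.

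The main step is to combine the sail sum $\sum_{i=0}^{k}x_{v_iv_{i+1}}$ with the cluster contributions and show the total is at most $1$. The naive per-cluster estimate only yields $T_i\le 0$ together with $\sum_{i=0}^k x_{v_iv_{i+1}}\le k+1$, which is far too weak; the point is that each cut sail edge $v_iv_{i+1}$ forces a change of component along the top of the cycle, and, by the prescribed ordering in \Cref{defn:star-glider-inequality}, such a change must ``open'' or ``close'' one of the neighbouring clusters in a way that makes the corresponding $T_i$ strictly negative, cancelling the $+1$. I would make this precise either by walking once around the cycle while maintaining a running value of the partial left-hand side together with an invariant bounding it by $1$, or, equivalently, by induction on $k$, peeling off the sail edge $v_0v_1$ together with the cluster nearest $v_0$ and reducing to the star-glider inequality for $k-1$ (the base case $k=0$ being the box inequality $x_{v_0v_1}\le 1$).

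The hard part will be the bookkeeping in this last step, and in particular the wrap-around configurations in which the component of some $v_i$ encircles the entire sail: there a cluster can a priori contribute $T_i=+1$, but in exactly those configurations all sail edges are uncut and the remaining clusters are tightly constrained, so that the global count is still at most $1$. Controlling these coupled cases --- rather than any single calculation --- is where the real work lies, and I would organise it by a case distinction on how the components of the decomposition interleave the sail nodes and the string clusters around $C$, taking care of the overlap possibility $w_i^{m_i}=w_{i+1}^{1}$ permitted by the definition.
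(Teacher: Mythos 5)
Your reduction to vertices and your per-cluster analysis are correct, and they reproduce the first half of the paper's own argument: with $S(i):=\sum_{j=1}^{m_i}(-1)^j x_{v_iw_i^j}$ (your $T_i$), the set $J_i$ is in fact always a \emph{prefix or a suffix} of $\{1,\dots,m_i\}$ — not just a contiguous block — unless the component of $v_i$ wraps around the entire sail, and this gives $T_i\le 0$ in all non-wrap configurations. You also correctly isolate the wrap case as the only source of $T_i=+1$ and correctly observe that a wrap forces all sail edges to be uncut and absorbs every other cluster entirely, so the remaining $T_{i'}$ vanish and the total stays at most $1$. This settles the case $c=0$, where $c:=\sum_{i=0}^{k}x_{v_iv_{i+1}}$; the paper's treatment of this case via the extreme clusters containing a cut even-indexed string amounts to the same bound $\sum_i S(i)\le 1$, phrased through the path and cut inequalities \eqref{eq:path} and \eqref{eq:cut} rather than explicit arc geometry.

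The genuine gap is the case $c\ge 1$, which you leave as a plan, and the cancellation heuristic you state is off by one — taken literally it would prove a false bound. Consider the decomposition of $C$ into $\{v_0\}$ and the complementary arc: then $x_{v_0v_1}=1$, all other sail edges are uncut, and every $T_i=0$ (all strings lie in the big component with their $v_i$), so the left-hand side of \eqref{eq:star-glider-inequality} equals exactly $1$. Hence the (first) cut sail edge need \emph{not} be cancelled by any strictly negative neighbouring cluster, and your proposed induction peeling off $v_0v_1$ together with cluster $1$ fails as stated, since it would need $x_{v_0v_1}+T_1\le 0$, which this example refutes; the inductive hypothesis would have to be strengthened in a nontrivial way. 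What is true, and what the paper proves in one stroke with no walk invariant and no induction, is a global statement: if $i_-$ and $i_+$ denote the smallest and largest indices with $x_{v_iv_{i+1}}=1$, then \emph{every} cluster $i$ with $i_-<i\le i_+$ satisfies $T_i=-1$. Indeed, by the path inequalities \eqref{eq:path} there exist cut cycle-edges $e_-$ and $e_+$ on the arcs between $v_{i_-},v_{i_-+1}$ and between $v_{i_+},v_{i_++1}$, and $\{e_-,e_+\}$ is a $v_iw_i^j$-cut for every such $i$ and every $j$, so by \eqref{eq:cut} all strings of these clusters are cut. Since $i_+-i_-\ge c-1$, this yields $\sum_i T_i\le -(c-1)$ and a total of at most $c-(c-1)=1$. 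Your structural groundwork is entirely compatible with this; the fix is to replace the per-edge local cancellation by this between-the-extremes count.
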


\proofref{lem:star-glider-valid}

\begin{theorem}\label{thm:star-glider-facet}
    For the $n$-cycle $C=(\Z_n,E)$, any star-glider inequality \eqref{eq:star-glider-inequality} defines a facet of $\lmc(C)$.
\end{theorem}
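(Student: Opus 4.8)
The plan is to proceed by the standard dimension argument: since $\lmc(C)$ is full-dimensional (by Theorem~\ref{thm:dimension}, $\dim \lmc(C) = \binom{n}{2}$), I must exhibit $\binom{n}{2}$ affinely independent characteristic vectors $\chi(W)$ that satisfy the star-glider inequality \eqref{eq:star-glider-inequality} with equality. Equivalently, I will show that any valid inequality $a^\top x \leq b$ satisfied with equality by all roots of the star-glider inequality must be a positive scalar multiple of \eqref{eq:star-glider-inequality}. I expect the latter ``rank'' formulation to be cleaner here, because the combinatorial structure of the sail and strings makes it natural to pin down each coefficient $a_e$ by comparing two roots that differ in a single controlled way.

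First I would characterize the roots of the star-glider inequality, i.e. the sets $W \subseteq \Z_n$ for which $\chi(W)$ attains equality, in analogy with Lemma~\ref{lem:roots-of-half-chorded}. The validity proof (Lemma~\ref{lem:star-glider-valid}) should already reveal the tightness structure: the right-hand side is $1$, so a root corresponds to a decomposition of $C$ that ``uses up'' exactly one unit of slack along the sail, with the string contributions (the alternating $\pm 1$ terms) cancelling appropriately. I would catalogue the roots into a few families, e.g. those that cut exactly one sail edge $v_iv_{i+1}$ while grouping the $w_i^j$ so that the alternating string sum is zero, and those that interact with the glider apex region $]v_{k+1}, v_0[$. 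This classification is the combinatorial heart of the construction and mirrors the roles the $v_i$, $w_i^j$ play in Definition~\ref{defn:star-glider-inequality}.

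Next, assuming $a^\top \chi(W) = b$ for every root $W$, I would determine the coefficients $a_e$ by a sequence of local perturbations. The technique is to take two roots $W, W'$ differing by moving a single node across a boundary or by splitting/merging one component; subtracting the two equalities $a^\top \chi(W) = a^\top \chi(W') = b$ forces a linear relation among a small number of $a_e$. By choosing these pairs systematically I would show: (a) coefficients of edges outside the support (the sail and strings) are forced to zero; (b) the sail coefficients are all equal to a common value $\lambda > 0$; (c) the string coefficients realize the alternating $\pm\lambda$ pattern; and (d) the right-hand side scales as $b = \lambda$. Assembling these, $a = \lambda \cdot (\text{star-glider coefficient vector})$, which is exactly what facet-definingness requires. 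The inclusion results (Lemma~\ref{lem:valid-and-facet-defining-for-subset}) and the degenerate reductions noted after Definition~\ref{defn:star-glider-inequality}—that $k=0$ gives a box inequality (Corollary~\ref{cor:upper-box-facet}) and $k=1$ gives a star inequality—give useful base cases and consistency checks.

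The main obstacle I anticipate is step (c), pinning down the string coefficients with the correct alternating signs, because the $w_i^j$ can coincide across consecutive indices (the excerpt notes $w_i^{m_i} = w_{i+1}^1$ is allowed), so the roots interacting with a given string edge are entangled with neighbouring strings. Producing roots that isolate a single string edge—without disturbing the delicate alternating cancellation that keeps the point on the face—will require care, likely an inductive argument on $k$ that peels off one string bundle at a time, or an explicit nested construction exploiting that $\langle v_0,\dots,v_{k+1},w_k^1,\dots,w_1^{m_1}\rangle$ is true to $C$. Establishing affine independence of the constructed roots (or, dually, the uniqueness of $a$) across these overlapping cases is where the bulk of the technical work will lie.
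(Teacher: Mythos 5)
Your overall strategy is the paper's: the indirect argument that any valid inequality $a^\top x \leq b$ tight on the face must be a positive scalar multiple of \eqref{eq:star-glider-inequality}, with the base case $k=0$ dispatched by \Cref{cor:upper-box-facet}, and with the coefficients resolved in your order (a)--(d), which matches the paper's conclusion $a_{v_iv_{i+1}}=\alpha$, $a_{v_iw_i^j}=(-1)^j\alpha$, $b=\alpha>0$. However, two of your concrete steps diverge from what actually makes the proof work, and one is a genuine gap as stated. First, the paper never performs your opening step of classifying \emph{all} roots in analogy with \Cref{lem:roots-of-half-chorded}; such a classification for \eqref{eq:star-glider-inequality} would be substantially more delicate than for the half-chorded inequalities and is unnecessary. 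Instead, the paper only verifies rootness for a very restricted family: vectors $\psi(V)$ coming from decompositions of $C$ with a single nontrivial component $V$ that is an interval $[s,t]$ (or a half-open or open variant) of the cycle and all other nodes singletons, checked case by case according to where $s$ and $t$ lie relative to the $v_i$ and the $w_i^j$.

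Second, your mechanism of taking \emph{two} roots differing by moving one node across a boundary and claiming this ``forces a linear relation among a small number of $a_e$'' does not hold up: in the lifted setting, moving a single node between components changes up to $n-1$ coordinates of the characteristic vector, so a pairwise difference yields a relation among up to $n-1$ coefficients. The missing idea is the four-term telescoping identity
\begin{align*}
\psi([s,t[) + \psi(]s,t]) - \psi(]s,t[) - \psi([s,t]) = \1_{\{st\}} \enspace ,
\end{align*}
which, whenever all four interval vectors are roots, isolates a \emph{single} off-support edge and gives $a_{st}=0$ in one stroke. Only after all off-support coefficients vanish do two-term differences suffice on the support, because their long relations collapse; e.g.\ $a^\top\bigl(\psi([w_i^{m_i},v_i[)-\psi([w_i^{m_i},v_i])\bigr)=0$ then yields exactly $a_{v_{i-1}v_i}+a_{v_iw_i^{m_i}}=0$, and three analogous pairings give the alternating string pattern. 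Your anticipated obstacle with coincident endpoints of consecutive string bundles is real, but the paper resolves it not by induction on $k$ or by peeling off string bundles: in that case one root difference gives a summed relation $\sum_{u \in \; ]s,t]} a_{su}=0$ in which every term but one is already known to vanish from earlier cases, so the remaining coefficient is zero — i.e.\ the entanglement is handled by ordering the case analysis, not by a separate inductive mechanism.
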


\proofref{thm:star-glider-facet}

As the star-glider inequalities generalize the star inequalities and the glider inequalities we obtain the following two corollaries.

\begin{corollary}\label{cor:star-glider-facet}
	For any $n$-cycle $C=(\Z_n,E)$, any star inequality \eqref{eq:star-inequality} is valid and defines a facet of $\lmc(C)$.
\end{corollary}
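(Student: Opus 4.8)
The plan is to derive this statement directly from the two preceding results on star-glider inequalities, by exhibiting every star inequality as a special case of a star-glider inequality. First I would recall the observation made immediately after \Cref{defn:star-glider-inequality}: a star-glider inequality with parameter $k=1$ reduces exactly to a star inequality, while with $k=0$ it reduces to a box inequality $x_e \leq 1$, which is itself the single-spoke star inequality. Concretely, given a star inequality \eqref{eq:star-inequality} with center $v$ and spokes $w_1,\dots,w_k$ (with $k$ odd and $\langle v,w_1,\dots,w_k \rangle$ true to $C$), and assuming $k \geq 3$, I would instantiate \eqref{eq:star-glider-inequality} with sail parameter equal to $1$ by setting $v_1 := v$, $v_2 := w_1$ (the first spoke), $v_0 := w_k$ (the last spoke), and reading off the intermediate spokes as the strings $w_1^{\,j} := w_{j+1}$ for $j=1,\dots,k-2$.

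I would then verify the bookkeeping that makes the two inequalities literally coincide. The ordering $\langle v_0,v_1,v_2 \rangle = \langle w_k, v, w_1 \rangle$ is true to $C$ because the cyclic order of $v,w_1,\dots,w_k$ places $w_k$ immediately before $v$; moreover the intermediate spokes lie in $\;]v_2,v_0[\; = \;]w_1,w_k[$, as required. Comparing coefficients, the two sail edges $v_0v_1$ and $v_1v_2$ carry coefficient $+1$ and the string edges $v_1 w_1^{\,j}$ carry $(-1)^j$, so that around the center $v_1$ the spokes $v_2, w_1^1,\dots,w_1^{m_1}, v_0$ alternate $+,-,+,\dots,-,+$; since $m_1 = k-2$ is odd (because $k$ is odd), this is exactly the alternating pattern $(-1)^{i-1}$ of \eqref{eq:star-inequality}. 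The degenerate case $k=1$ (a single spoke) is the box inequality $x_{vw_1} \leq 1$, obtained as the $k=0$ star-glider.

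With this identification in hand, both claims of the corollary follow immediately: validity of every star inequality for $\lmc(C)$ is a consequence of \Cref{lem:star-glider-valid}, and the facet-defining property is a consequence of \Cref{thm:star-glider-facet}. I do not expect any genuine obstacle here, since all of the real work is carried by the validity lemma and the facet theorem for the larger star-glider class; the only thing to confirm is the index-and-sign matching above, which is routine.
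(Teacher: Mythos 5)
Your proposal is correct and follows exactly the paper's route: the paper derives \Cref{cor:star-glider-facet} from \Cref{lem:star-glider-valid} and \Cref{thm:star-glider-facet} via the observation, stated after \Cref{defn:star-glider-inequality}, that the star-glider inequalities with $k=1$ (and $k=0$ for the single-spoke case) are precisely the star inequalities. Your explicit index-and-sign verification, with $v_1 = v$, $v_2 = w_1$, $v_0 = w_k$, $m_1 = k-2$ odd, and strings $w_1^j = w_{j+1}$, correctly spells out the identification that the paper only asserts.
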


\begin{corollary}\label{cor:glider_facet_defining}
	For any $n$-cycle $C=(\Z_n,E)$, any glider inequality \eqref{eq:glider-inequality} is valid and defines a facet of $\lmc(C)$.
\end{corollary}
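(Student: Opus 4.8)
The plan is to realize every glider inequality as a special instance of a star-glider inequality, and then to read off both validity and the facet property from the results already proved for the more general class. Concretely, I would start from an arbitrary glider inequality as in \Cref{defn:glider-inequality}, given by an integer $0 \leq k \leq n-3$, a node $w \in \Z_n$, and distinct nodes $v_0,\dots,v_{k+1} \in \Z_n \setminus \{w\}$ with $\langle w,v_0,\dots,v_{k+1} \rangle$ true to $C$. In the notation of \Cref{defn:star-glider-inequality}, I would make the choice $m_i = 1$ and $w_i^1 = w$ for every $i \in \{1,\dots,k\}$, which is exactly the specialization recorded in the paragraph following \Cref{defn:star-glider-inequality}.

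First I would verify that this choice is admissible, i.e.\ that the ordering hypotheses of \Cref{defn:star-glider-inequality} are satisfied. Because $\langle w,v_0,\dots,v_{k+1} \rangle$ is true to $C$, the subsequence $\langle v_0,\dots,v_{k+1} \rangle$ is true to $C$ as well, and $w$ lies in the open interval $]v_{k+1},v_0[$; hence each $w_i^1 = w$ is a legitimate choice. Moreover $m_i = 1$ is odd, each singleton sequence $\langle w_i^1 \rangle = \langle w \rangle$ is vacuously true to $C$, and the nesting condition $w_{i+1}^{m_{i+1}} \in \;]v_{k+1}, w_i^1]$ reduces to $w \in \;]v_{k+1}, w]$, which holds trivially. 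Next I would substitute the choice into \eqref{eq:star-glider-inequality}: for each $i$ the inner sum over $j$ has the single term $(-1)^1 x_{v_i w} = -x_{v_i w}$, so \eqref{eq:star-glider-inequality} collapses to $\sum_{i=0}^k x_{v_i v_{i+1}} - \sum_{i=1}^k x_{v_i w} \leq 1$, which is precisely the glider inequality \eqref{eq:glider-inequality}.

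With this identification established, validity of the glider inequality for $\lmc(C)$ follows immediately from \Cref{lem:star-glider-valid}, and the fact that it defines a facet of $\lmc(C)$ follows from \Cref{thm:star-glider-facet}. The only point demanding genuine attention is the first step, namely matching the ``true to $C$'' and interval conditions of the two definitions; since this is exactly the equivalence between $\langle w,v_0,\dots,v_{k+1}\rangle$ being true to $C$ and the pair of conditions ``$\langle v_0,\dots,v_{k+1}\rangle$ true to $C$ and $w \in \;]v_{k+1},v_0[$'', it is a routine unwinding of the cyclic-order definition rather than a substantive obstacle. Everything else is a direct specialization, so I do not anticipate any real difficulty.
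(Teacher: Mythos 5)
Your proposal is correct and takes essentially the same route as the paper: the paper obtains \Cref{cor:glider_facet_defining} precisely by observing that the glider inequalities are the star-glider inequalities with $m_i = 1$ and $w_i^1 = w$ for all $i$, and then invoking \Cref{lem:star-glider-valid} and \Cref{thm:star-glider-facet}. Your explicit verification of the admissibility conditions (that $\langle v_0,\dots,v_{k+1}\rangle$ is true to $C$ with $w \in \;]v_{k+1},v_0[$, and that the nesting condition degenerates to a triviality) is a routine unwinding that the paper leaves implicit, so it adds rigor without changing the argument.
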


As we have seen before, for $vw \in F$, the cut inequality \eqref{eq:cut} with respect to the $vw$-cut $\{\{v-1,v\},\{v,v+1\}\}$ is a special case of the star inequality \eqref{eq:star-inequality}. 
Thanks to \Cref{prop:cut-not-facet}, we obtain a complete characterization of the condition under which a cut inequality \eqref{eq:cut} is facet-defining for $\lmc(C)$:

\begin{corollary}\label{cor:cut_facet_equivalence}
    For the cycle $C=(\Z_n,E)$, a cut inequality \eqref{eq:cut} with respect to a $vw$-cut $\{e,e'\}$ defines a facet of $\lmc(C)$ if and only if $e$ and $e'$ are adjacent.
\end{corollary}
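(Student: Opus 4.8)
The plan is to prove the two directions separately, assembling results established earlier in this section. For the \emph{only if} direction, I would invoke \Cref{prop:cut-not-facet} directly: if the cut inequality \eqref{eq:cut} with respect to the $vw$-cut $\{e,e'\}$ is facet-defining for $\lmc(C)$, then $e$ and $e'$ must share a node, which is exactly the assertion that $e$ and $e'$ are adjacent. No further work is needed for this implication.

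For the \emph{if} direction, I would first argue that adjacency of the two cut edges pins the cut down to one of exactly two shapes. Writing $e=\{s,s+1\}$ and $e'=\{t,t+1\}$ with $s\in[v,w[$ and $t\in[w,v[$, as in the description of \eqref{eq:cut}, adjacency means $s+1=t$ or $t+1=s$ in $\Z_n$. Since $s\in[v,w[$ gives $s+1\in\;]v,w]$, and the half-open intervals $]v,w]$ and $[w,v[$ meet only in $\{w\}$, the first case forces $s+1=t=w$, i.e.~the cut $\{\{w-1,w\},\{w,w+1\}\}$. Symmetrically, $t\in[w,v[$ gives $t+1\in\;]w,v]$, and $]w,v]$ and $[v,w[$ meet only in $\{v\}$, so the second case forces $s=v$ and $t=v-1$, i.e.~the cut $\{\{v-1,v\},\{v,v+1\}\}$. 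Thus the only adjacent $vw$-cuts are the two listed in the remark following \Cref{prop:cut-not-facet}.

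It then remains to recognize each of these two cut inequalities as a star inequality. For the cut $\{\{v-1,v\},\{v,v+1\}\}$, inequality \eqref{eq:cut} reads $1-x_{vw}\leq(1-x_{v,v+1})+(1-x_{v,v-1})$, which rearranges to $x_{v,v+1}-x_{vw}+x_{v,v-1}\leq 1$; this is precisely the star inequality \eqref{eq:star-inequality} with center $v$, $k=3$, and $w_1=v+1$, $w_2=w$, $w_3=v-1$, as already observed below \Cref{defn:star-inequality}. One checks that $w\neq v\pm 1$ (otherwise $vw\in E$, contradicting $vw\in F$), so the three $w_i$ are distinct and $\langle v,v+1,w,v-1\rangle$ is true to $C$, making this a bona fide star inequality. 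By \Cref{cor:star-glider-facet}, every star inequality is facet-defining for $\lmc(C)$, so this cut inequality defines a facet. The cut $\{\{w-1,w\},\{w,w+1\}\}$ is handled identically with center $w$. Combining both directions yields the claimed equivalence.

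The computations are entirely routine; the only step demanding care is the interval bookkeeping in the second paragraph, where I must confirm that adjacency admits no $vw$-cuts beyond the two special ones, and that the cyclic ordering and distinctness conventions required by \Cref{defn:star-inequality} are genuinely met when reading off the star inequality.
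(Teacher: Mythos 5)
Your proof is correct and follows essentially the same route as the paper: the \emph{only if} direction is exactly \Cref{prop:cut-not-facet}, and the \emph{if} direction identifies the two adjacent $vw$-cuts $\{\{v-1,v\},\{v,v+1\}\}$ and $\{\{w-1,w\},\{w,w+1\}\}$ with star inequalities for $k=3$, which are facet-defining by \Cref{cor:star-glider-facet}. Your interval bookkeeping simply makes explicit the remark, stated without proof after \Cref{prop:cut-not-facet}, that these two cuts are the only $vw$-cuts with adjacent edges.
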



\section{Conclusion}

We define and analyze the lifted multicut polytope, establishing conditions under which the canonical box, cycle, path and cut inequalities define facets.
In particular, we characterize which cycles and paths in the graph give rise to facet-defining inequalities, thereby generalizing a classic result of \citet{Chopra1993}. 

For the special case of lifting from a tree, we analyze the connections between the lifted multicut polytope and the multilinear polytope, identifying two new classes of valid inequalities:
the \emph{intersection inequalities} that are facet-defining, and the \emph{generalized intersection inequalities} that generalize the previously known cutting planes and are facet-defining under two necessary conditions we establish.
For the further specialization of lifting from a path to the complete graph, the intersection inequalities, together with some canonical inequalities, constitute a totally dual integral formulation of the lifted multicut polytope.
This result relates the geometry of the path partition problem to the combinatorial properties of the sequential set partition problem. 

Complementary to trees, we study the lifted multicut polytope for lifting from cycles, starting from two observations: 
1.~While the lifted multicut problem for cycles can be solved efficiently, the lifted multicut polytope for cycles defies a simple description.
2.~While the lifted multicut polytope for cycles is a subset of the multicut polytope for the complete graph, known classes of facets of the multicut polytope for complete graphs, with only a few canonical exceptions, are not facet-defining for the lifted multicut polytope for cycles.
Motivated by these observations, we establish several large classes of facet-defining inequalities for the lifted multicut polytope for cycles. 
In particular, we introduce the class of \emph{half-chorded odd cycle inequalities} that are facet-defining for both, the lifted multicut polytope for cycles and the multicut polytope for complete graphs. 
Under isomorphic transformation, these constitute a new class of facet-defining inequalities also for the much studied \emph{clique partitioning polytope} \citep{Groetschel1990}. 
Finally, we establish the first non-canonical facet-defining inequalities for the lifted multicut polytope for arbitrary graphs that arise from cycles in that graph.

Directions for future work arise from two conjectures concerning cutting planes originating from specific structures.
The first involves the generalized intersection inequalities. 
In fact, numerical experiments such as \Cref{example:gen-int-ineq} suggest that these valid inequalities might be sufficient to describe completely the lifted multicut polytope for paths lifted to arbitrary graphs.
Secondly, we speculate that the half-chorded odd cycle inequalities are the only non-trivial facet-defining inequalities of the lifted multicut polytope of the cycle inherited form the multicut polytope of the complete graph (cf. \Cref{rem:facets-of-both-mc-and-lmc}). 
Trueness of this second conjecture would imply that the vast knowledge about the facial structure of the multicut polytope of the complete graph cannot be transferred to the lifted multicut polytope.

\paragraph{Acknowledgements} 
The authors acknowledge a contribution by Andrea Hor\v{n}\'akov\'a who has stated Condition~2 of Theorem~\ref{thm:cut-facets} in its present general form.

\appendix
\section{Proofs}\label{sec:proofs}

\begin{delayedproof}{prop:mc-apx-hard}
    Given an undirected graph $G=(V,E)$, a partition of $E$ into \emph{attractive} edges $E^+$ and \emph{repulsive} edges $E^-$, and non-negative edge weights $w \in \R_{\geq 0}^E$ the \emph{maximum agreement correlation clustering} \cite{Bansal2004} problem is the following optimization problem:
    \begin{align}\label{eq:max-agree-cc}
        \max_{x \in \mc(G)} \sum_{e \in E^+} w_e (1 - x_e) + \sum_{e \in E^-} w_e x_e \enspace .
    \end{align}
    Here, the objective is to maximize the sum of the weights of the edges that agree with the signature of the edges, i.e. the edges in $E^+$ that are not cut with respect to $x$ and the edges in $E^-$ that are cut with respect to $x$.
    Unless \textsc{p=np}, \citet{Charikar2005} show that for all $\epsilon > 0$, \eqref{eq:max-agree-cc} cannot be approximated within factor $\frac{79}{80}+\epsilon$ for arbitrary non-negative weights and within factor $\frac{115}{116} + \epsilon$ for unit weights $w_e = 1$ for all $e \in E$.

    We define the cost vector $c \in \R^E$ with $c_e = w_e$ for $e \in E^+$ and $c_e=-w_e$ for $e \in E^-$.
    The objective function of the multicut problem with respect to graph $G$ and costs $c$ can be written as
    \begin{align}\label{eq:mc-cc-identity}
        \sum_{e \in E} c_e x_e = 
        \sum_{e \in E^+} w_e 
        - \left(\sum_{e \in E^+} w_e (1-x_e) + \sum_{e \in E^-} w_e x_e \right)
    \end{align}
    where the term inside the parentheses is precisely the objective of \eqref{eq:max-agree-cc}.
    Now, suppose there exists an $\alpha$-approximation algorithm for the multicut problem for some $0 < \alpha \leq 1$. 
    Let $\text{OPT}_{\text{MC}}$ and $\text{OPT}_{\text{CC}}$ be the optimal solution of the multicut problem and the maximum agreement correlation clustering problem for the given instance and let $\text{A}_{\text{MC}}$ and $\text{A}_{\text{CC}}$ be the respective objective values that are archived by the solution of the approximation algorithm.
    By definition, it holds that $\text{A}_{\text{MC}} \leq \alpha \text{OPT}_{\text{MC}}$ (note that $\text{OPT}_{\text{MC}} \leq 0$ as $x=0$ is always feasible).
    Using the identity \eqref{eq:mc-cc-identity} we obtain
    \begin{align*}
        W^+ - \text{A}_{\text{CC}} &\leq \alpha (W^+ - \text{OPT}_{\text{CC}}) \\
        \Rightarrow \qquad \qquad
        \text{A}_{\text{CC}} &\geq \alpha \text{OPT}_{\text{CC}} + (1-\alpha) W^+ \\
        \Rightarrow \qquad \qquad
        \text{A}_{\text{CC}} &\geq \alpha \text{OPT}_{\text{CC}}
    \end{align*}
    where $W^+ := \sum_{e \in E^+} w_e \geq 0$.
    Therefore, an $\alpha$-approximation algorithm for the multicut problem is also an $\alpha$-approximation algorithm for the maximum agreement correlation clustering problem.
    By the above hardness results from \citet{Charikar2005} the claim follows.
\end{delayedproof}

\begin{delayedproof}{prop:lifted-multicut-polytope-inequalities}
Let $x \in \{0,1\}^{E \cup F}$ be such that $x = \1_M$ for a multicut $M$ of $\widehat G$ lifted from $G$.
Every cycle in $G$ is a cycle in $\widehat G$. 
Moreover, for any $\{u,v\} = f \in F$ and any $uv$-path $P=(V_P,E_P)$ in $G$, it holds that $E_P \cup \{f\}$ induces a cycle in $\widehat G$.
Therefore, $x$ satisfies all inequalities \eqref{eq:lifted-multicut-cycle} and \eqref{eq:lifted-multicut-path} by \Cref{prop:multicuts-statisfy-cycle-ineq}.
Assume $x$ violates some inequality of \eqref{eq:lifted-multicut-cut}.
Then, there is an edge $uv \in F$ and some $uv$-cut $\delta(U)$ in $G$ such that $x_{uv} = 0$ and for all $e \in \delta(U)$ we have $x_e = 1$.
Let $\Pi = \phi_{\widehat{G}}^{-1}(M)$ be the partition of $V$ that is induced by $M$.
Due to $x_{uv} = 0$, there exists some $W \in \Pi$ with $u \in W$ and $v \in W$. 
However, for any $ww' \in \delta(U)$ it holds that $w \notin W$ or $w' \notin W$ by $x_{ww'}=1$. 
This means the induced subgraph $(W,E \cap \binom{W}{2} )$ is not connected, as $\delta(U)$ is a $uv$-cut.
Hence, $\Pi$ is not a decomposition of $G$, which is a contradiction.

Now, suppose $x \in \{0,1\}^{E \cup F}$ satisfies the inequalities \eqref{eq:lifted-multicut-cycle}--\eqref{eq:lifted-multicut-cut} as specified. 
We show first that $M = x^{-1}(1)$ is a multicut of $\widehat G$.
Assume the contrary, then, by \Cref{prop:multicuts-statisfy-cycle-ineq}, there is a cycle $C=(V_C,E_C)$ in $\widehat G$ and some edge $e$ such that $E_C \cap M = \{e\}$.
For every $uv = f \in F \cap E_C \setminus \{e\}$ there exists a $uv$-path $P=(V_P,E_P)$ in $G$ such that $x_{e'} = 0$ for all $e' \in E_P$.
Otherwise there would be some $uv$-cut in $G$ violating \eqref{eq:lifted-multicut-cut}, as $G$ is connected.
If we replace every such $f$ with its associated path $P$ in $G$, then the resulting cycle violates either \eqref{eq:lifted-multicut-cycle} (if $e \in E$) or \eqref{eq:lifted-multicut-path} (if $e \in F$).
Thus, $M$ is a multicut of $\widehat G$ and the induced partition $\Pi = \phi_{\widehat G}^{-1}(M)$ is a decomposition of $\widehat{G}$.
Assume $\Pi$ is not a decomposition of $G$. 
Then, there exists a component $U \in \Pi$ that is not connected in $G$, i.e. there exist $vw \in F$ with $v,w \in U$ such that every $vw$-path in $G$ is cut with respect to $x$.
However, in that case the path inequality \eqref{eq:lifted-multicut-path} with respect to $f$ and that $vw$-path is violated, contradicting our assumption.
Therefore, $\Pi$ is also a decomposition of $G$ and hence, $M$ is indeed lifted from $G$.
\end{delayedproof}

\begin{delayedproof}{thm:dimension}
    As $0 \in \lmc(G,\widehat{G})$ the dimension of $\lmc(G,\widehat{G})$ is equal to the dimension of the vector space that is spanned by $\lmc(G,\widehat{G})$.
    We prove the claim by constructing all unit vectors $\1_{\{uv\}}$ for $uv \in E \cup F$ as linear combinations of characteristic vectors of multicuts of $\widehat{G}$ lifted from $G$.

    For $uv \in E \cup F$ let $P=(V_P,E_P)$ be a $uv$-path in $G$. 
    We define $V_1 = V_P$, $V_2 = V_P \setminus \{u,v\}$, $V_3 = V_P \setminus \{v\}$ and $V_4 = V_P \setminus \{u\}$. 
    For $i=1,\dots,4$, let $\Pi_i = \{V_i\} \cup \{\{w\} \mid w \in V \setminus V_i\}$ be the partition of $V$ that consists of the set $V_i$ and otherwise singular nodes. 
    Clearly the partitions $\Pi_i$, for $i=1,\dots,4$, are decompositions of $G$. 
    For $i=1,\dots,4$, let $x^i := \1_{\phi_{\widehat{G}}(\Pi_i)} \in \lmc(G,\widehat{G})$  be the characteristic vector of the multicut of $\widehat{G}$ lifted from $G$ that is induced by the decomposition $\Pi_i$. 
    For $st \in E \cup F$ the following holds:
    \begin{itemize}
        \item $x^i_{st} = 1$ for $i=1,\dots,4$ if $\{s,t\} \not\subseteq V_P$,
        \item $x^i_{st} = 0$ for $i=1,\dots,4$ if $\{s,t\} \subseteq V_P \setminus \{u,v\}$,
        \item $x^i_{st} = 0$ for $i=1, 3$ and $x^i_{st}=1$ for $i=2,4$ if $\{s,t\} \subseteq V_P$, $u \in \{s,t\}$, $v \notin \{s,t\}$,
        \item $x^i_{st} = 0$ for $i=1,4$ and $x^i_{st} = 1$ for $i=2,3$ if $\{s,t\} \subseteq V_P$, $v \in \{s,t\}$, $u \notin \{s,t\}$,
        \item $x^1_{st} = 0$ and $x^i_{st} = 1$ for $i=2,3,4$ if $\{s,t\} = \{u,v\}$.
    \end{itemize}
    Altogether we have $\1_{\{uv\}} = -x^1 - x^2 + x^3 + x^4$ which concludes the proof.
\end{delayedproof}

\begin{delayedproof}{prop:inclusion-property}
The set of all multicuts of $\widehat{G}$ lifted from $G$ is $\phi_{\widehat{G}}(D_G)$, which is the set of all multicuts of $\widehat{G}$ that are induced by decompositions of $G$. 
With this we can write
\[
    \lmc(G,\widehat{G}) = 
    \conv \big \{ \1_{\phi_{\widehat{G}}(\Pi)} \mid \Pi \in D_G \big \} \enspace.
\]
Therefore, to prove the Lemma it suffices to show $D_{G'} \subseteq D_G$ if and only if $E' \subseteq E$.

In case $E' \subseteq E$, any set $U \subseteq V$ that is connected in $G'$ is also connected in $G$. 
Thus, any decomposition of $G'$ is also a decomposition of $G$, i.e. $D_{G'} \subseteq D_G$. 
If otherwise it holds that $E' \not\subseteq E$ there exists $uv \in E'$ with $uv \notin E$. 
Then, the partition $\Pi = \{\{u,v\}\} \cup \{ \{w\} \mid w \in V \setminus \{u,v\}\}$ is a decomposition of $G'$ but not of $G$ because the node set $\{u,v\}$ is not connected in $G$, i.e. $D_{G'} \not\subseteq D_{G}$.
\end{delayedproof}

\begin{delayedproof}{lem:valid-and-facet-defining-for-subset}
Let $a^\top x \leq b$ be valid for $\lmc(G,\widehat{G})$. 
By \Cref{prop:inclusion-property}, the inequality is also valid for $\lmc(G',\widehat{G})$. 
By \Cref{thm:dimension}, both polytopes $\lmc(G',\widehat{G})$ and $\lmc(G,\widehat{G})$ have full dimension $m:=|E \cup F|$. 
If $a^\top x \leq b$ is facet-defining for $\lmc(G',\widehat{G})$, there exist $m$ affinely independent vectors $x^1,\dots,x^m \in \lmc(G',\widehat{G})$ that satisfy $a^\top x \leq b$ with equality. 
By \Cref{prop:inclusion-property}, it holds that $x^1,\dots,x^m \in \lmc(G,\widehat{G})$ and hence $a^\top x \leq b$ is also facet-defining for $\lmc(G,\widehat{G})$.
\end{delayedproof}

\begin{delayedproof}{lem:facet-for-lifting-to-sub-graph}
For a decomposition $\Pi$ of $G$, let $x = \1_{\phi_{\widehat{G}'}(\Pi)}$ and $y = \1_{\phi_{\widehat{G}}(\Pi)}$ be the characteristic vectors of the multicuts of $\widehat{G}'$ and $\widehat{G}$, respectively lifted from $G$, induced by the decomposition $\Pi$. 
It holds that $y_e = x_e$ for $e \in E \cup F$ and, thus, $\lmc(G,\widehat{G})$ is obtained from $\lmc(G,\widehat{G}')$ by projecting out the variables $x_e$ for $e \in F' \setminus F$.
By the assumption $E_a \subseteq E \cup F$ it follows $a^\top x = \bar{a}^\top y$.
Therefore, validity of $a^\top x \leq b$ for $\lmc(G,\widehat{G}')$ implies the validity of $\bar{a}^\top y \leq b$ for $\lmc(G,\widehat{G})$. 
Further if $x$ satisfies $a^\top x = b$ then $y$ satisfies $\bar{a}^\top y = b$.
By \Cref{thm:dimension}, it holds that $\dim\lmc(G,\widehat{G}') = m := |E \cup F'|$ and, since $a^\top x \leq b$ is facet-defining, there exist $m$ affinely independent vectors $x^1,\dots,x^m \in \lmc(G,\widehat{G}')$ satisfying $a^\top x = b$. 
Let $y^1,\dots,y^m \in \lmc(G,\widehat{G})$ be the vectors that are obtained by deleting the dimensions $e$ for $e \in F' \setminus F$ from $x^1,\dots,x^m$.
Since the matrix $A$ with rows $x^2-x^1,\dots,x^m-x^1$ has rank $m-1$ the matrix with rows $y^2-y^1,\dots,y^m-y^1$ that is obtained by deleting $|F'\setminus F|$ many columns of $A$ has at least rank $m-1-|F'\setminus F| = |E \cup F| - 1$.
Therefore the set $\{y^1,\dots,y^m\}$ contains at least $|E \cup F|$ affine independent vectors all satisfying $\bar{a}^\top y \leq b$ with equality. Hence, the inequality is indeed facet-defining for $\lmc(G,\widehat{G})$.
\end{delayedproof}

\begin{delayedproof}{thm:upper-box-facets}
Let $S = \{x \in \lmc(G,\widehat{G}) \cap \Z^{E \cup F} \mid x_e=1\}$ and put $\Sigma = \conv S$. 

To show necessity, suppose there is some $uv \in F \setminus \{e\}$ such that $s$ and $t$ are $uv$-cut-nodes.
Then, for any $uv$-path $P=(V_P,E_P)$ in $G$, it holds that $s,t \in V_P$, i.e. either $e \in E_P$ or $e$ is a chord of $P$.
We claim that we have $x_{uv}=1$ for any $x \in S$.
This gives $\dim \Sigma \leq \abs{E \cup F} - 2$, so the inequality $x_e \leq 1$ cannot define a facet of $\lmc(G,\widehat{G})$.
If there are no $uv$-paths that have $e$ as a chord, then $\{e\}$ is a $uv$-cut and the claim follows from the corresponding cut inequality \eqref{eq:lifted-multicut-cut}.
Otherwise, every $uv$-path $P$ that has $e$ as a chord contains a $st$-subpath $P'=(V_{P'},E_{P'})$ such that $E_{P'} \cup \{e\}$ induces a cycle.
Thus, for any $x \in S$, the inequalities \eqref{eq:lifted-multicut-cycle} or \eqref{eq:lifted-multicut-path} (for $e \in E$ or $e \in F$, respectively) imply the existence of some $e_{P'} \in E_{P'}$ such that $x_{e_{P'}} = 1$.
Let $\mathcal{P}$ denote the set of all such paths $P'$.
It is easy to see that the collection $\bigcup_{P' \in \mathcal{P}} \{e_{P'}\} \cup \{e\}$ contains a $uv$-cut.
This gives $x_{uv} = 1$ via the corresponding cut inequality \eqref{eq:lifted-multicut-cut}.

We turn to the proof of sufficiency. 
By \Cref{thm:dimension}, we have to show that $\dim \Sigma = \abs{E \cup F} - 1$.
The dimension of $\Sigma$ is equal to the dimension of the vector space spanned by $L = \{x - y \mid x,y \in S\}$.
We prove the claim by showing that $L$ contains $\abs{E \cup F}-1$ unit vectors.
Assume there is no $uv \in F \setminus \{e\}$ such that $s$ and $t$ are $uv$-cut-nodes in $G$.
By this assumption, for every $uv \in E \cup F \setminus \{e\}$ there exists a $uv$-path $P=(V_P,E_P)$ in $G$ with $e \not\subseteq V_P$. 
Let $V_i$, $\Pi_i$ and $x^i$ for $i=1,\dots,4$ be defined as in the proof of \Cref{thm:dimension}.
Then, we have $x^i \in S$ for $i=1,\dots,4$ and it holds that $\1_{\{uv\}} = -x^1 - x^2 + x^3 + x^4 \in L$ which concludes the proof.
\end{delayedproof}

\begin{delayedproof}{thm:lower-box-facets}
Let $S = \{x \in \lmc(G,\widehat{G}) \cap \Z^{E \cup F} \mid x_e=0\}$ and put $\Sigma = \conv S$.

Consider the case that $e \in E$. 
Let $G_{[e]}$ and $\widehat G_{[e]}$ be the graphs obtained from $G$ and $\widehat G$, respectively, by contracting the edge $e$ (and subsequently merging parallel edges). 
The lifted multicuts $x^{-1}(1)$ for $x \in S$ correspond bijectively to the multicuts of $\widehat G_{[e]}$ lifted from $G_{[e]}$. 
This implies $\dim \Sigma = \dim \lmc(G_{[e]}, \widehat G_{[e]})$. 
The claim follows from \Cref{thm:dimension} and the fact that $\widehat G_{[e]}$ has $\abs{E \cup F} - 1$ many edges if and only if $e$ is not contained in any triangle in $\widehat G$.

Now, suppose $\{u,v\} = e \in F$. 
We show necessity of the Conditions~\ref{cond:box-1}--\ref{cond:box-3} by proving that if any of them is violated, then all $x \in S$ satisfy some additional equation and thus, $\dim \Sigma \leq \abs{E \cup F} - 2$.

First, assume that \ref{cond:box-1} is violated. 
Hence, there are edges $e',e'' \in E \cup F$ such that $\{e,e',e''\}$ induces a triangle in $\widehat G$. 
The triangle inequalities
\begin{align}
x_{e'} & \leq x_e + x_{e''} \enspace, \label{eq:proof-facets-box-lower-1} \\
x_{e''} & \leq x_e + x_{e'} \enspace, \label{eq:proof-facets-box-lower-2} 
\end{align}
are cycle inequalities for cycles of length three.
By \Cref{prop:lifted-multicut-polytope-inequalities} these inequalities are valid for $\mc(\widehat{G})$ and by \Cref{prop:inclusion-property} they are valid for $\lmc(G,\widehat{G})$, in particular every $x \in S$ satisfies the triangle inequalities.
Thus, by \eqref{eq:proof-facets-box-lower-1}, \eqref{eq:proof-facets-box-lower-2} and $x_e = 0$, every $x \in S$ satisfies $x_{e'} = x_{e''}$.

Next, assume that \ref{cond:box-2} is violated. 
Consider a violating pair $u'v' \neq uv, u' \neq v'$ of $uv$-cut-nodes. 
For every $x \in S$, there exists a $uv$-path $P=(V_P,E_P)$ in $G$ with $x_f = 0$ for all $f \in E_P$, as $x_e = 0$.
Any such path $P$ has a sub-path $P'=(V_{P'},E_{P'})$ from $u'$ to $v'$ because $u'$ and $v'$ are $uv$-cut-nodes. We distinguish the following cases.
\begin{itemize}
\item If the distance of $u'$ and $v'$ in $\widehat G$ is 1, then $u'v' \in E \cup F$.
If $u'v' \in E_P$, then $x_{u'v'} = 0$ because $x_f = 0$ for all $f \in E_P$.
If otherwise $u'v' \not\in E_P$, we obtain $x_{u'v'} = 0$ by $x_f = 0$ for all $f \in E_{P'}$ and the cycle/path inequality
\begin{align*}
x_{u'v'} \leq \sum_{f \in E_{P'}} x_{f} \enspace.
\end{align*}
Thus $x_{u'v'} = 0$ for all $x \in S$.

\item If the distance of $u'$ and $v'$ in $\widehat G$ is 2, there is a $u'v'$-path in $\widehat G$ consisting of two distinct edges $e',e'' \in E \cup F$.
We show that all $x \in S$ satisfy $x_{e'} = x_{e''}$:
\begin{itemize}
\item If $e' \in E_P$ and $e'' \in E_P$, then $x_{e'} = x_{e''} = 0$ because $x_f = 0$ for all $f \in E_P$.
\item If $e' \in E_P$ and $e'' \notin E_P$ then $x_{e'} = x_{e''} = 0$ by $x_f = 0$ for all $f \in E_{P'}$ and the cycle/path inequality
\begin{align*}
x_{e''} \leq \sum_{f \in E_{P'} \setminus \{e'\}} x_{f} \enspace.
\end{align*}

\item If $e' \notin E_P$ and $e'' \notin E_P$  then $x_{e'} = x_{e''}$ by $x_f = 0$ for all $f \in E_{P'}$ and the cycle/path inequalities
\begin{align*}
x_{e''} & \leq x_{e'} + \sum_{f \in E_{P'}} x_{f} \\
x_{e'} & \leq x_{e''} + \sum_{f \in E_{P'}} x_{f}
\end{align*}
which are valid for $x$ as they are valid for $\mc(\widehat{G})$ by \Cref{prop:lifted-multicut-polytope-inequalities}.
\end{itemize}
\end{itemize}

Finally, assume that \ref{cond:box-3} is violated.
Hence, there exists a $uv$-cut-node $t$ and a $uv$-separating set of nodes $\{s,s'\}$ such that $\{ts, ts', ss'\}$ induces a triangle in $\widehat G$. 
We have that all $x \in S$ satisfy $x_{ss'} = x_{ts} + x_{ts'}$ as follows. 
At most one of $x_{ts}$ and $x_{ts'}$ is $1$, because $t$ is a $uv$-cut-node and $ss'$ is $uv$-separating as well. 
Moreover, it holds that $x_{ts} + x_{ts'} = 0$ if and only if $x_{ss'} = 0$ by the associated triangle inequalities.
\end{delayedproof}

\begin{delayedproof}{thm:cycle-facets}
By Theorem 3.2 of \cite{Chopra1993}, for any chordal cycle $\widehat{C}=(V_{\widehat{C}},E_{\widehat{C}})$ in $\widehat G$ and any $f \in E_{\widehat{C}}$, the associated cycle inequality
\begin{align}\label{eq:chordal-cycle}
    x_f \leq \sum_{e \in E_{\widehat{C}} \setminus \{f\}} x_e 
\end{align}
is not facet-defining for $\mc(\widehat G)$. 
This implies that \eqref{eq:chordal-cycle} is not facet-defining for $\lmc(G,\widehat G)$ as $\lmc(G,\widehat G) \subseteq \mc(\widehat G)$ by \Cref{prop:inclusion-property} and $\dim \lmc(G,\widehat G) = \dim \mc(\widehat G)$ by \Cref{thm:dimension}.
Hence, this shows necessity for both \ref{enum:chordless-cycle} and \ref{enum:chordless-path}.

For the proof of sufficiency, suppose the cycle $C=(V_C,E_C)$ of $G$ is chordless in $\widehat G$ and let $f \in E_C$. 
Let 
\[
    S = \left\{ x \in \lmc(G,\widehat{G}) \cap \Z^{E \cup F} 
        \;\middle|\; x_f = \sum\nolimits_{e \in E_C \setminus \{f\}} x_e \right\}
\] 
and define $\Sigma = \conv S$.
Let $\Sigma'$ be a facet of $\lmc(G,\widehat{G})$ such that $\Sigma \subseteq \Sigma'$ and suppose it is induced by the inequality $a^\top x \leq \alpha$ with $a \in \R^{E \cup F}$ and $\alpha \in \R$, i.e., $\Sigma' = \conv S'$, where
\begin{align*}
    S' = \left\{ x \in \lmc(G,\widehat{G}) \cap \Z^{E \cup F} \mid a^\top x = \alpha \right\} \enspace.
\end{align*}
As $0 \in S \subseteq S'$, we have $\alpha = 0$. 
We show that $a^\top x \leq \alpha$ is a scalar multiple of a cycle inequality \eqref{eq:lifted-multicut-cycle} and thus $\Sigma = \Sigma'$.

Let $y \in \{0,1\}^{E \cup F}$ be defined by $y_e = 0$ for $e \in E_C$ and $y_e = 1$ for $e \notin E_C$, i.e.\ all edges except $E_C$ are cut. 
Then, $y \in S \subseteq S'$, since $C$ is chordless. 
For any $e \in E_C \setminus \{f\}$, the vector $x \in \{0,1\}^{E \cup F}$ with $x_{e'} = 0$ for $e' \in E_C \setminus \{f,e\}$ and $x_{e'} = 1$ for $e' \notin E_C \setminus \{f,e\}$ satisfies $x \in S \subseteq S'$.
Therefore, $a^\top (y-x) = 0$ and thus
\begin{align}\label{eq:proof-cycle-facets-1}
    a_{e} = -a_f \quad \forall e \in E_C \setminus \{f\} \enspace. 
\end{align}

It remains to show that $a_{uv} = 0$ for all edges $uv \in E \cup F \setminus E_C$. 

First, suppose $u,v \notin V_C$.
Let $P = (V_P,E_P)$ be a $uv$-path in $G$. 
We proceed similarly to the proof of \Cref{thm:dimension}. 
If $V_P \cap V_C = \emptyset$ let $V_1 = V_P$, otherwise let $V_1 = V_P \cup V_C$. 
Let $V_2 = V_1 \setminus \{u,v\}$, $V_3 = V_1 \setminus \{u\}$ and $V_4 \setminus \{v\}$. 
If $V_P \cap V_C = \emptyset$ let $\Pi_i = \{V_i, V_C\} \cup \{\{w\} \mid w \in V \setminus (V_i \cup V_C)\}$, otherwise let $\Pi_i = \{V_i\} \cup \{\{w\} \mid w \in V \setminus V_i\}$ for $i=1,\dots,4$. 
Let $x^i = \1_{\phi_{\widehat{G}}(\Pi_i)}$ be the characteristic vector of the multicut induced by the decomposition $\Pi_i$ for $i=1,\dots,4$. 
By construction, it holds that $x^i \in S \subseteq S'$. 
As in the proof of \Cref{thm:dimension}, for 
\[
    x = - x^1 - x^2 + x^3 + x^4
\]
it holds that $x_e = 1$ and $x_{e'} = 0$ for all other $e' \in E \cup F \setminus \{e\}$. It holds that $a^\top x = 0$, which yields $a_e = 0$.

Next, for $v \in V \setminus V_C$ we show $a_e = 0$ for all $e = vu \subseteq E \cup F$ with $u \in V_C$. 
Let $w \in V_C$ be such that there exists a $vw$-path $P=(V_P,E_P)$ with $V_P \cap V_C = \{w\}$.
We pick a direction on $C$ and traverse $C$ from one endpoint of $f$ to the other endpoint of $f$ according to that direction. 
Let $e_i = vu_i$ for $i=1,...,k$ be an ordering of all edges $e = vu \in E \cup F$ with $u \in V_C$ such that $u_i$ comes before $u_{i+1}$ on the traversal of $C$.
Let $m \in \{1,\dots,k\}$ such that $w=u_m$ or $w$ comes after $u_m$ but before $u_{m+1}$ on the traversal of $C$. 
For $i \in \{1,\dots,k\}$ with $i < m$ let $e \in E_C$ be an edge between $u_i$ and $u_{i+1}$. 
Let $U, U' \subseteq V_C$ be the two components of $C$ that are obtained when cutting the edges $f$ and $e$. 
We may assume $w \in U$ by potentially interchanging $U$ and $U'$. 
Let $V^i_1 = V_P \cup U$ and let $V^i_2 = V^i_1 \setminus \{v\}$ for $i=1,\dots,m-1$. We define $\Pi^i_j = \{V^i_j,U'\} \cup \{\{s\}\mid s \in V \setminus (V^i_j \cup U')\}$ for $j=1,2$. 
For an illustration, see \Cref{fig:proof-cycle-facet}.
Additionally we define $V^0_1 = V_P \cup V_C$, $V^0_2 = V^0_1 \setminus \{v\}$ and $\Pi^0_j = \{V^0_j\} \cup \{\{s\} \mid s \in V \setminus V^0_j\}$ for $j=1,2$.
For fixed $i \in \{1,...,k\}$, $i < m$ we define $x^1 = \1_{\phi_{\widehat{G}}(\Pi^{i-1}_1)}$, $x^2 = \1_{\phi_{\widehat{G}}(\Pi^i_2)}$, $x^3 = \1_{\phi_{\widehat{G}}(\Pi^{i}_1)}$ and $x^4 = \1_{\phi_{\widehat{G}}(\Pi^{i-1}_2)}$.
By construction, it holds that $x^j \in S \subseteq S'$ for $j=1,\dots,4$. As in the proof of \Cref{thm:dimension}, for
\[
    x = - x^1 - x^2 + x^3 + x^4
\]
it holds that $x_{vu_i} = 1$ and $x_{e'} = 0$ for all other edges $e' \in E \cup F \setminus \{vu_i\}$. 
Therefore, $a^\top x = 0$ yields $a_{vu_i} = 0$. 
This holds for all $i \in \{1,\dots,k\}$ with $i < m$. 
By reversing the direction on $C$ we also obtain $a_{vu_i} = 0$ for all $i \in \{1,\dots,k\}$ with $i > m$.
It remains to show $a_{vu_m} = 0$.
To that end let $x^1 = \1_{\phi_{\widehat{G}}(\Pi^0_1)}$, $x^2 = \1_{\phi_{\widehat{G}}(\Pi^0_2)}$. 
Again, it holds that $x^1,x^2 \in S \subseteq S'$. 
For $x = x^2 - x^1$ it holds that $x_{vs} = 1$ for all $vs \in E \cup F$ with $s \in V^0_2$ and $x_{e'} = 0$ for all other edges $e'$. 
By above we have that $a_{vs} = 0$ for all $vs \in E \cup F$ with $s \in V^0_2 \setminus \{u_m\}$ and $a^\top x = 0$ yields the desired $a_{vu_m} = 0$.

This concludes the proof of sufficiency of the first assertion.
The proof of sufficiency in the second assertion is completely analogous (consider the cycle $C$ that is obtained by adding $f$ to the path $P$).
The chosen multicuts remain valid, because $f$ is the only edge in the cycle that is not contained in $E$.
\end{delayedproof}

\begin{figure}
    \center
    \input{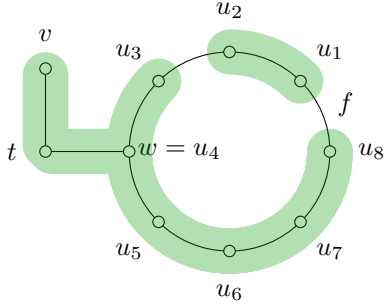}
    \caption{Depicted is the decomposition $\Pi^2_1$ from the proof of \Cref{thm:cycle-facets}. Here the path $P$ is the $vw$-path along the nodes $v$, $t$ and $w$. The cycle is cut at the edges $f$ and $e=\{u_2,u_3\}$ and the component $U$ that contains $w$ is $\{u_3,...,u_8\}$, the other component is $U' = \{u_1,u_2\}$.}
    \label{fig:proof-cycle-facet}
\end{figure}

\begin{delayedproof}{lem:connectedness}
Take some $x \in S(uv, U)$.
Let $E_0 = \{e \in E \mid x_e = 0\}$ and consider $G_0 = (V, E_0)$.

If $x_{uv} = 1$ then for all $e \in \delta(U)$ it holds that $x_e = 1$.
Thus, no component of $G_0$ is $(uv, U)$-connected.

If $x_{uv} = 0$ then, due to $x \in S(uv, U)$, there is some $e \in \delta(U)$ such that
\begin{align}\label{eq:proof-lem-connectedness-1}
    x_e = 0 \quad \text{ and } \quad 
        x_{e'} = 1 \quad \forall e' \in \delta(U) \setminus \{e\} \enspace.
\end{align}
Let $H=(V_H,E_H)$ be the maximal component of $G_0$ with
\begin{align}\label{eq:proof-lem-connectedness-2}
    e \in E_H.
\end{align}
Clearly,
\begin{align}\label{eq:proof-lem-connectedness-3}
    e' \notin E_H \quad \forall e' \in \delta(U) \setminus \{e\}
\end{align}
by \eqref{eq:proof-lem-connectedness-1} and definition of $G_0$.
There is no $uv$-cut $\delta(W)$ with $x_{e'} = 1$ for all $e' \in \delta(W)$,
because this would imply $x_{uv} = 1$.
Thus, there exists a $uv$-path $P=(V_P,E_P)$ in $G$ with $x_{e'} = 0$ for all $e' \in E_P$, as $G$ is connected.
Any such path $P$ has $e \in E_P$, as $E_P \cap \delta(U) \neq \emptyset$ and $\delta(U) \cap E_0 = \{e\}$ and $E_P \subseteq E_0$.
Thus,
\begin{align}\label{eq:proof-lem-connectedness-4}
    u \in V_H \text{ and } v \in V_H
\end{align}
by \eqref{eq:proof-lem-connectedness-2}.
Therefore, $H=(V_H,E_H)$ is $(uv, U)$-connected, by \eqref{eq:proof-lem-connectedness-2}, \eqref{eq:proof-lem-connectedness-3} and \eqref{eq:proof-lem-connectedness-4}.
Any other component of $G_0$ does not cross the cut $\delta(U)$, by \eqref{eq:proof-lem-connectedness-1}, \eqref{eq:proof-lem-connectedness-2} and definition of $G_0$, and is not $(uv, U)$-connected.
\end{delayedproof}

\begin{delayedproof}{thm:cut-facets}
We show that if any of the Conditions~\ref{cond:cut-1}--\ref{cond:cut-5} is violated, then all $x \in S(uv,U)$ satisfy some additional equation and thus $\dim \Sigma(uv,U) \leq \abs{E \cup F} - 2$, which implies that $\Sigma(uv,U)$ cannot be a facet of $\lmc(G,\widehat{G})$, by \Cref{thm:dimension}.

Assume that Condition~\ref{cond:cut-1} does not hold.
Then, there exists an $e \in \delta(U)$ such that no $(uv,U)$-connected subgraph of $G$ contains $e$.
Thus, for all $x \in S(uv,U)$ it holds by \Cref{lem:connectedness} that $x_e = 1$.

Assume that Condition~\ref{cond:cut-2} does not hold. 
Then, there exits $\emptyset \neq F' \subseteq \delta_{F \setminus \{uv\}}(U)$ such that for any $e \in \delta(U)$ there exists some number $m \in \N$ such that for all $(uv,U)$-connected subgraphs $H=(V_H,E_H)$ with $e \in E_H$ it holds that $\abs{F' \cap F'_H}= m$. 
Thus, we can write
\begin{align*}
    \delta(U) = \bigcup_{m=0}^{\abs{F'}} \delta^m(U) \enspace,
\end{align*}
where
\begin{align*}
    \delta^m(U) = 
    \big\{e \in \delta(U) \mid
    m = \abs{F' \cap F'_H} \text{ for all } (uv,U)\text{-connected } (V_H,E_H) \\ 
    \text{ with } e \in E_H \big \}.
\end{align*} 
It follows that for all $x \in S(uv,U)$ we have the equality
\begin{align}\label{eq:proof-facet-cond2}
    \sum_{m=0}^{\abs{F'}} m \sum_{e \in \delta^m(U)} (1-x_e) = \sum_{f' \in F'} (1 - x_{f'})
\end{align}
by the following argument:
\begin{itemize}
    \item If $x_e = 1$ for all $e \in \delta(U)$, then $x_{f'} = 1$ for all $\{u', v'\} = f' \in F'$, since $\delta(U)$ is also a $u'v'$-cut.
    Thus, \eqref{eq:proof-facet-cond2} evaluates to $0=0$.
    \item Otherwise there exists precisely one edge $e \in \delta(U)$ such that $x_e = 0$. 
    Let $m$ be such that $e \in \delta^m(U)$.
    By definition of $\delta^m(U)$, there are exactly $m$ edges $f' \in F'$ with $x_{f'} = 0$.
    Thus, \eqref{eq:proof-facet-cond2} evaluates to $m=m$.
\end{itemize}

Assume that Condition~\ref{cond:cut-3} does not hold.
Then, there exists an $f' \in \delta_{F \setminus \{uv\}}(U)$, a set $\emptyset \neq F' \subseteq \delta_{F \setminus \{uv\}}(U) \setminus \{f'\}$ and some $k \in \mathbb{N}$ such that for all $(uv,U)$-connected subgraphs $H=(V_H,E_H)$ and $H'=(V_{H'},E_{H'})$ with $f' \in F'_H$ and $f' \notin F'_{H'}$ it holds that
\begin{align*}
    \abs{F' \cap F'_H} = k \text{ and } \abs{F' \cap F'_{H'}} = 0 \enspace.
\end{align*}
In other words, for all $x \in S(uv,U)$ it holds that $x_{f'} = 0$ if and only if there are exactly $k$ edges $f'' \in F'$ such that $x_{f''} = 0$. 
Similarly, it holds that $x_{f'} = 1$ if and only if for all $f'' \in F'$ we have $x_{f''} = 1$.
Therefore, all $x \in S(uv,U)$ satisfy the additional equation
\begin{align*}
    k(1-x_{f'}) = \sum_{f'' \in F'} 1 - x_{f''} \enspace.
\end{align*}

\begin{figure}
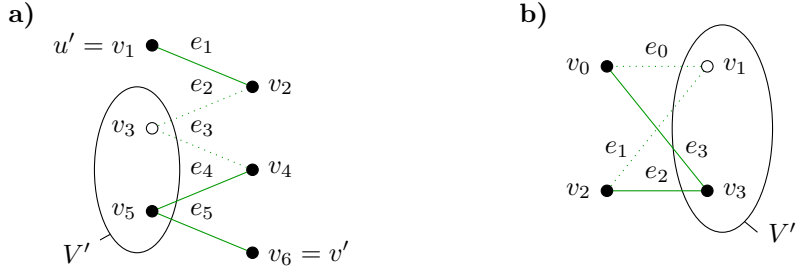

\center
\textbf{a)} \imagetop{\input{figures/cuts-proof-a.tex}}
\hspace{5em}
\textbf{b)} \imagetop{\input{figures/cuts-proof-b.tex}}
\caption[Illustration of the proof argument for Conditions~\ref{cond:cut-4} and \ref{cond:cut-5} in \Cref{thm:cut-facets}]{Depicted are the nodes (in black) and edges (in green) on a path \textbf{a)} and on a cycle \textbf{b)}, respectively.
Nodes in the set $V'$ are are either in $V_H$ (filled circle) or not in $V_H$ (empty circle).
Consequently, pairs of consecutive edges are either cut (dotted lines) or not cut (solid lines).}
\label{figure:proof-facet-aux}
\end{figure}

Assume that Condition~\ref{cond:cut-4} does not hold.
Then, there exist $u' \in U$ and $v' \in V \setminus U$ and a $u'v'$-path $P = (V_P, E_P)$ in $\widehat G(uv,U)$ such that for every $(uv,U)$-connected subgraph $H=(V_H,E_H)$ of $G$ it holds that
\begin{align}
    & (u' \in V_H \text{ and }  V_P \setminus U \subseteq V_H) 
    \label{eq:proof-facet-aux2} \\
    \text{or} \quad & (v' \in V_H \text{ and }  V_P \cap U \subseteq V_H) \enspace.
    \label{eq:proof-facet-aux3}
\end{align}
Let $v_1 < \dotso < v_{\abs{V_P}}$ be the linear order of the nodes $V_P$ and 
let $e_1 < \dotso < e_{\abs{E_P}}$ be the linear order of the edges $E_P$ 
in the $u'v'$-path $P$.
Now, all $x \in S(uv,U)$ satisfy the equation
\begin{align}\label{eq:proof-facet-6}
    x_{uv} = \sum_{j=1}^{\abs{E_P}} (-1)^{j+1} x_{e_j}
\end{align}
by the following argument. 
It holds that $\abs{E_P}$ is odd, as the path $P$ alternates between the set $U$ where it begins and $V \setminus U$ where it ends.
Thus, we can write 
\begin{align}\label{eq:proof-facet-aux4}
    \sum_{j=1}^{\abs{E_P}} (-1)^{j+1} x_{e_j} 
        = x_{e_1} - \sum_{j=1}^{(\abs{E_P}-1)/2} (x_{e_{2j}} - x_{e_{2j+1}}) \enspace.
\end{align}
Distinguish two cases:
\begin{itemize}
    \item If $x_{uv} = 1$, then $x_e = 1$ for all $e \in E_P$, by the cut inequalities \eqref{eq:lifted-multicut-cut} with respect to $\delta(U)$.
    Therefore, \eqref{eq:proof-facet-aux4} and thus \eqref{eq:proof-facet-6} evaluates to $1 = 1$.
    \item If $x_{uv} = 0$, then the decomposition of $G$ defined by $x$ contains precisely one $(uv,U)$-connected component $H=(V_H,E_H)$ of $G$, 
    by \Cref{lem:connectedness}.
    In particular it holds that
    \begin{align}
        x_{u'v'} = 0 \quad &\forall e = u'v' \in E \cup F \text{ with } u',v' \in V_H \enspace , \label{eq:C4-H-connected-wrt-x-1} \\
        x_{u'v'} = 1 \quad &\forall e = u'v' \in E \cup F \text{ with } u' \in V_H, v' \notin V_H  \enspace . \label{eq:C4-H-connected-wrt-x-2}
    \end{align}
    Without loss of generality we may assume that \eqref{eq:proof-facet-aux2} holds (otherwise exchange $u$ and $v$).
    Consider the nodes $V_P$ (as depicted in \Cref{figure:proof-facet-aux}a).
    It holds that $v_1 = u' \in V_H$, by \eqref{eq:proof-facet-aux2}.
    For every even $j$, $v_j \in V \setminus U$, by definition of $P$. 
    Thus,
    \begin{align}\label{eq:proof-facet-aux5}
        v_{2j} \in V_H \quad \forall \in \{1, \ldots, (\abs{E_P}-1)/2\}
    \end{align}
    by \eqref{eq:proof-facet-aux2}.
    Now consider the edges $E_P$ (as depicted in \Cref{figure:proof-facet-aux}a).
    It holds that $e_1 = v_1v_2$ and $v_1,v_2 \in V_H$, thus, 
    \begin{align}\label{eq:proof-facet-aux6}
        x_{e_1} = 0 \enspace ,
    \end{align}
    by \eqref{eq:C4-H-connected-wrt-x-1}.
    For every $j \in \{1, \ldots, (\abs{E_P}-1)/2\}$ we have $e_{2j} = v_{2j} v_{2j+1}$ and $e_{2j+1} = v_{2j+1}v_{2j+2}$ with $v_{2j},v_{2j+2} \in V_H$, by \eqref{eq:proof-facet-aux5}.
    If $v_{2j+1} \in V_H$, \eqref{eq:C4-H-connected-wrt-x-1} implies
    \begin{align*}
        x_{e_{2j}} = 0 = x_{e_{2j+1}} \enspace.
    \end{align*}
    If otherwise $v_{2j+1} \notin V_H$, \eqref{eq:C4-H-connected-wrt-x-2} implies
    \begin{align*}
        x_{e_{2j}} = 1 = x_{e_{2j+1}} \enspace.
    \end{align*}
    In any case, we have
    \begin{align}\label{eq:proof-facet-aux7}
        x_{e_{2j}} - x_{e_{2j+1}} = 0 \quad \forall j \in \{1, \ldots, (\abs{E_P}-1)/2\} \enspace.
    \end{align}
    Thus, \eqref{eq:proof-facet-6} evaluates to $0 = 0$, by \eqref{eq:proof-facet-aux4}, \eqref{eq:proof-facet-aux6} and \eqref{eq:proof-facet-aux7}.
\end{itemize}

Assume that Condition~\ref{cond:cut-5} does not hold.
Then, there exists a cycle $C = (V_C, E_C)$ in $\widehat G(uv,U)$
such that every $(uv,U)$-connected subgraph $H=(V_H,E_H)$ of $G$ satisfies
\begin{align}
    & V_C \cap U \subseteq V_H \label{eq:proof-facet-aux8}\\
    \text{or} \quad & V_C \setminus U \subseteq V_H \enspace. \label{eq:proof-facet-aux9}
\end{align}
Let $v_0 < \dotso < v_{\abs{V_C}-1}$ be an order on $V_C$ such that $v_0 \in U$ and for all $j \in \{0, \ldots, \abs{E_C}-1\}$ it holds that
\begin{align*}
    e_j = v_jv_{j+1 \bmod \abs{E_C}} \in E_C \enspace.
\end{align*}
Now, all $x \in S(uv,U)$ satisfy the equation
\begin{align}\label{eq:proof-facet-7}
    \sum_{j=0}^{\abs{E_C}-1} (-1)^j x_{e_j} = 0
\end{align}
by the following argument.
It holds that $\abs{E_C}$ is even, as the cycle $C$ alternates between the sets $U$ and $V \setminus U$.
Thus, 
\begin{align}\label{eq:proof-facet-aux10}
    \sum_{j=0}^{\abs{E_C}-1} (-1)^j x_{e_j}
    = \sum_{j=0}^{(\abs{E_C}-2)/2} (x_{e_{2j}} - x_{e_{2j+1}}) \enspace .
\end{align}
Distinguish two cases:
\begin{itemize}
    \item If $x_{uv} = 1$, then $x_e = 1$ for all $e \in E_C$, by the cut inequalities \eqref{eq:lifted-multicut-cut} with respect to \ $\delta(U)$.
    Therefore, \eqref{eq:proof-facet-aux10} and thus \eqref{eq:proof-facet-7} evaluates to $0 = 0$.
    \item If $x_{uv} = 0$, then the decomposition of $G$ defined by $x$ contains precisely one $(uv,U)$-connected component $H=(V_H,E_H)$ of $G$, 
    by \Cref{lem:connectedness}.
    As before, \eqref{eq:C4-H-connected-wrt-x-1} and \eqref{eq:C4-H-connected-wrt-x-2} hold true.
    Without loss of generality we may assume that \eqref{eq:proof-facet-aux8} holds (otherwise exchange $u$ and $v$).
    Consider the nodes $V_C$ (as depicted in \Cref{figure:proof-facet-aux}b).
    For every even $j$, we have that $v_j \in U$, by definition of $C$ and the order. 
    Thus,
    \begin{align}\label{eq:proof-facet-aux11}
        v_{2j} \in V_H \quad \forall j \in \{0, \ldots, (\abs{E_C}-2)/2\}
    \end{align}
    by \eqref{eq:proof-facet-aux8}.
    Now, consider the edges $E_C$ (as depicted in \Cref{figure:proof-facet-aux}b).
    For every $j \in \{0, \ldots, (\abs{E_C}-2)/2\}$ we have $e_{2j} = v_{2j}v_{2j+1}$ and $e_{2j+1} = v_{2j+1}v_{2j+2 \bmod \abs{E_C}}$ with $v_{2j}, v_{2j+2 \bmod \abs{E_C}} \in V_H$, by \eqref{eq:proof-facet-aux11}. 
    If $v_{2j+1} \in V_H$ \eqref{eq:C4-H-connected-wrt-x-1} implies
    \begin{align*}
        x_{e_{2j}} = 0 = x_{e_{2j+1}} \enspace.
    \end{align*}
    If otherwise $v_{2j+1} \notin V_H$, \eqref{eq:C4-H-connected-wrt-x-2} implies
    \begin{align*}
        x_{e_{2j}} = 1 = x_{e_{2j+1}} \enspace.
    \end{align*}
    In any case, we have
    \begin{align}\label{eq:proof-facet-aux12}
        x_{e_{2j}} - x_{e_{2j+1}} = 0 \quad \forall j \in \{0, \ldots, (\abs{E_C}-2)/2\} \enspace.
    \end{align}
    Thus, \eqref{eq:proof-facet-7} evaluates to $0 = 0$, by \eqref{eq:proof-facet-aux10} and \eqref{eq:proof-facet-aux12}.
\end{itemize}
\end{delayedproof}
    

\begin{delayedproof}{prop:tree-lmp}
    For any distinct pair of nodes $u,v \in V$, we set
    \begin{align*}
        x_{uv} = 1 - \prod_{e \in E_{uv}} z_e
    \end{align*}
    which implies
    \begin{align}\label{eq:vars-transform}
        x_{uv} = 0
        & \quad \iff \quad 
        \forall e \in E_{uv} \colon \ z_e = 1 \quad \iff \quad 
        \forall e \in E_{uv} \colon \ x_e = 0 \enspace. 
    \end{align}
    Therefore, we can reformulate problem \eqref{eq:tpp-pbo} in terms of the variables $x_{uv}$ by transforming the objective function according to
    \begin{align*}
        \bar \theta_{uv} \prod_{e \in E_{uv}} z_e 
        = -\bar \theta_{uv} \big(1 - \prod_{e \in E_{uv}} z_e \big) + \bar \theta_{uv} 
        = -\bar \theta_{uv} \, x_{uv} + \bar \theta_{uv} \enspace.
    \end{align*}
    This leads to the linear combinatorial optimization problem
    \begin{align*}
        \min_{x \in \lmc(T)} 
        \sum_{uv \in \binom{V}{2}} \theta_{uv} \, x_{uv} + \bar \theta_{uv} \enspace,
    \end{align*}
    where the definition of $\lmc(T)$ captures the relationship \eqref{eq:vars-transform}.
\end{delayedproof}

\begin{delayedproof}{prop:tree-polytope}
    We show first that $\lmc(T) \subseteq \tppt(T)$.
    For this purpose, let $x \in \lmc(T) \cap \Z^m$ be a vertex of $\lmc(T)$.
    If $x_{uv} > x_{u,\vec{u}(v)} + x_{\vec{u}(v),v}$ for some $u,v \in V$, then $x_{uv} = 1$ and $x_{u,\vec{u}(v)} = x_{\vec{u}(v),v} = 0$.
    This contradicts the fact that $x$ satisfies all cut inequalities with respect to $\vec{u}(v),v$ and the path inequality corresponding to $u,v$.
    If $x_{\vec{u}(v),v} > x_{uv}$ for some $u,v \in V$, then $x_{\vec{u}(v),v} = 1$ and $x_{uv} = 0$.
    This contradicts the fact that $x$ satisfies all cut inequalities with respect to $uv$ and the path inequality associated to $\vec{u}(v),v$.
    It follows that $x \in \tppt(T)$.
    
    Now, we show that $\tppt(T) \subseteq \tpp(T)$.
    Let $x \in \tppt(T)$.
    We need to show that $x$ satisfies all path and cut inequalities.
    Let $u,v \in V$ with $d(u,v) \geq 2$.
    We proceed by induction on $d(u,v)$.
    If $d(u,v) = 2$, then the path and cut inequalities are directly given by the definition of $\tppt(T)$ (for the two possible orderings of $u$ and $v$).
    If $d(u,v) > 2$, then the path inequality is obtained from $x_{uv} \leq x_{u,\vec{u}(v)} + x_{\vec{u}(v),v}$ and the induction hypothesis for the pair $\vec{u}(v),v$, since $d(\vec{u}(v),v) = d(u,v) - 1$.
    Similarly, for any edge $e$ on the path from $u$ to $v$, we obtain the cut inequality with respect to $e$ by using the induction hypothesis and $x_{\vec{u}(v),v} \leq x_{uv}$ such that (without loss of generality) $e$ is on the path from $\vec{u}(v)$ to $v$.
    It follows that $x \in \tpp(T)$.
\end{delayedproof}

\begin{delayedproof}{prop:tree-path-facet}
    First, suppose $d(u,v) = 2$.
    Then, $P_{uv}$ is a path of length $2$ and thus chordless in the complete graph on $V$.
    Hence, the facet-defining property follows directly from \Cref{thm:cycle-facets} \ref{enum:chordless-path}. Now, suppose $d(u,v) > 2$ and let $x \in \lmc(T)$ be such that \eqref{eq:tree-path} is satisfied with equality.
    We show that this implies
    \begin{align}\label{eq:intersection-equality}
        x_{uv} + x_{\vec{u}(v),\vec{v}(u)} = x_{u,\vec{v}(u)} + x_{\vec{u}(v),v} \enspace. 
    \end{align}
    Thus, the face of $\lmc(T)$ defined by \eqref{eq:tree-path} has dimension at most $m - 2$ and hence cannot be a facet.
    In order to check that \eqref{eq:intersection-equality} holds, we distinguish the following three cases.
    If $x_{uv} = 0$, then the cut inequalities \eqref{eq:cut-ineq-tree} yield $x_e = 0$ for all $e \in E_{uv}$ and with the path inequalities \eqref{eq:path-ineq-tree} it follows that all terms in \eqref{eq:intersection-equality} vanish.
    If $x_{uv} = x_{u,\vec{u}(v)} = 1$ and $x_{\vec{u}(v),v} = 0$, then $x_{\vec{u}(v),\vec{v}(u)} = 0$ and $x_{u,\vec{v}(u)} = 1$, so \eqref{eq:intersection-equality} holds.
    Finally, if $x_{uv} = x_{\vec{u}(v),v} = 1$ and $x_{u,\vec{u}(v)} = 0$, then~\eqref{eq:intersection-equality} holds as well, because $x_{\vec{u}(v),\vec{v}(u)} = x_{u,\vec{v}(u)}$ by contraction of the edge $u,\vec{u}(v)$.
\end{delayedproof}

\begin{delayedproof}{prop:tree-cut-facet}
    First, suppose $v$ is not a leaf of $T$ and let $x \in \lmc(T)$ be such that \eqref{eq:tree-cut} is satisfied with equality.
    Since $v$ is not a leaf, there exists a neighbor $w \in V$ of $v$ such that $P_{\vec{u}(v),v}$ is a subpath of $P_{\vec{u}(v),w}$
    We show that $x$ additionally satisfies the equality
    \begin{align}\label{eq:tree-cut-equality}
        x_{uw} = x_{\vec{u}(v),w} 
    \end{align}
    and thus the face of $\lmc(T)$ defined by \eqref{eq:tree-cut} cannot be a facet.
    There are two possible cases: Either $x_{uv} = x_{\vec{u}(v),v} = 1$, then $x_{uw} = x_{\vec{u}(v),w} = 1$ as well, or $x_{uv} = x_{\vec{u}(v),v} = 0$, then $x_{uw} = x_{vw} = x_{\vec{u}(v),w}$ by contraction of the path $P_{uv}$, so \eqref{eq:tree-cut-equality} holds.
    
    Now, suppose $v$ is a leaf of $T$ and let $\Sigma$ be the face of $\lmc(T)$ defined by \eqref{eq:tree-cut}. 
    We show $\dim \Sigma = m-1$ as in the proof of \Cref{thm:dimension} which proved $\dim\lmc(T) = m$ where $m = \abs{E \cup F} = \frac{n(n+1)}{2}$. 
    Of all the lifted multicuts considered in the proof of \Cref{thm:dimension} only one does not satisfy \eqref{eq:tree-cut} with equality, namely the one induced by the decomposition of $T$ into the nodes on the path $P_{\vec{u}(v),v}$ and otherwise singular nodes.
    Therefore, it holds that $\dim \Sigma \geq \dim \lmc(T) - 1$ which concludes the proof.
\end{delayedproof}
    
\begin{delayedproof}{prop:tree-box-facets}
    We apply the more general characterization given by \Cref{thm:upper-box-facets} and \Cref{thm:lower-box-facets}.
    The nodes $u,v \in V$ are a pair of $ww'$-cut-nodes for some nodes $w,w' \in V$ (with at least one being different from $u$ and $v$) if and only if $u$ or $v$ is not a leaf of $V$.
    Thus, the claim follows from \Cref{thm:upper-box-facets}.
    The second assertion follows from \Cref{thm:lower-box-facets} and the fact that we lift to the complete graph on~$V$.
\end{delayedproof}

\begin{delayedproof}{lem:tree-intersection-valid}
    Let $x \in \lmc(T) \cap \Z^m$. In case $x_{u,\vec{v}(u)} = x_{\vec{u}(v),v} = 1$ the inequality is trivially satisfied. 
    Now, suppose that either $x_{u,\vec{v}(u)} = 0$ or $x_{\vec{u}(v),v} = 0$ for some $u,v \in V$ with $d(u,v) \geq 3$.
    Then, since $x$ satisfies all cut inequalities with respect to $u,\vec{v}(u)$, and $\vec{u}(v),v$, and the path inequality with respect to $\vec{u}(v), \vec{v}(u)$, it must hold that $x_{\vec{u}(v),\vec{v}(u)} = 0$.
    Moreover, if even $x_{u,\vec{v}(u)} = 0 = x_{\vec{u}(v),v}$, then, by the same reasoning, we have $x_{uv} = 0$ as well.
    Hence, $x$ satisfies~\eqref{eq:intersection-inequality}.
\end{delayedproof}

\begin{delayedproof}{thm:tree-intersection-facet}
    Let $\Sigma$ be the face of $\lmc(T)$ defined by \eqref{eq:intersection-inequality} for some $u,v \in V$ with $d(u,v) \geq 3$.
    As in the proof of \Cref{prop:tree-cut-facet} we obtain $\dim \Sigma \geq \dim \lmc(T) - 1$ by observing that all but one lifted multicut considered in the proof of \Cref{thm:dimension} satisfy \eqref{eq:intersection-inequality} with equality.
    Indeed, only the lifted multicut induced by the decomposition of $T$ into the nodes on the path $P_{\vec{u}(v),\vec{v}(u)}$ and otherwise singular nodes does not satisfy \eqref{eq:intersection-inequality} with equality.
\end{delayedproof}

\begin{delayedproof}{thm:intersection-arbitrary-g}
    We first prove the ``only if'' part. To that end, suppose that $\{u,v'\}$ is not a $vu'$-separating node set (the case that $\{v,u'\}$ is not a $uv'$-separating node set is analogous). 
    Then, there exists a $vu'$-path $P=(V_P,E_P)$ in $G$ with $u,v' \notin V_P$. 
    Let $\Pi := \{V_P\} \cup \{\{w\} \mid w \in V \setminus V_P\}$ be the decomposition of $G$ into the component $V_P$ and otherwise singular nodes. 
    Let $x := \1_{\phi_{K_n}(\Pi)}$ be the characteristic vector of the multicut of $K_n$ that is induced by $\Pi$. 
    Clearly, it holds that $x_{vu'}=0$ and $x_{uv} = x_{v'u} = x_{u'v'} = 1$ and, hence, \eqref{eq:intersection-arbitrary-g} is not satisfied. Therefore, \eqref{eq:intersection-arbitrary-g} is not valid and in particular not facet-defining for $\lmc(G)$.

    Next, we turn to the ``if'' part. 
    To that end, assume that $\{u,v'\}$ is a $vu'$-separating node set and $\{v,u'\}$ is a $uv'$-separating node set. 
    If it holds that $x_{uv'} = 0$, then there exists a $uv'$ path $P=(V_P,E_P)$ with $x_e = 0$ for all $e \in E_P$. 
    Since $\{v,u'\}$ is a $uv'$-separating node set it holds that $v \in V_P$ or $u' \in V_P$. 
    In the first case we have $x_{uv} = 0$, while in the second case we have $x_{u'v'} = 0$.
    If it holds that $x_{vu'} = 0$, then an analogous argument also yields $x_{uv} = 0$ or $x_{u'v'} = 0$.
    If, further, we have $x_{uv'} = x_{vu'} = 0$, then there exist a $uv'$-path $P=(V_P,E_P)$ and a $vu'$ path $P' = (V_{P'},E_{P'})$ with $x_e = 0$ for all $e \in E_P \cup E_{P'}$.
    By assumption, the paths $P$ and $P'$ must intersect and, hence, all nodes $u,u',v,v'$ are in the same component with respect to $x$, i.e. $x_{uv} = x_{u'v'} = 0$. Therefore, \eqref{eq:intersection-arbitrary-g} is valid for $\lmc(G)$.

    It remains to show that \eqref{eq:intersection-arbitrary-g} is facet-defining for $\lmc(G)$. 
    We show this by applying \Cref{lem:valid-and-facet-defining-for-subset} together with \Cref{thm:tree-intersection-facet}. 
    Because $G$ is connected there exists a path in $G$ that connects a node in $\{u,u'\}$ to a node in $\{v,v'\}$. 
    By assumption, every $uv'$-path in $G$ contains a $uv$-subpath or a $u'v'$-subpath and so does every $vu'$-path. 
    Therefore there must exist a $uv$-path or a $u'v'$-path in $G$. 
    Without loss of generality we may assume there exists a $u'v'$-path in $G$. 
    By adding the nodes $u$ and $v$ and the edges $uu'$ and $vv'$ to that path, we obtain a $uv$-path. 
    We expand this path to a spanning tree $T$ of $G$. 
    Then, it holds that $\vec{u}(v) = u'$ and $\vec{v}(u) = v'$ and the inequality \eqref{eq:intersection-arbitrary-g} is precisely the intersection inequality \eqref{eq:intersection-inequality} for trees with respect to $u$ and $v$. 
    By \Cref{thm:tree-intersection-facet} it is facet-defining for $\lmc(T)$ and \Cref{lem:valid-and-facet-defining-for-subset} yields that it is also facet-defining for $\lmc(G)$.
    For an illustration of this construction, see \Cref{fig:intersection-arbitrary-g}.
\end{delayedproof}

\begin{delayedproof}{lem:path-polytope}
    It holds that $\lmc(T) \subseteq \ppp(n)$ as all inequalities \eqref{eq:path-box} -- \eqref{eq:path-intersection} are valid for $\lmc(T)$ by \Cref{sec:lmp-trees-facets}.
    
    Next, we prove that $\ppp(n) \subseteq \tppt(T)$.
    To this end, let $x \in \ppp(n)$.
    We show that $x$ satisfies all inequalities \eqref{eq:tree-cut}.
    Let $u,v \in V$ with $u < v - 1$.
    We need to prove that both $x_{u+1,v} \leq x_{uv}$ and $x_{u,v-1} \leq x_{uv}$ hold.
    For reasons of symmetry, it suffices to show only $x_{u+1,v} \leq x_{uv}$.
    We proceed by induction on the distance of $u$ from $n$. 
    If $v = n$, then $x_{u+1,n} \leq x_{un}$ is given by~\eqref{eq:path-cut-right}.
    Otherwise, we use \eqref{eq:path-intersection} for $j=u$ and $k=v+1$ and the induction hypothesis on $v+1$:
    \begin{align*}
        x_{uv} + x_{u+1,v+1} & \geq x_{u+1,v} + x_{u,v+1} \\
        & \geq x_{u+1,v} + x_{u+1,v+1} \\
        \implies x_{uv} & \geq x_{u+1,v} \enspace.
    \end{align*}  
    It remains to show that $x$ satisfies all inequalities \eqref{eq:tree-path}.
    Let $u,v \in V$ with $u < v - 1$.
    We proceed by induction on $d(u,v) = u-v$.
    If $d(u,v) = 2$, then \eqref{eq:tree-path} is given by \eqref{eq:path-triangle}.
    If $d(u,v) > 2$, then we use \eqref{eq:path-intersection} for $j=u$ and $k=v$ as well as the induction hypothesis on $u,v-1$, which have distance $d(u,v) - 1$:
    \begin{align*}
        x_{uv} + x_{u+1,v-1} & \leq x_{u+1,v} + x_{u,v-1} \\
        &  \leq x_{u+1,v} + x_{u,u+1} + x_{u+1,v-1} \\
        \implies  x_{uv} & \leq x_{u,u+1} + x_{u+1,v} \enspace.
    \end{align*}
    It remains to show that $x$ satisfies the box inequalities $0 \leq x_{uv} \leq 1$ for all $u,v \in \{0,\dots,n\}$ with $u < v$.
    As $x$ satisfies \eqref{eq:tree-path} and \eqref{eq:tree-cut} it holds that
    \begin{alignat}{3}
        x_{uv} &\leq x_{u,u+1} + x_{u+1,v} &\leq x_{u,u+1} + x_{u,v} 
        \quad &\Rightarrow \quad 0 \leq x_{u,u+1} \label{eq:geq-0-dist-1a} \\
        x_{uv} &\leq x_{u,v-1} + x_{v-1,v} &\leq x_{u,v} + x_{v-1,v} 
        \quad &\Rightarrow \quad 0 \leq x_{v,v-1} \label{eq:geq-0-dist-1b}
    \end{alignat}
    for $u,v \in \{0,\dots,n\}$ with $u < v-1$. 
    Now, applying \eqref{eq:tree-cut} recursively, together with the bases cases \eqref{eq:path-box}, \eqref{eq:geq-0-dist-1a}, and \eqref{eq:geq-0-dist-1b}, we obtain $0 \leq x_{uv} \leq 1$ for all $u,v \in \{0,\dots,n\}$ with $u < v$.

    Altogether, we have shown $x \in \tppt(T)$, which concludes the proof.
\end{delayedproof}

\begin{delayedproof}{thm:path-tdi}
    Let the system defined by \eqref{eq:path-box} -- \eqref{eq:path-intersection} be represented in matrix form as $Ax \leq \alpha$.
    Note that $\ppp(n)$ is non-empty and bounded.
    Thus, to establish total dual integrality, we need to show that for any $\theta \in \Z^{m}$, where $m=\frac{n(n+1)}{2}$ is the number of $x$ variables, the dual program of $\min\{\theta^\top x \mid Ax \leq \alpha\}$ has an integral optimal solution. 
    In the following, we assume $n \geq 3$.
    The case $n=2$ can be verified with a simple calculation as the system \eqref{eq:path-box} -- \eqref{eq:path-intersection} becomes $x_{02} \leq 1$, $x_{12} \leq x_{02}$, $x_{01} \leq x_{02}$, $x_{02} \leq x_{01} + x_{02}$.

    We introduce the following dual variables: $a$ for \eqref{eq:path-box}, $b_i$, $c_i$, $d_i$ for $i \in \{1,\dots,n-1\}$ for \eqref{eq:path-cut-right}, \eqref{eq:path-cut-left} and \eqref{eq:path-triangle} respectively and $e_{i,j}$ for $i,j \in \{0,\dots,n\}$ with $i < j - 2$ for \eqref{eq:path-intersection}. With this we obtain the following dual program
    \begin{align}
        \max && a \notag \\
        \text{subject to}
            && a,b,c,d,e & \leq 0 \\
            && a - b_1 - c_{n-1} + e_{0n} &= \theta_{0n} 
                \label{eq:path-dual-0} \\
            && b_{n-1} - d_{n-1} &= \theta_{n-1,n} 
                \label{eq:path-dual-1} \\
            && b_{n-2} - b_{n-1} + d_{n-1} - e_{n-3,n} &= \theta_{n-2,n} 
                \label{eq:path-dual-2} \\
            && b_i - b_{i+1} + e_{in} - e_{i-1,n} &= \theta_{in} && \forall i \in \{1,\dots,n-3\} 
                \label{eq:path-dual-3} \\
            && c_1 - d_1 &= \theta_{01} 
                \label{eq:path-dual-4} \\
            && c_2 - c_1 + d_1 - e_{03} &= \theta_{02} 
                \label{eq:path-dual-5} \\
            && c_i - c_{i-1} + e_{0i} - e_{0,i+1} &= \theta_{0i} && \forall i \in \{3,\dots,n-1\} 
                \label{eq:path-dual-6} \\
            && -d_i - d_{i+1} + e_{i-1,i+2} &= \theta_{i,i+1} && \forall i \in \{1,\dots,n-2\} 
                \label{eq:path-dual-7} \\
            && d_{i+1} + e_{i-1,i+3} - e_{i-1,i+2} - e_{i,i+3} &= \theta_{i,i+2} && \forall i \in \{1,\dots,n-3\} 
                \label{eq:path-dual-8} \\
            && e_{ij} + e_{i-1,j+1} - e_{i-1,j} - e_{i,j+1}  &= \theta_{ij} && \forall i \in \{1,\dots,n-4\}, 
                \label{eq:path-dual-9} \\
            &&&&& \forall j \in \{i+3,\dots,n-1\}. \notag
    \end{align}
    Observe that \eqref{eq:path-dual-7}, \eqref{eq:path-dual-8} and \eqref{eq:path-dual-9} include all the $e$ variables. In particular, \eqref{eq:path-dual-9} only includes $e$ variables with indices of distance $3$, \eqref{eq:path-dual-8} couples $e$ variables of distance $4$ with those of distance $3$, and finally \eqref{eq:path-dual-9} couples the remaining $e$ variables of distance $k>4$ with those of distance $k-1$ and $k-2$. Therefore, we can express all $e$ variables in terms of the $d$ variables and $\theta$:
    \begin{align}\label{eq:formual-e-variables}
        0 \geq e_{ij} = \sum_{i < k < \ell < j} \theta_{k\ell} + \sum_{i < k < j} d_k 
        \qquad \forall i \in \{0,\dots,n-3\}, j \in \{i+3,\dots,n\} \enspace .
    \end{align}
    For $i\in \{1,\dots,n-2\}$ we can express the variable $b_i$ in terms of the $d$ variables, $\theta$ and $b_{i+1}$ by equations \eqref{eq:path-dual-2}, \eqref{eq:path-dual-3} and \eqref{eq:formual-e-variables}. Together with \eqref{eq:path-dual-1} we obtain 
    \begin{align}\label{eq:formula-b-variables}
        0 \geq b_i = \sum_{i \leq k < \ell \leq n} \theta_{k\ell} + \sum_{i \leq k \leq n-1} d_k
        \qquad \forall i \in \{1,\dots,n-1\} \enspace . 
    \end{align}
    Similarly, equations \eqref{eq:path-dual-4}, \eqref{eq:path-dual-5}, \eqref{eq:path-dual-6} and \eqref{eq:formual-e-variables} yield
    \begin{align}\label{eq:formula-c-variables}
        0 \geq c_i = \sum_{0 \leq k < \ell \leq i} \theta_{k\ell} + \sum_{1 \leq k \leq i} d_k
        \qquad \forall i \in \{1,\dots,n-1\} \enspace . 
    \end{align}
    Lastly, equation \eqref{eq:path-dual-0}, together with \eqref{eq:formual-e-variables}, \eqref{eq:formula-b-variables} and \eqref{eq:formula-c-variables} yields
    \begin{align}
        0 \geq a = \sum_{0 \leq k < \ell \leq n} \theta_{kl} + \sum_{1 \leq k \leq n-1} d_k \enspace .
    \end{align}
    Altogether we can rewrite the dual program as
    \begin{align}\label{eq:path-dual-simplified}
        \max 
            && \sum_{0 \leq k < \ell \leq n} \theta_{kl} &+ \sum_{1 \leq k \leq n-1} d_k \\
        \text{subject to}
            && \sum_{i \leq k \leq j} d_k & \leq - \sum_{i \leq k < \ell \leq j} \theta_{k\ell} && \forall i, j \in \{1,\dots,n-1\}, i < j \notag\\
            && \sum_{i \leq k \leq n-1} d_k & \leq - \sum_{i \leq k < \ell \leq n} \theta_{k\ell} && \forall i \in \{1,\dots,n-1\} \notag\\
            && \sum_{1 \leq k \leq i} d_k & \leq - \sum_{0 \leq k < \ell \leq i} \theta_{k\ell} && \forall i \in \{1,\dots,n-1\} \notag\\
            && \sum_{1 \leq k \leq n-1} d_k & \leq - \sum_{0 \leq k < \ell \leq n} \theta_{kl} \notag \\
            && d & \leq 0 \enspace, \notag
    \end{align}
    where the first inequality is obtained from \eqref{eq:formual-e-variables} by shifting the $i$ and $j$ index by $+1$ and $-1$ respectively. The matrix corresponding to the inequality constraints satisfies the \emph{consecutive-ones} property with respect to its rows. 
    Therefore, the constraint matrix of the system is totally unimodular and, hence, \eqref{eq:path-dual-simplified} admits an integral optimal solution.
\end{delayedproof}
    
\begin{delayedproof}{prop:sequential-set-partition-problem}
    For $i,j \in \{0,\dots,n\}$ with $i < j$ we introduce a new variable $\lambda_{ij} := x_{i-1,j} + x_{i,j+1} - x_{ij} - x_{i-1,j+1}$ where we set $x_{-1,k} = x_{k,n+1} = 1$ for $k \in \{0,\dots,n\}$ and also $x_{-1,n+1}=1$.
    By rearranging the definition of $\lambda$ according to $x_{ij}$, we obtain $x_{ij} = x_{i-1,j} + x_{i,j+1} - x_{i-1,j+1} - \lambda_{ij}$. 
    Applying this formula recursively yields $x_{ij} = 1 - \sum_{k \leq i, j \leq \ell} \lambda_{k\ell}$ for all $i,j \in \{0,\dots,n\}$ with $i < j$.
    By substituting the $x$ variables in inequalities \eqref{eq:path-box} -- \eqref{eq:path-intersection} we obtain $\lambda_{0n} \geq 0$ from \eqref{eq:path-box}, $\lambda_{in} \geq 0$ for $i \in \{1,\dots,n-1\}$ from \eqref{eq:path-cut-right}, $\lambda_{0i} \geq 0$ for $i \in \{1,\dots,n-1\}$ from \eqref{eq:path-cut-left}, $\lambda_{ij} \geq 0$ for $i,j \in \{1,\dots,n-1\}$, $i < j$ from \eqref{eq:path-intersection} and $\sum_{i \leq k \leq j, i \neq j} \lambda_{ij} \leq 1$ for $k \in \{1,\dots,n-1\}$ from \eqref{eq:path-triangle}.
    By further substituting $x$ in the objective function we obtain 
    \[
        \theta^\top x 
        = \sum_{0 \leq i < j \leq n} \theta_{ij} \left(1 - \sum_{k \leq i, j \leq \ell} \lambda_{k\ell}\right) 
        = \sum_{0 \leq i < j \leq n} \theta_{ij} - \sum_{0 \leq k < \ell \leq n} \lambda_{k\ell} \sum_{k \leq i < j \leq \ell} \theta_{ij} \enspace .
    \]
    Altogether, and with the definition of $\Theta$, we obtain that $\min\{\theta^\top x \mid x \in \ppp(n)\}$ is equivalent to 
    \begin{align}
        \min 
            && \Theta_{0n} - \Theta^\top \lambda \label{eq:ssp-variant} \\
        \text{subject to} 
            && \sum_{0 \leq i \leq k \leq j \leq n, i \neq j} \lambda_{ij} & \leq 1 && \forall k \in \{1,\dots,n-1\} \nonumber \\
            && \lambda_{ij} & \geq 0 && \forall i,j \in \{0,\dots,n\}, i \leq j \enspace . \nonumber 
    \end{align}
    Note that this is precisely the dual of \eqref{eq:path-dual-simplified}. 
    By defining the additional variables $\lambda_{ii} := 1 - \sum_{0 \leq i \leq k \leq j \leq n, i \neq j} \lambda_{ij}$ for $i \in \{0,\dots,n\}$ we obtain \eqref{eq:ssp}.
\end{delayedproof}

\begin{delayedproof}{prop:general-intersection-valid}
	We observe that the structure of this proof is similar to the proof of validity for the running intersection inequalities in \citet{dPKha21}.
    Let us partition $K$ in $K_1$, $K_2$, $\dots$, $K_p$ such that $E_{uv} \cap \bigcup_{k \in K_i} E_{u_kv_k}$ form a component for $i = 1$, $\dots$, $p$.
    Then, we want to show that $\sum_{k \in K_i : N_k \neq \emptyset} x_{f_k} \leq \sum_{k \in K_i} x_{u_k v_k}$ for all $i= 1$, $\dots$, $p$, for every feasible vector $x$.
    Equivalently, we can prove that $\sum_{k \in K_i : N_k \neq \emptyset} x_{f_k} - \sum_{k \in K_i} x_{u_k v_k} \leq 0$ for every $i= 1$, $\dots$, $p$.
    We are going to do that by showing 
    \begin{equation}\label{eq:lemma-gen}
        \max_{x \in \lmc(T,\widehat{G})} 
            \sum_{\substack{k \in K_i : \\N_k \neq \emptyset}} x_{f_k} 
            - \sum_{k \in K_i} x_{u_k v_k} = 0 \qquad \text{ for all } i = 1, \dots, p \enspace .
    \end{equation}
    Consider an arbitrary $i \in \{1,\dots,p\}$.
    We divide the proof in two cases: whether there exists $k \in K_i$ such that $x_{f_k} = 1$, or if $x_{f_k} = 0$ for all $k \in K_i$.
    Let us start from the first case.
    Hence, there exists at least one index $k \in K_i$ such that $x_{f_k} = 1$.
    Consider the first index in $K_i$ for which this happens, let it be $k'$.
    The corresponding path inequality \eqref{eq:path-ineq-tree} together with $x_{f_{k'}} = 1$ implies that there exists an edge $e \in E_{f_{k'}} \subseteq E_{u_{k'} v_{k'}}$ such that $x_e = 1$. In turn, the corresponding cut inequality \eqref{eq:cut-ineq-tree}, implies that $x_{u_{k'} v_{k'}} = 1$.
    However, $e \in E_{f_{k'}} \subseteq N_{k'} = E_{uv} \cap E_{u_{k'} v_{k'}} \cap \bigcup_{0<j<k'} E_{u_j v_j}$, hence there exists $j' < k'$ for which $e \in E_{u_{j'} v_{j'}}$.
    Thus, $x_{u_{j'} v_{j'}} = 1$ as well.
    If $k'$ is the only index for which $x_{f_{k'}} = 1$, then $\sum_{{k \in K_i : N_k \neq \emptyset}} x_{f_k} - \sum_{k \in K_i} x_{u_k v_k} \leq -1$.
    Assume that there exists a second index $k'' > k'$ such that $x_{f_{k''}} = 1$.
    Similarly to before, $x_{u_{k''} v_{k''}} = x_{u_{j''} v_{j''}} = 1$, for some $0 < j'' < k''$.
    Note that $j''$ could coincide with $j'$ or $k'$.
    In this case though, the terms $x_{f_{j''}}$ and $x_{u_{j''} v_{j''}}$ would simply cancel out without affecting the value of $\sum_{k \in K_i : N_k \neq \emptyset} x_{f_k} - \sum_{k \in K_i} x_{u_k v_k}$.
    The above argument can be applied recursively until there are no indices $k \in K_i$ left for which $x_{f_k} = 1$. 
    Hence, $\sum_{k \in K_i : N_k \neq \emptyset} x_{f_k} - \sum_{k \in K_i} x_{u_k v_k} \leq -1$ in this case.
    
    Let us move on to the second case, i.e. where $x_{f_k} = 0$ for all $k \in K_i$.
    In that case, showing that \eqref{eq:lemma-gen} holds reduces to proving that $\min_{x \in \lmc(T,\widehat{G})} \sum_{k \in K_i} x_{u_k v_k} = 0$.
    Recall that $\lmc(T,\widehat{G})$ is a binary polytope, which tells us that $\sum_{k \in K_i} x_{u_k v_k} \geq 0$ for all feasible points.
    Moreover, $\sum_{k \in K_i} x_{u_k v_k} = 0$ if and only if $x_{u_k v_k} = 0$ for all $k \in K_i$.
    By putting the two above parts together, we see that \eqref{eq:lemma-gen} is true.
    
    By putting these inequalities together for the distinct components induced by $K_1$, $\dots$, $K_p$, we get that $\sum_{k \in K : N_k \neq \emptyset} x_{f_k} \leq \sum_{k \in K} x_{u_k v_k}$ is a valid inequality for $\lmc(T,\widehat{G})$.
    Then, we add to the above inequality the bounds $x_e \geq 0$ for all $e \in E_{uv} \setminus \bigcup_{k \in K} E_{u_k v_k}$. 
    We are only missing the term $x_{uv}$ in the left-hand side of \eqref{eq:gen-ineq}.
    Note that if $x_{uv} = 0$, then $x_e = 0$ for every $e \in E_{uv}$, which implies that $\sum_{k \in K : N_k \neq \emptyset} x_{f_k} = \sum_{e \in E_{uv} \setminus \bigcup_{k \in K} E_{u_k v_k}} x_e = 0$.
    Therefore, \eqref{eq:gen-ineq} is valid for $\lmc(T,\widehat{G})$. 
    On the other hand, if $x_{uv} = 1$, it follows that there exists $e \in E_{uv}$ such that $x_e = 1$, and at least one of the sums on the right-hand side of \eqref{eq:lemma-gen} is non-zero. 
    If also the sum on the left-hand side of \eqref{eq:lemma-gen} is positive, then the discussion done earlier to show \eqref{eq:lemma-gen} implies that \eqref{eq:gen-ineq} holds.
\end{delayedproof}

\begin{delayedproof}{prop:nec-cond-gen-ineq}
    Let $a^\top x \leq b$ be a generalized intersection inequality corresponding to $uv \in E \cup F$, and $\{u_k, v_k\}$ for $k \in K$.
    
    We first prove that if $a^\top x \leq b$ is facet-defining, then the first property must hold.
    For the sake of contradiction let us assume that Condition~\ref{cond:generalized-intersection-1} is violated. We show that $a^\top x \leq b$ can be obtained as a sum of two different generalized intersection inequalities.
    This implies that $a^\top x \leq b$ is redundant for $\lmc(T,\widehat{G})$.
    The violation of Condition~\ref{cond:generalized-intersection-1} implies that there exists $\bar k \in K$ such that $N_{\bar k} \neq \emptyset$ and $E_{f_{\bar k}}$ is not maximal in $N_{\bar k}$.
    This means that there exists an edge $f'_{\bar k} \in F$ such that $E_{f'_{\bar{k}}} \subseteq N_{\bar{k}}$ and $E_{f_{\bar k}} \subset E_{f'_{\bar k}}$.
    Then, let ${a'}^\top x \leq b$ be the generalized intersection inequality obtained by replacing $f_{\bar k}$ with $f'_{\bar k}$.
    Observe that the inequality $x_{f_{\bar k}} \leq x_{f'_{\bar k}}$ is valid for $\lmc(T,\widehat{G})$, and in particular is a generalized intersection inequality as well.
    It is obtained by choosing $uv = f_{\bar k}$, $|K| = 1$ and $u_1v_1 = f'_{\bar k}$.
    When we sum ${a'}^\top x \leq b$ with $x_{f_{\bar k}} \leq x_{f'_{\bar k}}$ we get precisely $a^\top x \leq b$.
    Therefore, $a^\top x \leq b$ is not facet-defining.
    
    Next, we prove that if $a^\top x \leq b$ is facet-defining, then also the second condition must hold.
    Once again, for the sake of contradiction, assume that there exist distinct indices $i,j \in K$ such that $E_{f_i}, E_{f_j} \subseteq N_i \cap N_j$ and $f_i \neq f_j$.
    Consider the generalized intersection inequality ${a'}^\top x \leq b$ obtained by using the same edges $uv$, $\{u_k, v_k\}$ for $k \in K$ and the same $f'_k$, for $k \neq i,j$. 
    For these indices we choose $f'_i = f'_j = f_i$.
    We construct in an analogous manner the generalized intersection inequality ${a''}^\top x \leq b$, where instead we set $f''_i = f''_k = f_j$. 
    It is easy to see that by summing the inequalities ${a'}^\top x \leq b$ and ${a''}^\top  x \leq b$ we obtain $2 a^\top x \leq 2b$, which is equivalent to $a^\top x \leq b$.
    Hence, Condition~\ref{cond:generalized-intersection-2} must hold. 
\end{delayedproof}
    
\begin{delayedproof}{prop:path-partition-beta-acyclic}
    Let $T = (V,E)$, where $V = \{0,1,\dots,n\}$ and $E = \{ \{i,i+1\} \mid i=0, \dots, n-1 \}$.
    Consider the path partition problem by taking into account also the additional paths between nodes in the pairs of $F$.
    Then, we define a hypergraph $H$ by following the construction explained in the first part of \Cref{sec:multilinear}.
    Note that the node $\bar v_1$ in the hypergraph corresponding to the edge $\{0,1\}$ in $T$ is a nest point since all the paths containing $\{0,1\}$ form a chain for set inclusion.
    If it were not so, then there would exist two edges $\bar e$, $\bar f$ in $H$ such that $\bar v_1 \in \bar e$, $\bar v_1 \in \bar f$, $\bar e \not\subseteq \bar f$, $\bar f \not\subseteq \bar e$.
    This means that there exists two nodes in $H$, $\bar v'$ and $\bar v''$, such that $\bar v' \in \bar e \setminus \bar f$ and $\bar v'' \in \bar f \setminus \bar e$.
    These two nodes in $H$ correspond to two edges in $T$, let us denote them by $\{i, i+1\}$ and $\{j, j+1\}$, with $i \neq j$.
    When we translate it to the lifted multicut setting, it implies that there exists two paths $P_{0k}$, $P_{0l}$ in $T$ starting from the node $0$ and ending in two different nodes $k$, $l$ such that the edge $\{i, i+1\}$ is on the path $P_{0k}$ but not on the path $P_{0l}$, and the edge $\{j, j+1\}$ is on the path $P_{0l}$ but not on the path $P_{0k}$.
    This contradicts the assumption that $T$ is a path starting at node $0$. 
    
    Next, we remove the node $\bar v_1$, which is a nest point, from $H$.
    Once it has been removed, we similarly remark that the node $\bar v_2$ representing $\{1,2\}$ becomes a nest point for the hypergraph that was obtained by removing $\bar v_1$ from $H$.
    Hence, we remove $\bar v_2$ from this new hypergraph.
    This argument can be repeated recursively until we have removed all the nodes, thus obtaining the empty hypergraph.
    Therefore, by using \Cref{thm:charact-beta}, we can conclude that the original hypergraph $H$ was indeed $\beta$-acyclic.
\end{delayedproof}
    

\begin{delayedproof}{prop:cut-not-facet}
    We show that for a given $f = vw \in F$ and a $vw$-cut $\{e,e'\}$ where $e$ and $e'$ do not share a node Condition~\ref{cond:cut-4} of \Cref{thm:cut-facets} is violated. 
    For an example, see \Cref{figure:violated-conditions}i.

    Let $s,t \in \Z_n$ such that $e=\{s,s+1\}$ and $e'=\{t,t+1\}$. 
    Without loss of generality we may assume that $w \in [s+1,t]$ and $v \in [t+1,s]$ by potentially interchanging $e$ and $e'$. 
    Since $e$ and $e'$ do not share a node we have either $v \neq s$ and $w \neq t$ or we have $v \neq t+1$ and $w \neq t+1$. 
    In the first case consider the path along the nodes $\{v, t, s, w\}$ in the second case consider the path along the nodes $\{v, s+1, t+1, w\}$. 
    In either case all $(vw,U)$-connected components satisfy \eqref{eq:proof-facet-aux2} or \eqref{eq:proof-facet-aux3} and hence Condition~\ref{cond:cut-4} is violated.
\end{delayedproof}

\begin{delayedproof}{prop:clique-web-not-facet}
    In case $p=2$, $q=1$, $r=0$, $S=\{u\}$ and $T=\{u-1,u+1\}$ for some $u \in \Z_n$ we have $W=\left\{\{u-1,u+1\}\right\}$ and the clique-web inequality \eqref{eq:clique-web-inequality} is the inequality
    \begin{align}\label{eq:path-length-2}
        x_{u-1,u+1} \leq x_{u-1,u} + x_{u,u+1} 
    \end{align}
    which is the path inequality \eqref{eq:path} corresponding to the $u-1,u+1$-path with edges $\{\{u-1,u\},\{u,u+1\}\}$. 
    This inequality is facet-defining by \Cref{cor:path-length-2-facet}. 
    
    Now assume the clique-web inequality \eqref{eq:clique-web-inequality} does not coincide with a path inequality of a path of length $2$. 
    We show that if $x \in X_n$ satisfies \eqref{eq:clique-web-inequality} with equality, then $x$ also satisfies \eqref{eq:path-length-2} with equality for all $u \in S$. 
    By assumption these inequalities are different from \eqref{eq:clique-web-inequality}.
    Therefore, the clique-web inequality \eqref{eq:clique-web-inequality} is not facet-defining for $\lmc(C)$. 

    Let $x \in X_n$ such that $x$ satisfies \eqref{eq:clique-web-inequality} with equality.
    For sake of contradiction, suppose \eqref{eq:path-length-2} is strict, i.e. $x_{u,u+1} = x_{u-1,u} = 1$ or $x_{u-1,u+1} = 0$ and $x_{u,u+1} + x_{u-1,u} = 1$. 
    Assume that the second case holds, i.e. $x_{u,u+1} = 1$ and $x_{u-1,u} = 0$ or $x_{u,u+1} = 0$ and $x_{u-1,u} = 1$.
    Assume without loss of generality that $x_{u,u+1} = 1$ and $x_{u-1,u} = 0$.
    By the cut inequalities \eqref{eq:cut}, it holds that $1 - x_{u-1,u+1} \leq (1 - x_{u,u+1}) + (1 - x_e) \Rightarrow x_e = 0$ for all $e \in E \setminus \{\{u,u+1\}, \{u-1,u\}\}$. 
    It follows that $1 = x_{u,u+1} > \sum_{e \in E \setminus \{\{u,u+1\}\}} x_e = 0$, i.e. the cycle inequality \eqref{eq:cycle} corresponding to $\{u,u+1\}$ is violated, which is a contradiction to $x \in X_n$. 
    Therefore, \eqref{eq:path-length-2} can only by strict if $x_{u,u+1} = x_{u-1,u} = 1$.
    
    For all $w \in \Z_n \setminus \{u\}$ the set $\{\{u,u+1\}, \{u-1,u\}\}$ is a $uw$-cut in $C$ and the corresponding cut inequality \eqref{eq:cut}, together with $x_{u,u+1} = x_{u-1,u} = 1$, yields $x_{uw} = 1$. 
    Therefore, it holds that $\sum_{w \in T} x_{uw} = |T| = p$.
    Further, due to $x_e \leq 1$ for all $e \in \binom{\Z_n}{2}$ and $|W| = \frac{p(p-1)}{2} - pr$ it holds that $\sum_{e \in W} x_e \leq \frac{p(p-1)}{2} - pr$.
    In case $q=1$, i.e. $S=\{u\}$ for some $u \in \Z_n$, \eqref{eq:clique-web-inequality} can be written as
    \begin{align*}
        \sum_{e \in W} x_{e} - \sum_{w \in T} x_{uw} 
        & \leq \frac{p(p-1)}{2} - pr - p = \frac{p(p-2r-3)}{2} \\ 
        & < \frac{(p-1)(p-2r-2)}{2} = \frac{(p-q)(p-q-2r-1)}{2} \enspace,
    \end{align*}
    contradicting that $x$ satisfies \eqref{eq:clique-web-inequality} with equality. 
    In case $q \geq 2$, let $\mathring{S} := S \setminus \{u\}$ for some $u \in S$. 
    We get 
    \begin{align*}
        & \sum_{e \in W} x_e 
        + \sum_{v,w \in S, v \neq w} x_{vw}
        - \sum_{v \in S, w \in T} x_{vw} \\
        = &
        \underbrace{
            \sum_{e \in W} x_e 
            + \sum_{v,w \in \mathring{S}, v \neq w} x_{vw}
            - \sum_{v \in \mathring{S}, w \in T} x_{vw}
        }_{\overset{(*)}{\leq} \frac{(p-q+1)(p-q-2r)}{2}}
        + \underbrace{\sum_{v \in \mathring{S}} x_{uv}}_{\leq q-1}
        - \underbrace{\sum_{w \in T} x_{uw}}_{=p} \\
        \leq &
        \frac{(p-q)(p-q-2r-1)}{2} - 1 < \frac{(p-q)(p-q-2r-1)}{2} \enspace,
    \end{align*}
    i.e. \eqref{eq:clique-web-inequality} is not satisfied with equality, in contradiction to our assumption. 
    The inequality $(*)$ holds as it is a clique-web inequality with respect to $r$, $\mathring{S}$, $T$ and $W$ which is valid by \Cref{lem:clique-web-valid}.
\end{delayedproof}

\begin{delayedproof}{thm:2-chorded-valid-and-facet}
    The proof is due to \citet{Groetschel1990}. For completeness, since we refer to this proof in our following results, we reproduce the proof of validity:

    For all $i \in \Z_k$ the triangle inequality $x_{v_{i-1}v_{i+1}} - x_{v_{i-1}v_i} - x_{v_iv_{i+1}} \leq 0$ and the box inequality $x_{v_{i-1}v_{i+1}} \leq 1$ hold (a triangle inequality is a cycle inequality for a cycle of length three). 
    Summing all those inequalities we obtain 
    \[
        2 \sum_{i \in \Z_k} \left(x_{v_iv_{i+2}} - x_{v_iv_{i+1}}\right) 
        \leq k \enspace .
    \]
    Dividing by $2$ and rounding the right hand side down to the nearest integer yields \eqref{eq:2-chorded-cycle}.
\end{delayedproof}

\begin{delayedproof}{prop:2-chorded-not-facet}
    First, assume that \ref{item:2-chorded-not-true-not-facet} is not satisfied.
    Then there exist distinct nodes $u,w \in \{v_0,\dots,v_{k-1}\}$ such that $uw \neq v_iv_{i+1}$ for all $i \in \Z_k$ and such that $]u,w[ \; \cap \{v_0,\dots,v_{k-1}\} = \emptyset$. 
    Let $x \in X_n$ be the characteristic vector of a lifted multicut that satisfies \eqref{eq:2-chorded-cycle} with equality. 
    We show that $x$ also satisfies the equality 
    \begin{align}\label{eq:2-chorded-intersection}
        x_{uw} + x_{u-1,w+1} = x_{u,w+1} + x_{u-1,w}
    \end{align}
    and hence \eqref{eq:2-chorded-cycle} is not facet-defining.
    This equality corresponds to the intersection inequality \eqref{eq:cycle-intersection-inequality} with respect to $\{u,w+1\}$. 
    By \Cref{cor:intersection-cycle}, it holds that $x_{uw} + x_{u-1,w+1} \leq x_{u,w+1} + x_{u-1,w}$. 
    For sake of contradiction we assume that \eqref{eq:2-chorded-intersection} is not satisfied, i.e. above inequality is strict. 
    Then, we have either 
    \begin{enumerate}[(a)]
        \item \label{item:2-chord-case-a2}
        $x_{u,w+1} + x_{u-1,w} = 1$ and $x_{uw} + x_{u-1,w+1} = 0$ or
        \item \label{item:2-chord-case-b2}
        $x_{u,w+1} + x_{u-1,w} = 2$ and $x_{uw} + x_{u-1,w+1} \leq 1$.
    \end{enumerate}
    First, assume Case \ref{item:2-chord-case-a2} holds.
    Due to $x_{uw} = 0$ the nodes $u$ and $w$ are in the same component of the decomposition of $C$ with respect to $x$. 
    Due to $x_{u-1,w+1} = 0$ the nodes $u-1$ and $w+1$ are also in the same component.
    Due to $x_{u,w+1} + x_{u-1,w} = 1$ either the nodes $u$ and $w+1$ or the nodes $u-1$ and $w$ are in the same component with respect to $x$. 
    Then, in either case, the nodes $u,u-1,w$ and $w+1$ are all in the same component with respect to $x$ in contradiction to $x_{u,w+1} + x_{u-1,w} = 1$.
    Therefore, Case~\ref{item:2-chord-case-a2} can not occur and, thus, Case~\ref{item:2-chord-case-b2} must hold. 
    Due to $x_{uw} + x_{u-1,w+1} \leq 1$ we have that $x_{uw} = 0$ or $x_{u-1,w+1} = 0$. 
    By the cut inequalities \eqref{eq:cut} it follows that the path along the nodes $[u, w]$ or the path along the nodes $[w+1, u-1]$ is not cut with respect to $x$. 
    It follows that $x_{u-1,u} = x_{w,w+1} = 1$ holds, since otherwise the path along the nodes $[u, w+1]$ or the path along the nodes $[w, u-1]$ would be not cut with respect to $x$ contradicting $x_{u,w+1} = x_{u-1,w} = 1$. 

    Let $i_1,i_2 \in \Z_k$ such that $u=v_{i_1}$ and $w=v_{i_2}$. 
    Due to the assumption $uw \neq v_iv_{i+1}$ for all $i \in \Z_k$ and $]u,w[ \; \cap \{v_0,\dots,v_{k-1}\} = \emptyset$ the set $\{\{u-1,u\}, \{w,w+1\}\}$ is a $v_jv_{j-1}$-cut and a $v_jv_{j+1}$-cut for $j=i_1,i_2$. 
    With $x_{u-1,u} = x_{w,w+1} = 1$ from above, the cut inequalities \eqref{eq:cut} yield $x_{v_jv_{j-1}} = x_{v_jv_{j+1}} = 1$ for $j=i_1,i_2$. 
    For the corresponding triangles induced by the nodes $\{v_{j-1},v_j,v_{j+1}\}$, it holds that $x_{v_{j-1}v_{j+1}} - x_{v_{j-1}v_j} - x_{v_jv_{j+1}} \leq -1$ for $j=i_1,i_2$.
    As in the proof of \Cref{thm:2-chorded-valid-and-facet}, adding all triangle inequalities and box inequalities yields 
    \[
        2 \sum_{i=0}^{k-1} \left( x_{v_iv_{i+2}} - x_{v_iv_{i+1}} \right)
        \leq k-2 \enspace .
    \]
    Dividing by $2$ and rounding down yields 
    \[
        \sum_{i=0}^{k-1} \left( x_{v_iv_{i+2}} - x_{v_iv_{i+1}} \right)
        \leq \left\lfloor \frac{k}{2} \right\rfloor - 1 
        < \left\lfloor \frac{k}{2} \right\rfloor \enspace ,
    \]
    i.e. \eqref{eq:2-chorded-cycle} is not satisfied with equality in contradiction to our assumption. 
    Therefore every $x \in X_n$ that satisfies \eqref{eq:2-chorded-cycle} with equality also satisfies \eqref{eq:2-chorded-intersection} with equality and \eqref{eq:2-chorded-cycle} cannot be facet-defining.
    
    \medskip

    Next, assume \ref{item:2-chorded-geq-k-not-facet} is not satisfied, i.e. $k \geq 6$.
    Due to \Cref{thm:dimension}, \Cref{thm:2-chorded-valid-and-facet} and $\lmc(C) \subseteq \mc(K_n)$, the 2-chorded cycle inequality is not facet-defining for $\lmc(C)$ for even $k$. 
    From now on let $k \geq 7$ odd. 
    In that case the right hand side of \eqref{eq:2-chorded-cycle} becomes $d := \frac{k-1}{2}$.

    If $v$ is not true to $C$ the 2-chorded cycle inequality is not facet-defining for $\lmc(C)$ by \ref{item:2-chorded-not-true-not-facet}, so we may assume that $v$ is true to $C$, i.e. $v_i \in [v_0, v_{i+1}[$ for $i=0,\dots,k-2$. 
    Let $x \in X_n$ be the characteristic vector of a lifted multicut that satisfies \eqref{eq:2-chorded-cycle} with equality. 
    We show that then $x$ also satisfies the equalities $x_{v_iv_{i+d}} = 1$ for all $i \in \Z_k$, i.e. \eqref{eq:2-chorded-cycle} is not facet-defining. 
    To that end, let $i \in \Z_k$ be fixed and assume $x_{v_iv_{i+d}} = 0$. 
    By the cut inequalities \eqref{eq:cut}, the path along the nodes $[v_i,v_{i+d}]$ or the path along the nodes $[v_{i+d},v_i]$ is not cut with respect to $x$.
    By the path inequalities one of the following cases holds
    \begin{enumerate}[(a)]
        \item \label{item:2-chord-case-a}
        $x_{v_{i+j}v_{i+j+1}} = 0$ for $j=0,\dots,d-1$ and $x_{v_{i+j}v_{i+j+2}} = 0$ for $j=0,\dots,d-2$,
        \item \label{item:2-chord-case-b}
        $x_{v_{i-j}v_{i-j-1}} = 0$ for $j=0,\dots,d$ and $x_{v_{i-j}v_{i-j-2}} = 0$ for $j=0,\dots,d-1$.
    \end{enumerate}
    In Case~\ref{item:2-chord-case-a} at most $d+2$ summands of the left hand side of \eqref{eq:2-chorded-cycle} with coefficient $+1$ can have value $1$. 
    Therefore \eqref{eq:2-chorded-cycle} can only be satisfied with equality if $\sum_{j=0}^{d} x_{v_{i-j}v_{i-j-1}} \leq 2$ holds. 
    In case $\sum_{j=0}^{d} x_{v_{i-j}v_{i-j-1}} = 0$ we have $x_{v_jv_{j+1}}=x_{v_jv_{j+2}}=0$ for all $j \in \Z_k$ and clearly \eqref{eq:2-chorded-cycle} is not satisfied with equality. 
    The case $\sum_{j=0}^{d} x_{v_{i-j}v_{i-j-1}} = 1$ cannot occur since a cycle inequality with respect to the cycle along the nodes $v_0,\dots,v_{k-1}$ would be violated.
    In case $\sum_{j=0}^{d} x_{v_{i-j}v_{i-j-1}} = 2$ there are exactly two indices $i_1,i_2 \in \Z_k$ with $x_{v_{i_1}v_{i_1+1}} = x_{v_{i_2}v_{i_2+1}} = 1$ and $x_{v_\ell v_{\ell+1}} = 0$ for all $\ell  \in \Z_k \setminus \{i_1,i_2\}$. 
    It follows $x_{v_{i_1-1}v_{i_1+1}} = x_{v_{i_1}v_{i_1+2}} = x_{v_{i_2-1}v_{i_2+1}} = x_{v_{i_2}v_{i_2+2}} = 1$ and $x_{v_\ell v_{\ell + 2}} = 0$ for all $\ell \in \Z_k \setminus \{i_1-1,i_1,i_2-1,i_2\}$. 
    Therefore, the left hand side of \eqref{eq:2-chorded-cycle} evaluates to $4-2 = 2$ (or $3-2 = 1$ if $i_1$ and $i_2$ are just one apart) and since $k \geq 7$ the inequalities \eqref{eq:2-chorded-cycle} is not satisfied with equality. 

    In Case~\ref{item:2-chord-case-b} at most $d+1$ summands of the left hand side of \eqref{eq:2-chorded-cycle} with coefficient $+1$ can have value $1$ and \eqref{eq:2-chorded-cycle} can only be satisfied with equality if $\sum_{j=0}^{d-1} x_{v_{i+j}v_{i+j+1}} \leq 1$ holds. 
    Only the case ``$=0$'' can occur since otherwise a cycle inequality would be violated and, as above, it follows that \eqref{eq:2-chorded-cycle} is not satisfied with equality.

    All together, we have shown that all $x \in X_n$ that satisfy \eqref{eq:2-chorded-cycle} with equality also satisfy $x_{v_iv_{i+d}} = 0$ for $i \in \Z_k$ and therefore \eqref{eq:2-chorded-cycle} is not facet-defining for $\lmc(C)$.
\end{delayedproof}

\begin{delayedproof}{prop:num-half-chorded-odd-cycle}
    There are $k!$ permutations of $k$ nodes $v_0,\dots,v_{k-1}$. 
    Respectively $2k$ permutations represent the same cycle. 
    Therefore, in total there are 
    \[
        \sum_{k=5 \text{ odd}}^n \binom{n}{k} \frac{k!}{2k} =
        \sum_{k=5 \text{ odd}}^n \frac{n!}{(n-k)! \; 2k}
    \]
    distinct half-chorded odd cycle inequalities.
\end{delayedproof}

\begin{delayedproof}{lem:half-chorded-valid}
    We show this by constructing the inequality by summing other valid inequalities and then adjusting the right hand side downward to the next integer, similarly to the proof of \Cref{thm:2-chorded-valid-and-facet}.

    Let $5 \leq k \leq n$ with $k$ odd, let $d=\frac{k-1}{2}$, and let $v\colon\Z_k \to \Z_n$ injective.
    For $i \in \Z_k$ consider the valid cycle inequality
    \begin{align}\label{eq:cycle-ineq-for-half-chorded}
        x_{v_iv_{i+d}} - \sum_{\ell=0}^{d-1} x_{v_{i+\ell}v_{i+\ell+1}} \leq 0
    \end{align}
    for the cycle along the nodes $v_i, v_{i+1},\dots,v_{i+d}$. 
    Summing inequalities \eqref{eq:cycle-ineq-for-half-chorded} and $d-1$ times the box inequalities $x_{v_iv_{i+d}} \leq 1$ for $i \in \Z_k$ we obtain
    \[
        d \sum_{i \in \Z_k} \left( x_{v_iv_{i+d}} - x_{v_iv_{i+1}} \right) 
        \leq k (d-1) \enspace.
    \]
    The claim follows by dividing both sides by $d=\tfrac{k-1}{2}$ and rounding down:
    \[
        \sum_{i \in \Z_k} \left( x_{v_iv_{i+d}} - x_{v_iv_{i+1}} \right)
        \leq \left\lfloor (k-3) \frac{k}{k-1} \right\rfloor
        = \left\lfloor (k-3) + \frac{k-3}{k-1} \right\rfloor
        = k - 3 \enspace.
    \]
    It is easy to see that the obtained inequality is a Chv\'atal inequality of rank 1. In particular, we choose multipliers $\frac{1}{d}$ for the considered cycle inequalities, $\frac{d-1}{d}$ for the considered box inequalities, and $0$ for all the remaining inequalities of canonical relaxation of $\mc(K_n)$. It follows immediately that the multipliers are non-negative by construction.
\end{delayedproof}

\begin{delayedproof}{lem:roots-of-half-chorded}
    Let $r := \abs{I}$. 
    For $r = 0$ it holds $W = \emptyset$, i.e. $x = \chi(W) = 0$ and the left hand side of \eqref{eq:half-chorded-odd-cycle-ineq} evaluates to $0$ and the inequality is not satisfied with equality.

    Next, assume $r=1$, i.e. $I=\{j\}$ for some $j \in \Z_k$. 
    Then, since $v$ is true to $C$, the path along the nodes $[v_{j+1},v_{j}]$ is not cut with respect to $x$. 
    By the path inequalities \eqref{eq:path} none of the edges $v_iv_{i+1}$ and $v_iv_{i+d}$ are cut with respect to $x$. 
    As for $r=0$, the left hand side of \eqref{eq:half-chorded-odd-cycle-ineq} evaluates to $0$ and the inequality is not satisfied with equality.

    For $r \geq 2$ an edge $v_iv_{i+1}$ is cut if and only if $i \in I$. 
    It follows that $\sum_{i \in \Z_k} x_{v_iv_{i+1}} = r$. 
    Therefore, inequality \eqref{eq:half-chorded-odd-cycle-ineq} can be written as $\sum_{i \in \Z_k} x_{v_iv_{i+d}} - r \leq k - 3$. Clearly, it holds that $\sum_{i \in \Z_k} x_{v_iv_{i+d}} \leq k$. Thus, for $r \geq 4$ the left hand side of \eqref{eq:half-chorded-odd-cycle-ineq} is $\leq k - 4$ and \eqref{eq:half-chorded-odd-cycle-ineq} is not satisfied with equality.

    Next, assume $r=2$, i.e. $I=\{j,\ell\}$ for some $j,\ell \in \Z_k$ with $j \neq \ell$. 
    By the above observation, the inequality \eqref{eq:half-chorded-odd-cycle-ineq} is satisfied with equality if and only if all but one of the half-chords $v_iv_{i+d}$ for $i \in \Z_k$ are cut with respect to $x$. 
    This is satisfied if and only if $j - \ell \in \{d,d+1\}$.

    Lastly, assume $r=3$. 
    The inequality \eqref{eq:half-chorded-odd-cycle-ineq} is satisfied with equality if and only if all of the half-chords $\{v_i, v_{i+d}\}$ for $i=0,\dots,k-1$ are cut. 
    This is satisfied if and only if for all $j,\ell \in I$ with $j \neq \ell$ it holds that $j-\ell \in \{1,\dots,d\}$ or $\ell - j \in \{1,\dots,d\}$.
\end{delayedproof}

\begin{delayedproof}{lem:half-chorded-facet}
    Let $S := \{x \in X_n \mid x \text{ satisfies \eqref{eq:half-chorded-odd-cycle-ineq} with equality} \}$ and let $\Sigma := \conv S$ be the face that is defined by \eqref{eq:half-chorded-odd-cycle-ineq}. 
    Let $\Sigma'$ be a facet of $\lmc(C)$ with $\Sigma \subseteq \Sigma'$ and suppose $\Sigma'$ is defined by an inequality $a^\top x \leq \beta$ for some $a \in \R^{\binom{\Z_n}{2}}$ and $\beta \in \R$. 
    Let $S' := \{x \in X_n \mid a^\top x = \beta \}$ be the set of integral points in the facet $\Sigma'$, i.e. $S \subseteq S'$. 
    We show that \eqref{eq:half-chorded-odd-cycle-ineq} is a positive scalar multiple of $a^\top x \leq \beta$ and hence $\Sigma=\Sigma'$, i.e. \eqref{eq:half-chorded-odd-cycle-ineq} defines a facet of $\lmc(C)$. 
    In particular we need to show that there exists $\alpha > 0$ with $\alpha = a_{v_iv_{i+d}} = - a_{v_iv_{i+1}}$ for $i \in \Z_k$ with $d:=\frac{k-1}{2}$, $\beta = \alpha (k-3)$, and $a_f = 0$ for all other edges $f$. 
    We start by showing $a_f = 0$ for all $f \in \binom{\Z_n}{2}\setminus E^*$ with $E^*:=\bigl\{v_iv_{i+1}, v_iv_{i+d} \mid i \in \Z_k \bigr\}$ in two steps:
    \begin{enumerate}[(i)]
        \item \label{item:step-1-half-chorded}
        $a_{uw} = 0$ for $u \in \Z_n \setminus \{v_0,\dots,v_{k-1}\}$ and $w \in \Z_n \setminus\{u\}$,
        \item \label{item:step-2-half-chorded}
        $a_{vw} = 0$ for $vw \subseteq \{v_0,\dots,v_{k-1}\}$ and $vw \notin E^*$.
    \end{enumerate}

    \noindent \textit{Proof of \ref{item:step-1-half-chorded}}.
    Let $i \in \Z_k$ such that $u \in \;]v_i,v_{i+1}[$ and let $v^*:=v_{i-d}$ (this $i$ is unique since $v$ is true to $C$). 
    First we show $a_{uw} = 0$ for $w \in \; ]u,v^*]$. 
    We define $x^1(w) := \chi(\{u-1,u,w,v^*\})$ and $x^2(w) := \chi(\{u-1,w,v^*\})$. 
    By \Cref{lem:roots-of-half-chorded}, both $x^1(w)$ and $x^2(w)$ satisfy \eqref{eq:half-chorded-odd-cycle-ineq} with equality.
    Indeed, we have $u,u-1 \in [v_i,v_{i+1}[$ and $v^* \in [v_{i-d},v_{i-d+1}[$, i.e. $I = \{i, i-d, j\}$ where $j \in \Z_k$ such that $w \in [v_j,v_{j+1}[$. 
    In case $j \in \{i,i-d\}$ we have $I=\{i,i-d\}$ and condition \ref{cond:roots-half-chorded-a} is satisfied. 
    Otherwise condition \ref{cond:roots-half-chorded-b} is satisfied. 
    Therefore, it holds that $x^1(w),x^2(w) \in S \subseteq S'$, i.e. $a^\top x^1(w) = a^\top x^2(w) = \beta$.
    Lastly, it holds that $a^\top x(w) = 0$ for $x(w) := x^1(w) - x^2(w)$. 
    By construction we have for all $w \in \; ]u,v^*]$ that $x(w)_{uw'} = 1$ for $w' \in \; ]u,w]$ and $x(w)_f=0$ for all other edges $f$. 
    For $z(w) := x(w) - x(w-1)$ (with $x(u) = 0$), it holds that $z(w)_{uw} = 1$ and $z(w)_f = 0$ for all other edges $f$. 
    It follows that $0 = a^\top z(w) = a_{uw}$ for all $w \in \; ]u,v^*]$.

    For $w \in \; ]v^*,u[$ an analogous construction yields $a_{uw} = 0$ and Claim \ref{item:step-1-half-chorded} follows.

    Note that the presented construction does not work for $u=v_i$ for some $i$ because in that case we would have $I=\{i-1,i,i+d,j\}$ and $x^1(w)$ does not satisfy \eqref{eq:half-chorded-odd-cycle-ineq} with equality, by \Cref{lem:roots-of-half-chorded}.
    
    \noindent \textit{Proof of \ref{item:step-2-half-chorded}}. 
    We need to show $a_{v_iv_{i+\ell}} = 0$ for all $i \in \Z_k$ and $\ell \in \{2,\dots,d-1\}$. 
    To that end we define the following vectors:
    \begin{align*}
        x^1(i, \ell) &:= \chi(\{v_i, v_{i+d}-1, v_{i-\ell}\}) \\
        x^2(i, \ell) &:= \chi(\{v_i-1, v_{i+d}-1, v_{i-\ell}\}) \\
        x^3(i, \ell) &:= \chi(\{v_i-1, v_{i+\ell}-1, v_{i-d}\}) \\
        x^4(i, \ell) &:= \chi(\{v_i, v_{i+\ell}-1, v_{i-d}\})
    \end{align*}
    for $i \in \Z_k$ and $\ell \in \{2,\dots,d-1\}$. All these vectors satisfy condition \ref{cond:roots-half-chorded-b} of \Cref{lem:roots-of-half-chorded} and thus satisfy \eqref{eq:half-chorded-odd-cycle-ineq} with equality. 
    Therefore, it holds that $x^j(i, \ell) \in S \subseteq S'$, and hence $a^\top x^j(i, \ell) = \beta$, for $j=1,2,3,4$. 
    Further, we define $x(i, \ell) := x^1(i, \ell) - x^2(i, \ell) + x^3(i, \ell) - x^4(i, \ell)$ which satisfies $a^\top x(i, \ell) = 0$. 
    By construction, it holds that $x(i,\ell)_{v_iw} = 1$ for $w \in [v_{i+\ell}, v_{i+d}[\; \cup \; ]v_{i-d},v_{i-\ell}]$ and $x(i,\ell)_f = 0$ for all other edges $f$. 
    Next, we define $z(i,d-1) := x(i,d-1)$ and $z(i, \ell) := x(i,\ell) - x(i,\ell+1)$ for $\ell \in \{2,\dots,d-2\}$. 
    For $\ell \in \{2,\dots,d-1\}$, it holds that $z(i,\ell)_{v_iw} = 1$ for $w \in [v_{i+\ell}, v_{i+\ell+1}[ \; \cup \; ]v_{i-\ell-1},v_{i-\ell}]$ and $z(i,\ell)_f=0$ for all other edges $f$. Thus, $a^\top z(i,\ell) = 0$ yields
    \begin{align}\label{eq:half-chorded-sum-coeff-zero}
        \sum_{w\in [v_{i+\ell}, v_{i+\ell+1}[} a_{v_iw} + \sum_{w \in \;]v_{i-\ell-1},v_{i-\ell}]} a_{v_iw} = 0 \enspace .
    \end{align}
    By Claim~\ref{item:step-1-half-chorded}, it holds that $a_{v_iw} = 0$ for $w \notin \{v_0,\dots,v_{k-1}\}$ and \eqref{eq:half-chorded-sum-coeff-zero} yields $a_{v_iv_{i+\ell}} + a_{v_i,v_{i-\ell}} = 0$. 
    
    Now consider the cycle in $K_n$ along the nodes $v_i,v_{i+\ell},v_{i+2\ell},\dots,v_{i-\ell},v_i$. 
    By the above observation the coefficient $a_f$ of the edges $f$ along that cycle have alternating sign. 
    The length of this cycle is the smallest integer $p$ such that $\ell \cdot p = 0 \; (\text{mod} \; k)$. 
    If $p$ was even, say $p=2q$ it would follow $\ell \cdot q = 0 \; (\text{mod} \; k)$ since $k$ is odd and $p$ would not be smallest. 
    Therefore, the cycle has odd length $p$.
    As the coefficients $a_f$ of all edges $f$ on that cycle have the same absolute value and an alternating sign, all these coefficients must be zero. 
    This proves Claim~\ref{item:step-2-half-chorded}. 
    For an illustration of this proof, see \Cref{fig:half-chorded-facet-proof}.
    
    Next, we show that there exists $\alpha \in \R$ with $\alpha = -a_{v_iv_{i+1}}$ for all $i \in \Z_k$. 
    For that we consider the vectors $x^1(i, 2), x^2(i, 2) \in S \subseteq S'$ from the proof of Claim~\ref{item:step-2-half-chorded}, and set $x := x^1(i, 2) - x^2(i, 2)$ for which it holds that $a^\top x = 0$. 
    By construction we have $x_{v_iw} = -1$ for $w \in \; ]v_{i-2},v_i[$ and $x_{v_iw} = 1$ for $w \in \; ]v_i,v_{i+d}[$ and $x_f = 0$ for all other $f \in \binom{\Z_n}{2}$. 
    By \ref{item:step-1-half-chorded} and \ref{item:step-2-half-chorded} we have $a_{v_iw} = 0$ for all $w \in \; ]v_{i-2},v_{i+d}[ \setminus \{v_{i-1}, v_i, v_{i+1}\}$ and $a^\top x = 0$ yields $a_{v_iv_{i-1}} = a_{v_iv_{i+1}}$. 
    This holds for all $i \in \Z_k$, i.e. there exists $\alpha \in \R$ with $\alpha = -a_{v_iv_{i+1}}$ for all $i \in \Z_k$.

    We continue by showing $\alpha = a_{v_iv_{i+d}}$ for all $i \in \Z_k$. 
    For that we consider the vectors $x^1 := \chi(\{v_i-1, v_i, v_{i+d}\})$ and $x^2 := \chi(\{v_i-1,v_{i+d}\})$ which satisfy Conditions~\ref{cond:roots-half-chorded-b} and \ref{cond:roots-half-chorded-a} of \Cref{lem:roots-of-half-chorded} respectively, i.e. $x^1,x^2 \in S \subseteq S'$. 
    We define $x := x_1 - x_2$ which satisfies $a^\top x = 0$ and by construction we have $x_{v_iw} = 1$ for $w \in \;]v_i, v_{i+d}]$. 
    By \ref{item:step-1-half-chorded} and \ref{item:step-2-half-chorded}, it holds that $a_{v_iw} = 0$ for all $w \in \; ]v_i, v_{i+d}] \setminus \{v_{i+1},v_{i+d}\}$ and $a^\top x = 0$ yields $a_{v_iv_{i+1}} + a_{v_iv_{i+d}} = 0$. 
    With the definition of $\alpha$ from above it follows that $\alpha = a_{v_iv_{i+d}}$ for $i \in \Z_k$.

    Lastly, by plugging any of the previously considered vectors $x \in S \subseteq S'$ in $a^\top x = \beta$ we obtain $\beta = \alpha(k-3)$. 
    It holds that $\alpha \neq 0$, because otherwise $a^\top x \leq \beta$ would be $0^\top x \leq 0$ which is obviously not facet-defining. 
    Further, it holds that $\alpha > 0$ because for $\alpha < 0$ the vector $0 \in X_n$ would violate the inequality $a^\top 0 \leq \alpha (k-3)$. 
    All together we have shown that \eqref{eq:half-chorded-odd-cycle-ineq} is indeed a positive scalar multiple of $a^\top x \leq \beta$ and, thus, is facet-defining for $\lmc(C)$.
\end{delayedproof}

\begin{figure}
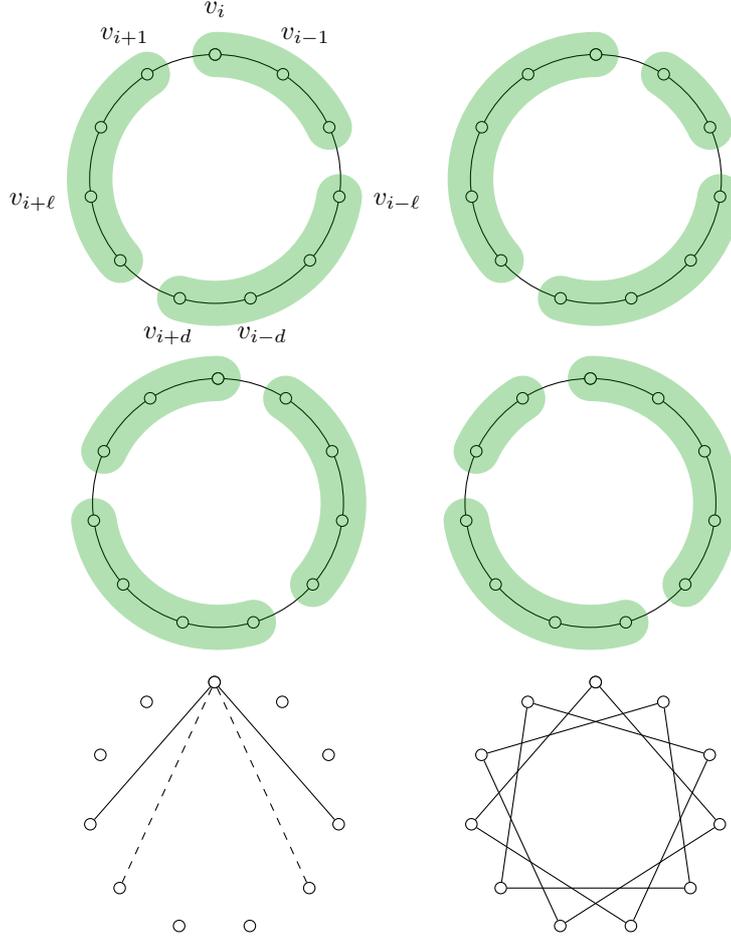

    \centering
    \input{figures/cycle/half-chorded-facet-proof-visualization-1.tex}
    \input{figures/cycle/half-chorded-facet-proof-visualization-2.tex} \\
    \hspace{27pt}
    \input{figures/cycle/half-chorded-facet-proof-visualization-3.tex}
    \hspace{16pt}
    \input{figures/cycle/half-chorded-facet-proof-visualization-4.tex} \\
    \hspace{25pt}
    \input{figures/cycle/half-chorded-facet-proof-visualization-5.tex}
    \hspace{35pt}
    \input{figures/cycle/half-chorded-facet-proof-visualization-6.tex}
    \caption{Illustration of the construction of the proof of \ref{item:step-2-half-chorded} from the proof of \Cref{lem:half-chorded-facet}. 
    In the shown example we have $n=k=11$, i.e. $d=5$, and $\ell=3$ (note that for $n=k$ the set $\Z_n \setminus \{v_0,\dots,v_{k-1}\}$ is empty and \ref{item:step-1-half-chorded} does not occur). 
    The first four graphs illustrate the decompositions of $C$ with respect to the vectors $x^1(i,\ell)$ to $x^4(i,\ell)$. 
    In the fifth graph the four drawn edges are precisely the edges $f$ with $x(i,\ell)_f = 1$. 
    The continuously drawn edges are the edges $f$ with $z(i,\ell)_f=1$, i.e. the edges $\{v_i,v_{i+\ell}\}$ and $\{v_i,v_{i-\ell}\}$. 
    The sixth graph is the cycle of odd length that contains all edges $\{v_i,v_{i+\ell}\}$ for $i \in \Z_k$.}
    \label{fig:half-chorded-facet-proof}
\end{figure}

\begin{delayedproof}{lem:half-chorded-not-true-not-facet}
    This proof follows the same line of the proof of \Cref{prop:2-chorded-not-facet}\ref{item:2-chorded-not-true-not-facet}.
    Let $u,w \in \{v_0,\dots,v_{k-1}\}$ distinct, as in the proof of \Cref{prop:2-chorded-not-facet}\ref{item:2-chorded-not-true-not-facet}, i.e. $uw \neq v_iv_{i+1}$ for all $i \in \Z_k$ and $]u,w[ \; \cap \{v_0,\dots,v_{k-1}\} = \emptyset$.
    We show that all $x \in X_n$ that satisfy \eqref{eq:half-chorded-odd-cycle-ineq} with equality also satisfy 
    \begin{align}\label{eq:half-chorded-intersection}
        x_{uw} + x_{u-1,w+1} = x_{u,w+1} + x_{u-1,w}
    \end{align}
    with equality and, therefore, \eqref{eq:half-chorded-odd-cycle-ineq} is not facet-defining for $\lmc(C)$. 
    Let $x \in X_n$ such that $x$ satisfies \eqref{eq:half-chorded-odd-cycle-ineq} with equality and assume $x$ does not satisfy \eqref{eq:half-chorded-intersection}. 
    Then, as in the proof of \Cref{prop:2-chorded-not-facet}\ref{item:2-chorded-not-true-not-facet}, it follows that $x_{u-1,u} = x_{w,w+1} = 1$. 
    Now let $i_1,i_2 \in \Z_k$ be the indices such that $u=v_{i_1}$ and $w=v_{i_2}$. 
    As in the proof of \Cref{prop:2-chorded-not-facet}\ref{item:2-chorded-not-true-not-facet}, it follows that $x_{v_jv_{j-1}} = x_{v_jv_{j+1}} = 1$ for $j=i_1,i_2$. 
    It holds that
    \begin{align}\label{eq:cycle-leq-minus-1}
        x_{v_iv_{i+d}} - \sum_{\ell = 0}^{d-1} x_{v_{i+\ell}v_{i+\ell+1}} \leq -1
    \end{align}
    for all $i \in \Z_k$ with $i_1 \in \{i+1,i+2,\dots,i+d-1\}$ or $i_2 \in \{i+1,i+2,\dots,i+d-1\}$. 
    There are at least $d$ many such $i$. 
    Adding all cycle inequalities \eqref{eq:cycle-ineq-for-half-chorded} and $(d-1)$ times the box inequalities $x_{v_iv_{i+d}} \leq 1$ for $i \in \Z_k$, as in the proof of \Cref{lem:half-chorded-valid}, we obtain
    \[
        d \sum_{i \in \Z_k} \left( x_{v_iv_{i+d}} - x_{v_iv_{i+1}} \right)
        \leq k (d-1) - d \enspace,
    \]
    where the $-d$ on the right hand side is due to \eqref{eq:cycle-leq-minus-1}.
    Dividing both sides by $d$ rounding the right hand side down to the nearest integer yields
    \[
        \sum_{i = 1}^k \left(x_{v_iv_{i+d}} - x_{v_iv_{i+1}}\right) 
        \leq \left\lfloor \frac{k(d-1)}{d} - 1 \right\rfloor = k - 4 \enspace.
    \]
    Therefore $x$ does not satisfy \eqref{eq:half-chorded-odd-cycle-ineq} with equality in contradiction to our assumption and the claim follows.
\end{delayedproof}

\begin{delayedproof}{prop:num-half-chorded-odd-cycle-facet}
    For every set of $k$ nodes of $C$ there is only one half-chorded odd cycle inequality \eqref{eq:half-chorded-odd-cycle-ineq} that is facet-defining for $\lmc(C)$ because each enumeration of $k$ nodes that is true to $C$ yields the same half-chorded odd cycle inequality. 
    There are $2^{n-1}$ ways to select an odd number of nodes. There are $n$ and $\frac{n(n-1)(n-2)}{6}$ ways to select just $1$ and $3$ nodes respectively.
    Together the claim follows.
\end{delayedproof}

\begin{delayedproof}{thm:half-chorded-facet-general-g}
    By \Cref{lem:half-chorded-valid}, the inequality is valid for $\mc(K_V)$ and thus, by \Cref{prop:inclusion-property}, valid for $\lmc(G)$.
    We show facet-definingness by extending the proof of \Cref{lem:half-chorded-facet}. 
    To that end we identify the $n$ nodes $V_C$ with $\Z_n$ in such a way that $E_C = \{\{v,v+1\} \mid v \in \Z_n\}$. 
    Let $K_n=(\Z_n,\binom{\Z_n}{2})$ be the subgraph of $K_V$ that is induced by the node set $\Z_n$. 
    For a vector $x \in \{0,1\}^{\binom{\Z_n}{2}}$ we define the \emph{extension} of $x$ as $\bar{x} \in \{0,1\}^{\binom{V}{2}}$ with
    \[
        \bar{x}_e = \begin{cases}
            x_e & \text{for } e \in \binom{\Z_n}{2} \\
            1   & \text{otherwise} \enspace.
        \end{cases}
    \]
    If $x$ is the characteristic vector of a multicut of $K_n$ lifted from $C$ that is induced by the partition $\Pi$ of $\Z_n$, then $\bar{x}$ is the characteristic vector of the multicut of $K_V$ lifted from $G$ that is induced by the partition $\bar{\Pi} = \Pi \cup \{\{v\} \mid v \in V \setminus \Z_n\}$ of $V$. 
    
    As in the proof of \Cref{lem:half-chorded-facet}, let $\Sigma$ be the face of $\lmc(G)$ that is defined by \eqref{eq:half-chorded-odd-cycle-ineq}, let $\Sigma'$ be a facet of $\lmc(G)$ with $\Sigma \subseteq \Sigma'$ and suppose $\Sigma'$ is defined by the inequality $a^\top x \leq \beta$ with $a \in \R^{\binom{V}{2}}$ and $\beta \in \R$. 
    Let $S := \Sigma \cap \{0,1\}^{\binom{V}{2}}$ and $S' := \Sigma' \cap \{0,1\}^{\binom{V}{2}}$ be the sets of the characteristic vectors of multicuts of $K_V$ lifted from $G$ that satisfy \eqref{eq:half-chorded-odd-cycle-ineq} and $a^\top x \leq \beta$ with equality.

    For every characteristic vector $x$ of a multicut lifted from $C$ to $K_n$ that satisfies \eqref{eq:half-chorded-odd-cycle-ineq} (restricted to $\binom{\Z_n}{2}$) with equality, the extension $\bar{x}$ also satisfies \eqref{eq:half-chorded-odd-cycle-ineq} with equality. 
    By considering the extension of the vectors $\chi(W)$ used in the proof of \Cref{lem:half-chorded-facet}, we obtain that $a_e = 0$ for all edges $e \in \binom{\Z_n}{2} \setminus \{v_i v_{i+1},v_iv_{i+d} \mid i \in \Z_k\}$ and that there exists $\alpha \in \R$ with $\alpha = a_{v_i v_{i+d}} = - a_{v_i v_{i+1}}$ for all $i=0,\dots,k-1$. 
    It remains to show $a_{uw} = 0$ for all $uw \in \binom{V}{2}$ with $uw \not\subseteq \Z_n$.

    \begin{claim}\label{claim:1}
        For every $uw \in \binom{V}{2}$ with $\{u,w\} \not\subseteq \Z_n$ there exists a $uw$-path $P=(V_P,E_P)$ in $G$ such that there exists $x \in S$ corresponding to the partition $\Pi = \{\pi_1,\dots,\pi_p\}$ with $V_P = \pi_j$ for some $j\in \{1,\dots,p\}$.
    \end{claim}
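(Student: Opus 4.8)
The plan is to produce, for a given pair $uw \in \binom{V}{2}$ with (without loss of generality) $u \notin \Z_n = V_C$, a $uw$-path $P$ in $G$ together with a decomposition $\Pi$ of $G$ having $V_P$ as a block and whose induced multicut of $K_V$ lifted from $G$ lies in $S$. The guiding observation is that the half-chorded odd cycle inequality \eqref{eq:half-chorded-odd-cycle-ineq} involves only the pairs $v_iv_{i+1}$ and $v_iv_{i+d}$; hence whether $x$ is a root of it depends solely on the partition that $\Pi$ induces on the cycle-inequality nodes $\{v_0,\dots,v_{k-1}\}$. If I arrange $\Pi$ so that every block meets $V_C$ in a contiguous arc, then the restriction of $x$ to $\binom{\Z_n}{2}$ is exactly $\chi(W)$ for the set $W$ of arc boundaries, and I may invoke \Cref{lem:roots-of-half-chorded}. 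Since $k = 2d+1$ is odd, the pairwise-difference clauses there are automatic, so Conditions~\ref{cond:roots-half-chorded-a} and \ref{cond:roots-half-chorded-b} reduce to the requirements that the $v_i$ split into two contiguous runs of sizes $d$ and $d+1$, respectively into three contiguous runs.

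First I would build $P$. Choose any $uw$-path $P_0$ in the connected graph $G$. If $P_0$ avoids $V_C$, keep $P := P_0$. Otherwise let $a$ and $b$ be the first and last nodes of $P_0$ on $V_C$; then the prefix of $P_0$ up to $a$ meets $V_C$ only in $a$, the suffix from $b$ only in $b$, and the two segments are disjoint, so replacing the portion of $P_0$ between $a$ and $b$ by one of the two arcs of $C$ from $a$ to $b$ yields a simple $uw$-path (when $a=b$ no splicing is needed and $V_P\cap V_C=\{a\}$). I take the arc carrying the fewer cycle-inequality nodes, so that $V_P \cap V_C$ is a single contiguous arc containing $j \le d$ of the $v_i$; this is possible because the two arcs carry $j$ and $k-j$ of them and $\min\{j,k-j\}\le d$ as $j+(k-j)=2d+1$ is odd.

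Finally I would decompose $G$ by declaring $V_P$ one block, cutting the complementary cycle arc $V_C \setminus V_P$ (a subpath of $C$) into intervals, and making every remaining node a singleton. If $j \ge 1$ I split that subpath into two intervals each meeting $\{v_0,\dots,v_{k-1}\}$, which is possible since it contains $k-j \ge d+1 \ge 2$ of them, so the $v_i$ form three runs and Condition~\ref{cond:roots-half-chorded-b} applies. If $j = 0$ (in particular if $P_0$ avoided $V_C$) I instead split the $v_i$ into two runs of sizes $d$ and $d+1$, so Condition~\ref{cond:roots-half-chorded-a} applies. Every block induces a connected subgraph, namely a path, a subpath of $C$, or a single node, so $\Pi$ is a decomposition of $G$ by \Cref{def:decomposition}, with $V_P \in \Pi$ by construction; and since the restriction of $x := \1_{\phi_{K_V}(\Pi)}$ to $\binom{\Z_n}{2}$ coincides with $\chi(W)$, whose root set is characterised by \Cref{lem:roots-of-half-chorded}, we obtain $x \in S$, which proves the claim.

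The step I expect to be the main obstacle is the construction of $P$: I must force the block $V_P$ to meet the cycle in a single contiguous arc, rather than in a scattered set of cycle nodes glued together through off-cycle detours, since only an interval-type induced partition is governed by \Cref{lem:roots-of-half-chorded}. The first/last-hit splicing, together with the freedom to choose the arc direction, is what resolves this, while simultaneously controlling the count $j$ and hence the number of runs so that precisely one of the two tightness conditions is met.
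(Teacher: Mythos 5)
Correct, and essentially the paper's own argument: in its non-trivial case the paper likewise builds $V_P$ in prefix--arc--suffix form, namely $V_P = V_u \cup V_w \cup [u',w']$ with each attachment path meeting $\Z_n$ exactly once, completes it to a decomposition whose trace on $\Z_n$ is an interval partition, and certifies tightness through Conditions~\ref{cond:roots-half-chorded-a} and \ref{cond:roots-half-chorded-b} of \Cref{lem:roots-of-half-chorded}, just as you do (your first-hit/last-hit splicing and your $j=0$ versus $j\geq 1$ split merely replace the paper's case distinction on whether a $uw$-path exists in $G-\Z_n$ and its fixed third cut at $v_{i_1+d+1}$). One harmless slip: if an endpoint $a$ or $b$ of your spliced arc is itself one of the $v_i$, that node lies on both closed arcs, so the two arc counts sum to more than $k$ and you can only guarantee $j \leq d+1$ rather than $j \leq d$; this does not matter, since your construction only needs $V_C \setminus V_P$ to contain at least two of the $v_i$, i.e.\ $k-j \geq 2$, which still holds.
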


    \begin{proof}[Proof of \Cref{claim:1}]
    First assume that there exists a $uw$-path in $G-\Z_n$ where $G-\Z_n$ denotes the graph that is obtained by removing the nodes $\Z_n$ and all edges incident to at least one node in $\Z_n$ from $G$. 
    Let $P=(V_P,E_P)$ be such a path. 
    Let $\Pi^*$ be a decomposition of $C$ such that the characteristic vector $x^*$ of the induced multicut of $K_n$ lifted from $C$ satisfies \eqref{eq:half-chorded-odd-cycle-ineq} (restricted to $\binom{\Z_n}{2})$) with equality. 
    Then, $\Pi := \Pi^* \cup \{V_P\} \cup \{\{v\} \mid v \in V \setminus (\Z_n \cup V_P)\}$ is a decomposition of $G$. 
    Let $x$ be the characteristic vector of the multicut of $K_V$ lifted from $G$ corresponding to $\Pi$. 
    It holds that $x_e = x^*_e$ for all $e \in \binom{\Z_n}{2}$ and, hence, $x$ satisfies \eqref{eq:half-chorded-odd-cycle-ineq} with equality.

    Next assume that there does not exist a $uw$-path in $G-\Z_n$. 
    Since $G$ is connected there exist $u',w' \in \Z_n$ such that there is a $uu'$-path $P_u = (V_u,E_u)$ in $G-(\Z_n \setminus \{u'\})$ and a $ww'$-path $P_w = (V_w,E_w)$ in $G-(\Z_n \setminus \{w'\})$. 
    Note that in case $u \in \Z_n$ it holds that $u=u'$ and $E_u = \emptyset$, and in case $w \in \Z_n$ it holds that $w=w'$ and $E_w = \emptyset$.
    As there exists no $uw$-path in $G-\Z_n$, it holds that $E_u \cap E_w = \emptyset$. 
    Let $i_1,i_2 \in \Z_k$ such that $u' \in \; ]v_{i_1},v_{i_1+1}]$ and $w' \in \; ]v_{i_2},v_{i_2+1}]$. 
    We may assume $i_2 - i_1 \leq d$ (mod $k$) by possibly interchanging $u$ and $w$. 
    Let $x^* := \chi(\{u'-1,w', v_{i_1+d+1}\})$ be the characteristic vector of the multicut of $K_n$ lifted from $C$ corresponding to the partition $\{[u',w'], \;]w', v_{i_1+d+1}], \;]v_{i_1+d+1},u'[\}$ of $\Z_n$. 
    By \Cref{lem:roots-of-half-chorded}, $x^*$ satisfies \eqref{eq:half-chorded-odd-cycle-ineq} (restricted to $\binom{\Z_n}{2}$) with equality (in the case where $w' = v_{i_1+d+1}$ holds, condition \ref{cond:roots-half-chorded-a} is satisfied, otherwise condition \ref{cond:roots-half-chorded-b} is satisfied). 
    Now let $P=(V_P,E_P)$ be the $uw$-path with $V_P = V_u \cup V_w \cup [u',w']$ and $E_P = E_u \cup E_w \cup \{\{v,v+1\} \mid v \in [u',w'[$. 
    Let $\Pi := \{V_P, \;]w', v_{i_1+d+1}], \;]v_{i_1+d+1},u'[\} \cup \{\{v\} \mid v \in V \setminus (\Z_n \cup V_u \cup V_w)\}$ be a partition of $V$. 
    By construction $\Pi$ is a decomposition of $G$. 
    Let $x$ be the characteristic vector of the multicut lifted from $G$ to $K_V$ that corresponds to the decomposition $\Pi$. 
    Again, by construction, it holds that $x_e = x^*_e$ for $e \in \binom{\Z_n}{2}$ and therefore $x$ satisfies \eqref{eq:half-chorded-odd-cycle-ineq} with equality and the path $P$ meets the requirements from \Cref{claim:1}. This concludes the proof of \Cref{claim:1}. 
    For an illustration we refer to \Cref{fig:proof-zero-lifting-half-chorded}.
    \end{proof}

    For $uw \in \binom{V}{2}$ with $\{u,w\} \not\subseteq \Z_n$ let $P=(V_P,E_P)$ be a shortest $uw$-path that meets the requirements of \Cref{claim:1}, and let $d(u,w)$ be the length of $P$. 
    Further, let $\Pi=\{\pi_1,\dots,\pi_p\}$ be a decomposition of $G$ that meets the requirements of \Cref{claim:1}. 
    In particular, let $j \in \{1,\dots,p\}$ with $\pi_j = V_P$. 
    Define $\Pi' := \{\pi_i \mid i \in \{1,\dots,p\}\setminus \{j\}\} \cup \{\pi_j \cap \Z_n\} \cup \{\{v\} \mid v \in \pi_j \setminus \Z_n\}$. 
    Clearly $\Pi'$ is a decomposition of $G$. 
    Let $x$ and $x'$ be the characteristic vectors of the multicuts of $K_V$ lifted from $G$ corresponding to $\Pi$ and $\Pi'$ respectively. 
    By construction and \Cref{claim:1}, $x$ satisfies \eqref{eq:half-chorded-odd-cycle-ineq} with equality. 
    Further, it holds that $x'_e = x_e$ for all $e \in \binom{\Z_n}{2}$ and therefore also $x'$ satisfies \eqref{eq:half-chorded-odd-cycle-ineq} with equality. 
    It follows that $x,x' \in S \subseteq S'$ and $z:= x' - x$ satisfies $a^\top z = 0$. 
    By construction, it holds that $z_{st} = 1$ for all $s,t \in V_P$ with $s \neq t$ and $\{s,t\} \not\subseteq \Z_n$ and $z_e = 0$ for all other edges $e$.
    Thus, $a^\top z = 0$ yields
    \begin{align}\label{eq:sum-s-t-in-V_P-zero}
        \sum_{s,t \in V_P, s \neq t, \{s,t\} \not\subseteq \Z_n} a_{st} = 0 \enspace.
    \end{align}
    We prove $a_{uw} = 0$ for all $uw \in \binom{V}{2}$ with $\{u,w\} \not\subseteq \Z_n$ by induction over $d(u,w)$.
    For $d(u,w) = 1$ equality \eqref{eq:sum-s-t-in-V_P-zero} yields the desired $a_{uw} = 0$. 
    Now assume that it holds $a_{st} = 0$ for all $st \in \binom{V}{2}$ with $\{s,t\} \not\subseteq \Z_n$ and $d(s,t) < d(u,v)$.
    From the proof of \Cref{claim:1}, it is easy to see that for all $s,t \in V_P$ with $s \neq t$, $\{s,t\} \not \subseteq \Z_n$ and $st \neq uw$ it holds that $d(s,t) < d(u,w)$. 
    This assumption, together with \eqref{eq:sum-s-t-in-V_P-zero}, yields $a_{uw} = 0$.

    Like in the proof of \Cref{lem:half-chorded-facet} it follows that $a^\top x \leq \beta$ is a positive scalar multiple of \eqref{eq:half-chorded-odd-cycle-ineq}, i.e. $\Sigma=\Sigma'$, and \eqref{eq:half-chorded-odd-cycle-ineq} is facet-defining for $\lmc(G)$.
\end{delayedproof}

\begin{figure}
    \centering
    \input{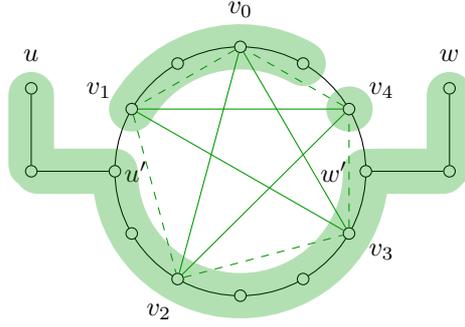}
    \caption{Depicted is a decomposition of a graph $G$ that satisfies the condition of \Cref{claim:1} in the proof of \Cref{thm:half-chorded-facet-general-g}.
    The depicted graph (black edges) has a cycle $C$ of length $12$. 
    The green edges depict the support graph of the half-chorded odd cycle inequality \eqref{eq:half-chorded-odd-cycle-ineq} with respect to $v_0,\dots,v_4$. 
    The green areas illustrate the decomposition of $G$ that is constructed in the proof: 
    it contains one component that corresponds to a $uw$-path and it satisfies \eqref{eq:half-chorded-odd-cycle-ineq} with equality.}
    \label{fig:proof-zero-lifting-half-chorded}
\end{figure}

\begin{delayedproof}{prop:half-chorded-separation}
    Let $k \geq 5$ odd, let $d=\frac{k-1}{2}$, and let $v:\Z_k \to \Z_n$.
    Define $w: \Z_k \to \Z_n$ with $w_i = v_{i \cdot d}$ for $i \in Z_k$.
    Then it holds that $\bigl\{\{w_i,w_{i+1}\} \mid i \in \Z_k \bigr\} = \bigl\{\{v_i,v_{i+d}\} \mid i \in \Z_k \bigr\}$ and $\bigl\{\{w_i,w_{i+2}\} \mid i \in \Z_k \bigr\} = \bigl\{\{v_i,v_{i+1}\} \mid i \in \Z_k \bigr\}$. 
    In particular, the edges $\{w_i,w_{i+1}\}$ for $i \in \Z_k$ induce a cycle and the edges $\{w_i,w_{i+2}\}$ for $i \in \Z_k$ are the 2-chords of that cycle.
    The half-chorded odd cycle inequality \eqref{eq:half-chorded-odd-cycle-ineq} with respect to $v$ can be written in terms of $w$ as 
    \begin{align}\label{eq:half-chorded-separation}
        \sum_{i \in \Z_k} (x_{w_iw_{i+2}} - x_{w_i,w_{i+1}} + 1) \geq 3 \enspace.
    \end{align}
    As detailed in \citet{muller1996partial}, for a given $x \in [0,1]^{\binom{\Z_n}{2}}$ one can decide in polynomial time whether there exists an inequality of the form \eqref{eq:half-chorded-separation} that is violated by $x$. 
    Note that \citet{muller1996partial} considers the clique partitioning problem instead of the multicut problem, i.e. before the algorithm can be applied, inequality \eqref{eq:half-chorded-separation} has to be reformulated by substituting $1-x$ for $x$.
\end{delayedproof}

\begin{delayedproof}{prop:num-star-inequalities}
    For any node $v$ there are $2^{n-2}$ ways to pick an odd number of nodes of $\Z_n \setminus \{v\}$ that are true to the cycle $C$, yielding the $n2^{n-2}$ part. The $-n(n-1)/2$ part is to account for the fact that we are counting the box inequalities twice. 
\end{delayedproof}

\begin{delayedproof}{prop:num-glider-inequalities}
    For any node $w \in \Z_n$ there are $2^{n-1}-n-\frac{(n-1)(n-2)}{2}$ ways to pick three or more nodes of $\Z_n \setminus \{w\}$ that are true to $C$. Additionally, there are $\frac{n(n-1)}{2}$ box inequalities. Together, there are 
    \[
        n \left( 2^n - n - \frac{(n-1)(n-2)}{2} \right) + \frac{n(n-1)}{2}
    \]  
    glider inequalities and the claim holds.
\end{delayedproof}

\begin{delayedproof}{lem:star-glider-valid}
    For $k=0$ the claim is trivial so from now on we assume $k \geq 1$. 
    Let $x \in X_n$ and let $c := \sum_{i=0}^k x_{v_iv_{i+1}}$ be the number of edges in the sail that are cut. 
    First we consider $c=0$, i.e. $x_{v_iv_{i+1}} = 0$ for all $i\in \{0,\dots,k\}$. 
    If $x_{v_iw_i^j} = 0$ for all $i\in \{1,\dots,k\}$ and all even $j\in\{1,..,m_i\}$ the inequality obviously holds. 
    Otherwise, let $i_-$ and $i_+$ be the smallest and largest indices $i \in \{1,\dots,k\}$ respectively, such that there exists an even $j \in \{1,\dots,m_i\}$ with $x_{v_iw_i^j} = 1$. 
    Let $j_-$ be the largest even index $j \in \{1,\dots,m_{i_-}\}$ with $x_{v_{i_-}w_{i_-}^j} = 1$ and let $j_+$ be the smallest even index $j \in \{1,\dots,m_{i_+}\}$ with $x_{v_{i_+}w_{i_+}^j} = 1$. 
    By the path inequalities \eqref{eq:path} the $v_{i_-}w_{i_-}^{j_-}$ path via $v_0$ and the $v_{i_+}w_{i_+}^{j_+}$ path via $v_{k+1}$ are both cut with respect to $x$. 
    For convenience we define
    \begin{align}\label{eq:s_i_definition}
        S(i) := \sum_{j=1}^{m_i} (-1)^j x_{v_iw_i^j} \enspace.
    \end{align}
    By the cut inequalities \eqref{eq:cut} the following holds 
    \begin{itemize}
        \item If $i_-=i_+$ we have $x_{v_{i_-}w_{i_-}^j} = 1$ for $j\in \{j_+,\dots,j_-\}$, i.e. $S(i_-) \leq 1$. 
        Otherwise, by the definition of the indices $i_-, i_+$, we have $x_{v_iw_i^j} = 1$ for $i=i_-$, $j \in \{1,\dots,j_-\}$ and for $i = i_+$, $j\in \{j_+,\dots,m_{i_+}\}$, i.e. $S(i_-) \leq 0$ and $S(i_+) \leq 0$. 
        \item For $i \in \{i_-+1,\dots,i_+-1\}$ we have $x_{v_iw_i^j} = 1$ for $j\in \{1,\dots,m_i\}$, i.e. $S(i) = -1$.
    \end{itemize}
    For $i \in \{1,\dots,i_--1\} \cup \{i_++1,\dots,k\}$ we have, by definition of $i_-$ and $i_+$, $x_{v_iw_i^j} = 0$ for all even $j \in \{1,\dots,m_i\}$, i.e. $S(i) \leq 0$. 
    Altogether, we have $\sum_{i=1}^k S(i) \leq 1$, i.e. inequality \eqref{eq:star-glider-inequality} holds.

    Now assume $c \geq 1$. 
    Let $i_- := \min \{i \in \{0,\dots,k\}: x_{v_iv_{i+1}} = 1\}$ and $i_+ := \max \{i \in \{0,\dots,k\}: x_{v_iv_{i+1}} = 1\}$. Due to $x_{v_{i_-}v_{i_-+1}} = 1$ and $x_{v_{i_+}v_{i_++1}} = 1$ and the path inequalities \eqref{eq:path}, there exist an edge $e_-$ on the paths along the nodes $[v_{i_-},v_{i_-+1}]$ and an edge $e_+$ on the path along the nodes $[v_{i_+},v_{i_++1}]$ with $x_{e_-} = x_{e_+} = 1$.
    For $i \in \{i_-+1,\dots,i_+\}$ the set $\{e_-,e_+\}$ is a $v_iw_i^j$-cut for $j=1,\dots,m_i$ and the cut inequality \eqref{eq:cut} with respect to that cut yields $x_{v_iw_i^j} = 1$ for $j=1,\dots,m_i$, i.e. it holds that $S(i) = -1$ for $i \in \{i_-+1,\dots,i_+\}$. 
    Next we show $S(i) \leq 0$ for all $i \in \{1,\dots,k\} \setminus \{i_-+1,\dots,i_+\}$. 
    Let $i \in \{1,\dots,i_-\}$. 
    If $x_{v_iw_i^j} = 0$ for all even $j\in \{1,\dots,m_i\}$ then $S(i) \leq 0$ holds. 
    Otherwise, if there exists an even $j \in \{1,\dots,m_i\}$ with $x_{v_iw_i^j} = 1$, the path inequalities \eqref{eq:path} yield that the $v_iw_i^j$-path along the nodes $[w_i^j,v_i]$ is cut. 
    Together with $x_{v_{i_-}v_{i_-+1}}=1$ the cut inequalities \eqref{eq:cut} yield $x_{v_iw_i^{j'}} = 1$ for all $j'\in \{1,\dots,j\}$ and $S(i) \leq 0$ holds. 
    Analogously one can show that $S(i) \leq 0$ holds for $i \in \{i_++1,\dots,k\}$.
    
    By the definition of $c$, $i_-$ and $i_+$, there are at least $c-1$ indices $i \in \{i_-+1,\dots,i_+\}$, i.e. $\sum_{i=1}^k S(i) \leq -(c-1)$ and with this \eqref{eq:star-glider-inequality} holds.
\end{delayedproof}

\begin{delayedproof}{thm:star-glider-facet}
    For $k=0$ the claim holds by \Cref{cor:upper-box-facet} so we may assume $k \geq 1$. 

    Let $S \subseteq X_n$ be the set of characteristic vectors of lifted multicuts that satisfy \eqref{eq:star-glider-inequality} with equality and let $\Sigma = \conv S$ be the face of $\lmc(C)$ that is defined by \eqref{eq:star-glider-inequality}. 
    Assume $\Sigma'$ is a facet of $\lmc(C)$ with $\Sigma \subseteq \Sigma'$ and suppose $\Sigma'$ is defined by the inequality $a^\top x \leq b$ with $a \in \R^{\binom{\Z_n}{2}}$ and $b \in \R$. 
    Let $S' \subseteq X_n$ be the set of characteristic vectors $x$ of lifted multicuts with $a^\top x = b$, i.e. $\Sigma' = \conv S'$. 
    We show that $a^\top x \leq b$ is a positive scalar multiple of \eqref{eq:star-glider-inequality} and, thus, $\Sigma = \Sigma'$ which yields that \eqref{eq:star-glider-inequality} is indeed facet-defining. 
    In particular, we need to show that there exists $\alpha > 0$ such that $b = \alpha$,  $a_{v_iv_{i+1}} = \alpha$ for $i=0,\dots,k$, $a_{v_iw_i^j} = (-1)^j \alpha$ for $i=1,\dots,k$, $j=1,\dots,m_i$ and $a_e = 0$ for all other edges $e \in \binom{\Z_n}{2} \setminus E^*$ with $E^* := \{v_iv_{i+1} \mid i\in \{0,\dots,k\}\} \cup \{v_iw_i^j \mid i \in \{1,\dots,k\}, j\in\{1,\dots,m_i\}\}$.
    
    To that end we introduce some notation. 
    For a connected subset $V \subseteq \Z_n$ of $C$ let $\Pi = \{V\} \cup \{\{u\} \mid u \in \Z_n \setminus V\}$ be the decomposition of $C$ that consists of the component $V$ and otherwise singular nodes. 
    Let $\psi(V) := \1_{\phi_{K_n}(\Pi)}$ be the characteristic vector of the multicut that is induced by the decomposition $\Pi$, i.e. $\psi(V)_e = 0 \iff e \subseteq V$.
    In particular, for $s,t \in \Z_n$, $s \neq t$, it holds that $\psi([s,t])_e = 0 \iff e \subseteq [s,t]$ and we have
    \begin{align}\label{eq:psi-sum-unit-vec}
        \psi([s,t[) + \psi(]s,t]) - \psi(]s,t[) - \psi([s,t]) = \1_{\{st\}} \enspace .
    \end{align}
    Note that the set $]s,t[$ is potentially empty.
    In that case we have $\psi(\emptyset) = 1$, the all one vector.

    If, for a given edge $st \in \binom{\Z_n}{2}$, we can show that 
    \begin{align}\label{eq:psi-in-S}
        \psi([s,t[), \psi(]s,t]), \psi(]s,t[), \psi([s,t]) \in S \subseteq S' \enspace,
    \end{align}
    then, by \eqref{eq:psi-sum-unit-vec}, it follows that $a^\top \1_{\{st\}} = 0$ and therefore $a_{st} = 0$. 
    With this preparation, we show that $a_{st}=0$ holds for ${st} \in \binom{\Z_n}{2} \setminus E^*$ by distinguishing between different cases:

    \begin{enumerate}
        \item 
        First, assume $s,t \in \;]v_k,v_1[$. 
        By potentially interchanging $s$ and $t$ we may assume $[s,t] \subseteq \;]v_k, v_1[$. 
        Then all edges $e \in E^*$ are cut with respect to the four characteristic vectors. 
        As the left hand side of \eqref{eq:star-glider-inequality} includes the coefficient $+1$ once more than the coefficient $-1$, all four characteristic vectors satisfy \eqref{eq:star-glider-inequality} with equality.
        Therefore, \eqref{eq:psi-in-S} is satisfied and by the argument above, it follows that $a_{st} = 0$.

        \item \label{item:s-t-in-v1-vk}
        Next, assume $s,t \in [v_1,v_k]$ and distinguish further cases:
        \begin{enumerate}
            \item \label{item:st-between-neighbouring-vi}
            If there exist $i \in \{1,\dots,k-1\}$ with $s,t \in [v_i,v_{i+1}]$ we may assume that $[s,t] \subseteq [v_i,v_{i+1}]$ by potentially interchanging $s$ and $t$. 
            Then, as before, \eqref{eq:psi-in-S} is satisfied since all edges $e \in E^*$ are cut with respect to the four characteristic vectors (by $st \notin E^*$ it holds that $st \neq v_iv_{i+1}$). 
            It follows that $a_{st} = 0$. 
            (Note that in the cases $s,t \in [v_0,v_1]$ and $s,t \in [v_k,v_{k+1}]$ the same argument yields $a_{st} = 0$).

            \item
            If there does not exist $i \in \{1,\dots,k-1\}$ with $s,t \in [v_i,v_{i+1}]$ we may assume $[t,s] \subseteq [v_1,v_k]$, again, by potentially interchanging $s$ and $t$. 
            Then, there exists $i,j \in \{2,\dots,k-1\}$ with $i \leq j$ and $s \in \; ]v_{j},v_{j+1}]$, $t \in [v_{i-1},v_{i}[$. 
            With respect to $\psi([s,t])$ an edge $e \in E^*$ is cut if and only if it is incident to a node $v_\ell$ with $\ell \in \{i,\dots,j\}$. Therefore, $\psi([s,t])$ satisfies \eqref{eq:star-glider-inequality} with equality.
            If it holds that $s = v_{j+1}$ then, additionally, the edges $e \in E^*$ that are incident to $v_{j+1}$ are cut with respect to $\psi(]s,t])$ and $\psi(]s,t[)$. 
            If it holds that $t = v_{i-1}$ then additionally the edges $e \in E^*$ that are incident to $v_{i-1}$ are cut with respect to $\psi([s,t[)$ and $\psi(]s,t[)$. 
            Either way, all four characteristic vectors satisfy \eqref{eq:star-glider-inequality} with equality, i.e. \eqref{eq:psi-in-S} holds and we obtain $a_{st} = 0$.
        \end{enumerate}

        \item
        It remains to consider the cases $s \in [v_1,v_k] \land t \in \; ]v_k,v_1[$ and $t \in [v_1,v_k] \land s \in \; ]v_k,v_1[$.
        By potentially interchanging $s$ and $t$, we may assume $s \in [v_1,v_k] \land t \in \; ]v_k,v_1[$.
        We distinguish further cases:
        \begin{enumerate}
            \item \label{item:s-is-vi}
            First, suppose $s=v_i$ for some $i \in \{1,\dots,k\}$. 
            We distinguish even further:
            \begin{enumerate}
                \item \label{item:vi-t-1}
                Consider that $t \in \; ]w_i^{m_i},v_1[$. 
                In case $i=1$ and $t \in \; ]v_0,v_1[$ the claim follows by \eqref{item:st-between-neighbouring-vi}. 
                So we may assume $i \neq 1$ or $t \in \; ]w_i^{m_i},v_0[$.
                Then, \eqref{eq:psi-in-S} is satisfied: 
                for $x=\psi([s,t])$ or $x=\psi([s,t[)$ an edge $e \in E^*$ is cut with respect to $x$ if and only if it is incident to a node $v_\ell$ for $\ell \in \{0,\dots,i-1\}$. 
                For $x=\psi(]s,t])$ or $x=\psi(]s,t[)$ and edge $e \in E^*$ is cut with respect to $x$ if and only if it is incident to a node $v_\ell$ for $\ell \in \{0,\dots,i\}$. 
                It follows that $a_{st} = 0$.

                \item 
                The case $t \in \; ]v_k,w_i^1[$ follows analogously to \ref{item:vi-t-1} by interchanging $s$ and $t$.
                 
                \item 
                It remains to consider $t \in [w_i^1,w_i^{m_i}]$. 
                Due to $st \notin E^*$ there exists $j \in \{1,\dots,m_i-1\}$ such that $t \in \; ]w_i^j,w_i^{j+1}[$. 
                First assume that $j$ is odd. 
                Then, \eqref{eq:psi-in-S} is satisfied by the following argument:
                As in \ref{item:vi-t-1}, all edges $e \in E^*$ that are incident to a node $v_\ell$ for $\ell \in \{0,\dots,i-1\}$ are cut with respect to all four characteristic vectors. 
                For $x=\psi([s,t])$ or $x=\psi([s,t[)$ additionally the edges $v_iw_i^\ell$ for $\ell \in \{j+1,\dots,m_i\}$ are cut with respect to $x$. 
                Since $j$ is odd $x$ satisfies \eqref{eq:star-glider-inequality} with equality. 
                For $x=\psi(]s,t])$ or $x=\psi(]s,t[)$ additionally all edges $e \in E^*$ that are incident to $v_i$ are cut with respect to $x$ and $x$ satisfies \eqref{eq:star-glider-inequality} with equality.
                It follows $a_{st} = 0$.
                
                If, otherwise, $j$ is even, interchanging $s$ and $t$ yields $a_{st} = 0$ by an analogous argument.
            \end{enumerate}

            \item 
            Otherwise, there exists an $i \in \{1,\dots,k-1\}$ such that $s \in \; ]v_i,v_{i+1}[$.
            We again distinguish further cases:
            \begin{enumerate}
                \item \label{item:t-not-vi-1}
                Consider $t \in \;]w_{i+1}^{m_{i+1}},v_1[$.
                Then, \eqref{eq:psi-in-S} is satisfied since for all four characteristic vectors an edge $e \in E^*$ is cut with respect to the respective vector if and only if $e$ is incident to a node $v_\ell$ for $\ell \in \{1,\dots,i\}$.
                It follows that $a_{st} = 0$.

                \item \label{item:t-not-vi-2}
                As before, the case $t \in \; ]v_k,w_i^1[$ follows analogously to \ref{item:t-not-vi-1} by interchanging $s$ and $t$.

                \item 
                It remains to consider $t \in \{w_i^1\} \cap \{w_{i+1}^{m_{i+1}}\}$ which only occurs if $w_i^1 = w_{i+1}^{m_{i+1}}$.
                For $x = \psi([s,t])$ or $x = \psi(]s,t])$ an edge $e \in E^*$ is cut with respect to $x$ if and only if it is adjacent to a node $v_\ell$ for $\ell \in \{1,\dots,i\}$ and therefore $x$ satisfies \eqref{eq:star-glider-inequality} with equality.
                Then, for $x := \psi(]s,t]) - \psi([s,t])$ it holds that $a^\top x = 0$. By construction we have that $x_{su} = 1$ for all $u \in \; ]s,t]$ and $x_e = 0$ for all other edges $e \in \binom{\Z_n}{2} \setminus \{su \mid u \in \; ]s,t]\}$. From $a^\top x = 0$ we obtain
                \begin{align} \label{eq:sum-a-su-zero}
                    \sum_{u \in \; ]s,t]} a_{su} = 0 \enspace .
                \end{align}
                By \ref{item:s-t-in-v1-vk} and \ref{item:t-not-vi-2} we have $a_{su} = 0$ for all $u \in \; ]s,t[$ and \eqref{eq:sum-a-su-zero} yields $a_{st} = 0$.
            \end{enumerate}
        \end{enumerate}
    \end{enumerate}

    It remains to show that there exists $\alpha > 0$ such that $b = \alpha$,  $a_{v_iv_{i+1}} = \alpha$ for $i=0,\dots,k$, $a_{v_iw_i^j} = (-1)^j \alpha$ for $i=1,\dots,k$, $j=1,\dots,m_i$.
    We show that $\alpha := a_{v_0v_1}$ satisfies this.
    For $i=1,\dots,k$, it is easy to see that $\psi([w_i^{m_i},v_i]), \psi([w_i^{m_i},v_i[) \in S \subseteq S'$ and, therefore, $x:=\psi([w_i^{m_i},v_i[) - \psi([w_i^{m_i},v_i])$ satisfies $a^\top x = 0$. 
    It holds that $x_{v_iu} = 1$ for $u \in [w_i^{m_i},v_i[$ and $x_e = 0$ for all other edges $e$. 
    Together with $a_e = 0$ for all $e \in \binom{\Z_n}{2} \setminus E^*$, $a^\top x = 0$ yields 
    \begin{align}\label{eq:sum-coeff-zero-1}
        a_{v_{i-1}v_i} + a_{v_iw_i^{m_i}} = 0 \quad \text{for } i = 1,\dots,k \enspace .
    \end{align}
    Similarly, by considering $\psi([v_i,w_i^1]), \psi(]v_i,w_i^1]) \in S \subseteq S'$ we obtain
    \begin{align}\label{eq:sum-coeff-zero-2}
        a_{v_iv_{i+1}} + a_{v_iw_i^1} = 0 \quad \text{for } i=1,\dots,k \enspace .
    \end{align}
    For $i=1,\dots,k$ and odd $j \in \{1,\dots,m_i-2\}$ it holds that $\psi([w_i^j,v_i]), \psi([w_i^{j+2},v_i]) \in S \subseteq S'$ and $a^\top(\psi([w_i^{j+2},v_i]) - \psi([w_i^j,v_i])) = 0$ yields 
    \begin{align}\label{eq:sum-coeff-zero-3}
        a_{v_iw_i^j} + a_{v_iw_i^{j+1}} = 0 \quad \text{for } i=1,\dots,k, \; \text{odd } j \in \{1,\dots,m_i-2\} \enspace .
    \end{align}
    For $i=1,\dots,k$ and odd $j \in \{3,\dots,m_i\}$ it holds that $\psi([v_i,w_i^j]), \psi([v_i,w_i^{j-2}]) \in S \subseteq S'$ and $a^\top(\psi([v_i,w_i^{j-2}]) - \psi([v_i,w_i^j]))$ yields 
    \begin{align}\label{eq:sum-coeff-zero-4}
        a_{v_iw_i^j} + a_{v_iw_i^{j-1}} = 0 \quad \text{for } i=1,\dots,k, \; \text{odd } j \in \{3,\dots,m_i\} \enspace .
    \end{align}
    All together, inequalities \eqref{eq:sum-coeff-zero-1} -- \eqref{eq:sum-coeff-zero-4} yield $a_{v_iv_{i+1}} = \alpha$ for $i=0,\dots,k$ and $a_{v_iw_i^j} = (-1)^j \alpha$ for $i=1,\dots,k$, $j=1,\dots,m_i$.
    From this it follows that $b = a^\top 1 = \alpha$.
    Clearly, it holds that $\alpha \neq 0$ since otherwise we have $a = 0$ and $0^\top x \leq 0$ does not define a facet.
    Lastly, $\alpha > 0$ holds because otherwise the all zeros vector $0 \in \lmc(C)$ would not satisfy \eqref{eq:star-glider-inequality}.
    
    It follows that $a^\top x \leq b$ is indeed a scalar multiple of \eqref{eq:star-glider-inequality} and that the star-glider inequality \eqref{eq:star-glider-inequality} defines a facet of $\lmc(C)$.
\end{delayedproof}

\bibliography{references}
\bibliographystyle{plainnat}

\end{document}